\newtheorem{theorem}{Theorem}
\crefname{theorem}{theorem}{Theorems}
\Crefname{Theorem}{Theorem}{Theorems}
\newaliascnt{lemma}{theorem}
\newtheorem{lemma}[lemma]{Lemma}
\crefname{lemma}{lemma}{lemmas}
\Crefname{Lemma}{Lemma}{Lemmas}
\newaliascnt{corollary}{theorem}
\crefname{corollary}{corollary}{corollaries}
\Crefname{Corollary}{Corollary}{Corollaries}
\newaliascnt{proposition}{theorem}
\newtheorem{proposition}[proposition]{Proposition}
\crefname{proposition}{proposition}{propositions}
\Crefname{Proposition}{Proposition}{Propositions}
\newaliascnt{definition}{theorem}
\crefname{definition}{definition}{definitions}
\Crefname{Definition}{Definition}{Definitions}
\newaliascnt{definition-proposition}{theorem}
\crefname{definition-proposition}{definition-proposition}{definitions-propositions}
\Crefname{Definition-Proposition}{Definition-Proposition}{Definitions-Propositions}
\newaliascnt{remark}{theorem}
\newtheorem{remark}[remark]{Remark}
\crefname{remark}{remark}{remarks}
\Crefname{Remark}{Remark}{Remarks}
\crefname{example}{example}{examples}
\Crefname{Example}{Example}{Examples}
\crefname{figure}{figure}{figures}
\Crefname{Figure}{Figure}{Figures}
\newtheorem{assumption}{\textbf{H}\hspace{-4pt}}
\Crefname{assumption}{\textbf{H}\hspace{-4pt}}{\textbf{H}\hspace{-4pt}}
\crefname{assumption}{\textbf{H}}{\textbf{H}}
\Crefname{assumptionA}{\textbf{A}\hspace{-6pt}}{\textbf{A}\hspace{-6pt}}
\crefname{assumptionA}{\textbf{A}}{\textbf{A}}
\newtheorem{assumptionS}{\textbf{S}\hspace{-3pt}}
\Crefname{assumptionS}{\textbf{S}\hspace{-3pt}}{\textbf{S}\hspace{-3pt}}
\crefname{assumptionS}{\textbf{S}}{\textbf{S}}
\Crefname{probleme}{\textbf{Problem}\hspace{-3pt}}{\textbf{Problem}\hspace{-3pt}}
\crefname{probleme}{\textbf{Problem}}{\textbf{Problem}}
\Crefname{assumptionG}{\textbf{G}\hspace{-4pt}}{\textbf{G}\hspace{-4pt}}
\crefname{assumptionG}{\textbf{G}}{\textbf{G}}
\newcommand{\varphibf}{\boldsymbol{\varphi}}
\def\bareta{\bar{\eta}}
\def\tildem{\tilde{m}}
\def\tildeb{\tilde{b}}
\def\Rrm{K}
\def\barb{\bar{b}}
\def\raymala{K}
\def\pibar{\bar{\pi}}
\def\bound{a}
\def\paramzv{\param^{*}_{\operatorname{zv}}}
\def\Hzv{H_{\operatorname{zv}}}
\def\bzv{b_{\operatorname{zv}}}
\def\mcbb{\mathcal{B}}  
\def\msa{\mathsf{A}}
\def\mrc{\mathrm{C}}
\def\mrC{\mathrm{C}}
\def\rmC{\mrC}
\def\varble{\,\cdot\,}
\newcommandx{\invpihat}[1][1=n]{\hat{\invpi}_{#1}}
\newcommandx{\invpihattrain}[1][1=m]{\tilde{\invpi}_{#1}}
\def\RKer{R}
\def\sgP{P}
\def\sY{Y}
\def\sX{X}
\def\nZ{Z}
\def\invpi{\pi}
\def\invpig{\pi_\step}
\def\step{\gamma}
\newcommandx{\varinf}[1][1=]{\ifthenelse{\equal{#1}{}}{\sigma^2_\infty}{\sigma^2_{\infty,#1}}}
\def\DD{\operatorname{D}}
\def\sPoid{\hat{f}_{\operatorname{d}}}
\def\sPoic{\hat{f}}
\def\generatorA{\mathscr{E}}
\def\genag{\mathscr{E}_{\step}}
\newcommandx{\sPoif}[1][1=f]{\hat{#1}}
\def\tf{f}
\def\tzf{\tilde{f}}
\def\tg{g_\step}
\def\th{h}
\def\Rker{R}
\def\Qker{Q}
\def\Rkerg{R_\step}
\newcommand{\setpoly}[1]{\mathrm{C}^{#1}_{\operatorname{poly}}}
\newcommand{\setpolyinf}{\mathrm{C}^{\infty}_{\operatorname{poly}}}
\newcommand{\setpolyinfr}{\mathrm{C}^{\infty}_{\operatorname{poly}}(\rset^d,\rset)}
\def\Rmala{\Rker_\step^{\scriptscriptstyle{\operatorname{MALA}}}}
\def\Qgam{\Qker_\step}
\def\Rrwm{\Rker_\step^{\scriptscriptstyle{\operatorname{RWM}}}}
\def\Rula{\Rker_\step^{\scriptscriptstyle{\operatorname{ULA}}}}
\def\alphamala{\tau_\step^{\scriptscriptstyle{\operatorname{MALA}}}}
\def\alpharwm{\tau^{\scriptscriptstyle{\operatorname{RWM}}}_{\gamma}}
\newcommandx{\genrula}[1][1=]{\ifthenelse{\equal{#1}{}}
{\generatorA_{\step}^{\scriptscriptstyle{\operatorname{ULA}}}}
{\generatorA_{#1}^{\scriptscriptstyle{\operatorname{ULA}}}}}
\newcommandx{\genrmala}[1][1=]{\ifthenelse{\equal{#1}{}}
{\generatorA_{\step}^{\scriptscriptstyle{\operatorname{MALA}}}}
{\generatorA_{#1}^{\scriptscriptstyle{\operatorname{MALA}}}}}
\newcommandx{\tgenrmala}[1][1=]{\ifthenelse{\equal{#1}{}}
{\tilde{\generatorA}_{\step}^{\scriptscriptstyle{\operatorname{MALA}}}}
{\tilde{\generatorA}_{#1}^{\scriptscriptstyle{\operatorname{MALA}}}}}
\newcommandx{\genrrwm}[1][1=]{\ifthenelse{\equal{#1}{}}
{\generatorA_{\step}^{\scriptscriptstyle{\operatorname{RWM}}}}
{\generatorA_{#1}^{\scriptscriptstyle{\operatorname{RWM}}}}}
\newcommandx{\genrbar}[1][1=]{\ifthenelse{\equal{#1}{}}
{\generatorA_{\step}^{\scriptscriptstyle{\operatorname{B}}}}
{\generatorA_{#1}^{\scriptscriptstyle{\operatorname{B}}}}}
\def\zetag{\zeta_\step}
\def\pU{U}
\def\nablaU{\nabla \pU}
\def\acceptrwm{\accept^{\scriptscriptstyle{\operatorname{RWM}}}}
\def\lV{V}
\def\cFL{b}
\def\lambdaFL{\lambda}
\newcommand{\vertiii}[1]{{\left\vert\kern-0.25ex\left\vert\kern-0.25ex\left\vert #1
    \right\vert\kern-0.25ex\right\vert\kern-0.25ex\right\vert}}
\def\rhoVunif{\rho}
\def\base{\psi}
\def\basea{\psi^{\text{1st}}}
\def\baseb{\psi^{\text{2nd}}}
\def\param{\theta}
\def\paramstar{{\param^{*}}}
\def\invpicv{\invpi^{\scriptscriptstyle{\operatorname{CV}}}}
\def\pb{p}
\def\nb{N}
\def\xb{\mathsf{Z}} 
\def\yb{\mathsf{Y}} 
\def\bb{x}
\newcommandx{\sigS}[1][1=f]{ \hat{\sigma}^2_{N,n}(#1)}
\newcommand{\logl}[1]{%
    \IfEqCase{#1}{%
        {l}{\ell_{\operatorname{log}}}%
        {p}{\ell_{\operatorname{pro}}}%
    }[\PackageError{logl}{Undefined option to logl: #1}{}]%
}%
\newcommand{\Ub}[1]{%
    \IfEqCase{#1}{%
        {l}{\pU_{\operatorname{log}}}%
        {p}{\pU_{\operatorname{pro}}}%
    }[\PackageError{Ub}{Undefined option to Ub: #1}{}]%
}%
\newcommand{\pib}[1]{%
    \IfEqCase{#1}{%
        {l}{\invpi_{\operatorname{log}}}%
        {p}{\invpi_{\operatorname{pro}}}%
    }[\PackageError{Ub}{Undefined option to Ub: #1}{}]%
}%
\def\varbb{\varsigma^2}
\def\erfc{\operatorname{erfc}}
\def\accept{\mathbf{A}}
\def\chirwm{\chi}
\def\xtilde{\tilde{x}}
\def\steptilde{\tilde{\step}}
\def\complementaire{\text{c}}
\newcommandx{\rayrwm}[1][1=]{\ifthenelse{\equal{#1}{}}{K_{\step}}{K_{#1}}}
\def\Krwm{K}
\def\crwm{C}
\def\rrwm{\mathsf{r}_{\step}}
\newcommand{\bouled}[3]{\mathrm{B}_{#3}(#1,#2)}
\newcommand{\ball}[2]{\mathrm{B}(#1,#2)}
\newcommand{\boulefermeed}[3]{\overline{\mathrm{B}}_{#3}(#1,#2)}
\newcommand{\cone}[2]{\operatorname{cone}(#1, #2)}
\def\thetag{\theta_\step}
\def\phirwm{\varphi}
\def\brwm{b}
\def\corwm{c}
\def\cdfc{\bar{\Phi}}
\def\arwm{a}
\def\Grwm{G}
\def\vmin{v_{\operatorname{min}}}
\def\aU{a}
\def\compact{\mathsf{K}}
\def\Lspace{\operatorname{L}}
\newcommandx{\flecheLimiteLoi}[1][1=\mu]{\overset{\PP_{#1}-\text{weakly}}{\underset{n\to+\infty}{\Longrightarrow}}}
\newcommandx{\flecheLimiteLoiNu}{\overset{\text{weakly}}{\underset{n\to+\infty}{\Longrightarrow}}}
\newcommandx{\flecheLimiteLoit}[1][1=x]{\overset{\PP_{#1}-\text{weakly}}{\underset{t\to+\infty}{\Longrightarrow}}}
\def\gauss{\mathcal{N}}
\def\Borel{\mathcal{B}}
\def\borelSet{\mathcal{B}}
\def\Tr{\operatorname{T}}
\newcommandx{\functionspace}[2][1=+]{\mathcal{F}_{#1}(#2)}
\newcommand{\argmin}{\operatorname*{arg\,min}}
\newcommandx{\VarDeux}[3][3=]{\operatorname{Var}^{#3}_{#1}\left[#2 \right]}
\newcommand{\1}{\mathbbm{1}}
\newcommand{\bigone}{\operatorname{1}}
\newcommand{\LeftEqNo}{\let\veqno\@@leqno}
\newcommand{\floor}[1]{\left\lfloor #1 \right\rfloor}
\newcommand{\ceil}[1]{\left\lceil #1 \right\rceil}
\newcommand{\N}{\ensuremath{\mathbb{N}}}
\newcommand{\PE}{\mathbb{E}}
\newcommand{\PP}{\mathbb{P}}
\newcommand{\absolute}[1]{\left\vert #1 \right\vert}
\newcommand{\abs}[1]{\left\vert #1 \right\vert}
\newcommand{\absLigne}[1]{ \vert #1  \vert}
\newcommand{\tvnorm}[1]{\| #1 \|_{\mathrm{TV}}}
\newcommandx{\Vnorm}[2][1=V]{\| #2 \|_{#1}}
\newcommandx{\VnormEq}[2][1=V]{\left\| #2 \right\|_{#1}}
\newcommandx{\VnormEqs}[2][1=V]{\| #2 \|_{#1}}
\newcommandx{\norm}[2][1=]{\ifthenelse{\equal{#1}{}}{\left\Vert #2 \right\Vert}{\left\Vert #2 \right\Vert^{#1}}}
\newcommandx{\normLigne}[2][1=]{\ifthenelse{\equal{#1}{}}{\Vert #2 \Vert}{\Vert #2\Vert^{#1}}}
\newcommand{\parenthese}[1]{\left(#1 \right)}
\newcommand{\parentheseLigne}[1]{(#1 )}
\newcommand{\parentheseDeux}[1]{\left[ #1 \right]}
\newcommand{\defEns}[1]{\left\lbrace #1 \right\rbrace }
\newcommand{\defEnsLigne}[1]{\lbrace #1 \rbrace }
\newcommand{\ps}[2]{\left\langle#1,#2 \right\rangle}
\newcommand{\eqdef}{:=}
\newcommandx\probaMarkovTilde[2][2=]
\newcommand{\expe}[1]{\PE \left[ #1 \right]}
\newcommand{\expeMarkov}[2]{\PE_{#1} \left[ #2 \right]}
\newcommand{\mmtilf}[1]{m_{#1}}
\newcommand{\plusinfty}{+\infty}
\def\ie{\textit{i.e.}}
\def\eqsp{\;}
\newcommand{\coint}[1]{\left[#1\right)}
\newcommand{\ocint}[1]{\left(#1\right]}
\newcommand{\ooint}[1]{\left(#1\right)}
\newcommand{\ccint}[1]{\left[#1\right]}
\newcommandx{\weight}[2][2=n]{\omega_{#1,#2}^N}
\newcommand{\boule}[2]{\mathrm{B}(#1,#2)}
\newcommand{\boulefermee}[2]{\overline{\mathrm{B}}(#1,#2)}
\def\rmd{\mathrm{d}}
\newcommandx\sequence[3][2=,3=]
\newcommandx{\sequencen}[2][2=n\in\N]{\ensuremath{\{ #1, \eqsp #2 \}}}
\newcommandx\sequenceDouble[4][3=,4=]
\newcommandx{\sequencenDouble}[3][3=n\in\N]{\ensuremath{\{ (#1_{n},#2_{n}), \eqsp #3 \}}}
\newcommand{\wrt}{w.r.t.}
\def\iid{i.i.d.}
\def\rme{\mathrm{e}}
\def\eg{e.g.}
\def\rset{\mathbb{R}}
\def\nset{\mathbb{N}}
\def\nsets{\mathbb{N}^*}
\def\tildem{\varpi}
\newcommandx{\CPE}[3][1=]{{\mathbb E}^{#3}_{#1}\left[#2 \right]} 
\newcommand{\CPP}[3][]
{\ifthenelse{\equal{#1}{}}{{\mathbb P}\left(\left. #2 \, \right| #3 \right)}{{\mathbb P}_{#1}\left(\left. #2 \, \right | #3 \right)}}
\def\generator{\mathscr{L}}
\newcommandx{\osc}[2][1=]{\mathrm{osc}_{#1}(#2)}
\newcommand{\chunk}[4][]%
{\ifthenelse{\equal{#1}{}}{\ensuremath{{#2}_{#3:#4}}}{\ensuremath{#2^#1}_{#3:#4}}
}
\def\vphi{\varphi}
\def\Id{\operatorname{Id}}
\def\Csetfunction{\mathrm{C}}
\def\param{\theta}
\def\Phibf{\mbox{\protect\boldmath$\Phi$}}
\def\varble{\,\cdot\,}
\def\mrc{\mathrm{C}}
\def\bgamma{\bar{\gamma}}
\newcommand{\ensemble}[2]{\left\{#1\,:\eqsp #2\right\}}
\newcommand{\set}[2]{\ensemble{#1}{#2}}
\def\ControlFunc{g}
\def\ControlFunch{h}
\def\tildeX{\tilde{X}}
\newcommandx{\hControlFuncOpt}[1][1=m]{g_{#1}^{\star}}
\def\lin{\mathrm{lin}}
\newcommandx{\ControlFuncSet}[1][1=]{\mathcal{G}_{#1}}
\newcommandx{\ControlFuncSetH}[1][1=]{\mathcal{H}_{#1}}
\def\Param{\Theta}
\newcommandx{\pen}[1][1=m]{\operatorname{pen}_{#1}}
\newcommandx{\EmpRisk}[1][1=m]{\operatorname{R}_{#1}}
\def\bm{b}
\newcommandx{\PVar}[1][1=]{\ensuremath{\operatorname{Var}_{#1}}}
\newcommandx{\PCov}[1][1=]{\ensuremath{\operatorname{Cov}_{#1}}}
\def\ltwo{\operatorname{L}^2}
\newcommand\ke{k_e}
\newcommand{\fracaaa}[2]{#1/(#2)}
\newcommandx{\dlim}[1]{\ensuremath{\stackrel{#1}{\Longrightarrow}}}
\title{Diffusion approximations and control variates for MCMC}
\author{Nicolas Brosse\textsuperscript{1} \and
Alain Durmus\textsuperscript{2} \and
Sean Meyn\textsuperscript{3} \and
\'Eric Moulines\textsuperscript{1} \and
Anand Radhakrishnan\textsuperscript{3}
}
\date{}
\begin{document}

\footnotetext[1]{Centre de Math\'ematiques Appliqu\'ees, UMR 7641, Ecole Polytechnique, France. \\
 nicolas.brosse@polytechnique.edu, eric.moulines@polytechnique.edu }
\footnotetext[2]{Ecole Normale Sup\'erieure CMLA,
61 Av. du Pr\'esident Wilson,
94235 Cachan Cedex, France. \\
alain.durmus@cmla.ens-cachan.fr}
\footnotetext[3]{University of Florida,
Department of Electrical and Computer Engineering,
Gainesville, Florida. \\
meyn@ece.ufl.edu, a4anandr@ufl.edu}

\maketitle

\begin{abstract}
  A new methodology is presented for the construction of control
  variates to reduce the variance of additive functionals of Markov
  Chain Monte Carlo (MCMC) samplers.  Our control variates are defined
  through the minimization of the asymptotic variance of the Langevin
  diffusion over a family of functions, which can be seen as a quadratic risk
  minimization procedure.
  The use of these control variates is theoretically justified. We  show that the asymptotic variances of some well-known MCMC algorithms, including the Random Walk Metropolis and the (Metropolis) Unadjusted/Adjusted Langevin Algorithm, are close to the asymptotic variance of the Langevin diffusion.
  Several examples of Bayesian inference problems demonstrate that the corresponding reduction in the variance is significant.
\end{abstract}

\section{Introduction}
\label{sec:motivations-contrib}
Let $U : \rset^d \to \rset$ be a measurable function on $(\rset^d, \borelSet(\rset^d))$ such that $\int_{\rset^d} \rme^{-U(x)} \rmd x <\infty$. This function is associated to a probability measure $\pi$ on $(\rset^d, \borelSet(\rset^d))$ defined for all $\msa \in\mcbb(\rset^d)$ by
\[ \pi(\msa) \eqdef \int_{\msa}  \rme^{-U(x)} \rmd x / \int_{\rset^d}\rme^{-U(x)} \rmd x \eqsp. \]
We are interested in approximating $\pi(f) \eqdef \int_{\rset^d} f(x) \pi(\rmd x)$, where $f$ is a $\pi$-integrable function.
The classical Monte Carlo solution to this problem is to simulate \iid\ random variables $(X_k)_{k\in\nset}$ with distribution~$\pi$, and then to estimate $\pi(f)$ by the sample mean
\begin{equation}
  \label{eq:def-invpihatn}
  \hat{\pi}_n(f)=n^{-1}\sum_{i=0}^{n-1}f(X_{i})\eqsp.
\end{equation}
In most applications, sampling from $\pi$ is not an option.   Markov Chain Monte Carlo (MCMC) methods
provide samples from a Markov chain $(X_k)_{k \in \nset}$  with unique invariant probability $\pi$.
Under mild conditions \cite[Chapter 17]{bible}, the estimator $\hat{\pi}_n(f)$ defined by \eqref{eq:def-invpihatn} satisfies for any initial distribution a Central Limit Theorem (CLT)
\begin{equation}\label{eq:TCL-discret}
n^{-1/2} \sum_{k=0}^{n-1} \tzf(\sX_k) \flecheLimiteLoiNu \gauss(0, \varinf[\operatorname{d}](\tf)) \eqsp,
\end{equation}
where  $ \tzf= f - \pi(f)$  and  $\varinf[\operatorname{d}](\tf) \geq 0$ is referred to as the asymptotic variance associated to $f$ and  $\gauss(m,\sigma^2)$ denotes a Gaussian distribution with mean $m$ and variance $\sigma^2$.

The aim of the present paper is to propose a new methodology to reduce
the asymptotic variance  of a family of MCMC algorithms.
This method consists in
constructing suitable control variates, \ie~we consider a family of $\invpi$-integrable
functions $\ControlFuncSetH  \subset \set{h : \rset^d \to \rset}{\pi(h) = 0}$ and  then
choose $\ControlFunch \in  \ControlFuncSetH$ such that
$\varinf[\operatorname{d}](\tf+ \ControlFunch) \leq
\varinf[\operatorname{d}](\tf)$.
Reducing the variance of Monte Carlo estimators is a very active research domain: see \eg~\cite[Chapter 4]{robert:casella:2004}, \cite[Section 2.3]{liu2008monte}, and \cite[Chapter 5]{rubinstein:kroese:2017}  for an overview of the main methods - see also \Cref{subsec:bibliography}.

Analysis and motivation are based on the Langevin diffusion defined  by
\begin{equation}\label{eq:SDE}
\rmd \sY_t = - \nabla \pU (\sY_t) \rmd t + \sqrt{2} \rmd B_t \eqsp,
\end{equation}
where $(B_t)_{t\geq 0}$ is a $d$-dimensional Brownian motion. In the sequel, we assume that the Stochastic Differential Equation (SDE) \eqref{eq:SDE} has a unique strong solution $(\sY_t)_{t\geq 0}$ for
every initial condition $x\in\rset^d$.
Under appropriate conditions (see \cite{Bhattacharya1982Classical,cattiaux2012central}), $\invpi$ is invariant for the Markov process $(Y_t)_{t \geq 0}$ and the following CLT holds:
\begin{equation} \label{eq:TCL-continu}
t^{-1/2} \int_{0}^t \tzf(\sY_s) \rmd s
\overset{\text{weakly}}{\underset{t\to+\infty}{\Longrightarrow}}
\gauss(0, \varinf(\tf)) \eqsp.
\end{equation}
The main contribution of this paper is the introduction of a new method to compute control variates based on the expression of the asymptotic variance $\varinf(f)$ given in \eqref{eq:TCL-continu}.
%
%
%
For any twice continuously differentiable function $\varphi$, the differential generator acting on $\varphi$ is denoted by
\begin{equation}\label{eq:def-generator}
\generator \varphi = - \ps{\nabla U}{\nabla \varphi} + \Delta \varphi \eqsp.
\end{equation}
Under appropriate conditions on $\varphi$ and $\pi$, it may be shown that $\invpi(\generator \varphi) = 0$. This property suggests to consider the class of control functionals $\ControlFuncSetH= \set{h = \generator g}{g \in \ControlFuncSet}$ for the Langevin diffusion, where   $\ControlFuncSet$ is a family of ``smooth'' functions, and minimize over $\ControlFuncSetH$, the criterion
\begin{equation}
\label{eq:optimisation-criterion}
\ControlFunch \mapsto\varinf(\tf + \ControlFunch) \eqsp.
\end{equation}
The use of control functionals $h \in \ControlFuncSetH$ has already been proposed in \cite{assaraf1999zero} with applications to quantum Monte Carlo calculations; improved schemes have been later considered in \cite{Mira2013,papamarkou2014} with applications to computational Bayesian inference.
Although $\ControlFuncSetH$ is a class of control functionals for the
Langevin diffusion, the choice of controls variates minimizing the criterion
\eqref{eq:optimisation-criterion} for some MCMC algorithms is motivated by the fact the asymptotic variance
$\varinf[\operatorname{d}](\tf)$, defined in \eqref{eq:TCL-discret} and associated to the Markov chains associated with these methods, is
(up to a scaling factor) a good approximation of the asymptotic
variance of the Langevin diffusion $\varinf(\tf)$ defined in
\eqref{eq:TCL-continu}.

The remainder of the paper is organized as follows.
In \Cref{sec:asymp-variance}, we present our methodology to minimize \eqref{eq:optimisation-criterion} and the construction of control variates for some MCMC algorithms. In \Cref{sec:asympt-expans-asympt}, we state our main result which guarantees that the asymptotic variance $\varinf[\operatorname{d}](\tf)$ defined in \eqref{eq:TCL-discret} and associated with  a given MCMC method is  close (up to a scaling factor) to the asymptotic variance of the Langevin diffusion $\varinf(\tf)$ defined in \eqref{eq:TCL-continu}.
We show that under appropriate conditions on $U$, the Metropolis Adjusted/Unadjusted Langevin Algorithm (MALA and ULA) and the Random Walk Metropolis (RWM) algorithm fit the framework of our methodology.
In \Cref{sec:application_cv}, Monte Carlo experiments illustrating the performance of our method are presented.
The proofs are postponed to \Cref{sec:proofs,sec:geom-ergodicity-mala} and to the Appendix.


\subsection*{Notation}

Let $\mathcal{B}(\rset^d)$ denote the Borel $\sigma$-field of $\rset^d$. Moreover, let $\Lspace^1(\mu)$ be the set of $\mu$-integrable functions for $\mu$ a probability measure on $(\rset^d, \Borel(\rset^d))$. Further, $\mu(\tf)=\int_{\rset^d} \tf(x) \rmd \mu(x)$ for an $\tf\in\Lspace^1(\mu)$.
Given a Markov kernel $R$ on $\rset^d$, for all $x\in\rset^d$ and $\tf$ integrable under $R(x,\cdot)$, denote by $R \tf(x) = \int_{\rset^d} \tf(y) R(x, \rmd y)$.
Let $V: \rset^d \to \coint{1,\infty}$ be a measurable function.
The $V$-total variation distance between two probability measures $\mu$ and $\nu$ on $(\rset^d, \Borel(\rset^d))$ is defined as $\Vnorm[V]{\mu-\nu} = \sup_{\absolute{f} \leq V} \abs{\mu(f) - \nu(f)}$.
If $V = 1$, then $\Vnorm[V]{\cdot}$ is the total variation  denoted by $\tvnorm{\cdot}$.
For a measurable function $f:\rset^d\to\rset$, define $\Vnorm{f} = \sup_{x\in\rset^d} \absolute{f(x)} / V(x)$.

For $u,v\in\rset^d$, define the scalar product $\ps{u}{v} = \sum_{i=1}^d u_i v_i$ and the Euclidian norm $\norm{u} = \ps{u}{u}^{1/2}$.
Denote by $\mathbb{S}(\rset^d) = \defEns{u\in\rset^d : \norm{u} = 1}$.
For $a,b\in\rset$, denote by $a\vee b = \max(a,b)$, $a \wedge b = \min(a,b)$ and $a_{+} = a \vee 0$.
For $a\in\rset_+$, $\floor{a}$ and $\ceil{a}$ denote respectively the floor and ceil functions evaluated in $a$.
We take the convention that for $n,p \in \nset$, $n <p$ then $\sum_{p}^n =0$, $\prod_p ^n = 1$ and $\defEns{p,\ldots,n} = \emptyset$.
Define for $t\in\rset$, $\Phi(t) = (2\uppi)^{-1/2}\int_{-\infty}^t \rme^{-r^2 / 2} \rmd r$ and $\cdfc(t) = 1 - \Phi(t)$.
In addition,  $\varphibf$ stands for the $d$-dimensional standard Gaussian density, \ie~$\varphibf(z) = (2\uppi)^{-d/2} \rme^{-\norm{z}^2 /2}$ for $z\in\rset^d$.

For $k \in\nset$, $m,m' \in\nset^*$ and $\Omega,\Omega'$ two open sets of $\rset^m, \rset^{m'}$ respectively, denote by $\Csetfunction^k(\Omega, \Omega')$, the set of
$k$-times continuously differentiable functions. For $f \in
\Csetfunction^2(\rset^d, \rset)$, denote by $\nabla f$ the gradient of $f$ and by $\Delta f$ the Laplacian of $f$. 
For $k\in\nset$ and $\tf \in\Csetfunction^k(\rset^d, \rset)$, denote by $\DD^i \tf$ the $i$-th order differential of $\tf$ for $i\in\defEns{0,\ldots,k}$.
For $x\in\rset^d$ and $i\in\defEns{1,\ldots,k}$, define $\norm{\DD^0 \tf(x)} = \absolute{\tf(x)}$, $\norm{ \DD^i \tf (x)} = \sup_{u_1,\ldots,u_i \in\mathbb{S}(\rset^d)} \DD^i \tf(x)[u_1, \ldots, u_i]$. 
For $k,p\in\nset$ and $f\in\Csetfunction^{k}(\rset^d,\rset)$, define the semi-norm
\begin{equation*}
  \VnormEq[k,p]{f} = \sup_{x\in\rset^d, \eqsp i\in\{0,\ldots,k\}} \norm{\DD^i f(x)} / (1+\norm[p]{x}) \eqsp.
\end{equation*}
Define  $\setpoly{k}(\rset^d,\rset) = \defEns{f\in\Csetfunction^{k}(\rset^d,\rset) : \inf_{p\in\nset} \Vnorm[k,p]{f} < \plusinfty}$ and for any $f \in \setpoly{k}(\rset^d,\rset)$, we consider the semi-norm
\begin{equation*}
  \norm{f}_{k} = \norm{f}_{k,p} \text{ where } p = \min\{ q \in \nset \, : \, \Vnorm[k,q]{f} < \plusinfty\} \eqsp.
\end{equation*}
Finally, define
$\setpoly{\infty}(\rset^d,\rset) = \cap_{k \in \nset} \setpoly{k}(\rset^d,\rset)$.




 \section{Langevin-based control variates for MCMC methods}
 \label{sec:asymp-variance}

\subsection{Method}

We introduce in the following our methodology based on control variates for the Langevin diffusion.  In order not to obscure the main ideas of this method, we present it informally. Results which justify rigorously the related derivations are postponed to \Cref{sec:asympt-expans-asympt}.

We consider a family of control functionals $\ControlFuncSet \subset \setpoly{2}(\rset^d,\rset)$.
There is a great flexibility in the choice of the family $\ControlFuncSet$.
We illustrate our methodology through a simple example
\begin{equation}
\label{eq:definition-linear-control}
\ControlFuncSet[\lin]= \set{g=\ps{\param}{\base}}{\param \in \Param}
 \text{ where } \base= \{\base_i\}_{i=1}^{\pb}, \base_i \in \setpoly{2}(\rset^d,\rset), \eqsp i \in \{1,\dots,\pb\} \eqsp,
\end{equation}
with $\Theta \subset \rset^{\pb}$,
but the method developed in this paper  is by no means restricted to a linear parameterized family.

A key property of the Langevin diffusion which is the basis of our methodology is the following ``carré du champ'' property (see for example \cite[Section 1.6.2, formula 1.6.3]{bakry:gentil:ledoux:2014}): for all $g_1,g_2\in\setpoly{2}(\rset^d,\rset)$,
\begin{equation}
\label{eq:carre-du-champ}
\pi\parenthese{ g_1 \generator g_2} = \pi\parenthese{ g_2 \generator g_1} = -\pi\parenthese{\ps{\nabla g_1}{\nabla g_2}} \eqsp,
\end{equation}
which reflects in particular that $\generator$ is a self-adjoint operator on a dense subspace of $\ltwo(\pi)$, the Hilbert space of square integrable function \wrt\ $\pi$.
A straightforward consequence of \eqref{eq:carre-du-champ} (setting  $g_1 = \bigone$) is that
$\pi(\generator g) = 0$ for any function $g \in \setpoly{2}(\rset^d,\rset)$. This observation implies that
$f$ and $f+ \generator g$ have the same expectation with respect to $\pi$ for any $f \in \setpoly{2}(\rset^d,\rset)$ and $g \in \setpoly{2}(\rset^d,\rset)$.
Therefore, as emphasized in the introduction, if the CLT \eqref{eq:TCL-continu} holds, a relevant choice of control variate for the Langevin diffusion to estimate $f \in \setpoly{2}(\rset^d,\rset)$, is $h^{\star} = \generator g^{\star}$, where $g^{\star}$ is a minimizer of
\begin{equation}
  \label{eq:fun_asympt_var_g_g}
  g \mapsto\varinf(\tf + \generator \ControlFunc) \eqsp.
\end{equation}
In the following, we explain how this optimization problem can be practically solved.

It is shown in \cite{Bhattacharya1982Classical} (see also \cite{glynn1996} and \cite{cattiaux2012central}) that under appropriate conditions on $U$ and $f$,  the solution $(Y_t)_{t \geq 0}$ of the Langevin diffusion \eqref{eq:SDE} satisfies the CLT  \eqref{eq:TCL-continu} where the asymptotic variance is given by
\begin{equation}
\label{eq:asymptotic-variance}
\varinf(\tf) = 2 \invpi\parentheseLigne{\sPoic \{f-\pi(f)\} } \eqsp,
\end{equation}
and $\sPoic \in \setpoly{2}(\rset^d,\rset)$ satisfies Poisson's equation:
\begin{equation}\label{eq:poisson-eq-langevin}
\generator \sPoic = - \tzf \eqsp, \quad \text{where $\tzf= f - \pi(f)$} \eqsp.
\end{equation}
Another expression for $\varinf(\tf)$ is,  using \eqref{eq:carre-du-champ} and \eqref{eq:poisson-eq-langevin}:
\begin{equation}
\label{eq:key-relation-variance}
\varinf(\tf) = 2\invpi(\sPoic \tzf )
= - 2\invpi (  \sPoic \generator \sPoic )
 = 2 \invpi  ( \|\nabla \sPoic \|^2 ) \eqsp.
\end{equation}
Based on \eqref{eq:carre-du-champ}, \eqref{eq:asymptotic-variance} and \eqref{eq:key-relation-variance}, we see now how the minimization of \eqref{eq:fun_asympt_var_g_g} can be computed in practice. First, by definition \eqref{eq:poisson-eq-langevin}, for all $\ControlFunc \in \ControlFuncSet$, $\sPoic- \ControlFunc \in\setpoly{2}(\rset^d,\rset)$ is a solution to the Poisson equation
\[
\generator (\sPoif-\ControlFunc) = \invpi(f + \generator \ControlFunc) - (f + \generator \ControlFunc) \eqsp.
\]
Therefore, we get for all $\ControlFunc \in \ControlFuncSet$, using $\pi(\generator \ControlFunc) = 0$ and   \eqref{eq:asymptotic-variance} 
\begin{equation*}
\varinf(\tf + \generator \ControlFunc)
= 2\invpi\parenthese{(\sPoic - \ControlFunc) \defEns{\tzf + \generator \ControlFunc}} \eqsp.
= 2\invpi(\| \nabla \hat{f} - \nabla \ControlFunc \|^2)\eqsp.
\end{equation*}
In addition, by \eqref{eq:carre-du-champ} and \eqref{eq:poisson-eq-langevin}, we get that  $\invpi(\sPoif \generator \ControlFunc)= -\invpi(\tzf \ControlFunc)$, and we obtain using  \eqref{eq:key-relation-variance} that
\begin{align}
\nonumber
\varinf(\tf+ \generator \ControlFunc)
&= 2\invpi  ( \sPoif \tzf ) - 2\invpi(\ControlFunc \tzf ) + 2\invpi(\sPoif \generator \ControlFunc) - 2\invpi(\ControlFunc \generator \ControlFunc) \\
\label{eq:key-expression}
&= 2\invpi (\sPoif \tzf) - 4\invpi(\ControlFunc \tzf ) + 2\invpi( \| \nabla \ControlFunc\|^2) \eqsp.
\end{align}
Minimizing the map \eqref{eq:fun_asympt_var_g_g} is  equivalent  to minimization of  $\ControlFunc \mapsto - 4\invpi(\ControlFunc \tzf) + 2\invpi( \| \nabla \ControlFunc\|^2)$. It means that we might actually minimize the function
$\ControlFunc \mapsto \varinf(\tf + \generator \ControlFunc)$ \emph{without} computing the solution  $\sPoic$ of the Poisson equation, which is in general a computational bottleneck.

When $\ControlFunc_\param = \ps{\param}{\base} \in \ControlFuncSet[\lin]$, then \eqref{eq:key-expression} may be rewritten as:
\begin{equation*}
\varinf(\tf + \generator \ControlFunc_\param) = 2 \param^{\Tr} H \param - 4\ps{\param}{\bm} + \varinf(\tf) \eqsp,
\end{equation*}
where $H \in \rset^{p \times p}$ and $\bm$ are given for any $i,j \in \{1,\ldots,p\}$ by
\begin{equation*}
  H_{ij} = \pi(\ps{\nabla \base_i}{\nabla \base_j})  \quad \text{and} \quad \bm_i= \invpi(\base_i \tzf) \eqsp.
\end{equation*}
Note that $H$ is by definition a symmetric
semi-positive definite matrix. If $(1, \base_1,\ldots,\base_\pb)$ are linearly independent in
$\setpoly{2}(\rset^d,\rset)$, then $H$ is full rank and the minimizer of $\varinf(\tf + \generator \ControlFunc_\param)$ is given by
\begin{equation}
\label{eq:min-asymp-var-diffusion}
  \paramstar = H^{-1} \bm \eqsp.
\end{equation}

In conclusion, in addition to its  theoretical interest,   the Langevin diffusion \eqref{eq:SDE} is an attractive model because optimization of the asymptotic variance is greatly simplified.    However, we are not advocating simulation of this diffusion in MCMC applications.   The main contribution of this paper is to show that the optimal control variate for the diffusion remains nearly optimal for many standard MCMC algorithms.

One example is  the Unadjusted Langevin Algorithm (ULA), the Euler
discretization scheme associated to the Langevin SDE \eqref{eq:SDE}:
\begin{equation*}
X_{k+1} = X_k - \step \nablaU(X_k) + \sqrt{2\step} Z_{k+1} \eqsp,
\end{equation*}
where $\step>0$ is the step size and $(Z_k)_{k\in\nset}$ is an \iid~sequence
of standard Gaussian $d$-dimensional random vectors. The idea of using
the Markov chain $(X_k)_{k\in\nset}$ to sample approximately from
$\invpi$ has been first introduced in the physics literature by
\cite{parisi:1981} and popularized in the computational statistics
community by \cite{grenander:1983} and \cite{grenander:miller:1994}.
As shown below, other examples are the Metropolis Adjusted Langevin Algorithm (MALA) algorithm (for which an additional Metropolis-Hastings correction step is added) but also for MCMC algorithms  which do not seem to be ``directly'' related to the Langevin diffusion,  like the Random Walk Metropolis algorithm (RWM).

To deal with these different algorithms within the same theoretical framework, we consider a family of Markov kernels $\set{\RKer_\step}{ \gamma \in \ocint{0,\bar{\gamma} }}$,  parameterized  by  a scalar parameter $\step\in \ocint{0,\bar{\gamma} }$ where $\bgamma >0$.
For the ULA and MALA algorithm, $\gamma$ is the stepsize in the Euler discretization of the diffusion; for the RWM this is the variance of the random walk proposal.
For any initial probability $\xi$ on $(\rset^d, \borelSet(\rset^d))$ and $\step\in \ocint{0,\bar{\gamma}}$, denote by $\PP_{\xi, \step}$ and $\PE_{\xi, \step}$ the probability and the expectation respectively on the canonical space of the Markov chain  with initial probability $\xi$ and of transition kernel $\Rkerg$.
By convention, we set $\PE_{x,\step} = \PE_{\updelta_x,\step}$ for all $x\in\rset^d$. We denote by $(X_k)_{k \geq 0}$ the canonical process.
It is assumed below that $\set{\RKer_\step}{ \gamma \in \ocint{0,\bar{\gamma} }}$, $f$ and $\ControlFuncSet$ satisfy the following assumptions. Roughly speaking, these conditions impose that for any $\gamma \in \ocint{0,\bgamma}$ and $g \in \ControlFuncSet$, the discrete CLT \eqref{eq:TCL-discret} holds for the function $f + \generator g$, and that the associated asymptotic variance $\sigma^2_{\infty,\gamma}(f+\generator g)$ is sufficiently close to $\sigma_{\infty}(f + \generator g)$ given by the continuous CLT \eqref{eq:SDE}, as  $\step\downarrow 0^+$, so that control functionals for the Markov chain $(X_k)_{k \in\nset}$ can be derived using the methodology we developed above for the Langevin diffusion.
\begin{enumerate}[label=(\Roman*)] 
\item \label{item:assum-general-1}
  For each  $\step \in \ocint{0,\bar{\step} } $, $\Rker_{\step}$ 
  has an  invariant probability distribution  $\invpig$ satisfying $\pi_\step(|f+ \generator g|) < \infty$ for any  $g \in \ControlFuncSet$.
\item \label{item:assum-general-2}
  For any   $g \in \ControlFuncSet$ and  $\step \in \ocint{0,\bar{\gamma} } $,
  \begin{equation}\label{eq:clt-Rkerg}
  \sqrt{n}(\invpihat(f+\generator g) - \invpi_\step(f+\generator g))\overset{\text{weakly}}{\underset{n\to+\infty}{\Longrightarrow}}
  \gauss(0, \varinf[\step](f+\generator g))
  \end{equation}
  where $\invpihat(f+\generator g)$ is the sample mean (see \eqref{eq:def-invpihatn}),
  and $\varinf[\step](f+\generator g) \geq 0$ is the asymptotic variance (see \eqref{eq:TCL-discret}) relatively to $\Rkerg$.
\item \label{item:assum-general-3}
For any $g \in \ControlFuncSet$, as $\gamma \downarrow 0^+$,
\begin{align}
\label{eq:approximation-loi-variance}
\step \varinf[\step](f+\generator g) &=  \varinf(f+\generator g) + o(1) \eqsp, \\
\label{eq:approximation-invpig-invpi}
\invpig(f+\generator g) &= \invpi(f+\generator g) + O(\step) \eqsp,
\end{align}
where $\varinf(f+\generator g)$ is defined in \eqref{eq:asymptotic-variance}.
\end{enumerate}
The verification that these assumptions are satisfied for the ULA, RWM and MALA algorithms (under appropriate technical conditions), in the case $f \in \setpoly{\infty}(\rset^d,\rset)$ and $\ControlFuncSet \subset \setpoly{\infty}(\rset^d,\rset)$,  is postponed to \Cref{sec:asympt-expans-asympt}.
The standard conditions \ref{item:assum-general-1}--\ref{item:assum-general-2} are in particular satisfied if, for any $\gamma \in \ocint{0,\bar{\gamma}}$,  $\Rkerg$ is $V$-uniformly geometrically ergodic for some measurable function  $V: \rset^d \to \coint{1,\plusinfty}$, \ie\ it admits an invariant probability measure $\pi_\gamma$ such that $\pi_\gamma(V) < \plusinfty$ and
there exist $C_\gamma \geq 0$ and $\rho_\gamma \in \coint{0,1}$ such that for any probability measure $\xi$ on $(\rset^d,\mcbb(\rset^d))$  and $n \in \nset$,
\begin{equation*}
  \label{eq:def_v_unif_ergo}
  \Vnorm{\xi R^n_{\gamma} - \pi_{\gamma}} \leq C_\gamma \xi(V) \rho_\gamma^n \eqsp,
\end{equation*}
(see \eg~\cite{bible} or \cite{douc:moulines:priouret:soulier:2018}). Condition \ref{item:assum-general-3}  requires a specific form of the dependence of $C_\gamma$ and $\rho_\gamma$ on $\gamma$.

Based on \ref{item:assum-general-1}--\ref{item:assum-general-3} and \eqref{eq:min-asymp-var-diffusion}, the estimator of $\invpi(\tf)$ we suggest is given for $N,n,m\in\nset^*$ by
\begin{equation}\label{eq:def-invpi-cv}
  \invpicv_{N, n, m}(\tf) = \frac{1}{n}\sum_{k=N}^{n+N-1} \parenthese{\tf(X_k) + \generator \hControlFuncOpt (X_k)} \eqsp,
\end{equation}
where $N$ is the length of the burn-in period
and $\hControlFuncOpt \in \argmin_{\ControlFunc \in \ControlFuncSet} \EmpRisk(\ControlFunc)$ is a minimizer of the structural  risk associated with \eqref{eq:key-expression}
\begin{equation}
\label{eq:empirical-risk-minimization}
\EmpRisk(\ControlFunc) = \frac{1}{m} \sum_{k=N}^{N+m-1} \left\{-2 \ControlFunc(\tilde{X}_k) \tzf_m(\tilde{X}_k) +  \| \nabla \ControlFunc(\tilde{X}_k) \|^2 \right\} \eqsp, 
\end{equation}
where $\tzf_m(x)=  f(x)-m^{-1} \sum_{k=N}^{N+m-1} f(\tilde{X}_k)$.
Here $(\tilde{X}_k)_{k\in\nset}$ can be an independent copy of (or be identical to) the Markov chain $(X_k)_{k \in\nset}$ and $m$ is the length of the sequence used to estimate the control variate.
In this article, we do not study to what extent minimizing the empirical asymptotic variance \eqref{eq:empirical-risk-minimization} leads to the minimization of the asymptotic variance of $\invpicv_{N, n, m}(\tf)$ \eqref{eq:def-invpi-cv} as $n\to\plusinfty$; such a problem has been tackled by \cite{Belomestny2018} in the \iid~case.
To control the complexity of the class of functions $\ControlFuncSet$, a penalty term  may be added in \eqref{eq:empirical-risk-minimization}. The use of a penalty term to control the excess risk in the estimation of the control variate has been proposed and discussed in \cite{south:mira:drovandi:2018}.
Concerning the choice of $\ControlFuncSet$, the simplest case is $\ControlFuncSet[\lin]$ defined by \eqref{eq:definition-linear-control}, corresponding to the parametric case, and it is by far the most popular approach.
It is possible to go one step further and adopt fully non-parametric approaches like kernel regression methods \cite{oates:girolami:chopin:2016} or neural networks \cite{zhu:wan:zhong:2018}.

If the control function is a linear combination of  functions, $\ControlFunc_\param = \ps{\param}{\base}$ where $\base= \set{\base_i}{1 \leq i \leq \pb}$, then the empirical risk \eqref{eq:empirical-risk-minimization} may be expressed as
\begin{equation*}
  \EmpRisk(\ControlFunc_\param)= -2\ps{\param}{\bm_m} + \ps{\param}{H_m \param} \eqsp,
\end{equation*}
where  for $1 \leq i,j \leq p$,
\begin{equation*}
[ \bm_m ]_i = \frac{1}{m} \sum_{k=N}^{N+m-1} \base_i(\tilde{X}_k) \tzf_m (\tilde{X}_k) \quad, \eqsp
[H_m]_{ij} = \frac{1}{m} \sum_{k=N}^{N+m-1} \ps{\nabla \base_i(\tildeX_k)}{\nabla \base_j(\tildeX_k)}  \, .
\end{equation*}
In this simple case, an optimizer is obtained in closed form
\begin{equation}
  \label{eq:def-paramhat-n-star}
  \param_m^* = H_m^{+}  \bm_m \eqsp,
\end{equation}
where $H_m^{+}$ is the Moore-Penrose pseudoinverse of $H_m$.

\subsection{Comparison with other control variate methods for Monte Carlo simulation}
\label{subsec:bibliography}
The construction of control variates for MCMC and the related problem of approximating solutions of Poisson equations are very active fields of research. It is impossible to give credit for all the contributions undertaken in this area: see \cite{Dellaportas2012Control}, \cite{papamarkou2014}, \cite{oates:girolami:chopin:2016} and references therein for further background.
We survey in this section only the methods which are closely connected to our approach.
\cite{HendersonThesis} and \cite[Section 11.5]{meyn2008control} proposed control variates of the form $(\RKer - \Id) \ControlFunc_\param$  where $\ControlFunc_\param \eqdef \ps{\param}{\base}$ and $R$ is the Markov kernel associated to a Markov chain $(X_k)_{k \in \nset}$ and
$\base = (\base_1, \ldots, \base_\pb)$ are known $\pi$-integrable functions. The parameter $\param\in\rset^\pb$ is obtained by minimizing the asymptotic variance
\begin{equation}\label{eq:criterion-min-asymp-var}
  \min_{\param\in\rset^{\pb}} \varinf[\operatorname{d}](\tf + (\RKer - \Id) \ControlFunc_\param) = \min_{\param\in\rset^{\pb}} \invpi\parenthese{\defEns{\sPoid - \ControlFunc_\param}^2 - \defEns{\Rker(\sPoid - \ControlFunc_\param)}^2} \eqsp,
\end{equation}
where  $\sPoid$ is solution of the \textit{discrete} Poisson equation $(\RKer - \Id) \sPoid = -\tzf$.
The method suggested in \cite[Section 11.5]{meyn2008control}  to minimize \eqref{eq:criterion-min-asymp-var} requires estimates of the solution $\sPoid$ of the Poisson equation. Temporal Difference learning is a possible candidate, but this method is complex to implement and suffers from high variance.

\cite{Dellaportas2012Control} noticed that if $\Rker$ is reversible \wrt~$\invpi$, it is possible to optimize the limiting variance \eqref{eq:criterion-min-asymp-var} without computing explicitly the Poisson solution $\sPoid$.  This approach is of course closely related with our proposed method: the reversibility of the Markov kernel is replaced here by the self-adjointness of the generator of the Langevin diffusion which implies the reversibility of the semi-group.

Each of the algorithms in the aforementioned literature  requires computation of $R\psi_i$ for each $i\in\defEns{1,\ldots,\pb}$, which is in general difficult except in very specific examples.  In \cite{HendersonThesis,meyn2008control} this is addressed by restricting to kernels $R(x,\varble)$ with finite support for each $x$. In \cite{Dellaportas2012Control} the authors consider mainly  Gibbs samplers in their numerical examples.

Our methodology is also related to the Zero Variance method proposed by \cite{Mira2013,papamarkou2014,oates:girolami:chopin:2016,south:mira:drovandi:2018}, which uses $\generator \ControlFunc$ as a control variate and chooses $\ControlFunc$ by minimizing $\invpi( \{\tzf +\generator \ControlFunc\}^2)$.
A drawback of this method stems from the fact that the optimization criterion is theoretically justified if $(X_k)_{k\in\nset}$ is \iid\ and
might significantly differ from the asymptotic variance $\varinf[\step](\tf+\generator \ControlFunc)$ defined in \eqref{eq:clt-Rkerg}.
We compare the two approaches in \Cref{sec:application_cv}.

\section{Asymptotic expansion for the asymptotic variance of MCMC algorithms}
\label{sec:asympt-expans-asympt}
In this Section, we provide conditions upon which the approximations
\eqref{eq:approximation-loi-variance}-\eqref{eq:approximation-invpig-invpi}
are satisfied for $f \in \setpoly{\infty}(\rset^d,\rset)$ and
$\ControlFuncSet \subset \setpoly{\infty}(\rset^d,\rset)$.
We first assume that  the gradient of the potential is Lipschitz:
\begin{assumption}
\label{assumption:U-Sinfty}
$\pU\in\setpoly{\infty}(\rset^d,\rset)$ and  $\nabla U$ is Lipschitz, \ie~there exists $L \geq 0$ such that for all $x,y\in\rset^d$,
\begin{equation*}
  \norm{\nabla U(x) - \nabla U(y)} \leq L \norm{x-y} \eqsp.
\end{equation*}
\end{assumption}

Denote by $(\sgP_t)_{t\geq 0}$ the semigroup associated to the SDE \eqref{eq:SDE} defined by $P_t f(x) = \expe{f(Y_t)}$ where $f$ is bounded measurable and $(Y_t)_{t\geq 0}$ is a solution of \eqref{eq:SDE} started at $x$. By construction, the target distribution $\pi$ is invariant for  $(\sgP_t)_{t\geq 0}$.

The
conditions we consider require that
$\{R_{\step} ,\: \, \step \in \ocint{0,\bgamma}\}$ is a family of
Markov kernels such that for any $\step \in \ocint{0,\bgamma}$, $R_{\step}$ approximates  $P_{\step}$ in a sense specified below. Let $V:\rset^d\to\coint{1,\plusinfty}$ be a
measurable function.
\begin{assumption}
\label{ass:geo_ergod}
\begin{enumerate}[label=(\roman*)]
\item \label{ass:geo_ergod_i}
For any $\step\in\ocint{0,\bgamma}$, $\Rkerg$ has a unique invariant distribution $\invpig$.
\item \label{ass:geo_ergod_ii} There exists $c >0$ such that  $  \liminf_{\norm{x} \to \infty}\{ \lV(x) \exp(-c\norm{x})\} > 0$,
$\pi(V) < \plusinfty$ and  $ \sup_{\step \in \ocint{0,\bgamma}} \pi_{\step}(V) < \plusinfty$.
\item \label{ass:geo_ergod_iii} There exist $C>0$ and $\rho\in\coint{0,1}$ such that for all  $x\in\rset^d$,
  \begin{align}
    \label{eq:def-V-unif}
    \text{for any  $n\in\nset$,  $\step\in\ocint{0,\bgamma}$} \eqsp, \qquad  & \Vnorm{\updelta_x \RKer_\step^n - \invpig} \leq C \rhoVunif^{n\step} \lV(x)  \eqsp, \\
    \label{eq:def-V-unif_ii}
  \text{ for any $t \geq 0$} \eqsp,  \qquad   &  \Vnorm{\updelta_x P_t - \pi} \leq C \rhoVunif^{t} \lV(x) \eqsp.
\end{align}
\end{enumerate}
\end{assumption}
These conditions imply that the kernels $\Rker_\step$ are
$V$-uniformly geometrically ergodic ``uniformly'' with
respect to the parameter $\step \in \ocint{0,\bgamma}$ with a mixing
time going to infinity as the inverse of the stepsize $\step$ when
$\step \downarrow 0^+$. Note that the mixing time of $P_{\step}$ is also inversely proportional to $\step$ when  $\step \downarrow 0^+$.

Under \Cref{assumption:U-Sinfty} and  \Cref{ass:geo_ergod}, by \cite[Lemma 2.6]{kopec:2015},  there exists a solution  $\sPoif \in\setpolyinf(\rset^d,\rset)$ to Poisson's equation \eqref{eq:poisson-eq-langevin} for any $f \in \setpolyinf(\rset^d,\rset)$ which is given for any $x \in \rset^d$ by
\begin{equation}
  \label{eq:def_poisson_int}
  \sPoif(x) = \int_{0}^{\plusinfty} P_t \tzf (x) \rmd t \eqsp.
\end{equation}
Moreover, \cite[Theorem 3.1]{cattiaux2012central} shows that, for any $f \in \setpolyinf(\rset^d,\rset)$, $t^{-1/2} \int_{0}^t \tzf(\sY_s) \rmd s$ where $(\sY_t)_{t \geq 0}$ is the solution of the Langevin SDE,
converges weakly to  $\gauss(0,\sigma^2_{\infty}(f))$ where $\sigma_\infty^2(f)$ is given by \eqref{eq:asymptotic-variance}.

Note that the assumption \Cref{ass:geo_ergod} implies that for any  $x\in\rset^d$,
\begin{align}
  \label{eq:discrete-drift-uniform-bound}
  \text{ for any  $\step\in\ocint{0,\bgamma}$, $n\in\nset^*$} \eqsp, \qquad &  \Rkerg^n \lV(x) \leq C \rho^{n\step} \lV(x) +  \sup_{\step \in \ocint{0,\bgamma}} \pi_{\step} (V)\eqsp,\\
  \nonumber
\text{ for any $t \geq 0$} \eqsp, \qquad & P_t \lV(x) \leq C \rho^{t} \lV(x) +   \pi (V)
  \eqsp.
\end{align}
We now introduce an assumption guaranteeing that the limit $\step^{-1} (\Rkerg - \Id)$  as $\step \downarrow 0^+$ is equal to the infinitesimal generator of the Langevin diffusion defined, for a bounded measurable function $f$ and $x \in \rset^d$, as $\generator f(x) = \lim_{t\to \plusinfty} \{(P_tf(x) - f(x))/t \}$, if the limit exists. This is a natural assumption if the semigroup of the Langevin diffusion evaluated at time $t=\step$, $P_\step$, and $\Rkerg$ are close as $\step \downarrow 0^+$.
\begin{assumption}
  \label{ass:dev_generator_discrete}
There exist $\alpha\geq 1$ and a family of operators $(\genag)_{\step\in\ocint{0,\bgamma}}$ with $\genag :\setpolyinf(\rset^d,\rset) \to \setpolyinf(\rset^d,\rset)$, such that for all $f\in\setpolyinf(\rset^d,\rset)$ and $\step\in\ocint{0,\bgamma}$,
\begin{equation*}
  \Rkerg f = f + \step \generator f + \step^{\alpha} \genag f \eqsp.
\end{equation*}
In addition,  there exists $\ke \in \nset$, $\ke\geq 2$ such that for all $p \in \nset$ there exist $q \in \nset$ and $C \geq 0$ (depending only on $k_e, p$)  such that for any $f \in \setpolyinf(\rset^d, \rset)$,
\begin{equation*}
  \sup_{\step \in \ocint{0,\bgamma}}  \VnormEq[0,q]{\genag f} \leq C \VnormEq[k_e,p]{f} \eqsp.
\end{equation*}
\end{assumption}
We show below that these conditions are satisfied for the Metropolis Adjusted /
Unadjusted Langevin Algorithm (MALA and ULA) algorithms (in which case
$\step$ is the stepsize in the Euler discretization of the Langevin
diffusion) and also by the Random Walk Metropolis algorithm (RWM) (in
which case $\step$ is the variance of the increment distribution).
We next give an upper bound on the difference between $\invpig$ and $\invpi$ which implies that \eqref{eq:approximation-invpig-invpi} holds. The proofs are postponed to \Cref{sec:proofs}.
\begin{proposition}
\label{item-thm-var-3}
Assume \Cref{assumption:U-Sinfty}, \Cref{ass:geo_ergod} and \Cref{ass:dev_generator_discrete} and let $p \in \nset$. Then there exists $C < \infty$ such that for all $f\in\setpolyinf(\rset^d,\rset)$ and  $\step\in\ocint{0,\bgamma}$,
\begin{equation*}
\absolute{\invpig(f) - \invpi(f)} \leq C \normLigne{f}_{\ke,p}  \step^{\alpha-1} \eqsp.
\end{equation*}
\end{proposition}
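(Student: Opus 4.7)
The plan is to use a standard Poisson-equation argument, combining the invariance of $\pi_\gamma$ under $R_\gamma$ with the one-step expansion of $R_\gamma$ given by \Cref{ass:dev_generator_discrete} applied to the solution of the continuous Poisson equation.

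\textbf{Step 1: Introduce the Poisson solution.} Set $\tzf = f - \pi(f)$. Under \Cref{assumption:U-Sinfty} and \Cref{ass:geo_ergod}, \cite[Lemma 2.6]{kopec:2015} provides $\sPoif \in \setpolyinf(\rset^d,\rset)$ satisfying $\generator \sPoif = -\tzf$, namely $\sPoif(x) = \int_0^{\plusinfty} P_t \tzf(x) \rmd t$. I would record in parallel a quantitative semi-norm bound: there exist $k,p' \in \nset$ (depending only on the chosen $p$ and on $k_e$) and $C < \infty$ such that $\VnormEq[k_e,p']{\sPoif} \leq C \VnormEq[k,p]{f}$. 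This follows by differentiating under the integral sign and combining (i) the geometric decay $\norm{P_t \tzf(x)} \leq C \rho^t V(x)$ from \Cref{ass:geo_ergod}, (ii) the exponential-type lower bound on $V$ in \Cref{ass:geo_ergod}\ref{ass:geo_ergod_ii} which gives $1+\norm{x}^p \leq C_p V(x)$, and (iii) standard polynomial-growth bounds on the derivatives of $P_t$ which are available under \Cref{assumption:U-Sinfty} (Lipschitz gradient of $U$).

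\textbf{Step 2: Apply the expansion and the invariance.} By \Cref{ass:dev_generator_discrete} applied to $\sPoif \in \setpolyinf(\rset^d,\rset)$,
\begin{equation*}
\step \generator \sPoif = \Rkerg \sPoif - \sPoif - \step^{\alpha} \genag \sPoif \eqsp.
\end{equation*}
Integrating against $\pi_\gamma$ and using $\pi_\gamma \Rkerg = \pi_\gamma$ gives $\step \pi_\gamma(\generator \sPoif) = -\step^{\alpha} \pi_\gamma(\genag \sPoif)$, hence
\begin{equation*}
\pi_\gamma(f) - \pi(f) = \pi_\gamma(\tzf) = -\pi_\gamma(\generator \sPoif) = \step^{\alpha-1} \pi_\gamma(\genag \sPoif) \eqsp.
\end{equation*}
All terms make sense because $\sPoif \in \setpolyinf$ and $\sup_{\step\in\ocint{0,\bgamma}}\pi_\step(V) < \plusinfty$ with $V$ dominating any polynomial.

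\textbf{Step 3: Control the remainder.} The second part of \Cref{ass:dev_generator_discrete} provides $q \in \nset$ and $C \geq 0$ (depending on $k_e$ and $p'$) such that $\sup_\gamma \VnormEq[0,q]{\genag \sPoif} \leq C \VnormEq[k_e,p']{\sPoif}$, i.e.\ $|\genag \sPoif(x)| \leq C \VnormEq[k_e,p']{\sPoif}(1+\norm{x}^q)$. Using $1 + \norm{x}^q \leq C' V(x)$ from \Cref{ass:geo_ergod}\ref{ass:geo_ergod_ii} and $\sup_{\step\in\ocint{0,\bgamma}}\pi_\step(V) < \plusinfty$,
\begin{equation*}
\abs{\pi_\gamma(\genag \sPoif)} \leq C'' \VnormEq[k_e,p']{\sPoif} \leq C''' \VnormEq[k_e,p]{f} \eqsp,
\end{equation*}
by Step 1. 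Combining with the identity in Step 2 yields the claim.

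\textbf{Main obstacle.} The only non-routine ingredient is the quantitative semi-norm estimate on $\sPoif$ in Step 1: turning Kopec's qualitative regularity into a concrete bound $\VnormEq[k_e,p']{\sPoif} \leq C \VnormEq[k,p]{f}$ requires controlling derivatives of $P_t \tzf$ uniformly in $t$, multiplied by the exponential time decay coming from $V$-geometric ergodicity in \eqref{eq:def-V-unif_ii}. The derivative estimates on $P_t$ are classical under \Cref{assumption:U-Sinfty} (they follow from differentiating the Langevin SDE with respect to the initial condition and Gronwall's lemma), but making them polynomial in $\norm{x}$ is where care is needed; once this is done, the rest of the argument is a direct manipulation of the invariance identity.
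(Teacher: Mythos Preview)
Your proof is correct and follows essentially the same route as the paper: introduce the Poisson solution $\sPoif$ for the Langevin generator, apply the one-step expansion of $R_\gamma$ to $\sPoif$, integrate against $\pi_\gamma$ to extract $\pi_\gamma(f)-\pi(f)=\gamma^{\alpha-1}\pi_\gamma(\genag\sPoif)$, and bound the right-hand side via \Cref{ass:dev_generator_discrete} and the uniform moment control on $\pi_\gamma$. The only difference is cosmetic: the paper packages the quantitative semi-norm bound $\Vnorm[k_e,q]{\sPoif}\lesssim\Vnorm[k_e,p]{f}$ as a direct citation of \cite[Lemma~2.6, Proposition~2.7]{kopec:2015} (their \Cref{prop:existence-sol-Poisson}), whereas you sketch an argument for it and correctly flag it as the only nontrivial step.
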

\begin{proof}
  The proof is postponed to \Cref{subsec:proof:item-thm-var-3}.
\end{proof}

The next result which is the main theorem of this Section precisely formalizes \eqref{eq:approximation-loi-variance}.

\begin{theorem} \label{prop:dev-weak-error}
Assume \Cref{assumption:U-Sinfty}, \Cref{ass:geo_ergod} and \Cref{ass:dev_generator_discrete}.
Then,  there exists $C\geq 0$ such that for all $\tf\in\setpolyinf(\rset^d,\rset)$,  $\step\in\ocint{0,\bgamma}$, $x\in\rset^d$, and $n\in\nset^*$
\begin{multline*}
  \absolute{\frac{\step}{n}\expeMarkov{x, \step}{\parenthese{\sum_{k=0}^{n-1} \defEns{ \tf(X_k) - \invpig(\tf)}}^2} - \varinf(\tf)}\\ \leq C \normLigne[2]{f}_{\ke +2,p} \defEns{\step^{(\alpha-1) \wedge 1} + \fracaaa{V(x)}{n^{1/2}\step^{1/2}}} \eqsp,
\end{multline*}
where $\varinf(\tf)$ is defined in \eqref{eq:asymptotic-variance}.
\end{theorem}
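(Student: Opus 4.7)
The plan is to use the Poisson solution $\sPoic$ of the continuous Langevin diffusion as a surrogate Poisson solution for the discrete kernel $\Rkerg$, and to combine a martingale decomposition with the ``carré du champ'' identity \eqref{eq:carre-du-champ} so that $\sigma^2_\infty(\tf)$ emerges as the leading term of the quadratic variation. Under \Cref{assumption:U-Sinfty} and \Cref{ass:geo_ergod}, \eqref{eq:poisson-eq-langevin} admits a solution $\sPoic \in \setpolyinf(\rset^d,\rset)$ via \eqref{eq:def_poisson_int}, with standard regularity estimates (\eg~\cite[Lemma 2.6]{kopec:2015}) controlling $\normLigne{\sPoic}_{k_e+2,p}$ by $\normLigne{f}_{k_e+2,p}$. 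Set $\bar{f}_\step = f - \invpig(f)$ and $c_\step = \invpig(f) - \invpi(f)$, with $|c_\step| \leq C\normLigne{f}_{k_e,p}\step^{\alpha-1}$ by \Cref{item-thm-var-3}. \Cref{ass:dev_generator_discrete} then rewrites as
\begin{equation*}
(\Rkerg - \mathrm{Id})\sPoic = -\step\bar{f}_\step + \step\, r_\step, \qquad r_\step \defeq \step^{\alpha-1}\genag \sPoic - c_\step,
\end{equation*}
where $r_\step$ is $\invpig$-mean zero (since $\invpig((\Rkerg - \mathrm{Id})\sPoic) = 0$) and $\normLigne{r_\step}_{0,q} \leq C\step^{\alpha-1}\normLigne{f}_{k_e+2,p}$. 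Writing $M_n = \sum_{k=0}^{n-1}\{\sPoic(X_{k+1}) - \Rkerg\sPoic(X_k)\}$ for the associated martingale, the telescoping identity yields
\begin{equation*}
\step \sum_{k=0}^{n-1} \bar{f}_\step(X_k) = M_n - [\sPoic(X_n) - \sPoic(X_0)] + \step \sum_{k=0}^{n-1} r_\step(X_k).
\end{equation*}

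Squaring, taking $\expeMarkov{x,\step}{\cdot}$, and dividing by $n\step$, the leading contribution is $(n\step)^{-1}\expeMarkov{x,\step}{M_n^2}$. By martingale orthogonality, $\expeMarkov{x,\step}{M_n^2} = \sum_{k=0}^{n-1}\expeMarkov{x,\step}{\Rkerg(\sPoic^2)(X_k) - (\Rkerg\sPoic)^2(X_k)}$. Expanding both $\Rkerg(\sPoic^2)$ and $\Rkerg\sPoic$ through \Cref{ass:dev_generator_discrete}, then invoking the algebraic identity $\generator(\sPoic^2) - 2\sPoic\generator\sPoic = 2\norm{\nabla \sPoic}^2$ (chain rule, cf.\ \eqref{eq:carre-du-champ}), I get
\begin{equation*}
\Rkerg(\sPoic^2) - (\Rkerg\sPoic)^2 = 2\step\,\norm{\nabla \sPoic}^2 + \step^{\alpha\wedge 2}\,\mathrm{rem}_\step,
\end{equation*}
with $\normLigne{\mathrm{rem}_\step}_{0,q} \leq C\normLigne{f}^2_{k_e+2,p}$. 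The $V$-uniform ergodicity \eqref{eq:def-V-unif} then gives $n^{-1}\sum_{k=0}^{n-1}\expeMarkov{x,\step}{\norm{\nabla \sPoic}^2(X_k)} = \invpig(\norm{\nabla \sPoic}^2) + O(V(x)/(n\step))$ (the $1/(n\step)$ coming from $\sum_k \rhoVunif^{k\step} \lesssim 1/\step$), and \Cref{item-thm-var-3} converts $\invpig$ into $\invpi$ up to $O(\step^{\alpha-1})$. By \eqref{eq:key-relation-variance}, $2\invpi(\norm{\nabla \sPoic}^2) = \sigma^2_\infty(f)$, producing the main term $(n\step)^{-1}\expeMarkov{x,\step}{M_n^2} = \sigma^2_\infty(f) + O(\step^{(\alpha-1)\wedge 1}) + O(V(x)/(n\step))$.

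It remains to control the boundary, remainder, and cross terms. Since $\sPoic$ has polynomial growth while $V$ grows at least exponentially by \Cref{ass:geo_ergod}\ref{ass:geo_ergod_ii}, $\sPoic^2 \leq CV$; with the drift \eqref{eq:discrete-drift-uniform-bound}, this yields $\expeMarkov{x,\step}{[\sPoic(X_n) - \sPoic(X_0)]^2} \leq CV(x)\normLigne{f}^2_{k_e+2,p}$, contributing $CV(x)/(n\step) \leq CV(x)/(n\step)^{1/2}$ for $n\step \geq 1$ (the statement is trivial otherwise). For the $\invpig$-mean-zero $r_\step$, a covariance decomposition exploiting $\VnormEq[V]{\updelta_y \Rkerg^j - \invpig} \leq C\rhoVunif^{j\step}V(y)$ produces $\expeMarkov{x,\step}{(\sum r_\step(X_k))^2} \leq Cn\step^{2\alpha-3}V(x)\normLigne{f}^2_{k_e+2,p}$, so after multiplication by $\step/n$ this contributes at most $\step^{2(\alpha-1)}V(x)\normLigne{f}^2 \leq V(x)\step^{(\alpha-1)\wedge 1}\normLigne{f}^2$ for $\step \leq 1$. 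Finally, the bound $\expeMarkov{x,\step}{M_n^2} \leq Cn\step V(x)\normLigne{f}^2_{k_e+2,p}$ (implicit in the main-term analysis above), combined with Cauchy--Schwarz, yields the dominant cross term $\expeMarkov{x,\step}{M_n[\sPoic(X_n) - \sPoic(X_0)]}/(n\step) \leq CV(x)\normLigne{f}^2/(n\step)^{1/2}$; the remaining cross terms are smaller and handled analogously.

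The main obstacle is the careful bookkeeping of the polynomial semi-norms $\normLigne{\cdot}_{k,p}$ through products ($\sPoic^2$), gradients ($\norm{\nabla\sPoic}^2$) and Laplacians, so that every constant ultimately reduces to $\normLigne{f}^2_{k_e+2,p}$; this hinges on the $\setpolyinf$-regularity estimate for $\sPoic$ from \cite[Lemma 2.6]{kopec:2015}. A secondary delicate point, appearing in the covariance sum estimate for the $\invpig$-mean-zero functional $r_\step$, is the conversion of the raw ``polynomial $\times V$'' bound on $|r_\step(X_j)|V(X_j)$ into a pure $V(x)$ prefactor (rather than $V(x)^2$) for the final bound; this is achieved by working with the interpolated weight $V^{1/2}$, a standard corollary of $V$-uniform ergodicity.
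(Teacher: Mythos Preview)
Your proposal is correct and follows essentially the same approach as the paper: the martingale decomposition based on the continuous Poisson solution $\sPoic$, the identification of the leading quadratic-variation term with $\varinf(f)$ via the carr\'e du champ identity, the use of \Cref{item-thm-var-3} to pass from $\invpig$ to $\invpi$, and the $V^{1/2}$-ergodicity covariance bound for the mean-zero remainder are exactly the ingredients of the paper's proof (\Cref{lemma:Sn-Sn2-discrete-chain}, \Cref{lemma:tech-step-n-dom}, \Cref{lem:bound_asympto_bias_generator}). The only cosmetic difference is that the paper computes $\invpig(g_\step)$ through the stationarity identity $\expeMarkov{\invpig,\step}{(\sPoif(X_1)-\sPoif(X_0))^2}=-2\step\invpig(\sPoif\generator\sPoif)-2\step^{\alpha}\invpig(\sPoif\genag\sPoif)$ before invoking carr\'e du champ, whereas you expand $\Rkerg(\sPoic^2)-(\Rkerg\sPoic)^2$ pointwise first; both routes are equivalent.
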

\begin{proof}
  The proof is postponed to \Cref{subsec:proof-weak-error-dev}.
\end{proof}

We now consider the ULA algorithm.  
The Markov kernel $\Rula$ associated to the ULA algorithm is given for $\step>0$, $x\in\rset^d$ and $\msa \in\borelSet(\rset^d)$ by
\begin{equation}\label{eq:def-kernel-ULA}
  \Rula(x ,\msa) =
\int_{\rset^d} \1_{\msa}\parenthese{x- \step \nabla U(x) + \sqrt{2\step} z} \varphibf(z) \rmd z \eqsp,
\end{equation}
where $\varphibf$ is the $d$-dimensional standard Gaussian density  $\varphibf : z \mapsto (2\uppi)^{-d/2} \rme^{-\norm{z}^2}$.
Consider the following additional assumption.
\begin{assumption}
\label{ass:condition_MALA}
There exist $\raymala_1 \geq 0$ and $m >0$ such that for any $x \not \in \ball{0}{\raymala_1}$,
    and $y \in \rset^d$, $      \ps{\DD^2 U(x) y }{y} \geq m \norm[2]{y}$. Moreover, there exists $M \geq 0$ such that for any $x \in \rset^d$, $    \norm{ \DD^3 U(x)} \leq M $.
\end{assumption}

\begin{proposition}\label{thm:geometric_ergodicity_ula}
  Assume \Cref{assumption:U-Sinfty} and \Cref{ass:condition_MALA}.
  There exist $\bgamma>0$ and $V:\rset^d\to\coint{1,\plusinfty}$ such that \Cref{ass:geo_ergod} is satisfied for  the family of Markov kernels $\{\Rula \, : \, \step \in \ocint{0,\bgamma}\}$. 
\end{proposition}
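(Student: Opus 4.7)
The strategy is to construct a common Lyapunov function $V$ and to verify, with constants uniform in $\step$, a geometric drift condition together with a minorization on its sublevel sets; this setup yields the uniform $V$-geometric ergodicity bound \eqref{eq:def-V-unif} with a contraction rate of the form $\rho^{n\step}$, while existence, uniqueness and uniform integrability of $\invpig$ follow from the same drift.

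For the Lyapunov function I would take $V(x)=\exp\bigl(\alpha\sqrt{1+\norm{x}^2}\bigr)$ for a small $\alpha>0$. This immediately gives the lower bound $\liminf_{\norm{x}\to\infty}V(x)\rme^{-c\norm{x}}>0$ with $c=\alpha$ required in \Cref{ass:geo_ergod}\ref{ass:geo_ergod_ii}. Using the ULA representation $X_{k+1}=X_k-\step\nabla U(X_k)+\sqrt{2\step}\,Z_{k+1}$ with $Z_{k+1}\sim\gauss(0,\mathrm{Id})$, I would establish the one-step drift $\Rula V(x)\leq \rme^{-\lambda\step}V(x)+\step\, b\,\1_{\compact}(x)$ for some compact set $\compact\subset\rset^d$ and constants $\lambda,b>0$, uniformly in $\step\in\ocint{0,\bargamma}$ with $\bargamma$ small enough. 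The core computation is an exponential-moment identity obtained by expanding $\sqrt{1+\norm{X_{k+1}}^2}$ to second order around $x-\step\nabla U(x)$ and integrating in $Z_{k+1}$. Using the strong convexity of $U$ outside $\ball{0}{\raymala_1}$ from \Cref{ass:condition_MALA} in the form $\ps{\nabla U(x)}{x}\geq m\norm[2]{x}-C$, together with the Lipschitz bound $\norm{\nabla U(x)}\leq L\norm{x}+\norm{\nabla U(0)}$ from \Cref{assumption:U-Sinfty}, one obtains that for $\norm{x}$ large the leading contribution of $\log\bigl(\Rula V/V\bigr)(x)$ is $-\alpha m\step\norm{x}+O(\step^2\norm[2]{x})+O(\step)$; choosing $\bargamma$ small compared to $m/L^2$ then delivers the stated drift on the complement of a compact set.

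For the minorization, given a sublevel set $\compact$ of $V$ and a fixed horizon $T>0$, I would iterate $n_\step=\ceil{T/\step}$ steps of ULA. Each step has the Gaussian density $(4\uppi\step)^{-d/2}\exp\bigl(-\norm{y-x+\step\nabla U(x)}^2/(4\step)\bigr)$, so the $n_\step$-step kernel admits a density pointwise bounded below on every compact subset of $\rset^d$, uniformly in $\step\in\ocint{0,\bargamma}$ and in $x\in\compact$. This produces a minorization $\Rula^{n_\step}(x,\cdot)\geq \varepsilon\,\nu(\cdot)$ with $\varepsilon>0$ and a probability measure $\nu$ both independent of $\step$. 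Combining the uniform drift with this uniform minorization, the Hairer--Mattingly quantitative coupling theorem applied to the skeleton $\Rula^{n_\step}$ and then interpolated via the standard decomposition $n=q n_\step+r$ yields existence and uniqueness of $\invpig$, the bound $\sup_{\step\in\ocint{0,\bargamma}}\invpig(V)<\infty$, and the contraction \eqref{eq:def-V-unif} with rate $\rho^{n\step}$. The inequality $\pi(V)<\infty$ and the semigroup estimate \eqref{eq:def-V-unif_ii} are obtained by the same program applied to the continuous-time Langevin dynamics, where the drift reduces through It\^o's formula to a pointwise inequality $\generator V\leq -\lambda V+b\,\1_{\compact}$ and the minorization of $P_T$ on compacts follows from the non-degeneracy of the Brownian noise.

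The main obstacle is the exponential drift: the $\step^2\norm[2]{\nabla U(x)}^2$ correction arising from the second-order expansion of $\sqrt{1+\norm{x'}^2}$ is, a priori, of the same order in $\norm{x}$ as the contraction gain, and one must jointly tune $\alpha$, $\bargamma$ and the size of $\compact$ against the constants $m$ and $L$ from \Cref{assumption:U-Sinfty} and \Cref{ass:condition_MALA} to ensure a net negative drift outside a bounded set. All the remaining steps then reduce to standard applications of drift--minorization theory with tracking of how the minorization time scales as $1/\step$.
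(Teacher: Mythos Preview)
Your plan is correct in outline and would succeed; the paper follows the same drift--minorization template but with different concrete choices that simplify the two key computations.

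For the drift, instead of $V(x)=\exp\bigl(\alpha\sqrt{1+\norm{x}^2}\bigr)$ the paper takes $V_\eta(x)=\exp(\eta\norm{x}^2)$ with $\eta$ small. The point of this choice is that $\Rula V_\eta$ is a Gaussian integral against a Gaussian and is computed \emph{exactly}: $\Rula V_\eta(x)=(1-4\eta\gamma)^{-d/2}\exp\bigl(\eta(1-4\eta\gamma)^{-1}\norm{x-\gamma\nabla U(x)}^2\bigr)$. No Taylor expansion of the Lyapunov function is needed, and the obstacle you flag --- controlling the $\gamma^2\norm{\nabla U(x)}^2$ correction against the $\gamma\ps{x}{\nabla U(x)}$ gain --- becomes a one-line algebraic inequality on $\norm{x-\gamma\nabla U(x)}^2$, resolved by taking $\bgamma\leq m/(4L^2)$. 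Your Lyapunov function is also viable (indeed the paper's remark right after \Cref{thm:geometric_ergodicity_ula} cites it as the choice made in \cite{debortoli2018back} under weaker assumptions on $U$), but the quadratic exponent gives a cleaner proof under \Cref{ass:condition_MALA}. For the minorization, rather than lower-bounding the iterated Gaussian density on compacts, the paper quotes a synchronous-coupling bound (from \cite{debortoli2018back}) based on the contraction $\norm{x-y-\gamma\{\nabla U(x)-\nabla U(y)\}}^2\leq(1+\gamma(2L+L^2\gamma))\norm{x-y}^2$, which directly yields a $\gamma$-independent lower bound on the TV overlap of $\Rula^{\ceil{1/\gamma}}(x,\cdot)$ and $\Rula^{\ceil{1/\gamma}}(y,\cdot)$ for $x,y$ in a ball. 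Both routes to minorization are standard; the coupling one avoids tracking the convolution structure of the density. The continuous-time part \eqref{eq:def-V-unif_ii} is handled exactly as you suggest, via $\generator V_\eta\leq -aV_\eta+b$ and the classical semigroup results of \cite{roberts:tweedie-Langevin:1996,meyn:tweedie:1993:III}.
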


\begin{proof}
  The proof follows from \cite[Theorem 14, Proposition 24]{debortoli2018back}. However, for completeness and since all the tools needed for the proof of this result are used in the study of  MALA, the proof is given in  \Cref{subsec:geom-ergodicity-ula}.
\end{proof}
\begin{remark}
Note that in fact  \Cref{ass:geo_ergod} holds for ULA  under milder conditions on
$U$ using the results obtained in \cite{eberle:2015,eberle2018quantitative,debortoli2018back}. For
example, if \Cref{assumption:U-Sinfty} holds and there exist
$a_1,a_2>0$ and $c \geq 0$ such that
\begin{equation}\label{eq:cond-vgeom-ula}
  \ps{\nabla U(x)}{x} \geq a_1 \norm{x} + a_2 \norm[2]{\nabla U(x)} -c \eqsp,
\end{equation}
\cite[Theorem 14, Proposition 24]{debortoli2018back} imply that \Cref{ass:geo_ergod} holds with $V(x) = \exp\{(a_1/8)(1+\norm[2]{x})^{1/2}\}$.
\end{remark}

We now establish \Cref{ass:dev_generator_discrete}. Let $\vphi\in\setpoly{\infty}(\rset^d, \rset)$, $\bgamma>0$, $\step\in\ccint{0,\bgamma}$ and $x\in\rset^d$. Denote by $\sX_1 = x - \step \nablaU(x) + \sqrt{2\step} \nZ$ where $\nZ$ is a standard $d$-dimensional Gaussian vector, the first step of ULA. A Taylor expansion of $\varphi(X_1)$ at $x$ and integration show that  $\Rula \vphi(x) = \vphi(x) + \step \generator \vphi(x) + \step^2 \genrula \vphi(x)$ where
\begin{multline}
\genrula \vphi(x) = \frac{1}{2} \DD^2 \vphi(x)[\nablaU(x)^{\otimes 2}]
- \frac{1}{6} \step \DD^3 \vphi(x)[\nablaU(x)^{\otimes 3}] \\
- \expe{\DD^3 \vphi(x)[\nablaU(x), \nZ^{\otimes 2}]} \\
\label{eq:Eula-def}
+ \frac{1}{6} \int_0^1 (1-t)^3 \expe{\DD^4 \vphi(x - t \step \nablaU(x) + t \sqrt{2\step} \nZ)[(-\sqrt{\step} \nablaU(x) + \sqrt{2} \nZ)^{\otimes 4}]} \rmd t \eqsp.
\end{multline}
A simple application of the Lebesgue dominated convergence theorem implies then the following result.
\begin{lemma}\label{prop:ULA-dev-ergo}
  Assume \Cref{assumption:U-Sinfty}. Then for any $\bgamma >0$, $\set{\Rula}{\step \in \ocint{0,\bgamma}}$ satisfies \Cref{ass:dev_generator_discrete} with $\alpha = 2$, $\genag = \genrula$ and $\ke = 4$. 
\end{lemma}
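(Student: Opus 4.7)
The formula displayed just before the lemma already writes $\Rula\vphi(x) = \vphi(x) + \step\generator\vphi(x) + \step^2\genrula\vphi(x)$ with an explicit expression for $\genrula\vphi(x)$, so the structural form required by \Cref{ass:dev_generator_discrete} with $\alpha=2$ is in hand. What remains is (i) to check that $\genrula\vphi \in \setpolyinf(\rset^d,\rset)$ whenever $\vphi \in \setpolyinf(\rset^d,\rset)$, and (ii) to establish the uniform semi-norm bound with $\ke = 4$.

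The plan is to handle each of the four terms in the formula \eqref{eq:Eula-def} separately. Under \Cref{assumption:U-Sinfty}, $\nabla U$ is Lipschitz, hence $\|\nabla U(x)\| \leq L\|x\| + \|\nabla U(0)\|$, and since $U \in \setpolyinf$ we also have $\nabla U \in \setpolyinf$. For the first three terms $\tfrac12 \DD^2\vphi(x)[\nabla U(x)^{\otimes 2}]$, $-\tfrac{\step}{6}\DD^3\vphi(x)[\nabla U(x)^{\otimes 3}]$ and $-\PE[\DD^3\vphi(x)[\nabla U(x), Z^{\otimes 2}]]$, the bound follows immediately: if $\|\DD^i\vphi(x)\| \leq \|\vphi\|_{i,p_i}(1+\|x\|^{p_i})$ and $\|\nabla U(x)\| \leq c(1+\|x\|)$, a direct multiplication gives a polynomial bound, and the Gaussian moments of $Z$ are finite universal constants. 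Crucially, the factor $\step$ in the second term is $\leq \bgamma$.

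The main technical work is the remainder integral. Fix $p$ such that $\|\DD^4\vphi(y)\| \leq \|\vphi\|_{4,p}(1+\|y\|^p)$. For $t \in [0,1]$, $\step \in \ocint{0,\bgamma}$, set $y_t = x - t\step\nabla U(x) + t\sqrt{2\step}Z$. Using the triangle inequality and $\step \leq \bgamma$,
\[
1 + \|y_t\|^p \leq C_p \bigl(1 + \|x\|^p + \bgamma^p \|\nabla U(x)\|^p + \bgamma^{p/2}\|Z\|^p\bigr),
\]
and similarly
\[
\|-\sqrt{\step}\nabla U(x) + \sqrt{2}Z\|^4 \leq C \bigl(\bgamma^2 \|\nabla U(x)\|^4 + \|Z\|^4\bigr).
\]
Multiplying these bounds, using the linear growth of $\nabla U$ and the fact that $\PE[\|Z\|^r] < \infty$ for every $r \geq 0$, the integrand is bounded by $C\|\vphi\|_{4,p}(1+\|x\|^q)$ for some $q$ depending on $p$ (one may take $q = p+4$ after absorbing the linear growth of $\nabla U$). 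The bound is independent of $t \in [0,1]$ and $\step \in \ocint{0,\bgamma}$, so integrating in $t$ preserves it. Combined with the three pointwise terms, this yields the required estimate with $\ke = 4$.

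The only subtle point is the uniformity in $\step$: every $\step$-dependent factor enters with a strictly positive power ($\step$, $\sqrt{\step}$, $\step^2$), so each is bounded by the corresponding power of $\bgamma$. The Lebesgue dominated convergence theorem alluded to in the statement is used at the implicit step of interchanging differentiation and expectation in Taylor-expanding $\PE[\vphi(X_1)]$ to order four; the dominating functions are exactly the polynomial bounds obtained above, which are integrable against the Gaussian law of $Z$. Once the expansion is justified, the semi-norm estimate is a bookkeeping exercise, and I do not foresee any genuine obstacle beyond keeping the exponent $q$ honest.
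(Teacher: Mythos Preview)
Your proposal is correct and follows essentially the same approach as the paper, which is extremely terse: the paper's entire proof is the sentence ``A simple application of the Lebesgue dominated convergence theorem implies then the following result,'' relying on the explicit formula \eqref{eq:Eula-def} derived just before. You have simply filled in the bookkeeping that the paper omits.

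One minor remark on the role of dominated convergence: the Taylor expansion of $\vphi(X_1)$ is a pointwise identity, and taking expectation commutes with the finite sum, so LDCT is not needed there. Where it is genuinely invoked is in verifying that $\genrula\vphi \in \setpolyinf(\rset^d,\rset)$, i.e.\ in differentiating the expectation and the $t$-integral in the remainder term with respect to $x$ (differentiation under the integral sign). The dominating functions are exactly the polynomial-in-$x$, polynomial-in-$Z$ bounds you wrote down, which are integrable against the Gaussian law of $Z$ uniformly in $t$ and $\step$. This is a cosmetic point and does not affect the validity of your argument.
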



We now examine the MALA algorithm.
The Markov kernel $\Rmala$ of the MALA algorithm, see \cite{roberts:tweedie-Langevin:1996}, is given for $\step>0$, $x\in\rset^d$, and $\msa \in\borelSet(\rset^d)$ by
\begin{align}
\nonumber
&\Rmala(x ,\msa) = \int_{\rset^d}  \1_{\msa}\parenthese{x- \gamma \nabla U(x) + \sqrt{2\gamma} z} \min(1,\rme^{-\alphamala(x,z)})  \varphibf(z) \rmd z  \\
\label{eq:def-kernel-MALA}
&\phantom{--}+ \updelta_x(\msa) \int_{\rset^d} \defEns{1- \min(1,\rme^{-\alphamala(x,z)})} \varphibf(z) \rmd z \eqsp,\\
  \nonumber
& \alphamala(x,z) = \pU(x - \gamma \nabla U(x) + \sqrt{2 \gamma}z) - \pU(x) \\
  \label{eq:def-alpha-MALA}
&  \phantom{--} + \frac{\norm[2]{z-(\gamma/2)^{1/2}\defEns{\nabla U(x) + \nabla U(x-\gamma \nabla U(x) + \sqrt{2\gamma} z)}} -\norm[2]{z}}{2}\eqsp.
\end{align}
The analysis of the MALA algorithm is closely related to the study of the ULA algorithm. Indeed, the difference between the two Markov kernels can be expressed for any bounded measurable function $\phi:\rset^d\to\rset$ by
\begin{multline}
\label{eq:diff-rula-rmala}
  \Rmala \phi(x)  - \Rula \phi(x) = \int_{\rset^d}\{\phi(x) - \phi(x-\gamma \nabla U(x) + \sqrt{2 \gamma} z)\} \\
  \times \{1 - \min(1,\rme^{-\alphamala(x,z)}) \} \varphibf(z) \rmd z \eqsp.
\end{multline}
Since $1-\min(1,\rme^{-t}) \leq \abs{t}$ for any $t \in \rset$, properties of ULA can then be transferred to MALA from perturbation arguments achieved by a careful analysis of $\alphamala$ which is the content of the following result.
\begin{lemma}
  \label{lem:bound_alpha_mala_1}
  Assume \Cref{assumption:U-Sinfty} and \Cref{ass:condition_MALA}. Then, for any $\bgamma >0$, there exists $C_{1,\bgamma} < \infty$ such that for any $x,z \in \rset^d$, $\gamma \in \ocint{0,\bgamma}$, it holds
  \begin{equation*}
  \abs{  \alphamala(x,z)} \leq C_{1,\bgamma} \gamma^{3/2}\{1+\norm[4]{z} + \norm[2]{x}\} \eqsp.
  \end{equation*}
\end{lemma}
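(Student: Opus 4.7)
The plan is to expand the exponent $\alphamala(x,z)$ in powers of $\sqrt{\gamma}$ around the point $x$ and exploit the fact that MALA is explicitly designed so that all contributions of order $\sqrt{\gamma}$ and $\gamma$ cancel. Introduce the shorthand $h = -\gamma\nabla U(x) + \sqrt{2\gamma}\,z$, so that the proposal is $\tilde x = x + h$. By Taylor's theorem with integral remainder, using $\|\DD^2 U\|_\infty \leq L$ from \Cref{assumption:U-Sinfty} and $\|\DD^3 U\|_\infty \leq M$ from \Cref{ass:condition_MALA}, I will write
\[
  U(\tilde x) - U(x) = \ps{\nabla U(x)}{h} + \tfrac{1}{2}\DD^2 U(x)[h,h] + R_1(x,h), \qquad |R_1| \leq \tfrac{M}{6}\|h\|^3,
\]
\[
  \nabla U(\tilde x) - \nabla U(x) = \DD^2 U(x)\, h + R_2(x,h), \qquad \|R_2\| \leq \tfrac{M}{2}\|h\|^2.
\]

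Next, I substitute these expansions into the definition of $\alphamala$. Expanding the quadratic term $\tfrac12(\|z - \sqrt{\gamma/2}(\nabla U(x)+\nabla U(\tilde x))\|^2 - \|z\|^2)$ and using the identity $2\sqrt{\gamma/2}=\sqrt{2\gamma}$ together with $h = \sqrt{2\gamma}z-\gamma\nabla U(x)$, one verifies the two crucial cancellations: (i) the $\sqrt{2\gamma}\,\ps{\nabla U(x)}{z}$ term coming from $\ps{\nabla U(x)}{h}$ cancels exactly the corresponding $-\sqrt{2\gamma}\,\ps{z}{\nabla U(x)}$ in the quadratic part; (ii) the pair $-\gamma\|\nabla U(x)\|^2 + \gamma\|\nabla U(x)\|^2$ cancels; and (iii) the quadratic-form terms $\gamma\,\DD^2 U(x)[z,z]$ coming from $\tfrac{1}{2}\DD^2 U(x)[h,h]$ and from $-\sqrt{\gamma/2}\,\ps{z}{\DD^2 U(x)\,h}$ cancel, as do the cross terms $\gamma\sqrt{2\gamma}\,\DD^2 U(x)[\nabla U(x),z]$ by symmetry of $\DD^2 U(x)$. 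This cancellation is precisely the reason MALA is a consistent discretization with weak order $>1$.

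After these cancellations, every surviving summand is either $R_1$, a term linear in $R_2$, or a term of the shape $\gamma^{a}\,\DD^2 U(x)[\nabla U(x)^{\otimes i}, z^{\otimes j}]$ with $a \geq 3/2$ and $i+j \leq 2$. Each of these can be bounded using $\|\nabla U(x)\| \leq L\|x\|+\|\nabla U(0)\|$, $\|\DD^2 U\|_\infty \leq L$, and $\|R_1\|\leq (M/6)\|h\|^3$, $\|R_2\|\leq (M/2)\|h\|^2$, together with the elementary bound $\|h\|^k \leq C_k(\gamma^k\|\nabla U(x)\|^k + \gamma^{k/2}\|z\|^k)$. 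I then factor out $\gamma^{3/2}$, use $\gamma \leq \bargamma$ to absorb higher powers of $\gamma$ into the constant, and apply Young's inequality on every cross term of the form $\|x\|^a\|z\|^b$ (e.g.\ $\|x\|\|z\|^2 \leq \tfrac{1}{2}\|x\|^2+\tfrac{1}{2}\|z\|^4$, $\|x\|^2\|z\| \leq \|x\|^2 + \tfrac{1}{2}\|z\|^2$) to collapse the resulting polynomial majorant into $1 + \|x\|^2 + \|z\|^4$, yielding the required constant $C_{1,\bargamma}$.

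The main obstacle is the bookkeeping of the cancellations in the second step: one must carefully match the coefficients $\sqrt{2\gamma}$ and $\sqrt{\gamma/2}$ and use symmetry of $\DD^2 U(x)$ so that the $O(\sqrt\gamma)$ and $O(\gamma)$ contributions vanish identically (not just up to a remainder). A secondary subtlety is the final consolidation step: the naive Taylor remainder $\|h\|^3$ generates terms such as $\gamma^{3/2}\|\nabla U(x)\|^3 \lesssim \gamma^{3/2}(1+\|x\|^3)$, and one has to split them using $\gamma \leq \bargamma$ together with Young's inequality to fit them under the stated polynomial template $1+\|z\|^4+\|x\|^2$ while preserving uniformity in $\gamma \in \ocint{0,\bargamma}$.
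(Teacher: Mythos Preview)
Your cancellation argument for the $O(\sqrt\gamma)$ and $O(\gamma)$ contributions is correct, but the final consolidation step contains a genuine gap that you flag as a ``secondary subtlety'' without actually resolving. The cubic Taylor remainder $R_1$, bounded by $(M/6)\norm{h}^3$ with $h=-\gamma\nabla U(x)+\sqrt{2\gamma}z$, produces a contribution of size $\gamma^3\norm{\nabla U(x)}^3\lesssim\gamma^3(1+\norm{x}^3)$. After factoring out $\gamma^{3/2}$, you need $\gamma^{3/2}\norm{x}^3\leq C_{1,\bgamma}(1+\norm{x}^2+\norm{z}^4)$ uniformly in $x,z$ and $\gamma\in\ocint{0,\bgamma}$. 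This is impossible: take $\gamma=\bgamma$ fixed and let $\norm{x}\to\infty$; the left side grows like $\norm{x}^3$ while the right side grows like $\norm{x}^2$. Young's inequality cannot help here because this term carries no $z$-factor to trade against. The same obstruction arises from the cross term $\gamma^{5/2}\norm{\nabla U(x)}^2\norm{z}$ hidden inside $\sqrt{\gamma/2}\,\langle z,R_2\rangle$.

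The paper avoids this by using the exact integral decomposition $\alphamala=\sum_{k=2}^6\gamma^{k/2}A_{k,\gamma}$ from \cite{durmus:moulines:saksman:2017}, in which every $A_{k,\gamma}$ involves the \emph{bounded} second derivative $\DD^2 U(x_t)$ applied to at most two copies of $\nabla U(x)$, so only $\norm{x}^2$ ever appears. The sole appeal to $\DD^3 U$ is in rewriting $A_{2,\gamma}=2\int_0^1(1/2-t)\,\DD^2 U(x_t)[z^{\otimes 2}]\,\rmd t$: because the weight $(1/2-t)$ integrates to zero, one subtracts $\DD^2 U(x)$ for free and picks up one extra factor of $h$ (hence at most one $\norm{x}$) from $\DD^2 U(x_t)-\DD^2 U(x)$. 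To repair your argument you must replace the pointwise third-order expansions by these integral-form second-order ones and exploit this zero-mean weight; a bare bound on $\norm{h}^3$ will only yield the weaker conclusion $C\gamma^{3/2}(1+\norm{x}^3+\norm{z}^4)$.
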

\begin{proof}
The proof is postponed to \Cref{subsec:geom-ergodicity-mala}.
\end{proof}
A first easy consequence of \eqref{eq:diff-rula-rmala} using \eqref{eq:Eula-def} is that we get for any $\vphi\in\setpoly{\infty}(\rset^d, \rset)$, $\bgamma>0$, $\step\in\ocint{0,\bgamma}$, $\Rmala\vphi = \vphi + \step \generator \vphi + \step^2 \genrmala \vphi$,
with $\genrmala \vphi= \genrula \vphi + \tgenrmala \vphi$ and for any $x \in \rset^d$,
\begin{multline*}
  \tgenrmala \vphi(x) =  \PE \bigg[\step^{-3/2} \defEns{1-\min(1,\rme^{-\alphamala(x,Z)})} \\
  \times \defEns{\int_0^1 \ps{\nabla \vphi(x-t\step\nablaU(x) + t\sqrt{2\step}\nZ)}{\sqrt{\step} \nablaU(x) - \sqrt{2} \nZ} \rmd t} \bigg] \eqsp,
\end{multline*}
where $Z$ is a $d$-dimensional standard Gaussian vector.
Note that by the Lebesgue dominated convergence theorem, for any $\vphi\in\setpoly{\infty}(\rset^d, \rset)$, $\bgamma>0$, $\step\in\ocint{0,\bgamma}$, $  \tgenrmala \vphi$ is continuous. As a result and using \Cref{prop:ULA-dev-ergo} and \Cref{lem:bound_alpha_mala_1}, it follows  that \Cref{ass:dev_generator_discrete} holds.
\begin{lemma}\label{prop:MALA-dev-ergo}
     Assume \Cref{assumption:U-Sinfty} and \Cref{ass:condition_MALA}.
    Then for any $\bgamma >0$, $\{\Rmala \, : \, \gamma \in \ocint{0,\bgamma}\}$ satisfies \Cref{ass:dev_generator_discrete} with $\alpha = 2$, $\genag = \genrmala$ and $\ke = 4$.
\end{lemma}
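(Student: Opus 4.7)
The decomposition $\Rmala \vphi = \vphi + \gamma \generator \vphi + \gamma^2 \genrmala \vphi$ with $\genrmala = \genrula + \tgenrmala$ has already been derived in the lines preceding the statement. Since \Cref{prop:ULA-dev-ergo} certifies that $\genrula$ satisfies \Cref{ass:dev_generator_discrete} with $\alpha = 2$ and $\ke = 4$, my plan is to reduce the lemma to proving the analogous bound for $\tgenrmala$; the overall $\ke$ is then the maximum of the two, namely $\ke = 4$.

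For the uniform polynomial bound on $|\tgenrmala \vphi(x)|$, I would combine the elementary inequality $|1 - \min(1,\rme^{-t})| \leq |t|$ with \Cref{lem:bound_alpha_mala_1} to obtain, uniformly in $\gamma \in \ocint{0,\bgamma}$ and $x,z \in \rset^d$,
\begin{equation*}
\gamma^{-3/2}\bigl\{1 - \min(1,\rme^{-\alphamala(x,z)})\bigr\} \leq C_{1,\bgamma}\bigl(1 + \norm[4]{z} + \norm[2]{x}\bigr).
\end{equation*}
For the factor inside the integrand, I would use that $\vphi \in \setpolyinf$ to get $\norm{\nabla \vphi(y)} \leq \VnormEq[1,p]{\vphi}(1+\norm[p]{y})$ for some $p$, together with the Lipschitz estimate $\norm{\nabla U(x)} \leq L\norm{x}+\norm{\nabla U(0)}$ from \Cref{assumption:U-Sinfty}. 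Substituting $y = x - t\gamma\nabla U(x) + t\sqrt{2\gamma}z$, the Cauchy-Schwarz bound on $\ps{\nabla\vphi(y)}{\sqrt{\gamma}\nabla U(x) - \sqrt{2}z}$ is polynomial in $(\norm{x},\norm{z})$ with coefficients uniform in $t \in \ccint{0,1}$ and $\gamma \in \ocint{0,\bgamma}$. Multiplying the two bounds and taking expectation over the Gaussian vector $Z$ (all of whose moments are finite) yields $|\tgenrmala \vphi(x)| \leq C\,\VnormEq[1,p]{\vphi}(1 + \norm[q]{x})$ for some $q$ depending only on $p$, which is exactly the bound required by \Cref{ass:dev_generator_discrete}.

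The main obstacle is the qualitative part, namely showing $\tgenrmala \vphi \in \setpolyinf(\rset^d,\rset)$, i.e.\ smoothness of all orders with polynomially bounded derivatives. Continuity already follows from dominated convergence as noted in the text. For higher-order smoothness, the non-differentiability of $t \mapsto \min(1,\rme^{-t})$ only occurs at $t = 0$, and the set $\{z : \alphamala(x,z) = 0\}$ has Lebesgue (hence Gaussian) measure zero because $z \mapsto \alphamala(x,z)$ is smooth and non-constant under \Cref{assumption:U-Sinfty}. One can then differentiate under the integral with respect to $x$ by approximating $\min(1,\rme^{-\cdot})$ by smooth cutoffs and passing to the limit using dominated convergence, with the polynomial bounds in $(x,z)$ obtained just as above providing integrable dominating functions at each order. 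Iterating this argument produces all derivatives of $\tgenrmala \vphi$, and the same Gaussian moment computation gives the polynomial growth of each, which closes the proof.
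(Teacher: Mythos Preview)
Your argument is essentially the paper's: the decomposition $\genrmala = \genrula + \tgenrmala$ is taken as given, \Cref{prop:ULA-dev-ergo} handles $\genrula$, and the polynomial bound on $\tgenrmala\vphi$ follows from $|1-\min(1,\rme^{-t})|\le |t|$ combined with \Cref{lem:bound_alpha_mala_1} and Gaussian moment bounds. The paper's proof is exactly this, stated in the two sentences preceding the lemma.

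Where you go beyond the paper is in worrying about the $\setpolyinf$ codomain in \Cref{ass:dev_generator_discrete}. The paper only asserts continuity of $\tgenrmala\vphi$ (via dominated convergence) and does not verify higher smoothness; inspection of the proofs in \Cref{sec:proofs} shows that only the $\VnormEq[0,q]{\cdot}$ bound is ever used, so this omission is harmless for the applications. Your attempt to fill the gap is reasonable in spirit but not quite rigorous: ``smooth and non-constant'' does not by itself force the level set $\{z:\alphamala(x,z)=0\}$ to be Lebesgue-null (a smooth function can vanish on an open set), so you would need either an analyticity argument or a regular-value/Sard-type statement specific to $\alphamala$. Since the paper does not address this and the downstream results do not require it, this is a side issue rather than a defect in your proof of what the paper actually uses.
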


We now turn to verifying \Cref{ass:geo_ergod}. Similarly to \Cref{prop:MALA-dev-ergo}, a key tool is the decomposition of $\Rmala$ given by \eqref{eq:diff-rula-rmala}.
\begin{proposition}\label{thm:geometric_ergodicity_mala}
  Assume \Cref{assumption:U-Sinfty} and \Cref{ass:condition_MALA}.
  There exist $\bgamma>0$ and $V:\rset^d\to\coint{1,\plusinfty}$ such that \Cref{ass:geo_ergod} is satisfied for  the family of Markov kernels $\{\Rmala \, : \, \gamma \in \ocint{0,\bgamma}\}$. 
\end{proposition}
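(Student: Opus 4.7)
The strategy is to transfer the geometric ergodicity from ULA (\Cref{thm:geometric_ergodicity_ula}) to MALA through the perturbative identity \eqref{eq:diff-rula-rmala}, combined with the sharp bound on the rejection probability provided by \Cref{lem:bound_alpha_mala_1}. I take as Lyapunov function $V$ the one produced in the proof of \Cref{thm:geometric_ergodicity_ula}, which is of the form $V(x) = \exp(c(1+\norm[2]{x})^{1/2})$ for a small enough $c>0$ (so that in particular the growth condition in \Cref{ass:geo_ergod}\ref{ass:geo_ergod_ii} is automatic).

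The first step is to establish a uniform-in-$\gamma$ geometric drift condition for $\Rmala$. Writing $\Rmala V = \Rula V + (\Rmala - \Rula) V$ and using \eqref{eq:diff-rula-rmala}, I have the key identity
\begin{equation*}
(\Rmala - \Rula) V(x) = \int_{\rset^d} \bigl\{V(x) - V(x - \gamma \nabla U(x) + \sqrt{2\gamma} z)\bigr\} \bigl\{1 - \min(1, \rme^{-\alphamala(x,z)})\bigr\} \varphibf(z) \rmd z.
\end{equation*}
Using $\absolute{1-\min(1,\rme^{-t})} \leq \absolute{t}$, together with \Cref{lem:bound_alpha_mala_1} providing the crucial $\gamma^{3/2}$ factor, and a Taylor expansion of $V$ (which is $\mathrm{C}^\infty$ with controlled derivatives thanks to \Cref{assumption:U-Sinfty}), the perturbation is bounded by $C \gamma^{3/2} (1+\norm[4]{z}+\norm[2]{x}) \cdot \{$Lipschitz-like increment of $V\} \cdot \varphibf$, whose Gaussian integral gives $(\Rmala - \Rula)V(x) \leq C \gamma^{3/2} (1+\norm[2]{x}) V(x)$. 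Since the ULA drift (established in the proof of \Cref{thm:geometric_ergodicity_ula} via \cite{debortoli2018back}) reads $\Rula V(x) \leq \rho^{\gamma} V(x) + b\gamma$ for some $\rho \in \coint{0,1}$ and $b \geq 0$ uniformly in $\gamma \in \ocint{0,\bgamma_0}$, absorbing the $\gamma^{3/2}$-perturbation (which is $o(\gamma)$ relative to the deficit $1-\rho^\gamma \asymp \gamma$) outside a large enough compact set and shrinking $\bgamma$ if necessary yields
\begin{equation*}
\Rmala V(x) \leq \tilde{\rho}^{\gamma} V(x) + \tilde b \gamma \1_{\compact}(x),\qquad \gamma \in \ocint{0,\bgamma},
\end{equation*}
for some $\tilde\rho \in \coint{0,1}$, a compact $\compact$, and $\tilde b < \infty$ independent of $\gamma$. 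This simultaneously secures the uniform bound $\sup_{\gamma\in\ocint{0,\bgamma}} \pi_\gamma(V) < \infty$ from \Cref{ass:geo_ergod}\ref{ass:geo_ergod_ii}.

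The second step is a uniform minorization on $\compact$. The MALA kernel admits a density on $\compact \times \compact$ with respect to Lebesgue measure bounded below by the Gaussian proposal density times $\inf_{(x,y)\in\compact^2} \min(1, \rme^{-\alphamala(x,\cdot)})$; \Cref{lem:bound_alpha_mala_1} shows this acceptance probability is at least $1 - C\gamma^{3/2}$ on $\compact$, so there exist $\varepsilon > 0$ and a probability measure $\nu$ on $\compact$ such that $\Rmala^{\lceil 1/\gamma \rceil}(x,\cdot) \geq \varepsilon \nu(\cdot)$ for $x \in \compact$ and all $\gamma \in \ocint{0,\bgamma}$ (after shrinking $\bgamma$). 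Combining this minorization with the Lyapunov drift above through a standard coupling/renewal argument in the spirit of \cite{douc:moulines:priouret:soulier:2018} (carefully rescaled so that $n$ transitions of $\Rmala$ correspond to a time interval of length $n\gamma$) produces constants $C,\rhoVunif$ independent of $\gamma$ such that $\Vnorm{\updelta_x\Rmala^n - \pi_\gamma} \leq C \rhoVunif^{n\gamma} V(x)$, which is \eqref{eq:def-V-unif}. Existence and uniqueness of $\pi_\gamma$ in \Cref{ass:geo_ergod}\ref{ass:geo_ergod_i} follow from this inequality. Finally, \eqref{eq:def-V-unif_ii} for the Langevin semigroup is classical under \Cref{assumption:U-Sinfty} and \Cref{ass:condition_MALA} (see e.g.\ the references on $V$-uniform ergodicity of the Langevin diffusion cited in the remark following \Cref{thm:geometric_ergodicity_ula}).

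The main obstacle is quantitative: one must ensure that the $\gamma^{3/2}$ perturbation of both the drift and the minorization is strictly dominated by the $\gamma$-order contraction coming from ULA, uniformly in $\gamma$, which is why the sharp dependence on $\gamma$ in \Cref{lem:bound_alpha_mala_1} (rather than a naive $\gamma^{1/2}$ bound) is essential. Once this bookkeeping is done, the passage from ULA to MALA is essentially a standard perturbation of a geometrically ergodic chain.
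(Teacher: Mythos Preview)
Your overall strategy of transferring ergodicity from ULA via the perturbation \eqref{eq:diff-rula-rmala} is correct and matches the paper. However, there is a genuine gap in your drift argument: \Cref{lem:bound_alpha_mala_1} alone is \emph{not} sufficient. That lemma gives $\absolute{\alphamala(x,z)} \leq C\gamma^{3/2}(1+\norm[4]{z}+\norm[2]{x})$, so the integrated rejection probability is of order $\gamma^{3/2}(1+\norm[2]{x})$, and your perturbation bound $(\Rmala-\Rula)V(x) \leq C\gamma^{3/2}(1+\norm[2]{x})V(x)$ grows quadratically in $\norm{x}$. This cannot be absorbed into a contraction of size $\asymp \gamma$ uniformly over any complement of a fixed compact set: for $\norm[2]{x}$ of order $\gamma^{-1}$ or larger, the perturbation swamps the deficit. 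With your sub-Gaussian Lyapunov function $\exp(c(1+\norm[2]{x})^{1/2})$ the ULA deficit is only $O(\gamma)V$, so the failure is immediate; even with a Gaussian Lyapunov function the same obstruction appears once $\norm[2]{x}\gtrsim\gamma^{-3/2}$.

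The paper closes this gap with an additional ingredient, \Cref{lem:bound_alpha_mala_2}: for $\norm{x}$ large and $\norm{z}\leq\norm{x}/(4\sqrt{2\gamma})$, one has $\alphamala(x,z)\leq C\gamma\norm[2]{z}(1+\norm[2]{z})$, \emph{independent of} $x$. This uses the favourable sign of the order-$\gamma^{2}$ term $-\gamma^{2}\int_{0}^{1}\DD^{2}U(x_{t})[\nabla U(x)^{\otimes 2}]t\,\rmd t$, which by strong convexity at infinity (\Cref{ass:condition_MALA}) dominates all the remaining terms in the expansion \eqref{lem:durmus_moulines_saksman}. Together with a chi-square tail bound for $\norm{z}>\norm{x}/(4\sqrt{2\gamma})$, the rejection probability is then $\leq C\gamma+\exp(-\norm[2]{x}/(128\gamma))$ uniformly in $x$, and this can be absorbed into the super-geometric ULA contraction $\exp(-\bareta m\gamma\norm[2]{x}/4)$ of \Cref{propo:super_lyap_ula} with the Gaussian Lyapunov function $V_{\bareta}(x)=\exp(\bareta\norm[2]{x})$. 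For the minorization the paper proceeds by a total-variation comparison of $\Rmala^{\ceil{1/\gamma}}$ with $\Rula^{\ceil{1/\gamma}}$ (\Cref{lem:diff_tv_MALA_ULA}, \Cref{propo:small_set_mala}), rather than by a direct one-step density lower bound as you sketch.
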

\begin{proof}
  The proof is postponed to \Cref{subsec:geom-ergodicity-mala}.
\end{proof}
We now turn to the analysis of the RWM algorithm. For $\step>0$, the Markov kernel $\Rrwm$ of the RWM algorithm with a Gaussian proposal of mean $0$ and variance $2\step$ is given for $x\in\rset^d$ and $\msa \in\borelSet(\rset^d)$ by
\begin{multline*}
\Rrwm(x,\msa) = \int_{\rset^d}  \1_{\msa} (x + \sqrt{2 \step} z) \min(1,\rme^{-\alpharwm(x,z)}) \varphibf(z) \rmd z  \\
+ \updelta_x(\msa) \defEns{1-\int_{\rset^d} \min(1,\rme^{-\alpharwm(x,z)})} \varphibf(z) \rmd z \eqsp,
\end{multline*}
where  $\alpharwm(x,z) = \pU(x+\sqrt{2 \step} z) - \pU(x)$. We first consider \Cref{ass:dev_generator_discrete} and adapt the proof of \cite[Lemma 1]{refId0}.
To this end,  consider the following decomposition for any $\vphi \in \setpolyinf(\rset^d,\rset)$,
\begin{multline}
\label{eq:proof-RWM-1}
\Rrwm \vphi (x) - \vphi(x) = \expe{\vphi(x+\sqrt{2\step}\nZ) - \vphi(x)} \\
+ \expe{\parenthese{\min(1,\rme^{-\alpharwm(x, \nZ)}) -1} \defEns{\vphi(x+\sqrt{2\step}\nZ) - \vphi(x)}} \eqsp,
\end{multline}
where $\nZ$ is a standard $d$-dimensional Gaussian vector. While the first term in this decomposition can be easily handled by a Taylor expansion, we rely on the following result for the second term. Define $\zetag:\rset^d \times \rset^d \to \rset$ for all $x,z\in\rset^d$ and $\step\in\ocint{0,\bgamma}$ by,
\begin{equation*}
\zetag(x,z) = 1 - \min\parenthese{1,\exp\defEns{-\alpharwm(x,z)}} - \sqrt{2\step} \ps{\nablaU(x)}{z}_{+}  \eqsp.
\end{equation*}
\begin{lemma}
  \label{lem:approx_rwm}
  Assume \Cref{assumption:U-Sinfty} and \Cref{ass:condition_MALA}. Then, for all $\step\in\ocint{0,\bgamma}$ and $x,z\in\rset^d$,
  \begin{equation*}
    \abs{\zetag(x,z)}
    \leq \gamma \norm[2]{z} \defEnsLigne{L + 2\norm[2]{\nabla U(x)} + 4\step L^2 \norm[2]{z}} \eqsp.
  \end{equation*}
\end{lemma}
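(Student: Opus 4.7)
The strategy is to (i) Taylor-expand $U$ around $x$ to compare $\alpharwm(x,z)$ with its linear approximation $\sqrt{2\step}\ps{\nabla U(x)}{z}$, (ii) establish a uniform inequality between $t \mapsto 1-\min(1,\rme^{-t})$ and $t\mapsto t_+$, and (iii) combine by the triangle inequality.

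For step (i), note that under \Cref{assumption:U-Sinfty}, $\nabla U$ is $L$-Lipschitz, hence the first-order Taylor inequality gives, for any $x,z \in \rset^d$ and $\step \in \ocint{0,\bgamma}$,
\begin{equation*}
\bigl| \alpharwm(x,z) - \sqrt{2\step}\ps{\nabla U(x)}{z} \bigr|
= \bigl| U(x+\sqrt{2\step}z) - U(x) - \ps{\nabla U(x)}{\sqrt{2\step}z} \bigr|
\leq L\step\norm[2]{z} .
\end{equation*}
Writing $a = \sqrt{2\step}\ps{\nabla U(x)}{z}$ and $R = \alpharwm(x,z) - a$, this reads $|R|\leq L\step\norm[2]{z}$, while $a^2 \leq 2\step\norm[2]{\nabla U(x)}\norm[2]{z}$.

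For step (ii), the key elementary inequality is
\begin{equation*}
\bigl| (1-\rme^{-t})_+ - t_+ \bigr| \leq t^2/2, \qquad t \in \rset .
\end{equation*}
For $t\geq 0$ this follows from the two-sided bound $t - t^2/2 \leq 1-\rme^{-t}\leq t$, the right inequality being $\rme^{-t}\geq 1-t$ and the left inequality being proved by checking that $f(t)=1-\rme^{-t}-(t-t^2/2)$ satisfies $f(0)=f'(0)=0$ and $f''(t)=1-\rme^{-t}\geq 0$ on $\coint{0,\plusinfty}$. For $t<0$ both terms vanish. Observing that $1-\min(1,\rme^{-t}) = (1-\rme^{-t})_+$, we deduce
\begin{equation*}
\bigl| 1 - \min(1,\rme^{-\alpharwm(x,z)}) - (\alpharwm(x,z))_+ \bigr| \leq \alpharwm(x,z)^2/2 .
\end{equation*}

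For step (iii), since $|\alpha_+ - a_+| \leq |\alpha - a|$ for any reals, the triangle inequality yields
\begin{equation*}
|\zetag(x,z)|
\leq \bigl| 1-\min(1,\rme^{-\alpharwm(x,z)}) - (\alpharwm(x,z))_+ \bigr|
+ |(\alpharwm(x,z))_+ - a_+|
\leq \alpharwm(x,z)^2/2 + |R| .
\end{equation*}
Using $\alpharwm(x,z)^2 = (a+R)^2 \leq 2a^2 + 2R^2 \leq 4\step\norm[2]{\nabla U(x)}\norm[2]{z} + 2L^2\step^2\norm[4]{z}$ together with $|R|\leq L\step\norm[2]{z}$ and factoring $\step\norm[2]{z}$ gives the stated bound (with constants at least as good as $L + 2\norm[2]{\nabla U(x)} + 4\step L^2 \norm[2]{z}$). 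The only mildly delicate point is the elementary inequality in (ii), which is most cleanly obtained via the convex analysis argument above rather than by case-splitting on signs of $a$ and $\alpharwm$.
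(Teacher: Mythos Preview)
Your proof is correct and follows essentially the same approach as the paper: both combine the Taylor bound $|\alpharwm(x,z)-\sqrt{2\gamma}\ps{\nabla U(x)}{z}|\leq L\gamma\norm[2]{z}$, the identity $1-\min(1,\rme^{-t})=1-\rme^{-t_+}$, and the elementary inequality $|1-\rme^{-s}-s|\leq s^2/2$ for $s\geq 0$, then conclude via the triangle inequality and $(a+R)^2\leq 2a^2+2R^2$. The only cosmetic difference is that the paper states the second-order bound directly for $s=\alpharwm(x,z)_+\geq 0$, whereas you package it as $|(1-\rme^{-t})_+-t_+|\leq t^2/2$ for all $t\in\rset$.
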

\begin{proof}
  First, by a Taylor expansion and since $t \mapsto \max(0,t)$ is $1$-Lipschitz, we get for all $\step\in\ocint{0,\bgamma}$ and $x,z\in\rset^d$
\begin{equation*}
  \absolute{\alpharwm(x,z)_+ - \sqrt{2\step}\ps{\nablaU(x)}{z}_{+}} \leq L\gamma \norm{z}^2 \eqsp,
\end{equation*}
where for any $a \in \rset$, $a_+ = \max(0,a)$.
Using  that that for all $x,z\in\rset^d$, \[ \min\{1,\exp(-\alpharwm(x,z))\} = \exp(-\alpharwm(x,z)_+)\] and
\begin{equation*}
\alpharwm(x,z)_{+} - (1/2) \defEns{\alpharwm(x,z)_+}^2 \leq 1 - \rme^{-\alpharwm(x,z)_{+}} \leq \alpharwm(x,z)_+ \eqsp,
\end{equation*}
concludes the proof.
\end{proof}
Let $\vphi \in \setpolyinf(\rset^d,\rset)$. Using a Taylor expansion, we get for all $x,z\in\rset^d$,
\begin{align}
\nonumber
&\vphi(x+\sqrt{2\step}z) - \vphi(x) \\
&\quad = \sqrt{2\step}\ps{\nabla \vphi(x)}{z} + (2\step)\int_0^1 (1-t)
\nonumber
\DD^2\vphi(x+t\sqrt{2\step}z)[z^{\otimes 2}] \rmd t \\
\nonumber
&\quad =  \sqrt{2\step}\ps{\nabla \vphi(x)}{z} + \step \DD^2\vphi(x)[z^{\otimes 2}] + (\sqrt{2}/3) \step^{3/2} \DD^3\vphi(x)[z^{\otimes 3}] \\
\nonumber
&\phantom{-------------}+  (2/3)\step^2 \int_0^1 (1-t)^3 \DD^4\vphi(x+t\sqrt{2\step}z)[z^{\otimes 4}] \rmd t \eqsp.
\end{align}
In addition, since for any $x \in \rset^d$,
\[ \expe{\ps{\nabla \pU(x)}{\nZ}_{+} \ps{\nabla \vphi(x)}{\nZ}} = (1/2) \ps{\nablaU(x)}{\nabla \vphi(x)} \eqsp, \]
where $Z$ is a standard $d$-dimensional Gaussian vector, we get  that by \eqref{eq:proof-RWM-1} and \Cref{lem:approx_rwm}, for any $\vphi\in\setpoly{\infty}(\rset^d,\rset)$ and $\gamma >0$, $\Rrwm\vphi = \vphi + \step\generator \vphi +\step^{3/2}\genrrwm \vphi$ where for any $x \in \rset^d$,
\begin{align*}
\nonumber
  &\genrrwm\vphi(x) \\
\nonumber
  &= -\PE \bigg[
\int_0^1 (1-t)\DD^2 \vphi(x+\nZ_t)[\nZ^{\otimes 2}] \rmd t
\defEns{2^{3/2}\ps{\nablaU(x)}{\nZ}_{+} + 2\step^{-1/2}\zetag(x,\nZ)} \\
&\phantom{--}+\sqrt{2} \step^{-1}\zetag(x,\nZ) \ps{\nabla \vphi(x)}{\nZ}
-(2/3)\sqrt{\step} \int_0^1 (1-t)^3 \DD^4 \vphi (x+\nZ_t) [\nZ^{\otimes 4}] \rmd t
\bigg] \eqsp,
\end{align*}
where $\nZ_t = t\sqrt{2\step}\nZ$. Then, since $\zetag$ is continuous and using the Lebesgue dominated convergence theorem, we end up with the following result.

\begin{lemma}\label{prop:RWM-dev-ergo}
     Assume \Cref{assumption:U-Sinfty} and \Cref{ass:condition_MALA}.   Then for any $\bgamma >0$, $\set{\Rrwm}{\step \in \ocint{0,\bgamma}}$ satisfies \Cref{ass:dev_generator_discrete} with $\genag = \genrrwm$, $\alpha = 3/2$ and $\ke = 4$.
\end{lemma}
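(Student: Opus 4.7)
The expression for $\genrrwm\vphi$ and its continuity are already established in the derivation preceding the lemma; what remains is to verify the polynomial growth estimate in \Cref{ass:dev_generator_discrete}, namely, that for each $p\in\nset$ there exist $q\in\nset$ and $C\geq 0$ (depending only on $p$ and $\bgamma$, not on $\gamma$) with
\[
\sup_{\gamma\in\ocint{0,\bgamma}} \VnormEq[0,q]{\genrrwm \vphi} \leq C\, \VnormEq[4,p]{\vphi}
\quad\text{for all } \vphi\in\setpolyinf(\rset^d,\rset).
\]
Membership of $\genrrwm \vphi$ in $\setpolyinf(\rset^d,\rset)$ will follow from the same bound, together with continuity, which was already pointed out via dominated convergence. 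So the plan is to estimate each of the four terms appearing in the explicit formula for $\genrrwm\vphi(x)$ and take the supremum over $\gamma\in\ocint{0,\bgamma}$.

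I would bound the terms one by one, using:
(i) the polynomial growth estimate $\norm{\DD^i \vphi(y)}\leq \VnormEq[4,p]{\vphi}(1+\norm[p]{y})$ for $i\in\{0,\dots,4\}$;
(ii) the Lipschitz bound $\norm{\nablaU(x)}\leq L\norm{x}+\norm{\nablaU(0)}$ from \Cref{assumption:U-Sinfty};
(iii) the estimate $\abs{\zetag(x,z)}\leq \gamma\norm[2]{z}\{L+2\norm[2]{\nabla U(x)}+4\gamma L^2 \norm[2]{z}\}$ from \Cref{lem:approx_rwm}; and
(iv) finiteness of Gaussian moments $\PE[\norm[k]{Z}]$ for every $k\in\nset$. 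The first term $\int_0^1(1-t)\DD^2\vphi(x+Z_t)[Z^{\otimes 2}]\,\rmd t\cdot 2^{3/2}\ps{\nabla U(x)}{Z}_+$ is controlled by expanding $(1+\norm[p]{x+Z_t})\leq 2^{p}(1+\norm[p]{x}+\norm[p]{Z})$ (since $\norm{Z_t}\leq \sqrt{2\bgamma}\norm{Z}$) and pairing with $\norm{\nabla U(x)}\norm{Z}$, which yields a bound of the form $C\VnormEq[4,p]{\vphi}(1+\norm[p+1]{x})$ after Gaussian integration. The fourth term $(2/3)\sqrt{\gamma}\int_0^1(1-t)^3 \DD^4\vphi(x+Z_t)[Z^{\otimes 4}]\,\rmd t$ is even easier since the factor $\sqrt{\gamma}\leq\sqrt{\bgamma}$ is harmless.

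The only slightly delicate part concerns the second and third terms, which contain the negative powers $\gamma^{-1/2}\zetag$ and $\gamma^{-1}\zetag$. This is precisely where the linear-in-$\gamma$ bound of \Cref{lem:approx_rwm} is used sharply: one obtains
\[
\gamma^{-1/2}\abs{\zetag(x,z)}\leq \gamma^{1/2}\norm[2]{z}\{L+2\norm[2]{\nabla U(x)}+4\gamma L^2\norm[2]{z}\},\qquad
\gamma^{-1}\abs{\zetag(x,z)}\leq \norm[2]{z}\{L+2\norm[2]{\nabla U(x)}+4\gamma L^2\norm[2]{z}\},
\]
so both are uniformly bounded in $\gamma\in\ocint{0,\bgamma}$ by a polynomial in $\norm{x}$ and $\norm{z}$. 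Combining with the polynomial growth of $\DD^2\vphi(x+Z_t)$ and $\nabla\vphi(x)$, the Gaussian expectation can be computed to yield a bound $C\VnormEq[4,p]{\vphi}(1+\norm[q]{x})$ for some $q$ depending only on $p$.

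The main obstacle is really just the bookkeeping needed to keep the polynomial degree $q$ and the dependence on $\VnormEq[4,p]{\vphi}$ explicit and uniform in $\gamma$; there is no conceptual difficulty once \Cref{lem:approx_rwm} is in hand, because that lemma already isolates the linear-in-$\gamma$ cancellation that is required for the exponent $\alpha=3/2$ to be attainable. Assembling the four bounds and taking $k_e=4$ gives the claim.
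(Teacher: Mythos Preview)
Your proposal is correct and follows essentially the same route as the paper: the paper derives the explicit formula for $\genrrwm\vphi$ in the text preceding the lemma and then concludes in one line (``since $\zetag$ is continuous and using the Lebesgue dominated convergence theorem''), whereas you spell out the term-by-term polynomial growth estimate that the paper leaves implicit. The key ingredient---that \Cref{lem:approx_rwm} gives a linear-in-$\gamma$ bound on $\zetag$, which exactly cancels the $\gamma^{-1}$ and $\gamma^{-1/2}$ prefactors---is identified and used in the same way.
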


In \Cref{sec:additional-proofs}, we establish a similar result as \Cref{thm:geometric_ergodicity_ula} and \Cref{thm:geometric_ergodicity_mala} for the RWM algorithm.


\section{Numerical experiments}
\label{sec:application_cv}

In this Section, we compare numerically our methodology with the Zero Variance method suggested by \cite{Mira2013}, see \Cref{subsec:bibliography}, that consists in minimizing the marginal variance $\min_{\ControlFunc\in\ControlFuncSet} \invpi(\{\tzf + \generator \ControlFunc\}^2)$ instead of the asymptotic variance $\min_{\ControlFunc\in\ControlFuncSet} \varinf(f + \generator \ControlFunc)$. In \Cref{subsec:numerical-comparison-toy-examples}, we consider a one dimensional example where explicit calculations are possible. In \Cref{subsec:bayesian-examples-numeric}, we study Bayesian logistic and probit regressions.
The code used to run the experiments is  available at \url{https://github.com/nbrosse/controlvariates}.

\subsection{One dimensional example}
\label{subsec:numerical-comparison-toy-examples}

We consider an equally weighted mixture of two Gaussian densities of means $(\mu_1, \mu_2) = (-1, 1)$ and variance $\sigma^2 = 1/2$, and a test function $f(x) = x + x^3 /2 + 3\sin(x)$. The derivative of the Poisson equation \eqref{eq:poisson-eq-langevin} is in such case analytically known: $\sPoic '(x) = -(1/\invpi(x))\int_{-\infty}^{x} \invpi(t) \tzf(t) \rmd t$, see \Cref{subsec:1-2d-numerics-practice} for a practical implementation.

We build a control variate $\ControlFunc_\param \in \ControlFuncSet[\lin] = \defEns{\ps{\param}{\base} : \param\in\rset^\pb}$ where $\base=(\base_1,\ldots,\base_\pb)$ are $\pb$ Gaussian kernels regularly spaced on $\ccint{-4, 4}$, \ie~for all $i\in\defEns{1,\ldots,\pb}$ and $x\in\rset$
\begin{equation*}
  \label{eq:def-basis-gaussian-kernels}
  \base_i(x) = (2\uppi)^{-1/2} \rme^{-(x-\mu_i)^2 / 2} \eqsp, \quad \text{where } \mu_i\in\ccint{-4,4} \eqsp.
\end{equation*}
The optimal parameter $\paramstar\in\rset^\pb$ minimizing the asymptotic variance $\varinf(f + \generator \ControlFunc_\param)$ can be explicitly computed according to \eqref{eq:min-asymp-var-diffusion}. For the Zero Variance estimator, the optimal parameter is given by
\begin{equation}\label{eq:paramzv}
  \paramzv = - \Hzv^{-1} \bzv \eqsp,
\end{equation}
where for $1\leq i,j \leq \pb$, $[\Hzv]_{ij} = \invpi(\ps{\generator \base_i}{\generator \base_j})$ and $[\bzv]_i = \invpi(\tzf \generator \base_i)$.
$\Hzv$ is invertible if $(\generator \base_1,\ldots,\generator\base_\pb)$ are linearly independent in $\setpoly{2}(\rset^d,\rset)$.

The asymptotic variance $\varinf(f + \generator \ControlFunc_\param)$ for the two different parameters, $\paramstar$ and $\paramzv$ are compared against the number of Gaussian kernels $\pb\in\defEns{4,\ldots,10}$ in \Cref{figure:resume_1d}.
Note that the asymptotic variance $\varinf(f)$ is $92.5$. We observe that the variance reduction is better for an even number $\pb$ of basis functions; when $\pb \geq 8$, the two methods achieve an almost identical large variance reduction. These results are supported by the plots of $\ControlFunc_\param '$ and $\generator \ControlFunc_\param$ for $\param\in\defEns{\paramstar, \paramzv}$ in \Cref{figure:resume_1d}, see also \Cref{subsec:1-2d-numerics-practice}.
\begin{figure}
\begin{center}
\includegraphics[scale=0.45]{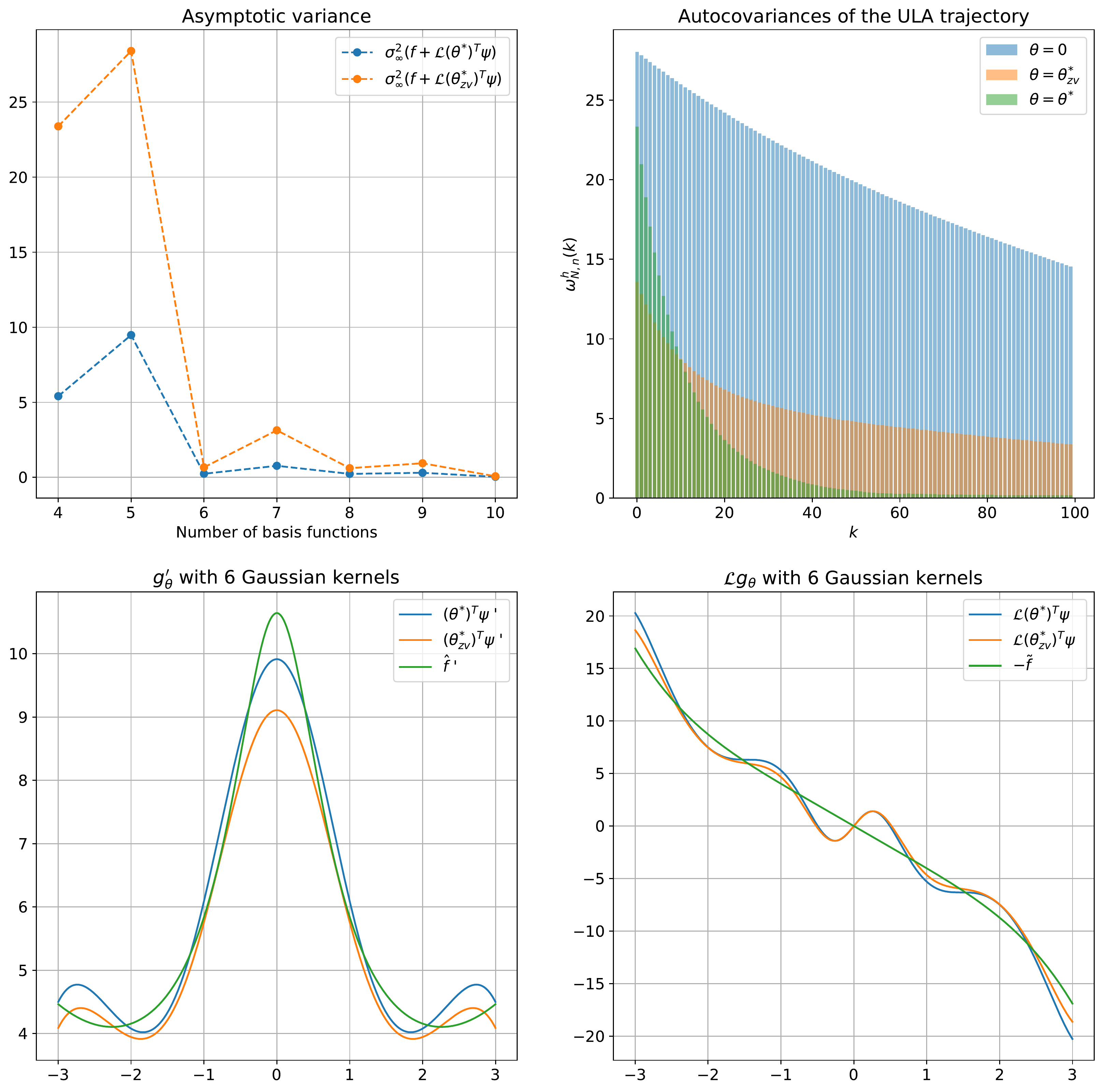}
\end{center}
\caption{\label{figure:resume_1d} \textbf{Top Left.} Plot of the asymptotic variance $\varinf(f + \generator \ControlFunc_\param)$ for $\param\in\defEns{\paramstar, \paramzv}$ and $\pb\in\defEns{4,\ldots,10}$.
\textbf{Top Right.} Autocovariances plot of ULA displaying $\omega^h_{N,n}(k)$ for $h=f +\generator\ps{\param}{\base}$, $\param\in\defEns{0,\paramstar,\paramzv}$ and $0\leq k < 100$.
\textbf{Bottom Left and Right.} Plots of $\ControlFunc_\param '$ and $\generator \ControlFunc_\param$ for $\param\in\defEns{\paramstar, \paramzv}$ and $\pb=6$.}
\end{figure}

We fix the number of basis functions $\pb=4$ and we now turn to the application to MCMC algorithms. We first define the sample mean with a burn-in period $N\in\nset^*$ by
\begin{equation}\label{eq:def-invpihat-N-n}
  \invpihat[N, n](f) = \frac{1}{n}\sum_{k=N}^{N+n-1} f(X_k) \eqsp,
\end{equation}
where $n\in\nset^*$ is the number of samples. In this Section, we consider the following estimators of $\invpi(f)$: $\invpihat[N,n](f + \generator \ps{\paramstar}{\base})$ and $\invpihat[N,n](f + \generator \ps{\paramzv}{\base})$ where $\paramstar$ and $\paramzv$ are given in \eqref{eq:min-asymp-var-diffusion} and \eqref{eq:paramzv} respectively. In this simple one dimensional example, the optimal parameters $\paramstar$ and $\paramzv$ are explicitly computable; the problem of estimating them in higher dimensional models is addressed numerically in \Cref{subsec:bayesian-examples-numeric}.

The sequence $(X_k)_{k\in\nset}$ is generated by the ULA, MALA or RWM algorithms starting at $0$, with a step size $\step=10^{-2}$ for ULA and $\step=5\times 10^{-2}$ for RWM and MALA, a burn-in period $N=10^5$ and a number of samples $n=10^6$.
For a test function $h:\rset\to\rset$ ($h=f + \generator \ps{\param}{\base}$, $\param\in\defEns{0, \paramstar, \paramzv}$), we estimate the asymptotic variance $\varinf[\step](h)$ of $\invpihat[N,n](h)$ by a spectral estimator $\sigS[h]$ with a Tukey-Hanning window, see \cite{flegal:2010}, given by
\begin{align}
\label{eq:def-sigS}
 \sigS[h] &= \sum_{k=-(\floor{n^{1/2}} -1)}^{\floor{n^{1/2}} -1} \defEns{\frac{1}{2} + \frac{1}{2}\cos\parenthese{\frac{\uppi\absolute{k}}{\floor{n^{1/2}}}}} \omega^h_{N,n}(\absolute{k}) \eqsp, \\
 \nonumber
 \omega^h_{N,n}(k) &= \frac{1}{n} \sum_{s=N}^{N+n-1-k} \defEns{h(X_{s}) - \hat{\pi}_{N,n}(h)} \defEns{h(X_{s+k}) - \hat{\pi}_{N,n}(h)} \eqsp.
\end{align}
We compute the average of these estimators $\sigS[f+\generator \ps{\param}{\base}]$, $\param\in\defEns{0,\paramstar,\paramzv}$ over $10$ independent runs of the Monte Carlo algorithm (ULA, RWM or MALA), see \Cref{table:ula-asympt-var-1d}. We observe that minimizing the asymptotic variance improves upon the Zero Variance estimator.
\begin{table}
  \centering
  \begin{tabular}{|c|c|c|c|}
    \hline
    & $\step\sigS[f]$ & $\step\sigS[f+\generator \ps{\paramzv}{\base}]$ & $\step\sigS[f+\generator \ps{\paramstar}{\base}]$ \\
    \hline
    ULA  & $82.06$ & $20.74$ & $5.33$ \\
    \hline
    RWM & $105.2$ & $28.19$ & $8.41$ \\
    \hline
    MALA & $93.27$ & $23.40$ & $5.00$ \\
    \hline
  \end{tabular}
  \caption{Values of $\sigS[f+\generator \ps{\param}{\base}]$, $\param\in\defEns{0,\paramstar,\paramzv}$ rescaled by the step size $\step$.}\label{table:ula-asympt-var-1d}
\end{table}

A more detailed analysis can be carried out using the autocovariances plots that consist in displaying $\omega^h_{N,n}(k)$ for $h=f +\generator\ps{\param}{\base}$, $\param\in\defEns{0,\paramstar,\paramzv}$ and $0\leq k < 100$, see \Cref{figure:resume_1d}. The autocovariances plots for RWM and MALA are similar. By \cite[Theorem 21.2.11]{douc:moulines:priouret:soulier:2018}, the asymptotic variance $\varinf[\step](h)$ is the sum of the autocovariances:
\begin{equation*}
  \varinf[\step](h) = \invpig(\tilde{h}_\step^2) + 2 \sum_{k=1}^{\plusinfty} \invpig(\tilde{h}_\step\Rkerg^k \tilde{h}_\step) \eqsp, \quad \text{where} \eqsp \tilde{h}_\step = h - \invpig(h) \eqsp.
\end{equation*}
The two methods are effective at reducing the autocovariances compared to the case without control variate. The zero variance estimator decreases more the autocovariance at $k=0$ compared to our method, which is indeed the objective of $\paramzv$, the minimizer of $\param\mapsto\invpi((\tzf + \generator \ps{\param}{\base})^2)$. Using $\param = \paramstar$ lowers more effectively the tail of the autocovariances (for $k$ large enough) compared to $\param=\paramzv$. This effect is predominant and explains the results of \Cref{table:ula-asympt-var-1d}.

\subsection{Bayesian logistic and probit regressions}
\label{subsec:bayesian-examples-numeric}

We illustrate the proposed control variates method on Bayesian logistic and probit regressions, see \cite[Chapter 16]{gelman2014bayesian}, \cite[Chapter 4]{marin2007bayesian}.
The examples and the data sets are taken from \cite{papamarkou2014}. Let $\yb=(\yb_1,\ldots,\yb_n)\in\defEns{0,1}^\nb$ be a vector of binary response variables, $\bb\in\rset^d$ be the regression coefficients, and $\xb\in\rset^{\nb \times d}$ be a design matrix.
The log-likelihood for the logistic and probit regressions are given respectively by
\begin{align*}
  \logl{l}(\yb | \bb, \xb) & = \sum_{i=1}^{\nb} \defEns{\yb_i  \xb_i^{\Tr} x -  \ln\parenthese{1+\rme^{\xb_i^{\Tr} \bb}}} \eqsp, \\
  \logl{p}(\yb | \bb, \xb) & = \sum_{i=1}^{\nb} \defEns{\yb_i \ln(\Phi(\xb_i^{\Tr} \bb)) + (1-\yb_i)\ln(\Phi(-\xb_i^{\Tr} \bb))} \eqsp,
\end{align*}
where $\xb_i^{\Tr}$ is the $i^{\text{th}}$ row of $\xb$ for $i\in\defEns{1,\ldots,\nb}$.
For both models,
a Gaussian prior of mean $0$ and variance $\varbb\Id$ is assumed for $\bb$ where $\varbb=100$.
The unnormalized posterior probability distributions $\pib{l}$ and $\pib{p}$ for the logistic and probit regression models  are defined for all $\bb\in\rset^d$ by
\begin{align*}
  \pib{l}(\bb | \yb, \xb) &\propto \exp\parenthese{-\Ub{l}(\bb)} \eqsp \text{with} \quad \Ub{l}(\bb) = -\logl{l}(\yb | \bb, \xb) + (2\varbb)^{-1}\norm[2]{\bb} \eqsp,\\
  \pib{p}(\bb | \yb, \xb) &\propto \exp\parenthese{-\Ub{p}(\bb)} \eqsp \text{with} \quad \Ub{p}(\bb) = -\logl{p}(\yb | \bb, \xb) + (2\varbb)^{-1}\norm[2]{\bb} \eqsp.
\end{align*}
The following lemma enables to check the assumptions on $\Ub{l}$ and $\Ub{p}$ required to apply \Cref{prop:dev-weak-error} for the ULA, MALA and RWM algorithms.
\begin{lemma}
\label{lemma:log-probit-assumptions}
$\Ub{l}$ and $\Ub{p}$ satisfy  \Cref{assumption:U-Sinfty} and \Cref{ass:condition_MALA}.
\end{lemma}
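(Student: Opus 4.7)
The plan is to check each required property directly for both potentials by exploiting the structure $\Ub{\bullet}(\bb) = -\logl{\bullet}(\yb\mid \bb,\xb) + (2\varbb)^{-1}\norm[2]{\bb}$. The Gaussian prior term will do most of the work.

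First, I would establish the convexity and smoothness of the log-likelihood parts. For the logistic case, writing $\sigma(t) = (1+\rme^{-t})^{-1}$, a direct computation gives
\[
\nabla \Ub{l}(\bb) = \sum_{i=1}^{\nb}\{\sigma(\xb_i^{\Tr}\bb) - \yb_i\}\xb_i + \varbb^{-1}\bb, \qquad \nabla^2 \Ub{l}(\bb) = \sum_{i=1}^{\nb} \sigma'(\xb_i^{\Tr}\bb)\,\xb_i \xb_i^{\Tr} + \varbb^{-1}\Id,
\]
and since $\sigma'(t) = \sigma(t)(1-\sigma(t)) \in (0,1/4]$, the Hessian satisfies $\varbb^{-1}\Id \preceq \nabla^2 \Ub{l}(\bb) \preceq \{(1/4)\sum_i\norm[2]{\xb_i} + \varbb^{-1}\}\Id$. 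For the probit case, writing $\psi_{y}(t)= -y\ln\Phi(t) - (1-y)\ln\Phi(-t)$ and using the log-concavity of $\Phi$, which gives $\psi_y''\geq 0$, the same argument yields $\nabla^2 \Ub{p}(\bb) \succeq \varbb^{-1} \Id$. In both cases this gives the strong-convexity part of \Cref{ass:condition_MALA} globally, i.e.\ with $\raymala_1 = 0$ and $m = \varbb^{-1}$.

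Second, I would show that all derivatives of order $\geq 2$ of the log-likelihoods are bounded, which will yield both the Lipschitz property of $\nabla U$ in \Cref{assumption:U-Sinfty} and the bound on $\DD^3 U$ in \Cref{ass:condition_MALA}. For the logistic case this is routine: every derivative $\sigma^{(k)}$ is a polynomial in $\sigma$, hence is bounded on $\rset$, so for $k \geq 2$
\[
\DD^k \Ub{l}(\bb)[v_1,\dots,v_k] = \sum_{i=1}^{\nb} \sigma^{(k-1)}(\xb_i^{\Tr}\bb)\prod_{j=1}^k (\xb_i^{\Tr} v_j),
\]
and therefore $\norm{\DD^k \Ub{l}(\bb)} \leq C_k \sum_i \norm[k]{\xb_i}$ uniformly in $\bb$. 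For the probit case the analogue reduces to showing that all derivatives of $t \mapsto -\ln\Phi(t)$ of order $\geq 2$ are bounded on $\rset$. Writing $h(t) = \phi(t)/\Phi(t)$, one obtains $(-\ln\Phi)''(t) = h(t)(h(t)+t)$ and the recurrence $h'(t) = -h(t)(h(t)+t)$; this shows every higher derivative is a polynomial in $h$ and $t$. The Mills-ratio expansion $h(t) = -t - 1/t + O(1/t^{3})$ as $t\to-\infty$ gives the crucial cancellations, e.g.\ $h(t)+t = -1/t + O(1/t^3)$ and $(-\ln\Phi)''(t)\to 1$, $(-\ln\Phi)'''(t)\to 0$. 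Combined with exponential decay as $t\to+\infty$ and continuity on bounded intervals, this shows inductively that all higher derivatives of $-\ln\Phi$ are bounded. This is the only nontrivial point in the proof and would be the main technical obstacle; it can alternatively be quoted from known results on log-concave densities.

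Third, I would verify $U\in\setpoly{\infty}(\rset^d,\rset)$. The potentials themselves grow at most quadratically (due to the prior term, since the log-likelihood grows at most linearly in $\bb$), $\nabla U$ grows at most linearly, and all higher derivatives are bounded by the preceding step. Hence $\VnormEq[k,2]{U} < \infty$ for every $k$, so $U \in \setpoly{\infty}$ and \Cref{assumption:U-Sinfty} is satisfied. The bound on $\DD^3 U$ gives the constant $M$ in \Cref{ass:condition_MALA}, completing the verification.
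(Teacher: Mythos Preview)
Your proposal is correct and follows essentially the same route as the paper's proof: compute the Hessian to get strong convexity from the $\varbb^{-1}\Id$ prior term, bound $\sigma'\in(0,1/4]$ for the logistic case, and for the probit case use the Mills-ratio asymptotic $\phi/\Phi(t)=-t-1/t+O(1/t^3)$ as $t\to-\infty$ to show $(-\ln\Phi)''$ and $(-\ln\Phi)'''$ are bounded. The only minor differences are that the paper derives $t+\phi(t)/\Phi(t)\geq 0$ via an explicit integration by parts rather than citing log-concavity of $\Phi$, and quotes the asymptotic expansion of $\bar\Phi$ from Gradshteyn--Ryzhik; also, for \Cref{assumption:U-Sinfty} you only need polynomial growth of $\DD^k U$ for $k\geq 4$, which follows immediately from the recurrence $h'=-h(h+t)$ without the cancellation argument, so boundedness of all higher derivatives is more than required.
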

\begin{proof}
  The proof is postponed to \Cref{subsec:proof-log-probit-assumptions}.
\end{proof}

Following \cite[Section 2.1]{papamarkou2014}, we compare two bases for the construction of a control variate,  based on first and second degree polynomials and denoted by $\basea = (\basea_1,\ldots, \basea_d)$ and $\baseb = (\baseb_1,\ldots,\baseb_{d(d+3)/2})$ respectively, see \Cref{sec:suppl-probit-reg} for their definitions.
The estimators associated to $\basea$ and $\baseb$ are  referred to as CV-1 and  CV-2, respectively.

For the ULA, MALA and RWM algorithms, we make a run of $n=10^6$ samples with a burn-in period of $10^5$ samples, started at the mode of the posterior. The step size is set equal to $10^{-2}$ for ULA and to $5 \times 10^{-2}$ for MALA and RWM: with these step sizes, the average acceptance ratio in the stationary regime is equal to 0.23 for RWM and 0.57 for MALA, see \cite{roberts:gelman:gilks:1997,roberts:rosenthal:1998}.
We consider $2d$ scalar test functions $\{f_k\}_{k=1}^{2d}$ defined for all $x\in\rset^d$ and $k\in\{1,\ldots,d\}$ by $f_k(x) = x_k$ and $f_{k+d}(x) = x_k^2$.

Contrary to the one dimensional case handled in \Cref{subsec:numerical-comparison-toy-examples}, the optimal parameters $\paramstar$ and $\paramzv$ corresponding to our method and to the zero variance estimator can not be computed in closed form and must be estimated. We consider then the control variate estimator $\invpicv_ {N,n,n}(f)$ defined in \eqref{eq:def-invpi-cv} where $m=n$ and $(\tilde{X}_k)_{k\in\nset}$ is equal to $(X_k)_{k\in\nset}$; $\paramstar$ is approximated by $\param^*_n$ given in \eqref{eq:def-paramhat-n-star}.
For $k\in\{1,\ldots,2d\}$, we compute the empirical average $\invpihat[N,n](f_k)$ defined in \eqref{eq:def-invpihat-N-n} and confront it to $\invpicv_{N,n,n}(f_k)$.
For comparison purposes, the zero-variance estimators of \cite{papamarkou2014} using the same bases of functions $\basea$, $\baseb$ are also computed and are referred to as ZV-1 for $\basea$ and ZV-2 for $\baseb$.

We run $100$ independent Markov chains for ULA, MALA, RWM algorithms. The boxplots for the logistic example are displayed in \Cref{figure:log-1} for $x_1$ and $x_1^2$. Note the impressive decrease in the variance using the control variates for each algorithm ULA, MALA and RWM. It is worthwhile to note that for ULA, the bias $\absolute{\invpi(\tf) - \invpig(\tf)}$ is reduced dramatically using the CV-2 estimator. It can be explained by the fact that for $n$ large enough, $\ControlFunc_{\param^*_n}= \ps{\param^*_n}{\baseb}$ approximates well the solution $\sPoic$ of the Poisson equation $\generator \sPoic = -\tzf$. We  then get
\begin{equation*}
  \invpig(\tf) + \invpig\parenthese{\generator \ControlFunc_{\param^*_n}} \approx \invpig(\tf) - \invpig\parenthese{\tzf} = \invpi(\tf) \eqsp.
\end{equation*}

To have a more quantitative estimate of the variance reduction, we compute for each algorithm and test function $h\in\setpoly{}(\rset^d, \rset)$, the spectral estimator $\sigS[h]$ defined in \eqref{eq:def-sigS} of the asymptotic variance.
The average of these estimators $\sigS[f+\generator\ps{\param}{\base}]$ for $\param\in\defEns{0,\param^*_n, [\paramzv]_n}$ over the $100$ independent runs of the Markov chains for the logistic regression are reported in Table~\ref{table:1}.
$[\paramzv]_n$ is an empirical estimator of $\paramzv$, see \cite{papamarkou2014} for its construction.
The Variance Reduction Factor (VRF) is defined as the ratio of the asymptotic variances obtained by the ordinary empirical average and the control variate (or zero-variance) estimator. We again observe the considerable decrease of the asymptotic variances using control variates.
In this example, our approach produces slightly larger VRFs compared to the zero-variance estimators.
We obtain similar results for the probit regression;
see \Cref{sec:suppl-probit-reg}.

\begin{figure}
\begin{center}
\includegraphics[scale=0.45]{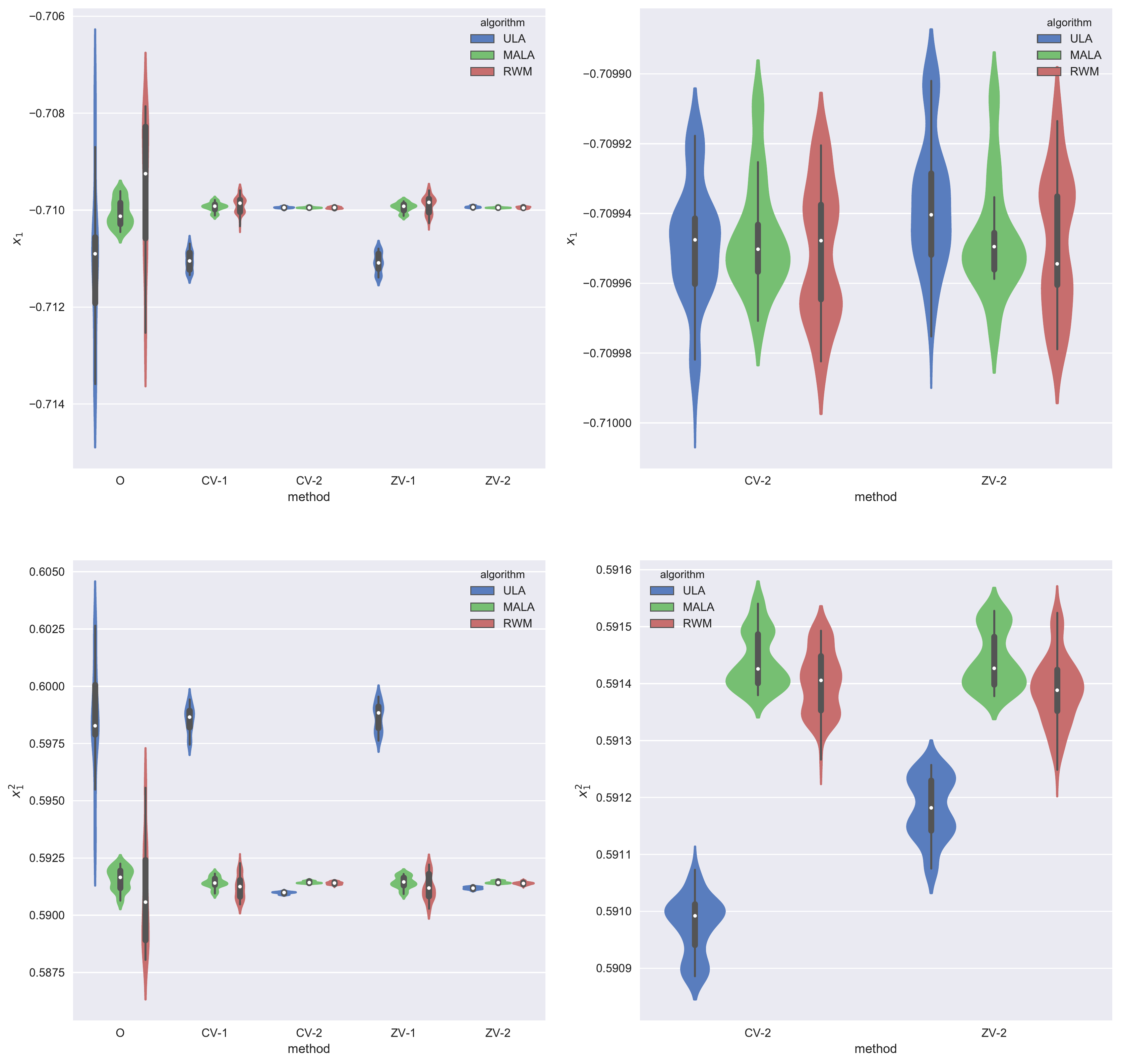}
\end{center}
\caption{\label{figure:log-1} Boxplots of $\bb_1,\bb_1^2$ using the ULA, MALA and RWM algorithms for the logistic regression. The compared estimators are the ordinary empirical average (O), our estimator with a control variate \eqref{eq:def-invpi-cv} using first (CV-1) or second (CV-2) order polynomials for $\base$, and the zero-variance estimators of \cite{papamarkou2014} using a first (ZV-1) or second (ZV-2) order polynomial bases.   The plots in the second column are close-ups for CV-2 and ZV-2.    }
\end{figure}

\begin{table}
\begin{tabular}{c|c|c|c c|c c|c c|c c|}
   \multicolumn{11}{c}{} \\
   & & MCMC & \multicolumn{2}{c|}{CV-1-MCMC} & \multicolumn{2}{c|}{CV-2-MCMC}
   & \multicolumn{2}{c|}{ZV-1-MCMC} & \multicolumn{2}{c|}{ZV-2-MCMC} \\
   & & Var.\  & VRF & Var.\  & VRF & Var.\  & VRF & Var.\  & VRF & Var.\  \\
   \hline
   $\bb_1$ & ULA &         2 &         33 &      0.061 &    3.2e+03 &    6.2e-4 &         33 &      0.061 &      3e+03 &    6.6e-4 \\
   & MALA &      0.41 &         33 &      0.012 &    2.6e+03 &    1.6e-4 &         30 &      0.014 &    2.5e+03 &    1.7e-4 \\
   & RWM &       1.3 &         33 &      0.039 &    2.6e+03 &    4.9e-4 &         32 &       0.04 &    2.7e+03 &    4.8e-4 \\
   \hline
   $\bb_2$ & ULA &        10 &         57 &       0.18 &    8.1e+03 &     1.3e-3 &         53 &       0.19 &    7.4e+03 &     1.4e-3 \\
   & MALA &       2.5 &         59 &      0.042 &    7.7e+03 &    3.2e-4 &         54 &      0.046 &    7.3e+03 &    3.4e-4 \\
   & RWM &       5.6 &         52 &       0.11 &    5.6e+03 &      1.0e-3 &         50 &       0.11 &    5.6e+03 &      1.0e-3 \\
   \hline
   $\bb_2$ & ULA &        10 &         56 &       0.18 &    7.3e+03 &     1.4e-3 &         52 &       0.19 &    6.7e+03 &     1.0e-35 \\
   &MALA &       2.4 &         58 &      0.041 &    6.8e+03 &    3.5e-4 &         52 &      0.045 &    6.5e+03 &    3.7e-4 \\
   &RWM &       5.6 &         45 &       0.13 &    5.1e+03 &     1.0e-31 &         42 &       0.13 &    5.1e+03 &     1.0e-31 \\
   \hline
   $\bb_4$ & ULA &        13 &         26 &        0.5 &    3.9e+03 &     3.3e-3 &         22 &       0.59 &    3.4e+03 &     3.8e-3 \\
   &MALA &       3.1 &         25 &       0.12 &    3.6e+03 &    8.7e-4 &         21 &       0.14 &    3.3e+03 &    9.5e-4 \\
   &RWM &       7.5 &         19 &        0.4 &    2.5e+03 &      3.0e-3 &         18 &       0.43 &    2.4e+03 &     3.0e-31 \\
   \hline
   $\bb_1^2$ & ULA &       4.6 &         10 &       0.46 &    5.5e+02 &     8.4e-3 &        9.3 &       0.49 &    4.8e+02 &     9.5e-3 \\
   &MALA &      0.98 &        9.6 &        0.1 &    4.6e+02 &     2.1e-3 &        8.6 &       0.11 &    4.2e+02 &     2.3e-3 \\
   &RWM &         3 &        8.3 &       0.36 &    4.3e+02 &     6.9e-3 &          8 &       0.37 &    4.3e+02 &     6.9e-3 \\
   \hline
   $\bb_2^2$ & ULA &        29 &         11 &        2.6 &    5.2e+02 &      0.055 &         10 &        2.8 &    4.7e+02 &      0.062 \\
   &MALA &         7 &         11 &       0.64 &    5.2e+02 &      0.013 &         10 &       0.68 &    4.8e+02 &      0.014 \\
   &RWM &        16 &        9.1 &        1.8 &    4.4e+02 &      0.037 &        8.8 &        1.8 &    4.3e+02 &      0.037 \\
   \hline
   $\bb_3^2$ & ULA &        46 &         11 &        4.1 &    6.7e+02 &      0.069 &         10 &        4.5 &    5.9e+02 &      0.079 \\
   &MALA &        11 &         11 &       0.97 &      6e+02 &      0.018 &         10 &          1 &    5.6e+02 &      0.019 \\
   &RWM &        26 &          9 &        2.9 &    4.3e+02 &      0.061 &        8.6 &        3.1 &    4.2e+02 &      0.062 \\
   \hline
   $\bb_4^2$ & ULA &   5.1e+02 &         14 &         37 &    8.2e+02 &       0.62 &         12 &         43 &    6.9e+02 &       0.73 \\
   &MALA &   1.2e+02 &         14 &          9 &    7.9e+02 &       0.15 &         12 &         10 &    7.1e+02 &       0.17 \\
   &RWM &   2.9e+02 &         11 &         27 &    5.8e+02 &       0.51 &         10 &         29 &    5.6e+02 &       0.53 \\
 \hline
\end{tabular}
\caption{Estimates of the asymptotic variances for ULA, MALA and RWM and each parameter $\bb_i$, $\bb_i^2$ for $i\in\{1,\ldots,d\}$, and of the variance reduction factor (VRF) on the example of the logistic regression.}
\label{table:1}
\end{table}

\section{Proofs of \Cref{item-thm-var-3} and \Cref{prop:dev-weak-error}}
\label{sec:proofs}
In the proof
the notation $A(\step,n,x,f) \lesssim B(\step,n,x,f)$ means that there exist $\bgamma > 0$, and $C < \infty$ such that for all  $f \in \setpolyinf(\rset^d,\rset)$, $\step \in \ocint{0,\bgamma}$, $x \in \rset^d$, $n \in \nset$, $A(\step,n,x,f) \leq C B(\step,n,x,f)$.

We preface the proofs by a technical result which follows from \cite[Lemma~2.6, Proposition~2.7]{kopec:2015} and \eqref{eq:def_poisson_int} establishing the regularity of solutions of Poisson's equation.
\begin{proposition}
\label{prop:existence-sol-Poisson}
Assume  \Cref{assumption:U-Sinfty} and \Cref{ass:geo_ergod} and let $k \in \nsets$.
    For all $\tf\in\setpoly{\infty}(\rset^d,\rset)$, there exists $\sPoif\in\setpoly{\infty}(\rset^d,\rset)$ such that
    $\generator \sPoif = -\tzf$, where $\tzf = f- \pi(f)$, $\generator$ is the generator of the Langevin diffusion defined in \eqref{eq:def-generator}. In addition, for all $p\in\nset$, there exist $C\geq 0$, $q\in\nset$ such that  for all $\tf\in\setpoly{\infty}(\rset^d,\rset)$, $\Vnorm[k,q]{\sPoif} \leq C \Vnorm[k,p]{f}$.
\end{proposition}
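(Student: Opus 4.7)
The natural candidate is the one already written down in \eqref{eq:def_poisson_int}, namely
\[
\sPoif(x) \;:=\; \int_0^{\plusinfty} \sgP_t \tzf(x) \, \rmd t, \qquad \tzf = f - \pi(f),
\]
so the whole proof boils down to proving the following quantitative estimate on the semigroup: for every $k,p \in \nset$ there exist $C \geq 0$, $\lambda > 0$ and $q = q(k,p) \in \nset$ (independent of $f$) such that for every $f \in \setpoly{\infty}(\rset^d,\rset)$ and every $t \geq 0$,
\begin{equation}
\label{eq:plan-key-bound}
\Vnorm[k,q]{\sgP_t \tzf} \;\leq\; C\, \rme^{-\lambda t}\, \Vnorm[k,p]{f}.
\end{equation}
Once \eqref{eq:plan-key-bound} is in hand, the integral defining $\sPoif$ converges absolutely in the $\Vnorm[k,q]{\cdot}$ semi-norm, so one can differentiate under the integral sign to obtain $\DD^i \sPoif = \int_0^\plusinfty \DD^i \sgP_t \tzf \, \rmd t$ for $i \leq k$ and the bound $\Vnorm[k,q]{\sPoif} \leq (C/\lambda) \Vnorm[k,p]{f}$. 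Since $k$ was arbitrary and the bound is uniform in $k$, this yields $\sPoif \in \setpoly{\infty}(\rset^d,\rset)$. The Poisson identity $\generator \sPoif = -\tzf$ then follows from $\generator \sgP_t \tzf = \partial_t \sgP_t \tzf$ and Fubini applied to $\int_0^\plusinfty \partial_t \sgP_t \tzf \, \rmd t$, whose boundary terms vanish thanks to \eqref{eq:plan-key-bound} at $t = \plusinfty$.

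To establish \eqref{eq:plan-key-bound} I would combine two ingredients. First, a short-time smoothing/regularity estimate for the Langevin flow $(\sY_t^x)_{t \geq 0}$: differentiating the SDE \eqref{eq:SDE} in the initial condition gives a tangent process whose derivatives, under \Cref{assumption:U-Sinfty}, admit moment bounds polynomial in $x$ and polynomial (in fact, exponential but uniformly bounded on $t \in [0,1]$) in $t$. Combined with the chain rule, this yields a bound of the type $\Vnorm[k,q_0]{\sgP_t f} \leq C_0 \Vnorm[k,p]{f}$ uniformly for $t \in [0,1]$, with $q_0 = q_0(k,p)$. Second, for large $t$ I would use $\sgP_t \tzf = \sgP_1 (\sgP_{t-1} \tzf)$ together with the $V$-uniform ergodicity \eqref{eq:def-V-unif_ii}: the latter yields $|\sgP_s \tzf(x)| \leq C \rhoU^s \lV(x) \VnormEq[V]{\tzf}$, and since $\setpoly{\infty}$ growth is dominated by $V$ (which grows at least exponentially by \Cref{ass:geo_ergod}\ref{ass:geo_ergod_ii}), one obtains exponential time decay at the level of function values. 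Plugging this into the short-time smoothing estimate applied one unit of time later upgrades the exponential decay to all derivatives up to order $k$ while keeping the spatial growth polynomial.

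\textbf{Main obstacle.} The delicate point is preserving \emph{polynomial} spatial growth in the final estimate while using the $V$-uniform ergodicity, whose control function $V$ is allowed to be exponential. Naively, the bound $|\sgP_t \tzf(x)| \leq C \rhoU^t V(x) \VnormEq[V]{\tzf}$ gives $\sPoif$ with exponential growth, which would not place $\sPoif$ in $\setpoly{\infty}$. Resolving this requires the dissipativity built into \Cref{ass:geo_ergod}\ref{ass:geo_ergod_iii}: by interpolating between the polynomial-in-$x$ derivative bounds obtained from the tangent flow on bounded time intervals and the exponential-in-$t$ decay from ergodicity (e.g.\ by applying the smoothing bound to $\sgP_{t/2} \tzf$, which has polynomial growth with modified constants, and then iterating), one transfers the exponential decay in time to each derivative while keeping the spatial growth polynomial of some degree $q(k,p)$. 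This is precisely the content of \cite[Lemma~2.6, Proposition~2.7]{kopec:2015}, which I would invoke to obtain \eqref{eq:plan-key-bound} and close the argument.
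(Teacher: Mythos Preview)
Your proposal is correct and coincides with the paper's approach: the paper does not give its own proof but simply invokes \cite[Lemma~2.6, Proposition~2.7]{kopec:2015} together with the integral formula \eqref{eq:def_poisson_int}. You arrive at the same reference after sketching the underlying strategy and correctly identifying the polynomial-versus-exponential growth obstacle as the main technical point.
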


\subsection{Proof of \Cref{item-thm-var-3}}
\label{subsec:proof:item-thm-var-3}
Let  $p \in \nset$.
Under \Cref{assumption:U-Sinfty} and \Cref{ass:geo_ergod}, by \Cref{prop:existence-sol-Poisson}, there exists $q_1\in\nset$  such that for all $\tf \in\setpolyinf(\rset^d,\rset)$, $\Vnorm[\ke,q_1]{\sPoic} \leq C \Vnorm[\ke,p]{\tf}$, where $\generator \sPoic = -\tzf$, $\tzf = f-\pi(f)$. Under \Cref{ass:dev_generator_discrete}, we have for all $\step\in\ocint{0,\bgamma}$,
\begin{equation}\label{eq:proog-tp2}
   \Rkerg \sPoic = \sPoic + \step \generator \sPoic + \step^{\alpha} \genag \sPoic =
   \sPoic - \step \{ \tf - \invpi(\tf) \} + \step^\alpha \genag \sPoic \eqsp.
\end{equation}
Integrating \eqref{eq:proog-tp2} \wrt~$\invpig$, we obtain that $\invpig(\tf) - \invpi(\tf) = \step^{\alpha-1} \invpig(\genag \sPoic)$.
Under \Cref{ass:dev_generator_discrete}, there exists $q_2 \in \nset$ such that
$\VnormEqs[0, q_2]{\genag \sPoic} \lesssim \VnormEqs[\ke,q_1]{\sPoic}$.
By \Cref{ass:geo_ergod}, we get
$|\invpig(\genag \sPoic)| \leq \invpig(|\genag \sPoic|)  \lesssim \VnormEqs[0, q_2]{\genag \sPoic}$, which concludes the proof.

\subsection{Proof of \Cref{prop:dev-weak-error}}
\label{subsec:proof-weak-error-dev}

The proof is divided into two parts. In the first part which gathers \Cref{lemma:Sn-Sn2-discrete-chain}, \Cref{lemma:tech-step-n-dom} and \Cref{lem:bound_asympto_bias_generator}, we establish preliminary and technical results. In particular, we derive in \Cref{lemma:Sn-Sn2-discrete-chain} an elementary bound on the second order moment of the estimator $\invpihat(f)$ defined in \eqref{eq:def-invpihatn},
 where $(X_k)_{k\in\nset}$ is a Markov chain of kernel $\Rkerg$. The arguments are based solely on the study of $\Rkerg$ and rely on \Cref{ass:geo_ergod}.
In a second part, using our preliminary results, the proof of \Cref{prop:dev-weak-error} is then derived.



\begin{lemma}\label{lemma:Sn-Sn2-discrete-chain}
Assume \Cref{assumption:U-Sinfty} and \Cref{ass:geo_ergod}.
Let $\tf: \rset^d \to \rset$ be such that $\VnormEq[V^{1/2}]{f} < \plusinfty$. For all $n\in\nset^*$,
\begin{equation*}
  \expeMarkov{x, \step}{\parenthese{\sum_{k=0}^{n-1} \defEns{\tf(X_k) - \invpig(\tf)}}^2}  \lesssim \step^{-1} \VnormEq[V^{1/2}]{f}^2 \defEns{n + \step^{-1} \lV(x)} \eqsp.
\end{equation*}
\end{lemma}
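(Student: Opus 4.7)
The plan is to set $\bar f = f - \invpig(f)$ and expand the square
\[
\expeMarkov{x,\step}{\parenthese{\sum_{k=0}^{n-1}\bar f(X_k)}^2} = \sum_{k=0}^{n-1} \expeMarkov{x,\step}{\bar f(X_k)^2} + 2\sum_{0 \leq k < j \leq n-1} \expeMarkov{x,\step}{\bar f(X_k)\bar f(X_j)},
\]
and bound the diagonal and off-diagonal sums separately. Since $\invpig(V) \leq C$ uniformly in $\step$ by \Cref{ass:geo_ergod}\ref{ass:geo_ergod_ii}, one has $|\invpig(f)| \leq \VnormEq[V^{1/2}]{f}\, \invpig(V)^{1/2} \leq C \VnormEq[V^{1/2}]{f}$, hence $|\bar f| \leq C\VnormEq[V^{1/2}]{f}\, V^{1/2}$ (using $V \geq 1$). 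Applying the drift estimate \eqref{eq:discrete-drift-uniform-bound} together with the elementary inequality $(1-\rho^\step)^{-1} \lesssim \step^{-1}$, the diagonal contribution is controlled by
\[
\sum_{k=0}^{n-1} \expeMarkov{x,\step}{\bar f(X_k)^2} \lesssim \VnormEq[V^{1/2}]{f}^2 \sum_{k=0}^{n-1} \parenthese{\rho^{k\step} V(x) + 1} \lesssim \VnormEq[V^{1/2}]{f}^2 \parenthese{\step^{-1} V(x) + n},
\]
which already matches the target bound.

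For the off-diagonal terms, the Markov property yields $\expeMarkov{x,\step}{\bar f(X_k) \bar f(X_j)} = \expeMarkov{x,\step}{\bar f(X_k) (\Rkerg^{j-k} \bar f)(X_k)}$, and the key auxiliary step is a $V^{1/2}$-mixing bound: for any $g$ with $|g| \leq V^{1/2}$ and any $y \in \rset^d$,
\[
|\Rkerg^m g(y) - \invpig(g)| \leq C\rho^{m\step/2} V(y)^{1/2} \VnormEq[V^{1/2}]{g}.
\]
This will be obtained by Cauchy--Schwarz interpolation applied to the signed measure $\mu = \updelta_y \Rkerg^m - \invpig$: one writes $|\mu|(V^{1/2}) \leq |\mu|(V)^{1/2}|\mu|(1)^{1/2}$, where $|\mu|(V) = \Vnorm[V]{\mu} \leq C\rho^{m\step}V(y)$ by \eqref{eq:def-V-unif} and $|\mu|(1) \leq 2$ since $\mu$ is the difference of two probability measures. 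Applied with $g = \bar f$ (and using $\invpig(\bar f) = 0$), this yields $|\Rkerg^m \bar f(y)| \lesssim \VnormEq[V^{1/2}]{f}\, \rho^{m\step/2} V^{1/2}(y)$, so that $|\bar f(y)(\Rkerg^m \bar f)(y)| \lesssim \VnormEq[V^{1/2}]{f}^2 \rho^{m\step/2} V(y)$.

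Substituting $m = j-k$, taking expectation via \eqref{eq:discrete-drift-uniform-bound} and summing the double geometric series
\[
\sum_{k=0}^{n-2}\sum_{m=1}^{n-1-k}\rho^{m\step/2}\parenthese{\rho^{k\step}V(x) + 1} \lesssim \step^{-1}\parenthese{\step^{-1} V(x) + n}
\]
produces an off-diagonal bound of the same order as the diagonal, completing the proof. The main obstacle is precisely the $V^{1/2}$-mixing step: \Cref{ass:geo_ergod}\ref{ass:geo_ergod_iii} is only stated in the $V$-norm, while the natural pointwise bound $|\bar f| \lesssim V^{1/2}$ would otherwise force $V^{3/2}$-moments that the hypotheses do not provide. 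The Cauchy--Schwarz interpolation on $\updelta_y\Rkerg^m - \invpig$ transfers the $V$-mixing bound into a $V^{1/2}$-mixing bound at the price of halving the exponential rate, which affects only constants absorbed in $\lesssim$.
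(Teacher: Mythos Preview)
Your proposal is correct and follows essentially the same route as the paper: expand the square, use the Markov property on the cross terms, obtain a $V^{1/2}$-mixing bound from the $V$-norm ergodicity assumption, and sum the resulting double geometric series with $(1-\rho^{\beta\step})^{-1}\lesssim\step^{-1}$. The only cosmetic differences are that the paper treats diagonal and off-diagonal terms together in a single double sum and derives the $V^{1/2}$-mixing bound \eqref{eq:Vbeta-unif} by citing a Jensen-type interpolation from \cite{douc:moulines:priouret:soulier:2018}, whereas you split diagonal/off-diagonal and spell out the interpolation via Cauchy--Schwarz on the total-variation decomposition of $\updelta_y\Rkerg^m-\invpig$; both arguments are equivalent.
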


\begin{proof}
Note that under \Cref{ass:geo_ergod}-\ref{ass:geo_ergod_iii}, by \cite[Definition D.3.1-(i)]{douc:moulines:priouret:soulier:2018} and Jensen inequality,
  \begin{equation}\label{eq:Vbeta-unif}
     \Vnorm[V^{1/2}]{\updelta_x \RKer_\step^n - \invpig} \lesssim  \rhoVunif^{n\step/2} \lV^{1/2}(x)  \eqsp.
  \end{equation}
We have for all $n\in\nset^*$
\begin{multline}\label{eq:vbeta-unif-1}
  \expeMarkov{x, \step}{\parenthese{\sum_{k=0}^{n-1} \defEns{\tf(X_k) - \invpig(\tf)}}^2}  \\
  \lesssim \sum_{k=0}^{n-1} \sum_{s=0}^{n-1-k}  \expeMarkov{x, \step}{\parenthese{f(X_k) - \invpig(f)}\parenthese{f(X_{k+s}) - \invpig(f)}} \eqsp.
\end{multline}
For $k\in\defEns{0,\ldots,n-1}$ and $s\in\defEns{0,\ldots,n-1-k}$,
\begin{equation*}
   \expeMarkov{x, \step}{\parenthese{f(X_k) - \invpig(f)}\parenthese{f(X_{k+s}) - \invpig(f)}}
   = \expeMarkov{x, \step}{\parenthese{f(X_k) - \invpig(f)}\parenthese{ \Rkerg^s f(X_{k}) - \invpig(f)}} \eqsp.
\end{equation*}
By \eqref{eq:Vbeta-unif}, we obtain
\begin{align*}
  & \absolute{\expeMarkov{x, \step}{\parenthese{f(X_k) - \invpig(f)}\parenthese{f(X_{k+s}) - \invpig(f)}}} \\
  & \phantom{---------}\lesssim \VnormEq[V^{1/2}]{f}\rho^{\step s/2}\expeMarkov{x, \step}{\absolute{f(X_k) - \invpig(f)} V^{1/2}(X_k)}  \\
  & \phantom{---------}\lesssim \VnormEq[V^{1/2}]{f}^2\rho^{\step s/2}\expeMarkov{x, \step}{ V(X_k)} \eqsp,
\end{align*}
using that $V\geq 1$ and $\absolute{f(x) - \invpig(f)} \leq \Vnorm[V^{1/2}]{f}(V^{1/2}(x) + \pibar)$ where $\pibar = \sup_{\step\in\ocint{0,\bgamma}} \invpig(V) \lesssim 1$.
By \eqref{eq:discrete-drift-uniform-bound}, we get
\begin{equation*}
  \absolute{\expeMarkov{x, \step}{\parenthese{f(X_k) - \invpig(f)}\parenthese{f(X_{k+s}) - \invpig(f)}}} \lesssim \VnormEq[V^{1/2}]{f}^2\rho^{\step s/2} \defEns{\rho^{k\step} V(x) + \pibar} \eqsp.
\end{equation*}
Combining it with \eqref{eq:vbeta-unif-1}, we have
\begin{equation*}
  \expeMarkov{x, \step}{\parenthese{\sum_{k=0}^{n-1} \defEns{\tf(X_k) - \invpig(\tf)}}^2}  \lesssim  \frac{\VnormEq[V^{1/2}]{f}^2}{1-\rho^{\step/2}} \defEns{\frac{V(x)}{1-\rho^\step} + n\pibar}  \eqsp.
\end{equation*}
Using that $1-\rhoVunif^{\upbeta \step } \geq \upbeta  \step  \log(1/\rhoVunif) \rhoVunif^{\upbeta \step }$ for all $\upbeta\in\ocint{0,1}$ concludes the proof.
\end{proof}

Define for any $f : \rset^d \to \rset$, $x \in \rset^d$ and
$\gamma \in \ocint{0,\bgamma}$, such that
$R_{\gamma}f^2(x) < \plusinfty$,
\begin{equation*}
  \mmtilf{\step}(x) = \expeMarkov{x,\step}{\{f(X_1) - \Rkerg f(x)\}^2} \eqsp.
\end{equation*}
\begin{lemma}\label{lemma:tech-step-n-dom}
  Assume \Cref{assumption:U-Sinfty}  and \Cref{ass:dev_generator_discrete}.
  For all $\gamma \in \ocint{0,\bgamma}$ and $f\in\setpolyinf(\rset^d,\rset)$, $  \mmtilf{\step} \in \setpolyinfr$ and in addition for all $p\in\nset$ there exists $q \in \nset$ such that for all $\step\in\ocint{0,\bgamma}$, $\Vnorm[0,q]{\mmtilf{\step}} \lesssim \step \Vnorm[\ke,p]{f}^2$.
\end{lemma}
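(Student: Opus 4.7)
The plan is to exploit the variance decomposition
\[
\mmtilf{\step}(x) = \Rkerg f^2(x) - (\Rkerg f(x))^2
\]
together with the ``carré du champ'' identity to extract a factor of $\gamma$. First, I would apply \Cref{ass:dev_generator_discrete} to both $\Rkerg f^2$ and $\Rkerg f$, using that $f^2 \in \setpolyinf(\rset^d,\rset)$ whenever $f \in \setpolyinf(\rset^d,\rset)$, with $\Vnorm[\ke,2p]{f^2}$ controlled by $\Vnorm[\ke,p]{f}^2$ via the Leibniz rule. This yields
\begin{align*}
\Rkerg f^2 & = f^2 + \gamma \generator f^2 + \gamma^\alpha \genag(f^2) \eqsp, \\
(\Rkerg f)^2 & = f^2 + 2\gamma f \generator f + 2\gamma^\alpha f \genag f + \gamma^2 (\generator f)^2 + 2\gamma^{1+\alpha}\generator f \cdot \genag f + \gamma^{2\alpha}(\genag f)^2 \eqsp.
\end{align*}

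The key cancellation comes from the identity $\generator f^2 = 2 f \generator f + 2 \|\nabla f\|^2$, which follows at once from the product rule $\generator(gh) = g\generator h + h \generator g + 2\ps{\nabla g}{\nabla h}$ applied with $g=h=f$. Subtracting the two expansions, the $f^2$ terms cancel, the $\gamma^1$ contribution collapses to $2\gamma \|\nabla f\|^2$, and the remainder is a sum of monomials in $\gamma$ with exponents in $\{\alpha,\,2,\,1+\alpha,\,2\alpha\}$. Since $\alpha \geq 1$ and $\gamma \in \ocint{0,\bgamma}$, each such exponent is at least $1$, so modulo a multiplicative constant depending on $\bgamma$ the entire remainder factors as $\gamma$ times a polynomially-bounded function.

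To conclude, I would bound each term in the semi-norm $\Vnorm[0,q]{\cdot}$: the leading $\|\nabla f\|^2$ is $\lesssim \Vnorm[1,p]{f}^2 (1+\norm[p]{x})^2$; the factor $(\generator f)^2$ is controlled by $\Vnorm[2,p]{f}^2$ times a polynomial, using the linear-growth bound on $\nabla U$ from \Cref{assumption:U-Sinfty}; and the $\genag$-terms are handled by the polynomial-growth estimate in \Cref{ass:dev_generator_discrete}, applied once to $f$ and once to $f^2$. Continuity of $\mmtilf{\step}$ (hence membership in $\setpolyinfr$) follows from continuity of each summand, since $\genag$ maps $\setpolyinf$ into itself.

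The main obstacle is the bookkeeping of polynomial degrees rather than any conceptual difficulty: passing from $f$ to $f^2$ doubles the exponent in the $\Vnorm[\ke,\cdot]{\cdot}$ semi-norm, and the output degree of \Cref{ass:dev_generator_discrete} depends implicitly on its input degree, so the final integer $q$ must be chosen to dominate both branches of the estimate simultaneously. Once $q$ is fixed sufficiently large in terms of $p$, the arguments above deliver the claimed bound $\Vnorm[0,q]{\mmtilf{\step}} \lesssim \gamma \Vnorm[\ke,p]{f}^2$.
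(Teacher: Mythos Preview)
Your proposal is correct and follows essentially the same approach as the paper: expand $\Rkerg f^2$ and $\Rkerg f$ via \Cref{ass:dev_generator_discrete}, use the carr\'e du champ identity $\generator f^2 = 2f\generator f + 2\|\nabla f\|^2$ to isolate the leading $\gamma\|\nabla f\|^2$ term, and bound the remaining polynomial-growth pieces. The paper's only organisational difference is that it first uses the elementary inequality $\mmtilf{\step}(x) \leq \expeMarkov{x,\step}{(f(X_1)-f(x))^2}$ to drop the negative $-\gamma^2(\generator f + \gamma^{\alpha-1}\genag f)^2$ term before expanding, which spares you the three extra monomials $\gamma^2(\generator f)^2$, $\gamma^{1+\alpha}\generator f\cdot\genag f$, $\gamma^{2\alpha}(\genag f)^2$; your direct computation handles these without difficulty. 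One minor phrasing point: membership in $\setpolyinfr$ requires more than continuity, but your actual justification---that $\genag$ maps $\setpolyinf$ into itself, so each summand lies in $\setpolyinf$ and so does their combination---is the right one.
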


\begin{proof}
Let $p\in\nset$ and $f \in \setpolyinfr$.
By \Cref{ass:dev_generator_discrete}, for all  $\step\in\ocint{0,\bgamma}$ and $x\in\rset^d$,
\begin{align}
\nonumber
  0 \leq \mmtilf{\step}(x) &= \expeMarkov{x,\step}{ \defEns{ f(X_1) - f(x) - \step \generator f(x) - \step^\alpha \genag f(x)}^2} \\
  \nonumber
  &= \expeMarkov{x,\step}{ \defEns{f(X_1) - f(x)}^2} - \step^2 \defEns{\generator f(x) + \step^{\alpha-1} \genag f(x)}^2 \eqsp \\
  \label{eq:tech-step-n-dom-1}
  &\leq \expeMarkov{x,\step}{ \defEns{f(X_1) - f(x)}^2} \eqsp.
\end{align}
Besides, for all $\step\in\ocint{0,\bgamma}$ and $x\in\rset^d$,
\begin{align*}
  &\expeMarkov{x,\step}{ \defEns{f(X_1) - f(x)}^2} = \expeMarkov{x,\step}{ f^2(X_1)} + f^2(x) - 2f(x) \expeMarkov{x,\step}{ f(X_1)} \\
  &\phantom{----}= \step \generator (f^2)(x) + \step^\alpha \genag (f^2)(x) - 2\step f(x) \generator f(x) - 2 \step^\alpha f(x) \genag f(x) \\
  &\phantom{----}= \step \defEns{ 2\norm[2]{\nabla f(x)} + \step^{\alpha -1} \parenthese{\genag (f^2)(x) - 2f(x) \genag f(x)}} \eqsp.
\end{align*}
Then, combining this result and \eqref{eq:tech-step-n-dom-1}, under \Cref{ass:dev_generator_discrete}, $\mmtilf{\step} \in\setpolyinf(\rset^d,\rset)$ and since $\ke \geq 2$, there exists $q\in\nset$ such that $\Vnorm[0,q]{\mmtilf{\step}} \lesssim \step \Vnorm[\ke,p]{f}^2$.
\end{proof}

\begin{lemma}
  \label{lem:bound_asympto_bias_generator}
  Assume \Cref{assumption:U-Sinfty}, \Cref{ass:geo_ergod} and \Cref{ass:dev_generator_discrete}. Then for any $p \in \nset$,
  \begin{align}
    \label{eq:borne-invpig-invpi-1}
        \absolute{\invpig ( \sPoif \generator \sPoif )- \invpi ( \sPoif \generator \sPoif )} &\lesssim \VnormEq[\ke+2,p]{f}^2 \step^{\alpha-1} \eqsp, \\
        \label{eq:borne-sigma-inf}
        \varinf(f) = -2 \invpi(\sPoif \generator \sPoif) &\lesssim \VnormEq[2,p]{f}^2 \eqsp,
      \end{align}
      where for any $f \in \setpolyinfr$, $\sPoif$ is the solution of Poisson's equation \eqref{eq:poisson-eq-langevin} (see \Cref{prop:existence-sol-Poisson}).
\end{lemma}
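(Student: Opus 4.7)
Both bounds combine \Cref{prop:existence-sol-Poisson} for the regularity of $\sPoif$, the carré du champ representation \eqref{eq:key-relation-variance}, and \Cref{item-thm-var-3} for the discrete-to-continuous comparison of invariant measures. Throughout, the main tool is that $\VnormEq[k,q]{\sPoif} \leq C \VnormEq[k,p]{f}$ from \Cref{prop:existence-sol-Poisson}, together with the fact that \Cref{ass:geo_ergod}\ref{ass:geo_ergod_ii} (namely $V(x) \gtrsim \rme^{c\|x\|}$ for large $\|x\|$ and $\invpi(V) < \infty$) forces $\invpi$ to have polynomial moments of all orders.

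For \eqref{eq:borne-sigma-inf}, I would start from the identity $\varinf(f) = -2\invpi(\sPoif \generator \sPoif) = 2\invpi(\|\nabla \sPoif\|^2)$ given in \eqref{eq:key-relation-variance}. Directly from the definition of the semi-norm, $\|\nabla \sPoif(x)\|^2 \leq d\,\VnormEq[1,q]{\sPoif}^2 (1+\|x\|^q)^2$ for a suitable $q$, and \Cref{prop:existence-sol-Poisson} with $k=1$ yields $\VnormEq[1,q]{\sPoif} \leq C\VnormEq[1,p_1]{f}$. Integrating against $\invpi$ and using that $\invpi$ has all polynomial moments gives $\varinf(f) \lesssim \VnormEq[1,p_1]{f}^2 \leq \VnormEq[2,p]{f}^2$.

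For \eqref{eq:borne-invpig-invpi-1}, set $g = \sPoif \generator \sPoif$. By \Cref{prop:existence-sol-Poisson} we have $\sPoif \in \setpolyinf(\rset^d,\rset)$, and \Cref{assumption:U-Sinfty} gives $U \in \setpolyinf(\rset^d,\rset)$, so $g \in \setpolyinf(\rset^d,\rset)$. Applying \Cref{item-thm-var-3} to $g$ with some index $p'$ gives
\begin{equation*}
|\invpig(g) - \invpi(g)| \lesssim \VnormEq[\ke,p']{g} \step^{\alpha-1}.
\end{equation*}
It remains to prove the structural bound $\VnormEq[\ke,p']{g} \lesssim \VnormEq[\ke+2,p]{f}^2$. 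By Leibniz's rule, for $i \leq \ke$, $\DD^i g$ is a sum of terms $\DD^{\alpha}\sPoif \cdot \DD^{\beta}(\generator \sPoif)$ with $|\alpha|+|\beta|=i$; since $\generator \varphi = \Delta \varphi - \ps{\nabla U}{\nabla \varphi}$, differentiating $\generator \sPoif$ up to $\ke$ times involves derivatives of $\sPoif$ up to order $\ke+2$ (via the Laplacian) and of $U$ up to order $\ke+1$. Controlling the former with \Cref{prop:existence-sol-Poisson} applied at $k = \ke + 2$ and the latter with \Cref{assumption:U-Sinfty}, together with the polynomial growth of $\nabla U$ and its higher derivatives, yields the required semi-norm estimate.

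The main obstacle is purely the bookkeeping of the polynomial growth exponents $p, q, p_1, p'$ across the product-rule expansion, ensuring they remain in $\nset$ and that all relevant semi-norms are finite. The occurrence of $\ke+2$ in the statement is dictated precisely by the two extra derivatives of $\sPoif$ that the Laplacian in $\generator$ introduces when one differentiates $g$ up to order $\ke$.
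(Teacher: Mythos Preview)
Your proposal is correct and follows essentially the same route as the paper: apply \Cref{item-thm-var-3} to $g=\sPoif\generator\sPoif$, then control $\VnormEq[\ke,p']{g}$ via the Leibniz rule together with \Cref{prop:existence-sol-Poisson} at order $\ke+2$ and \Cref{assumption:U-Sinfty} for the derivatives of $U$. The only cosmetic difference is in \eqref{eq:borne-sigma-inf}: you pass through the carr\'e du champ identity $\varinf(f)=2\invpi(\|\nabla\sPoif\|^2)$ and need only first-order control of $\sPoif$, whereas the paper bounds $\invpi(\sPoif\generator\sPoif)$ directly as a polynomially growing integrand, arriving at $\VnormEq[2,q]{\sPoif}^2\VnormEq[1,q]{U}$; both yield the same final estimate.
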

\begin{proof}
  Let $p \in \nset$.  By \Cref{prop:existence-sol-Poisson} and \Cref{assumption:U-Sinfty}, there
  exists $q \in \nset$ satisfying
  \begin{equation}
 \label{eq:1:lem:bound_asympto_bias_generator}
    \Vnorm[\ke+2,q]{\sPoif} \lesssim \Vnorm[\ke+2,p]{f}  \text{  and  } \Vnorm[\ke+1,q]{U} \lesssim 1 \eqsp.
  \end{equation}
In addition, using \Cref{item-thm-var-3}, we have
  \begin{equation*}
\absLigne{\invpig ( \sPoif \generator \sPoif )- \invpi ( \sPoif \generator \sPoif )} \lesssim  \step^{\alpha-1} \Vnorm[\ke,3 q]{\sPoif \generator \sPoif}  \eqsp.
\end{equation*}
Using that for any $k \in \nset$ and $p_1,p_2 \in \nset$, there exists $C_{k,p_1,p_2} \geq 0$ such that for
any $g_1,g_2 \in \setpolyinfr$,
$\Vnorm[k,p_1+p_2]{fg} \leq C_{k,p_1,p_2}  \Vnorm[k,p_1]{f} \Vnorm[k,p_2]{g}$ by the
general Leibniz rule, we get by definition of $\generator$ \eqref{eq:def-generator},
\begin{equation*}
  \absolute{\invpig ( \sPoif \generator \sPoif )- \invpi ( \sPoif \generator \sPoif )} \lesssim \step^{\alpha-1} \Vnorm[\ke,q]{\sPoif} \Vnorm[\ke,2q]{ \generator \sPoif} \lesssim \step^{\alpha-1} \Vnorm[\ke+2,q]{\sPoif}^2  \Vnorm[\ke+1,q]{  U } \eqsp.
\end{equation*}
The proof of \eqref{eq:borne-invpig-invpi-1} then follows from \eqref{eq:1:lem:bound_asympto_bias_generator}.
Similarly, by \Cref{ass:geo_ergod},
\begin{equation*}
  \varinf(f) = -2 \invpi(\sPoif \generator \sPoif) \lesssim \Vnorm[0, 3q]{\sPoif \generator \sPoif} \lesssim \Vnorm[0,q]{\sPoif} \Vnorm[0,2q]{ \generator \sPoif} \lesssim \Vnorm[2,q]{\sPoif}^2 \Vnorm[1,q]{U} \eqsp,
\end{equation*}
since $\Vnorm[1,q]{U} \leq \Vnorm[\ke+1,q]{U} \lesssim 1$. Using that $\Vnorm[2,q]{\sPoif} \leq \Vnorm[2,p]{f}$ concludes the proof of \eqref{eq:borne-sigma-inf}.
\end{proof}

\begin{proof}[Proof of \Cref{prop:dev-weak-error}]
Let $p \in \nset$.
  For any $f \in \setpolyinfr$, let $\sPoif\in\setpolyinf(\rset^d,\rset)$ be the solution of  Poisson's equation $\generator \sPoif = -\tzf $ (see \Cref{prop:existence-sol-Poisson}).
Using~\Cref{ass:dev_generator_discrete}, we get for all $\step\in\ocint{0,\bgamma}$,
\begin{equation}\label{eq:proof-dev-weak-1}
\Rker_\step \sPoif = \sPoif + \step \generator \sPoif + \step^\alpha \genag \sPoif
= \sPoif - \step \{f-\invpig(f)\} +  \step^\alpha \genag \sPoif - \step \{ \invpig(f)- \invpi(f) \}\eqsp,
\end{equation}
which implies that
\begin{multline}\label{eq:decompo-poisson-12}
\sum_{k=0}^{n-1} \defEns{\tf(X_k) - \invpig(\tf)} = \frac{\sPoif(X_0) - \sPoif(X_n)}{\step} + \frac{1}{\step}\sum_{k=0}^{n-1} \defEns{\sPoif(X_{k+1}) - \Rker_\step \sPoif(X_k)} \\
+ \step^{\alpha-1}\sum_{k=0}^{n-1} \defEns{ \genag \sPoif(X_k) -\step^{1-\alpha}\parenthese{\invpig(\tf)-\invpi(\tf)} } \eqsp.
\end{multline}
Consider the following decomposition based on \eqref{eq:decompo-poisson-12},
\begin{equation*}
n^{-1}\expeMarkov{x,\step}{\parenthese{\sum_{k=0}^{n-1} \defEns{\tf(X_k) - \invpig(\tf)}}^2} = \sum_{i=1}^{4} A^{f}_i(x,n,\step) \eqsp,
\end{equation*}
where,
\begin{align*}
&A^{f}_1(x,n,\step)  \\
& \phantom{--}=  \frac{\step^{2(\alpha-1)}}{n}\expeMarkov{x,\step}{\parenthese{\sum_{k=0}^{n-1} \defEns{ \genag \sPoif(X_k) -\step^{1-\alpha}\parenthese{\invpig(\tf)-\invpi(\tf)}}}^2} \eqsp, \\
&A^{f}_2(x,n,\step) = (n \step^2 )^{-1} \expeMarkov{x,\step}{(\sPoif(X_0) - \sPoif(X_n))^2 } \eqsp , \\
&A^{f}_3(x,n,\step) = (n \step^2)^{-1} \expeMarkov{x,\step}{\parenthese{\sum_{k=0}^{n-1} \sPoif(X_{k+1}) - \Rker_\step \sPoif(X_k)}^2} \eqsp,
\end{align*}
and by Cauchy-Schwarz inequality,
\begin{equation}\label{eq:A4-1}
(1/2) \absolute{A^{f}_4(x,n,\step)} \leq \sum_{1 \leq i < j \leq 3} A^{f}_i(x,n,\step)^{1/2} A^{f}_j(x,n,\step)^{1/2}  \eqsp.
\end{equation}
We bound below $ \absolute{A_i^f(x,n,\step)}$ for any $i\in\defEns{1,\ldots,4}$.
By \Cref{prop:existence-sol-Poisson}, there exists $q_1\in\nset$  such that
\begin{equation}
  \label{eq:control_Poisson_proof_lem_prelim_bootstrap}
  \Vnorm[\ke,q_1]{\sPoif} \lesssim \Vnorm[\ke,p]{f} \eqsp,
\end{equation}
 which combined  with \Cref{ass:geo_ergod}-\ref{ass:geo_ergod_iii} and \eqref{eq:discrete-drift-uniform-bound} yield for all  $n\in\nset^*$,
\begin{equation}\label{eq:A1-1}
  A^{f}_2(x,n,\step)\lesssim \Vnorm[V]{\sPoif^2} \lV(x) / (n\step^2) \lesssim \VnormEq[\ke,p]{f}^2 \lV(x) / (n\step^2) \eqsp.
\end{equation}
For any $\step\in\ocint{0,\bgamma}$, by \eqref{eq:proof-dev-weak-1} and since $\generator \sPoif = -\tzf $, $\invpig(\genag \sPoif) = \step^{1-\alpha}\{\invpig(f)-\invpi(f)\}$.
Under \Cref{ass:dev_generator_discrete}, there exists $q_3\in\nset$ such that for all $\step \in \ocint{0,\bgamma}$,
$\VnormEqs[V^{1/2}]{\genag \sPoif} \lesssim \VnormEqs[0,q_3]{\genag \sPoif}\lesssim \VnormEqs[\ke,q_1]{ \sPoif} \lesssim \VnormEqs[\ke,p]{f}$ by \eqref{eq:control_Poisson_proof_lem_prelim_bootstrap}.
Hence, applying \Cref{lemma:Sn-Sn2-discrete-chain} and using $\alpha \geq 3/2$ yield
\begin{align}
\label{eq:A1-true}
  A^{f}_1(x,n,\step)&\lesssim \frac{\step^{2(\alpha-1)}}{n} \frac{\VnormEqs[\ke,p]{f}^2}{\step}\parenthese{n + \frac{V(x)}{\step}} \\
\nonumber
  &\lesssim
  \VnormEqs[\ke,p]{f}^2\defEns{1 + V(x)/(n\step)} \eqsp.
\end{align}
Since $(\sum_{k=0}^{n-1} \sPoif(X_{k+1}) - \Rker_\step \sPoif(X_k))_{k\in\nset}$ is a $\PP_{x,\step}$-square integrable martingale,  we get that for all $n \in \nset$,
\begin{equation}\label{eq:A2-1}
A^{f}_3(x,n,\step) = \step^{-2} \expeMarkov{x,\step}{n^{-1} \sum_{k=0}^{n-1} \tg(X_k)} \eqsp,
\end{equation}
where
\begin{equation}
  \label{eq:def_tg_proof_lem_preli_bootstrap}
  \tg(x) = \expeMarkov{x,\step}{\{\sPoif(X_1) - \Rker_\step \sPoif(x)\}^2} \eqsp.
\end{equation}
\Cref{lemma:tech-step-n-dom} shows that  $\tg\in\setpolyinf(\rset^d,\rset)$ and that there exists $q_2 \in \nset$ such that $\Vnorm[V]{\tg}\lesssim \Vnorm[0,q_2]{\tg} \lesssim \step \Vnorm[\ke,q_1]{\sPoif}^2 \lesssim \step \Vnorm[\ke,p]{\tf}^2 $. Applying \eqref{eq:def-V-unif}, we get that for all  $n\in\nset^*$,
\begin{multline}\label{eq:A2-1-1}
  \absolute{\expeMarkov{x,\step}{n^{-1} \sum_{k=0}^{n-1} \tg(X_k)} - \invpig(\tg)} \\ \lesssim \VnormEq[V]{\tg} (n\step)^{-1} V(x) \lesssim n^{-1} \VnormEq[\ke,p]{f}^2 V(x) \eqsp.
\end{multline}
We now show that $\invpig(\tg)$ is approximately equal to $\step \varinf(\tf)$.
Observe that by \eqref{eq:def_tg_proof_lem_preli_bootstrap} and since $\pi_{\gamma}$ is invariant for $R_{\gamma}$, for any $\gamma \in \ocint{0,\bgamma}$,
\begin{align}
\nonumber
\invpig(\tg) &= \expeMarkov{\invpig,\step}{\{\sPoif(X_1) - \Rker_\step \sPoif(X_0)\}^2} \\
\label{eq:A2-3}
&= \expeMarkov{\invpig,\step}{\{\sPoif(X_1) - \sPoif(X_0)\}^2} - \expeMarkov{\invpig,\step}{\{\sPoif(X_0) - \Rker_\step \sPoif(X_0)\}^2} \eqsp.
\end{align}
Using that $\invpig$ is the invariant distribution for $\Rker_\step$ again  and  \eqref{eq:proof-dev-weak-1}, we have for any $\gamma \in \ocint{0,\bgamma}$,
\begin{align}
\nonumber
\expeMarkov{\invpig,\step}{\{\sPoif(X_1) - \sPoif(X_0)\}^2}
&= 2 \expeMarkov{\invpig,\step}{\sPoif(X_0)\{\sPoif(X_0) -\Rker_\step \sPoif(X_0)\}} \\
\label{eq:temp-invpig-invpi}
&= -2 \step\invpig ( \sPoif \generator \sPoif ) - 2 \step^{\alpha} \invpig (\sPoif \genag\sPoif) \eqsp.
\end{align}
In the next step, we consider separately the cases  $\invpig = \invpi$ and $\invpig \ne\invpi$.
If $\invpi = \invpig$, then
\begin{equation}
      \label{eq:diff-invpig-invpi_0}
    -\invpig ( \sPoif \generator \sPoif )=(1/2)\varinf(\tf) \eqsp.
\end{equation}
If $\invpig \neq \invpi$, \Cref{lem:bound_asympto_bias_generator} shows that
    \begin{align}
      \label{eq:diff-invpig-invpi}
    \absolute{\invpig ( \sPoif \generator \sPoif )+(1/2)\varinf(\tf)} &=     \absolute{\invpig ( \sPoif \generator \sPoif )- \invpi ( \sPoif \generator \sPoif )}  \\
    \nonumber
    &\lesssim \VnormEq[\ke+2,p]{f}^2 \step^{\alpha-1} \eqsp.
  \end{align}
Using \Cref{ass:dev_generator_discrete}, \eqref{eq:discrete-drift-uniform-bound} and $\absolute{\invpig (\sPoif \genag\sPoif)} \lesssim \Vnorm[\ke,p]{f}^2$ in \eqref{eq:temp-invpig-invpi}, we obtain that
\begin{multline}
  \label{eq:A2-4}
  \Big|\expeMarkov{\invpig,\step}{\{\sPoif(X_1) - \sPoif(X_0)\}^2} +2 \step\invpig ( \sPoif \generator \sPoif )\Big|
  \\ = 2 \step^\alpha \Big| \invpig(\sPoif \genag \sPoif)\Big| \lesssim \VnormEq[\ke,p]{f}^2 \step^{\alpha} \eqsp.
\end{multline}
Similarly, using  \Cref{ass:geo_ergod}-\ref{ass:geo_ergod_ii}, \eqref{eq:discrete-drift-uniform-bound}, \eqref{eq:proof-dev-weak-1}, \eqref{eq:def-generator}, \Cref{ass:dev_generator_discrete} and \eqref{eq:control_Poisson_proof_lem_prelim_bootstrap}, it holds since $\ke \geq 2$ that
\begin{equation*}
  \expeMarkov{\invpig,\step}{\{\sPoif(X_0) - \Rker_\step \sPoif(X_0)\}^2} \lesssim \Vnorm[\ke,q_1]{\sPoif}^2 \step^2 \lesssim \Vnorm[\ke,p]{f}^2 \step^2 \eqsp.
\end{equation*}
Combining this result with  \eqref{eq:diff-invpig-invpi_0} or \eqref{eq:diff-invpig-invpi} and \eqref{eq:A2-4} in  \eqref{eq:A2-3} and using that $ \VnormEq[\ke,p]{f} \leq  \VnormEq[\ke+2,p]{f}$,  we obtain
\begin{equation*}
  \absolute{\invpig(\tg) - \step \varinf(\tf)} \lesssim \VnormEq[\ke+2,p]{f}^2 \step^{\alpha \wedge 2} \eqsp.
\end{equation*}
Plugging this inequality and  \eqref{eq:A2-1-1} in \eqref{eq:A2-1},
we obtain for all $n\in\nset^*$,
\begin{equation}\label{eq:A2-5}
  \absolute{A^{f}_3(x,n,\step) - \step^{-1} \varinf(\tf)} \lesssim \VnormEq[\ke+2,p]{f}^2 \defEns{ \step^{(\alpha-2) \wedge 0} + (n\gamma^2)^{-1} \lV(x)} \eqsp.
\end{equation}
Note that since $\alpha \geq 1$, by \eqref{eq:borne-sigma-inf} and \eqref{eq:A2-5},
\begin{equation*}
  A^{f}_3(x,n,\step) \lesssim \VnormEq[\ke+2,p]{f}^2 \defEns{ \step^{-1} + (n\gamma^2)^{-1} \lV(x)} \eqsp.
\end{equation*}
Combining it with \eqref{eq:A4-1}, \eqref{eq:A1-1} and \eqref{eq:A1-true} conclude the proof.
\end{proof}

\section{Geometric ergodicity for the ULA and MALA algorithms}
\label{sec:geom-ergodicity-mala}

In this Section, we show that  \eqref{eq:def-V-unif} in \Cref{ass:geo_ergod} is satisfied for  the family of Markov kernel $\{\Rula \, : \, \gamma \in \ocint{0,\bgamma}\}$ and $\{\Rmala \, : \, \gamma \in \ocint{0,\bgamma}\}$, with $\bgamma >0$, associated to the ULA and MALA algorithms (see \eqref{eq:def-kernel-ULA} and \eqref{eq:def-kernel-MALA}). Assume that there exist $V \in \rmC^2(\rset^d,\coint{1,\plusinfty})$ and $a>0$ and $b \geq 0$ such that
\begin{equation}
  \label{eq:drift_cont}
  \generator V \leq - a V + b \eqsp.
\end{equation}
Then,
\cite[Theorem 2.2]{roberts:tweedie-Langevin:1996} and \cite[Theorem 4.5]{meyn:tweedie:1993:III} show that $\pi(V) < \plusinfty$ and \eqref{eq:def-V-unif_ii} is satisfied. It is standard to show that \eqref{eq:drift_cont} holds under \Cref{ass:condition_MALA} but this result is given below for completeness.

We begin the proof by two technical lemmas, \Cref{lem:quadratic_behaviour,lem:bounde_pertub_hessian} which are used repeatedly throughout this Section.
In this Section, we assume without loss of generality that $\nabla U(0)=0$. Note that under \Cref{assumption:U-Sinfty} and \Cref{ass:condition_MALA}, $m \leq L$.

\begin{lemma}
  \label{lem:quadratic_behaviour}
  Assume \Cref{assumption:U-Sinfty} and \Cref{ass:condition_MALA}. Then there exists $\raymala_2 \geq 0$ such that for any $x \not \in \ball{0}{\raymala_2}$, $\ps{\nabla U(x)}{x} \geq (m/2) \norm[2]{x}$ and in particular $\norm{\nabla U(x)} \geq (m/2) \norm{x}$.
\end{lemma}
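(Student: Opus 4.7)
The plan is to exploit the fact that, after the normalization $\nabla U(0)=0$, one can reconstruct $\nabla U(x)$ via the fundamental theorem of calculus:
\begin{equation*}
\nabla U(x) = \int_0^1 \nabla^2 U(tx)\, x \, \rmd t,
\end{equation*}
so that
\begin{equation*}
\ps{\nabla U(x)}{x} = \int_0^1 \ps{\nabla^2 U(tx) x}{x}\, \rmd t.
\end{equation*}
The key idea is to split this integral according to whether the interpolating point $tx$ lies inside or outside the ball $\ball{0}{\raymala_1}$ from \Cref{ass:condition_MALA}.

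For $\norm{x} > \raymala_1$, set $t_0 = \raymala_1/\norm{x} \in (0,1)$. For $t \in \ccint{t_0,1}$ we have $\norm{tx} \geq \raymala_1$, so \Cref{ass:condition_MALA} gives $\ps{\nabla^2 U(tx) x}{x} \geq m \norm[2]{x}$. For $t \in \coint{0,t_0}$, we use only the upper bound on the Hessian coming from \Cref{assumption:U-Sinfty}, namely $\norm{\nabla^2 U} \leq L$, so $\ps{\nabla^2 U(tx) x}{x} \geq -L \norm[2]{x}$. Putting the two pieces together,
\begin{equation*}
\ps{\nabla U(x)}{x} \geq \bigl[m(1-t_0) - L t_0\bigr]\norm[2]{x} = \Bigl[m - (L+m)\,\raymala_1/\norm{x}\Bigr]\norm[2]{x}.
\end{equation*}
Choosing $\raymala_2 = 2(L+m)\raymala_1/m$ (which is $\geq \raymala_1$ since $L\geq m \geq 0$), for $\norm{x} > \raymala_2$ the bracket is at least $m/2$, yielding the first assertion $\ps{\nabla U(x)}{x} \geq (m/2)\norm[2]{x}$.

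The second inequality is immediate from the first by Cauchy--Schwarz: $(m/2)\norm[2]{x} \leq \ps{\nabla U(x)}{x} \leq \norm{\nabla U(x)} \norm{x}$, hence $\norm{\nabla U(x)} \geq (m/2)\norm{x}$. No real obstacle arises here; the only mild subtlety is checking that the shift making $\nabla U(0)=0$ is truly without loss of generality (it only translates the potential, preserving \Cref{assumption:U-Sinfty} and \Cref{ass:condition_MALA}), and that the chosen $\raymala_2$ automatically dominates $\raymala_1$ so the Hessian lower bound is applicable on the relevant integration interval.
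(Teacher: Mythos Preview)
Your proof is correct and follows essentially the same approach as the paper's: both use the integral representation $\ps{\nabla U(x)}{x}=\int_0^1 \DD^2 U(tx)[x^{\otimes 2}]\,\rmd t$ (relying on the normalization $\nabla U(0)=0$ stated at the start of Section~\ref{sec:geom-ergodicity-mala}), split the integral at $t_0=\raymala_1/\norm{x}$, and bound the two pieces by $-L\norm[2]{x}$ and $m\norm[2]{x}$ respectively, arriving at the same explicit choice $\raymala_2=2(L+m)\raymala_1/m$. The paper's treatment of the second assertion is simply ``obvious'', which is your Cauchy--Schwarz argument.
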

\begin{proof}
  Using \Cref{assumption:U-Sinfty} and \Cref{ass:condition_MALA}, we have for any $x \in \rset^d$, $\norm{x }\geq \raymala_1$,
  \begin{align*}
    \ps{\nabla U(x)}{x}
    &= \int_{0}^{\raymala_1/\norm{x}} \DD^2 U(t x ) [x^{\otimes 2}] \rmd t + \int_{\raymala_1/\norm{x}} ^ 1 \DD^2 U(t x ) [x^{\otimes 2}] \rmd t\\
    & \geq m\norm[2]{x} \{1- \raymala_1 (1 +L/m)   / \norm{x} \}\eqsp,
  \end{align*}
which proves the first statement. The second statement is obvious.
\end{proof}

\begin{lemma}
  \label{lem:bounde_pertub_hessian}
  Assume \Cref{assumption:U-Sinfty} and \Cref{ass:condition_MALA}. Then, for any $t \in \ccint{0,1}$,  $\gamma \in \ocint{0,1/(4L)}$ and $x,z \in \rset^d$, $\norm{z} \leq \norm{x}/(4\sqrt{2\gamma})$, it holds
  \begin{equation*}
    \norm{x+t\{-\gamma \nabla U(x) + \sqrt{2\gamma}z \}} \geq \norm{x}/2 \eqsp.
  \end{equation*}
\end{lemma}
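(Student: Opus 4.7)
The plan is to use the reverse triangle inequality and bound the perturbation in two pieces. Specifically, I would start from
\[
\norm{x + t\{-\gamma \nabla U(x) + \sqrt{2\gamma}z\}} \geq \norm{x} - t\gamma \norm{\nabla U(x)} - t\sqrt{2\gamma}\norm{z}
\]
and then show that each of the two subtracted terms is at most $\norm{x}/4$.

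For the gradient term, I would invoke \Cref{assumption:U-Sinfty} (the $L$-Lipschitz property of $\nabla U$) together with the normalization $\nabla U(0) = 0$ adopted at the start of this section to conclude that $\norm{\nabla U(x)} \leq L\norm{x}$. Then, since $t \leq 1$ and $\gamma \leq 1/(4L)$, we immediately get $t\gamma\norm{\nabla U(x)} \leq \gamma L \norm{x} \leq \norm{x}/4$. For the noise term, the hypothesis $\norm{z} \leq \norm{x}/(4\sqrt{2\gamma})$ gives $t\sqrt{2\gamma}\norm{z} \leq \sqrt{2\gamma}\cdot\norm{x}/(4\sqrt{2\gamma}) = \norm{x}/4$.

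Combining these two bounds yields $\norm{x + t\{-\gamma \nabla U(x) + \sqrt{2\gamma}z\}} \geq \norm{x} - \norm{x}/4 - \norm{x}/4 = \norm{x}/2$, which is exactly the claimed inequality. I do not anticipate any real obstacle here: the lemma is essentially a careful bookkeeping argument whose only nontrivial inputs are the global Lipschitz control on $\nabla U$ and the scaling hypothesis on $\norm{z}$, both of which are already in hand. The statement is designed so that the two error contributions each absorb exactly $1/4$ of $\norm{x}$, leaving the factor $1/2$ on the right-hand side.
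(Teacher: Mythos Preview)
Your proof is correct and essentially identical to the paper's. The paper writes the intermediate bound as $(1-\gamma L)\norm{x} - \sqrt{2\gamma}\norm{z}$ rather than keeping the three terms separate, but the ingredients (reverse triangle inequality, $\norm{\nabla U(x)}\leq L\norm{x}$ from the Lipschitz condition with $\nabla U(0)=0$, and the hypotheses on $\gamma$ and $\norm{z}$) and the resulting arithmetic are the same.
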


\begin{proof}
  Let $t \in \ccint{0,1}$,  $\gamma \in \ocint{0,1/(4L)}$ and $x,z \in \rset^d$, $\norm{z} \leq \norm{x}/(4\sqrt{2\gamma})$.
  Using the triangle inequality and \Cref{assumption:U-Sinfty}, we have since $t \in \ccint{0,1}$
  \begin{equation*}
    \norm{x+t\{-\gamma \nabla U(x) + \sqrt{2\gamma}z \}} \geq (1-\gamma L ) \norm{x} -\sqrt{2\gamma} \norm{z} \eqsp.
  \end{equation*}
  The conclusion then follows from $\gamma \leq 1/(4L)$ and $\norm{z} \leq \norm{x}/(4\sqrt{2\gamma})$.
\end{proof}

We now show that \eqref{eq:drift_cont} holds.
\begin{proposition}
  \label{propo:drift_cont}
  Assume \Cref{assumption:U-Sinfty} and  \Cref{ass:condition_MALA}. Then, for any $\eta \in \ocint{0, m/8}$, \eqref{eq:drift_cont} holds with $V=V_{\eta}$, $a=2\eta$ and
  \begin{equation*}
    b = 2\eta\exp\parenthese{\eta\defEns{K_2^2 \vee 4(d+1)/m}}\left[d +1 + (2\eta + L)\defEns{K_2^2 \vee 4(d+1)/m} \right] \eqsp,
  \end{equation*}
  where $K_2$ is defined in \Cref{lem:quadratic_behaviour}.
\end{proposition}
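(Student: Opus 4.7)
The plan is to establish the Foster--Lyapunov drift inequality \eqref{eq:drift_cont} by a direct computation with the Lyapunov function $V_\eta(x)=\exp(\eta\norm[2]{x})$ and a ball/complement dichotomy. Since $\nabla V_\eta(x)=2\eta x V_\eta(x)$ and $\Delta V_\eta(x)=(2\eta d+4\eta^2\norm[2]{x})V_\eta(x)$, the definition \eqref{eq:def-generator} of $\generator$ gives
\begin{equation*}
\generator V_\eta(x) = 2\eta\, V_\eta(x)\bigl[-\ps{\nabla U(x)}{x}+d+2\eta\norm[2]{x}\bigr],
\end{equation*}
so the whole argument reduces to controlling the bracket.

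Set $R^2=K_2^2\vee 4(d+1)/m$. On $\{\norm{x}\geq R\}$, \Cref{lem:quadratic_behaviour} gives $\ps{\nabla U(x)}{x}\geq (m/2)\norm[2]{x}$, so the bracket is bounded by $(2\eta-m/2)\norm[2]{x}+d\leq -(m/4)\norm[2]{x}+d$ using $\eta\leq m/8$. Since $\norm[2]{x}\geq 4(d+1)/m$, the last expression is at most $-1$, yielding $\generator V_\eta\leq -2\eta V_\eta$ off the ball. On $\ball{0}{R}$ I would use the WLOG normalization $\nabla U(0)=0$ together with the Lipschitz bound from \Cref{assumption:U-Sinfty} to obtain $\abs{\ps{\nabla U(x)}{x}}\leq L\norm[2]{x}\leq LR^2$ and $V_\eta(x)\leq\exp(\eta R^2)$; substituting into the displayed identity and adding $2\eta V_\eta(x)$ to both sides gives
\begin{equation*}
\generator V_\eta(x)+2\eta V_\eta(x)\leq 2\eta\exp(\eta R^2)\bigl[d+1+(L+2\eta)R^2\bigr],
\end{equation*}
which, after substituting $R^2=K_2^2\vee 4(d+1)/m$, is exactly the advertised constant $b$. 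Combining the two regions delivers $\generator V_\eta\leq -2\eta V_\eta+b$ on $\rset^d$.

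There is no real obstacle beyond careful coefficient tracking. The upper bound $\eta\leq m/8$ (rather than $\eta\leq m/4$) is precisely what is needed to absorb the quadratic term $2\eta\norm[2]{x}$ coming from $\norm[2]{\nabla V_\eta}/V_\eta$ into $-(m/4)\norm[2]{x}$, and the threshold $4(d+1)/m$ is what lets the dissipation dominate the dimensional Laplacian contribution $d$ with exactly unit margin, producing the drift rate $2\eta$. Everything else is just bounding a continuous function on a compact ball and unpacking the definition of $b$.
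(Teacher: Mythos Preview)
Your proof is correct and follows essentially the same approach as the paper: compute $\generator V_\eta/(2\eta V_\eta)=-\ps{\nabla U(x)}{x}+d+2\eta\norm[2]{x}$, apply \Cref{lem:quadratic_behaviour} outside the ball of radius $R=\max(K_2,2\sqrt{(d+1)/m})$ to obtain the bracket $\leq -1$, and cover the compact ball by the explicit constant. The paper's version is terser and leaves the verification of $b$ on the ball implicit, whereas you spell out the Lipschitz bound $\abs{\ps{\nabla U(x)}{x}}\leq L\norm[2]{x}$ and the tracking of $d+1+(2\eta+L)R^2$, but the substance is identical.
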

\begin{proof}
  Let $\eta \in \ocint{0, m/8}$. By \eqref{eq:def-generator}, for all $x\in\rset^d$,
  \begin{equation*}
    \generator V_{\eta}(x) / (2\eta V_{\eta}(x)) = - \ps{\nabla U(x)}{x} + d + 2\eta\norm[2]{x} \eqsp.
  \end{equation*}
  By \Cref{lem:quadratic_behaviour}, for all $x\in\rset^d$, $x\geq \max(K_2, 2\sqrt{(d+1)/m})$,
  \begin{equation*}
    \generator V_{\eta}(x) / (2\eta V_{\eta}(x)) \leq - \defEns{(m/2) - 2\eta} \norm[2]{x} + d \leq -1 \eqsp,
  \end{equation*}
  which concludes the proof.
\end{proof}

Therefore, to check \Cref{ass:geo_ergod}, it remains to show that for any $\gamma \in \ocint{0,\bgamma}$, for $\bgamma >0$, $\Rula$ (resp. $\Rmala$) has an invariant distribution $\pi_{\gamma}$ (resp. $\pi$) and there exists $\bareta >0$ such that $\pi_{\gamma}(V_{\bareta}) < \plusinfty$ and \eqref{eq:def-V-unif} holds with $V=V_{\bareta}$.

To this end, we establish minorization and drift conditions on $\RKer_\step = \Rula$ and $\RKer_\step= \Rmala$, see \eg~\cite[Chapter~19]{douc:moulines:priouret:soulier:2018} with an explicit dependence with respect to the parameter $\step$.
More precisely, assume that
\begin{enumerate}[label=(\Roman*)]
\item\label{item:condition_ergo_I} there exist $\lambdaFL\in\ooint{0,1}$ and $\cFL<\plusinfty$ such that for all $\step\in\ocint{0,\bgamma}$
  \begin{equation}
    \label{eq:def-discrete-drift}
\RKer_\step \lV_{\bareta} \leq \lambdaFL^\step \lV_{\bareta} + \step \cFL  \eqsp;
\end{equation}
\item\label{item:condition_ergo_II}  there exists $\varepsilon\in\ocint{0,1}$ such that for all $\step\in\ocint{0,\bgamma}$ and $x,x'\in\defEnsLigne{\lV_{\bareta} \leq \widetilde{M}}$,
\begin{equation*}
  \tvnorm{\Rker_\step^{\ceil{1/\step}}(x,\cdot) - \Rker_\step^{\ceil{1/\step}}(x',\cdot)} \leq 2(1 - \varepsilon) \eqsp,
\end{equation*}
where
\begin{equation*}
  \widetilde{M}>\parenthese{\frac{4b \lambda^{-\bgamma}}{\log(1/\lambda)}-1} \vee 1 \eqsp.
\end{equation*}
\end{enumerate}
Then,  \ref{item:condition_ergo_I} implies by \cite[Lemma 1]{durmus:moulines:2015} that for any $\gamma \in \ocint{0,\bgamma}$,
\begin{equation}\label{eq:def-discrete-drift_2}
  \RKer_\step^{\ceil{1/\gamma}} \lV_{\bareta} \leq \lambdaFL \lV_{\bareta} +  \cFL\lambda^{-\bgamma}/\log(1/\lambda)  \eqsp.
\end{equation}
Therefore, applying
\cite[Theorem~19.4.1]{douc:moulines:priouret:soulier:2018} to $\RKer^{\ceil{1/\gamma}}_{\gamma}$ for $\gamma \in \ocint{0,\bgamma}$ using \ref{item:condition_ergo_II} and \eqref{eq:def-discrete-drift_2}, it follows  that \eqref{eq:def-V-unif} holds with $V= \lV_{\bareta}$ and $\pi_{\gamma}(V_{\bareta}) < \plusinfty$. Accordingly, it is enough to show that conditions \ref{item:condition_ergo_I} and \ref{item:condition_ergo_II} hold. This is achieved for ULA in \Cref{propo:super_lyap_ula} and \Cref{propo:small_set_ula} in \Cref{subsec:geom-ergodicity-ula} and relying on these results and the analysis of ULA, the Markov kernel of MALA is shown to fulfill  \ref{item:condition_ergo_I} and \ref{item:condition_ergo_II}  in \Cref{propo:lyap_mala_total} and \Cref{propo:small_set_mala} in \Cref{subsec:geom-ergodicity-mala}.

For ease of notations, we  denote in this Section $\Rmala$ by $\Rkerg$ and $\Rula$ by $\Qgam$ for any $\gamma >0$.


\subsection{Geometric ergodicity for the ULA algorithm}
\label{subsec:geom-ergodicity-ula}

\begin{proposition}
  \label{propo:small_set_ula}
  Assume \Cref{assumption:U-Sinfty}. Then for any $\Rrm \geq 0$,  $x,y \in \rset^d$, $\norm{x}\vee \norm{y} \leq \Rrm$, and $\gamma \in \ocint{0,1/L}$ we have
  \begin{equation*}
    \tvnorm{\updelta_x \Qgam^{\ceil{1/\gamma}} - \updelta_y \Qgam^{\ceil{1/\gamma}}}   \leq 2 (1-\varepsilon) \eqsp.
  \end{equation*}
with  $\varepsilon = 2\Phibf\parenthese{-(1+1/L)^{1/2}(3L)^{1/2}K}$.
\end{proposition}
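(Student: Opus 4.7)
My plan is to construct an explicit coupling of two ULA chains $(X_k)_{k=0}^n$ and $(Y_k)_{k=0}^n$ started from $x$ and $y$ respectively, with $n = \ceil{1/\gamma}$, and invoke the coupling inequality
\[
\tvnormEq{\updelta_x \Qgam^{n} - \updelta_y \Qgam^{n}} \leq 2\, \PP(X_n \neq Y_n) \eqsp,
\]
reducing the claim to a lower bound $\PP(X_n = Y_n) \geq \varepsilon$.  Writing $X_k = X_{k-1} - \gamma \nablaU(X_{k-1}) + \sqrt{2\gamma}\, Z_k$ with $Z_1,\ldots,Z_n$ i.i.d.\ $\mathcal{N}(0,I_d)$, I drive $(Y_k)$ by increments $Z_k' = Z_k + c_k$ for a sequence of shifts $c_k \in \rset^d$. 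Under the maximal coupling of the two $\mathcal{N}(0,I_{nd})$ vectors $(Z_1,\ldots,Z_n)$ and $(Z_1',\ldots,Z_n')$, the event $\{Z_k' = Z_k + c_k \text{ for all } k\}$ occurs with probability $2\barPhi(\Delta/2)$, where $\Delta^2 = \sum_{k=1}^n \|c_k\|^2$.

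The shifts $c_k$ are designed so that, on the coupling event, the discrepancy $E_k = Y_k - X_k$ follows a prescribed schedule from $E_0 = y-x$ to $E_n = 0$, via the relation
\[
\sqrt{2\gamma}\, c_k = E_k - E_{k-1} + \gamma\bigl(\nablaU(Y_{k-1}) - \nablaU(X_{k-1})\bigr) \eqsp.
\]
Using the $L$-Lipschitz bound $\|\nablaU(Y_{k-1}) - \nablaU(X_{k-1})\| \leq L\|E_{k-1}\|$ from \Cref{assumption:U-Sinfty}, combined with a geometric schedule $\|E_k\| \leq (1-\gamma L)^k \|y-x\|$, the condition $\gamma L \leq 1$, the geometric summation $\sum_{k=1}^n (1-\gamma L)^{2(k-1)} \leq 1/\bigl(\gamma L(2 - \gamma L)\bigr)$, and the constraint $n\gamma \in \ccint{1,\,1+1/L}$, one obtains $\Delta^2 \leq 4K^2 \cdot 3L(1+1/L)$ uniformly in $\gamma \in \ocint{0, 1/L}$ and $x,y \in \ball{0}{K}$. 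Substitution in the maximal-coupling estimate then yields $\varepsilon = 2\barPhi\bigl(K\sqrt{3L(1+1/L)}\bigr)$.

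The main obstacle is that, with deterministic $c_k$ and nonlinear $\nabla U$, the prescribed schedule for $E_k$ cannot be enforced exactly, since $\nabla U(Y_{k-1}) - \nabla U(X_{k-1})$ is random along the coupled trajectory. I would handle this either by allowing $c_k$ to be adapted to $\sigma(Z_1,\ldots,Z_{k-1})$ and performing the maximal coupling sequentially, so that the step-by-step conditional success probabilities multiply to a comparable bound, or by fixing deterministic worst-case $c_k$ derived from the a priori estimate $\|E_{k-1}\| \leq \|y-x\|$ (which costs only a constant factor). Either route preserves the crucial property that $\Delta$ is bounded independently of $\gamma$, which is precisely what produces a $\gamma$-free minorization constant.
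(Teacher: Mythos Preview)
The paper does not argue this directly: it records the one-step bound $\norm[2]{x-y-\gamma\{\nabla U(x)-\nabla U(y)\}}\le(1+\gamma\upkappa(\gamma))\norm[2]{x-y}$ with $\upkappa(\gamma)=2L+L^2\gamma$ and invokes \cite[Corollary~5]{debortoli2018back}. Your construction is in the right spirit and your $\Delta$ reproduces the stated $\varepsilon$, but neither fix you sketch for the adaptivity of $(c_k)$ gives a $\gamma$-uniform bound. For the sequential maximal coupling, the product of step-wise success probabilities is governed by the $\ell^1$ norm $\sum_k\bar c_k$, not by $\Delta=(\sum_k\bar c_k^2)^{1/2}$: one has $-\log(2\bar{\Phi}(s))\ge s\sqrt{2/\uppi}$ for all $s\ge 0$, hence $\prod_k 2\bar{\Phi}(\bar c_k/2)\le\exp\bigl(-(2\uppi)^{-1/2}\sum_k\bar c_k\bigr)$; and for \emph{any} schedule with $E_0=y-x$ and $E_n=0$, the triangle inequality forces $\sum_k\bar c_k\ge(2\gamma)^{-1/2}\sum_k\norm{E_k-E_{k-1}}\ge(2\gamma)^{-1/2}\norm{y-x}\to\infty$, so the product vanishes as $\gamma\downarrow 0$ and yields no minorization. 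For the deterministic-shift alternative, fixing $c_k$ independently of the path forfeits the identity $Y_n=X_n$: the nonlinear drift contributions do not cancel, $E_n$ stays random of order one, and a single final Gaussian step of variance $2\gamma$ cannot absorb an $O(1)$ discrepancy.

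What converts your (correct) $\ell^2$ estimate into a minorization is a change of measure rather than a pathwise coupling. Keeping your adapted $c_k$, set $d\tilde{\PP}/d\PP=\exp\bigl(\sum_k\langle c_k,Z_k\rangle-\tfrac12\sum_k\norm[2]{c_k}\bigr)$ on $\mathcal{F}_n$; under $\tilde{\PP}$ the shifted innovations $Z_k-c_k$ are i.i.d.\ standard Gaussian, so the $y$-chain driven by them---which by your choice of $c_k$ satisfies $Y_n=X_n$ almost surely---has law $\Qgam^{\lceil 1/\gamma\rceil}(y,\cdot)$ under $\tilde{\PP}$. Thus $\tvnorm{\Qgam^{\lceil 1/\gamma\rceil}(x,\cdot)-\Qgam^{\lceil 1/\gamma\rceil}(y,\cdot)}\le\tvnorm{\PP-\tilde{\PP}}$, and the right-hand side is bounded (via Pinsker, or a sharper argument specific to Gaussian likelihood ratios) in terms of $\PE_{\tilde{\PP}}\bigl[\sum_k\norm[2]{c_k}\bigr]\le\sum_k\bar c_k^2$, the $\gamma$-free quantity you computed. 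This is the mechanism behind the cited result.
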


\begin{proof}
By \Cref{assumption:U-Sinfty} for any $x,y \in \rset^d$,
 \[ \norm[2]{x-y-\gamma\{\nabla U(x) - \nabla U(y)\}} \leq (1+ \gamma \upkappa(\gamma)) \norm[2]{x-y} \]
 where $\upkappa(\gamma) =  (2 L+L^2 \gamma)$. The proof follows from  \cite[Corollary 5]{debortoli2018back}.
\end{proof}

\begin{proposition}
  \label{propo:super_lyap_ula}
  Assume \Cref{assumption:U-Sinfty} and \Cref{ass:condition_MALA} and let $\bgamma \in \ocint{0,m/(4L^2)}$. Then, for any $\gamma \in \ocint{0,\bgamma}$,
  \begin{equation*}
    Q_{\gamma} V_{\bareta}(x) \leq \exp\parenthese{-\bareta m \gamma \norm[2]{x}/4} V_{\bareta}(x) + b_{\bareta} \gamma \1_{\ball{0}{\raymala_3}}(x) \eqsp,
  \end{equation*}
  where $\bareta  = \min(m/16,(8\bgamma)^{-1})$, $\raymala_3 = \max(\raymala_2,4\sqrt{d/m})$, and
  \begin{equation}
  \label{eq:coeffs_super_lyap_mala}
  \begin{aligned}
    b_{\bareta} &= \parentheseDeux{\bareta \defEns{ m/4+   (1+16\bareta\bgamma)(4\bareta + 2 L + \bgamma L^2)} \raymala^2_3 +4 \bareta d  } \\
    &  \qquad \times \exp\parentheseDeux{\bgamma\bareta\defEns{m/4+   (1+16\bareta\bgamma)(4\bareta + 2 L + \bgamma L^2)} \raymala_3^2 + (d/2)\log(2)}\eqsp.
  \end{aligned}
\end{equation}
\end{proposition}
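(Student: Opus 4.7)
The proof will rest on the explicit Gaussian structure of the ULA kernel. Under $Q_\gamma$, starting from $x$, the next state is $X_1 = x - \gamma \nabla U(x) + \sqrt{2\gamma}\, Z$ with $Z \sim \mathcal{N}(0, I_d)$. Since $V_{\bareta}(y) = \exp(\bareta \|y\|^2)$ (the Lyapunov function from Proposition~\ref{propo:drift_cont}), $\|X_1\|^2$ conditional on $x$ is non-central chi-squared with mean vector $\mu := x - \gamma \nabla U(x)$ and scale $2\gamma$, and its moment generating function yields
\begin{equation*}
Q_\gamma V_{\bareta}(x) = (1-4\bareta\gamma)^{-d/2} \exp\!\Big(\frac{\bareta \|\mu\|^2}{1-4\bareta\gamma}\Big).
\end{equation*}
The condition $\bareta \le 1/(8\bgamma)$ guarantees $4\bareta\gamma \le 1/2$, so the expression is well defined. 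Expanding $\|\mu\|^2 = \|x\|^2 - 2\gamma \langle x, \nabla U(x)\rangle + \gamma^2 \|\nabla U(x)\|^2$ and subtracting $\bareta \|x\|^2 = \log V_{\bareta}(x)$ produces the key algebraic identity
\begin{equation*}
\log \frac{Q_\gamma V_{\bareta}(x)}{V_{\bareta}(x)} = -\tfrac{d}{2}\log(1-4\bareta\gamma) + \frac{\bareta \gamma\, \Theta(x,\gamma)}{1-4\bareta\gamma},
\end{equation*}
with $\Theta(x,\gamma) := 4\bareta \|x\|^2 - 2\langle x, \nabla U(x)\rangle + \gamma \|\nabla U(x)\|^2$. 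All subsequent estimates will be derived from this single identity.

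I will then split the analysis into the two regions $\{\|x\| \ge \raymala_3\}$ and $\{\|x\| \le \raymala_3\}$. On the outer region, \Cref{lem:quadratic_behaviour} gives $\langle x, \nabla U(x)\rangle \ge (m/2)\|x\|^2$ (since $\raymala_3 \ge \raymala_2$), and the Lipschitz property together with $\nabla U(0) = 0$ gives $\|\nabla U(x)\|^2 \le L^2 \|x\|^2$. The assumptions $\bareta \le m/16$ and $\bgamma \le m/(4L^2)$ then yield $\Theta(x,\gamma) \le -m\|x\|^2/2$; since this quantity is negative and $1-4\bareta\gamma \le 1$, dividing only strengthens the bound, so $\bareta\gamma \Theta(x,\gamma)/(1-4\bareta\gamma) \le -\bareta m \gamma \|x\|^2/2$. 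Combined with the linear estimate $-\tfrac{d}{2}\log(1-4\bareta\gamma) \le 4\bareta\gamma d$ (from $-\log(1-u) \le 2u$ on $[0,1/2]$) and $\|x\|^2 \ge \raymala_3^2 \ge 16 d/m$, I obtain on this region
\begin{equation*}
\log \frac{Q_\gamma V_{\bareta}(x)}{V_{\bareta}(x)} \le \bareta\gamma\Big(4d - \tfrac{m}{2}\|x\|^2\Big) \le -\tfrac{\bareta m \gamma \|x\|^2}{4},
\end{equation*}
so the stated inequality holds without any additive correction.

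On the inner ball $\{\|x\| \le \raymala_3\}$ the dissipativity of $U$ is no longer available, and I will instead rely on the crude bounds $|\langle x, \nabla U(x)\rangle| \le L \|x\|^2$ and $\|\nabla U(x)\|^2 \le L^2 \|x\|^2$ to get $\Theta(x,\gamma) \le (4\bareta + 2L + \bgamma L^2)\|x\|^2$, together with $1/(1-4\bareta\gamma) \le 1 + 16\bareta\bgamma$. Combined with the same linear bound on $-(d/2)\log(1-4\bareta\gamma)$, this produces a constant $C_1$ (depending only on $d, m, L, \bareta, \bgamma, \raymala_3$) such that $Q_\gamma V_{\bareta}(x) \le V_{\bareta}(x)\, e^{C_1 \gamma}$ on the ball. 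Applying the elementary inequality $e^a - e^b \le (a-b)e^a$ (for $a \ge b$) with $a = C_1 \gamma$ and $b = -\bareta m \gamma \|x\|^2/4$ yields
\begin{equation*}
Q_\gamma V_{\bareta}(x) - e^{-\bareta m \gamma \|x\|^2/4} V_{\bareta}(x) \le V_{\bareta}(x)\, \Big(C_1 + \tfrac{\bareta m \|x\|^2}{4}\Big)\, \gamma\, e^{C_1 \bgamma},
\end{equation*}
and the coarse bounds $V_{\bareta}(x) \le e^{\bareta \raymala_3^2}$ and $\|x\|^2 \le \raymala_3^2$ collapse the right-hand side into $b_{\bareta}\gamma$ of the shape announced in \eqref{eq:coeffs_super_lyap_mala}; reproducing the exact factors $(d/2)\log 2$ and $\bgamma \bareta m \raymala_3^2/4$ in \eqref{eq:coeffs_super_lyap_mala} amounts to organising the intermediate bounds slightly differently, which is routine bookkeeping.

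The main obstacle is ensuring that the additive correction on the ball is genuinely of order $\gamma$ rather than $O(1)$. The tempting bound $(1-4\bareta\gamma)^{-d/2} \le 2^{d/2}$ leaves a constant multiplicative factor $2^{d/2}$ in front of $V_{\bareta}(x)$ that cannot be absorbed into $b_{\bareta}\gamma$, so every term in $\log(Q_\gamma V_{\bareta}/V_{\bareta}) + \bareta m \gamma\|x\|^2/4$ must vanish linearly with $\gamma$; this is what forces the use of the linear estimate on $-(d/2)\log(1-4\bareta\gamma)$ together with the careful split that isolates $\bareta\|x\|^2$ from $\bareta\|\mu\|^2/(1-4\bareta\gamma)$.
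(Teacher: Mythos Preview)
Your proposal is correct and follows essentially the same route as the paper: the explicit Gaussian moment-generating-function computation of $Q_\gamma V_{\bareta}$, the split into $\{\|x\|\ge \raymala_3\}$ and $\{\|x\|<\raymala_3\}$, the use of \Cref{lem:quadratic_behaviour} and the Lipschitz bound to control $\Theta(x,\gamma)$ on the two regions, and the extraction of an $O(\gamma)$ correction on the ball via $e^t-1\le te^t$ are all exactly what the paper does. The only cosmetic difference is that on the inner ball the paper first writes $e^D \le e^E + (e^{D-E}-1)$ and then applies $e^t-1\le te^t$, whereas you go directly through $e^a-e^b\le (a-b)e^a$; the two are equivalent (your bound is in fact marginally tighter), and your remark about the obstacle---that one must use the linear estimate on $-(d/2)\log(1-4\bareta\gamma)$ rather than the crude bound $2^{d/2}$---is exactly the point.
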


\begin{proof}
  Let $\gamma \in \ocint{0,\bgamma}$.
  First since for any $x \in  \rset^d$, we have
  \begin{multline*}
    \bareta \norm[2]{x-\gamma \nabla U(x) + \sqrt{2\gamma} z} -\norm[2]{z} /2 \\
    = -\frac{1-4\bareta\gamma}{2} \norm[2]{z-\frac{ 2(2\gamma)^{1/2}\bareta}{1-4\bareta\gamma}\{x-\gamma\nabla U(x)\}} + \frac{\bareta}{1-4\bareta\gamma} \norm[2]{x- \gamma \nabla U(x)}  \eqsp,
  \end{multline*}
which implies since $1-4 \bareta\gamma > 0$ that
  \begin{align}
    \nonumber
    \Qgam V_{\bareta}(x) & = (2\uppi)^{-d/2}\int_{\rset^d} \exp\parenthese{    \bareta \norm[2]{x-\gamma \nabla U(x) + \sqrt{2\gamma} z} -\norm[2]{z} /2} \rmd z \\
    \label{eq:1:propo:super_lyap_mala}
    & =  (1-4\bareta\gamma)^{-d/2} \exp\parenthese{ \bareta(1-4\bareta\gamma)^{-1}\norm[2]{x- \gamma \nabla U(x)}} \eqsp.
  \end{align}
  We now distinguish the case when $\norm{x} \geq \raymala_3$ and $\norm{x} < \raymala_3$.

  By \Cref{ass:condition_MALA} and \Cref{lem:quadratic_behaviour}, for any $x \in \rset^d$, $\norm{x} \geq \raymala_3 \geq \raymala_2$, using that $\bareta \leq m/16$ and $\gamma \leq \bgamma \leq m/(4L^2)$, we have
  \begin{multline*}
    (1-4\bareta\gamma)^{-1}  \norm[2]{x- \gamma \nabla U(x)} -\norm[2]{x}\\
    \leq \gamma \norm[2]{x}(1-4\bareta \gamma)^{-1} \parenthese{4\bareta - m + \gamma L^2} \leq -\gamma (m/2) \norm[2]{x} (1-4\bareta\gamma)^{-1}\eqsp.
\end{multline*}
Therefore, \eqref{eq:1:propo:super_lyap_mala} becomes
  \begin{align*}
    \Qgam V_{\bareta}(x)
   & \leq   \exp\parenthese{ -\gamma \bareta (m/2) (1-4\bareta\gamma)^{-1}\norm[2]{x} - (d/2)\log(1-4\bareta\gamma)} V_{\bareta}(x) \\
    & \leq \exp\parenthese{ \gamma \bareta\{- (m/2) \norm[2]{x} + 4 d\}} V_{\bareta}(x)  \eqsp,
  \end{align*}
  where we have used for the last inequality that $-\log(1-t) \leq 2t$ for $t \in \ccint{0,1/2}$ and $4 \bareta \gamma \leq 1/2$. The proof of the statement then follows since $\norm{x} \geq \raymala_3 \geq 4 \sqrt{d/m}$.

  In the case $\norm{x }< \raymala_3$, by \eqref{eq:1:propo:super_lyap_mala}, \Cref{assumption:U-Sinfty} and since $(1-t)^{-1} \leq 1+4t$ for $t\in\ccint{0,1/2}$, we obtain
\begin{align*}
    (1-4\bareta\gamma)^{-1}\norm[2]{x- \gamma \nabla U(x)} - \norm[2]{x}
    &\leq  \gamma    (1-4\bareta\gamma)^{-1}\{4\bareta + 2 L + \gamma L^2\}\norm[2]{x} \\
    &\leq  \gamma    (1+16\bareta\gamma)\{4\bareta + 2 L + \gamma L^2\}\norm[2]{x} \eqsp,
\end{align*}
which implies that
  \begin{multline*}
    \Qgam V_{\bareta}(x)/V_{\bareta}(x) \leq \rme^{-\bareta  m \gamma \norm[2]{x}/4}  \\
    +  \exp\parentheseDeux{ \gamma \bareta \defEns{ m/4+   (1+16\bareta\gamma)(4\bareta + 2 L + \gamma L^2)}\norm[2]{x} -(d/2)\log(1-4\bareta\gamma)} -1 \eqsp.
  \end{multline*}
  The proof is then completed using that for any $t \geq 0$, $\rme^{t} -1 \leq t \rme^{t}$, for any $s \in \ccint{0,1/2}$, $-\log(1-s) \leq 2s$ and $4\bareta\gamma \leq 1/2$.
\end{proof}

\subsection{Geometric ergodicity for the MALA algorithm}
\label{subsec:geom-ergodicity-mala}

We first provide a decomposition in $\gamma$ of $\alphamala$ defined in \eqref{eq:def-alpha-MALA}. For any $x,z \in \rset^d$,  by \cite[Lemma 24]{durmus:moulines:saksman:2017}\footnote{Note that with the notation of \cite{durmus:moulines:saksman:2017}, MALA corresponds to HMC with only one leapfrog step and step size equals to $(2\gamma)^{1/2}$},  we have  that
\begin{equation}
\label{lem:durmus_moulines_saksman}
\alphamala(x,z) = \sum_{k=2}^6 \gamma^{k/2} A_{k,\gamma}(x,z)
\end{equation}
where, setting $x_t = x+t\{-\gamma \nabla U(x) + \sqrt{2\gamma} z \}$,
\begin{align*}
& A_{2,\gamma}(x,z)= 2 \int_0^1 \DD^2 U(x_t) [z^{\otimes 2}] (1/2-t) \rmd t \\
& A_{3,\gamma}(x,z)= 2^{3/2} \int_{0}^1 \DD^2 U(x_t) [z \otimes \nabla U(x)](t-1/4) \rmd t \,, \\
& A_{4,\gamma}(x,z)= -  \int_{0}^1 \DD^2 U(x_t)[ \nabla U(x)^{\otimes 2}] t \rmd t + (1/2) \norm[2]{ \int_{0}^1 \DD^2 U(x_t) [z] \rmd t }  \\
& A_{5,\gamma}(x,z)=   -(1/2)^{1/2}\ps{\int_{0}^1  \DD^2 U(x_t) [\nabla U(x)] \rmd t }{ \int_{0}^1 \DD^2 U(x_t) [z] \rmd t} \\
& A_{6,\gamma}(x,z)= (1/4) \norm[2]{\int_{0}^1 \DD^2 U(x_t) [\nabla U(x)] \rmd t } \eqsp.
\end{align*}
\begin{proof}[Proof of \Cref{lem:bound_alpha_mala_1}]
Since $\int_{0}^1 \DD^2 U(x) [z^{\otimes 2}](1/2-t) \rmd t = 0$, we get setting  $x_t = x + t \{-\gamma \nabla U(x) + \sqrt{2\gamma} z \}$,
\begin{multline}
\label{eq:decomposition-A-2}
A_{2,\gamma}(x,z) \\= \sqrt{\gamma} \iint_0^1 \DD^3 U (s x_t + (1-s) x) \parentheseDeux{z^{\otimes 2} \otimes \{ -\gamma^{1/2} \nabla U(x) + \sqrt{2} z \}} (1/2-t) t \rmd s \rmd t \eqsp.
\end{multline}
The proof follows from $\sup_{x \in \rset^d} \norm{ \DD^2 U(x)} \leq L $ and $\sup_{x \in \rset^d} \norm{ \DD^3 U(x)} \leq M $.
\end{proof}

\begin{lemma}
  \label{lem:bound_alpha_mala_2}
  Assume \Cref{assumption:U-Sinfty} and \Cref{ass:condition_MALA}. Then, for any $\bgamma \in \ocint{0, m^3/(4L^4)}$ there exists $C_{2,\bgamma} < \infty$ such that  for any $\gamma \in \ocint{0,\bgamma}$, $x,z \in \rset^d$ satisfying $\norm{x} \geq \max(2 \raymala_1 , \raymala_2)$ and $\norm{z} \leq \norm{x}/(4 \sqrt{2 \gamma})$, where $\raymala_2$ is defined in \Cref{lem:quadratic_behaviour}, it holds
  \begin{equation*}
    \alphamala(x,z) \leq C_{2,\bgamma} \gamma \norm[2]{z}\{1+\norm[2]{z}\}  \eqsp.
  \end{equation*}
\end{lemma}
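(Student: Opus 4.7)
The plan is to use the decomposition \eqref{lem:durmus_moulines_saksman}, $\alphamala(x,z) = \sum_{k=2}^6 \gamma^{k/2} A_{k,\gamma}(x,z)$, and to bound each piece by carefully exploiting the strong convexity of $U$ along the straight-line path $x_t = x + t\{-\gamma \nabla U(x) + \sqrt{2\gamma} z\}$, $t \in [0,1]$.

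First I would localize the path. Since $\bgamma \leq m^3/(4L^4) \leq 1/(4L)$ (as $m \leq L$), \Cref{lem:bounde_pertub_hessian} applies and $\norm{x_t} \geq \norm{x}/2 \geq K_1$ for every $t \in [0,1]$, because $\norm{x} \geq 2 K_1$. Hence by \Cref{ass:condition_MALA} and \Cref{assumption:U-Sinfty}, $m I \preceq \DD^2 U(x_t) \preceq L I$ uniformly on the path. The key consequence is the lower bound on the first term of $A_{4,\gamma}$:
\begin{equation*}
-\int_0^1 \DD^2 U(x_t)[\nabla U(x)^{\otimes 2}] \, t \, \rmd t \leq -(m/2) \norm[2]{\nabla U(x)} \eqsp,
\end{equation*}
which, after multiplication by $\gamma^2$, provides the negative budget $-(m/2)\gamma^2 \norm[2]{\nabla U(x)}$ needed to absorb the remaining $\norm{\nabla U(x)}$-dependent contributions.

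Next I would bound each of the other terms using $\|\DD^2 U(x_t)\| \leq L$ and $\|\DD^3 U\| \leq M$ from \Cref{ass:condition_MALA}. This gives:
\begin{align*}
\gamma A_{2,\gamma}(x,z) &\leq C_1 \gamma^{3/2} \norm[2]{z}\bigl(\sqrt{\gamma}\norm{\nabla U(x)} + \norm{z}\bigr) \text{ (from \eqref{eq:decomposition-A-2})}\eqsp, \\
\gamma^{3/2} A_{3,\gamma}(x,z) &\leq C_2\, \gamma^{3/2} L \norm{z}\norm{\nabla U(x)} \eqsp,\\
\gamma^{5/2} A_{5,\gamma}(x,z) &\leq C_3\, \gamma^{5/2} L^2 \norm{z}\norm{\nabla U(x)} \eqsp,\\
\gamma^3 A_{6,\gamma}(x,z) &\leq (L^2/4)\, \gamma^3 \norm[2]{\nabla U(x)} \eqsp,
\end{align*}
and the second part of $\gamma^2 A_{4,\gamma}$ contributes at most $(L^2/2)\gamma^2\norm[2]{z}$. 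Then by Young's inequality with a small parameter proportional to $m$, every cross-term of the form $\gamma^{k} \norm{z} \norm{\nabla U(x)}$ can be split as $\varepsilon \gamma^2 \norm[2]{\nabla U(x)} + c(\varepsilon)\gamma^{2k-2} \norm[2]{z}$. Choosing $\varepsilon = m/16$ and using $\bgamma \leq m^3/(4L^4)$ to control $\gamma^3 L^2 \norm[2]{\nabla U(x)} \leq (\bgamma L^2)\gamma^2 \norm[2]{\nabla U(x)}$, the total positive $\norm[2]{\nabla U(x)}$ contribution stays below $(m/2)\gamma^2 \norm[2]{\nabla U(x)}$ and is annihilated by the negative budget from $A_{4,\gamma}$.

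The surviving terms are all of the form $\gamma \norm[2]{z}$ times a bounded constant, $\gamma^2 \norm[4]{z}$, or $\gamma^{3/2} \norm[3]{z}$. For the cubic remainder, I would use the simple inequality $\sqrt{\gamma}\norm{z} \leq \sqrt{\bgamma}(1 + \norm[2]{z})$, valid on all of $\rset^d$, to obtain
\begin{equation*}
\gamma^{3/2} \norm[3]{z} = \gamma \norm[2]{z} \cdot (\sqrt{\gamma}\norm{z}) \leq \sqrt{\bgamma}\, \gamma \norm[2]{z}(1 + \norm[2]{z}) \eqsp.
\end{equation*}
Collecting everything yields the claimed estimate $\alphamala(x,z) \leq C_{2,\bgamma}\, \gamma \norm[2]{z}(1+\norm[2]{z})$ with a constant depending only on $\bgamma$, $L$, $m$, $M$ and $d$. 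The main obstacle is purely bookkeeping: keeping track of which cross-terms need to be absorbed by the negative convexity contribution and verifying that the Young inequality splits are all compatible with the assumed upper bound $\bgamma \leq m^3/(4L^4)$; once the path $x_t$ is confined to the strongly convex region, the rest is a term-by-term exercise.
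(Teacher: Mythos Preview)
Your proposal is correct and follows essentially the same route as the paper: the decomposition \eqref{lem:durmus_moulines_saksman}, the negative budget coming from the first part of $A_{4,\gamma}$ via strong convexity along the path $x_t$ (after localizing with \Cref{lem:bounde_pertub_hessian}), Young's inequality for the cross terms, and the hypothesis $\bgamma \leq m^3/(4L^4)$ to absorb the $A_{6,\gamma}$ contribution. The only cosmetic difference is that the paper first converts everything to $\norm{x}$ via $\norm{\nabla U(x)} \leq L\norm{x}$ and the lower bound $\norm{\nabla U(x)} \geq (m/2)\norm{x}$ from \Cref{lem:quadratic_behaviour} (this is where the assumption $\norm{x}\geq K_2$ is used), whereas you keep $\norm{\nabla U(x)}$ throughout; and the paper gives an explicit formula for the Young parameter $\varepsilon$ rather than the schematic ``$\varepsilon=m/16$''.
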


\begin{proof}
  Let $\gamma \in \ocint{0,\bgamma}$, $x,z \in \rset^d$ satisfying $\norm{x} \geq \max(2 \raymala_1 , \raymala_2)$ and $\norm{z} \leq \norm{x}/(4 \sqrt{2 \gamma})$.
  Using \eqref{lem:durmus_moulines_saksman}, we get setting \[ A_{4,0,\gamma}(x,z)=  \int_{0}^1 \DD^2 U(x_t)  [\nabla U(x)^{\otimes 2}] t \rmd t \eqsp, \]
  \begin{multline}
    \label{eq:2}
        \alphamala(x,z) \leq 2 \gamma A_{2,\gamma}(x,z) -\gamma^2 A_{4,0,\gamma}(x,z) \\+(2\gamma)^{3/2}L^2 \norm{z}\norm{x} + (\gamma^2/2) L^2 \norm[2]{z} + (\gamma^5/2)^{1/2} L^3 \norm{z}\norm{x} + (\gamma^3/4) L^4 \norm[2]{x} \eqsp,
  \end{multline}
  By \Cref{ass:condition_MALA}, \Cref{lem:quadratic_behaviour} and \Cref{lem:bounde_pertub_hessian}, we get for any $x \in \rset^d$, $\norm{x} \geq \max(2\raymala_1 ,\raymala_2)$,
  \begin{equation}
    \label{eq:3}
    A_{4,0,\gamma}(x,z) \geq (m/2)^3\norm[2]{x}  \eqsp.
  \end{equation}
  Combining this result with \eqref{eq:decomposition-A-2}, \eqref{eq:3} in \eqref{eq:2}, we obtain using $\gamma \leq \bgamma \leq m^3/(4L^4)$
  \begin{align*}
    \alphamala(x,z)
    & \leq 2 \gamma M \defEns{\sqrt{2\gamma}  \norm[3]{z} + \gamma L \norm[2]{z}\norm{x}} -\gamma^2(m^3/2^4) \norm[2]{x} \\
                   & \qquad +(2\gamma)^{3/2}L^2 \norm{z}\norm{x} + (\gamma^2/2) L^2 \norm[2]{z} + (\gamma^5/2)^{1/2} L^3 \norm{z}\norm{x}\eqsp,
  \end{align*}
Since for any $a, b \in \rset^+$ and $\varepsilon > 0$, $ab \leq  (\epsilon/2) a^2 + 1/(2\epsilon) b^2$,  we obtain
  \begin{align*}
    &    \alphamala(x,z) \leq \gamma \norm[2]{z} \Big\{ 2^{1/2} L^2 \varepsilon^{-1} + (\gamma/2)L^2 + 2^{-3/2}  \gamma^{3/2} L^3 \varepsilon^{-1} \\
    &\qquad \qquad \qquad \qquad \qquad + (2^3\gamma)^{1/2} M \norm{z} + \gamma M L \varepsilon^{-1}\norm[2]{z} \Big\} \\
&\qquad \qquad \qquad     + \norm[2]{x} \gamma^2 \parentheseDeux{\varepsilon\defEns{L M  +2^{1/2} L^2 + 2^{-3/2} \bgamma^{1/2} L^3 } -m^3/2^4} \eqsp.
  \end{align*}
  Choosing $\varepsilon = (m^3/2^4) \defEnsLigne{L M  +2^{1/2} L^2 + 2^{-3/2} \bgamma^{1/2} L^3}^{-1}$ concludes the proof.
\end{proof}

\begin{lemma}
  \label{propo:lyap_mala}
  Assume \Cref{assumption:U-Sinfty}, \Cref{ass:condition_MALA} and let $\bgamma\in\ocint{0,m/(4L^2)}$. Then, for any $\gamma \in \ocint{0,\bgamma}$ and $x\in\rset^d$,
  \begin{equation*}
    \int_{\rset^d} \norm[2]{y} Q_{\gamma}(x,\rmd y) \leq \defEns{1-(m\gamma)/2} \norm[2]{x}+ \tildeb \gamma \1_{\ball{0}{\raymala_4}}(x) \eqsp,
  \end{equation*}
  where $Q_{\gamma}$ is the Markov kernel of ULA defined in \eqref{eq:def-kernel-ULA},
  \begin{equation*}
    \raymala_4 = \max\parenthese{\raymala_2, 2\sqrt{(2d)/m}} \eqsp, \quad
    \tildeb = 2d + \raymala_4^2 \parenthese{\bgamma L^2 + 2L + m/2} \eqsp.
  \end{equation*}
\end{lemma}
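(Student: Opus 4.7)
The plan is a direct computation using the explicit form of the ULA one-step transition. Writing $Y = x - \gamma \nabla U(x) + \sqrt{2\gamma} Z$ with $Z \sim \gauss(0, I_d)$, independence gives
\[
\int_{\rset^d} \norm[2]{y} Q_\gamma(x, \rmd y) = \norm[2]{x - \gamma \nabla U(x)} + 2\gamma d = \norm[2]{x} - 2\gamma \ps{x}{\nabla U(x)} + \gamma^2 \norm[2]{\nabla U(x)} + 2\gamma d.
\]
I would then split into the two cases $\norm{x} \geq \raymala_4$ and $\norm{x} < \raymala_4$ and absorb all error terms into either $1 - m\gamma/2$ or the constant $\tildeb \gamma$ on the indicator.

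In the first case $\norm{x} \geq \raymala_4 \geq \raymala_2$, \Cref{lem:quadratic_behaviour} gives $\ps{x}{\nabla U(x)} \geq (m/2)\norm[2]{x}$, and \Cref{assumption:U-Sinfty} together with the convention $\nabla U(0) = 0$ gives $\norm[2]{\nabla U(x)} \leq L^2 \norm[2]{x}$. The above identity is therefore bounded by
\[
\norm[2]{x}\parenthese{1 - m\gamma + \gamma^2 L^2} + 2\gamma d.
\]
Since $\gamma \leq \bgamma \leq m/(4L^2)$, we have $\gamma^2 L^2 \leq (m\gamma)/4$; and since $\raymala_4 \geq 2\sqrt{2d/m}$, the term $2\gamma d$ is bounded by $(m\gamma/4)\norm[2]{x}$. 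Combining these two absorptions yields $(1 - m\gamma/2)\norm[2]{x}$, which is the desired bound outside $\ball{0}{\raymala_4}$ (the indicator contribution being zero).

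In the second case $\norm{x} < \raymala_4$, the same Lipschitz bound $\norm{\nabla U(x)} \leq L\norm{x}$ together with Cauchy--Schwarz gives
\[
\int_{\rset^d} \norm[2]{y} Q_\gamma(x, \rmd y) \leq \norm[2]{x}\parenthese{1 + 2\gamma L + \gamma^2 L^2} + 2\gamma d.
\]
Rewriting $1 + 2\gamma L + \gamma^2 L^2 = (1 - m\gamma/2) + \gamma(m/2 + 2L + \gamma L^2)$ and using $\gamma \leq \bgamma$ and $\norm[2]{x} \leq \raymala_4^2$, we get the bound
\[
(1 - m\gamma/2)\norm[2]{x} + \gamma\bracketC{\raymala_4^2\parenthese{\bgamma L^2 + 2L + m/2} + 2d} = (1 - m\gamma/2)\norm[2]{x} + \tildeb\gamma,
\]
exactly matching the claimed constant $\tildeb$.

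No real obstacle is expected; the two mild bookkeeping points are (i) checking that the condition $\raymala_4 \geq 2\sqrt{2d/m}$ is strong enough to absorb the $2\gamma d$ into the drift in the first case (it is, with margin $m\gamma/4$ to spare against the contraction coefficient), and (ii) using the a.s. convention $\nabla U(0) = 0$ stated at the start of this section to convert the Lipschitz assumption into a linear bound on $\norm{\nabla U(x)}$. Both are immediate.
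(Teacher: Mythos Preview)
Your proposal is correct and follows essentially the same route as the paper's own proof: compute $\int \norm[2]{y}Q_\gamma(x,\rmd y)$ explicitly, bound $\norm[2]{\nabla U(x)}\le L^2\norm[2]{x}$ via the Lipschitz assumption and $\nabla U(0)=0$, then split into the same two cases $\norm{x}\ge \raymala_4$ and $\norm{x}<\raymala_4$, using \Cref{lem:quadratic_behaviour} in the first and Cauchy--Schwarz in the second. The arithmetic and constants match exactly.
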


\begin{proof}
  Let $\gamma\in\ocint{0,\bgamma}$ and $x\in\rset^d$. By \Cref{assumption:U-Sinfty}, we have
  \begin{equation*}
    \int_{\rset^d} \norm[2]{y} Q_{\gamma}(x,\rmd y) \leq
    2\gamma d + \norm[2]{x}(1+\gamma^2 L^2) - 2\gamma\ps{\nabla U(x)}{x} \eqsp.
  \end{equation*}
  We distinguish the case when $\norm{x} \geq \raymala_4$ and $\norm{x} < \raymala_4$.
  If $\norm{x} \geq \raymala_4 \geq \raymala_2$, by \Cref{lem:quadratic_behaviour}, and since $\gamma \leq \bgamma \leq m/(4L^2)$, $\norm{x} \geq \raymala_4 \geq 2\sqrt{(2d)/m}$,
  \begin{align*}
    \int_{\rset^d} \norm[2]{y} Q_{\gamma}(x,\rmd y) &\leq
    \norm[2]{x} \parentheseDeux{1-\gamma \defEns{m - \gamma L^2 - (2d)/\norm[2]{x}}} \\
    &\leq \norm[2]{x} \defEns{1-\gamma m /2} \eqsp.
  \end{align*}
  If $\norm{x} < \raymala_4$, we obtain
  \begin{equation*}
    \int_{\rset^d} \norm[2]{y} Q_{\gamma}(x,\rmd y) \leq
    \norm[2]{x} \defEns{1-\gamma m /2} +
    \gamma \norm[2]{x} \parenthese{\gamma L^2 + 2 L + m/2} + 2\gamma d \eqsp,
  \end{equation*}
  which concludes the proof.
\end{proof}

\begin{lemma}
  \label{lem:diff_tv_MALA_ULA}
  Assume \Cref{assumption:U-Sinfty} and \Cref{ass:condition_MALA} and let $\bgamma\in\ocint{0,m/(4L^2)}$. Then, there exist $C_{3,\bgamma},C_{4,\bgamma}\geq 0$ such that for any $x \in \rset^d$ and $\gamma \in \ocint{0,\bgamma}$, we have
  \begin{align}
    \label{eq:1:lem:diff_tv_MALA_ULA}
      \tvnorm{\updelta_x \Qgam - \updelta_x \Rkerg} & \leq C_{3,\bgamma} \gamma^{3/2} (1+\norm[2]{x}) \eqsp, \\
    \label{eq:2:lem:diff_tv_MALA_ULA}
    \tvnorm{\updelta_x \Qgam^{\ceil{1/\gamma}} - \updelta_x \Rkerg^{\ceil{1/\gamma}}} &\leq C_{4,\bgamma} \gamma^{1/2} (1+\norm[2]{x}) \eqsp.
  \end{align}
\end{lemma}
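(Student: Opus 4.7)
The plan is to prove the two bounds in sequence: first control the single-step TV discrepancy by the MH rejection probability, and then bootstrap to the $\ceil{1/\gamma}$-step discrepancy via a telescoping identity combined with the ULA Lyapunov bound of \Cref{propo:lyap_mala}.

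For \eqref{eq:1:lem:diff_tv_MALA_ULA}, the starting point is the identity \eqref{eq:diff-rula-rmala}, which implies that for any measurable $\phi:\rset^d\to\rset$ with $|\phi|\le 1$,
\begin{equation*}
\abs{Q_\gamma\phi(x)-R_\gamma\phi(x)}
\le 2\int_{\rset^d}\bigl(1-\min(1,\rme^{-\alphamala(x,z)})\bigr)\varphibf(z)\rmd z.
\end{equation*}
Taking the supremum over such $\phi$ and using the elementary inequality $1-\min(1,\rme^{-t})\le\abs{t}$ valid for all $t\in\rset$, followed by \Cref{lem:bound_alpha_mala_1}, yields
\begin{equation*}
\tvnorm{\updelta_xQ_\gamma-\updelta_xR_\gamma}
\le 2\int\abs{\alphamala(x,z)}\varphibf(z)\rmd z
\le 2C_{1,\bgamma}\gamma^{3/2}\int(1+\norm[4]{z}+\norm[2]{x})\varphibf(z)\rmd z,
\end{equation*}
and the fourth Gaussian moment being a universal constant, this delivers \eqref{eq:1:lem:diff_tv_MALA_ULA} with a constant $C_{3,\bgamma}$ depending only on $L,m,\bgamma$.

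For \eqref{eq:2:lem:diff_tv_MALA_ULA}, set $n=\ceil{1/\gamma}$ and use the standard telescoping
\begin{equation*}
Q_\gamma^n-R_\gamma^n=\sum_{k=0}^{n-1}Q_\gamma^{n-1-k}(Q_\gamma-R_\gamma)R_\gamma^k.
\end{equation*}
Since post-composition with the Markov kernel $R_\gamma^k$ is a contraction in total variation, and since $\tvnorm{\mu K-\nu K}\le\int\mu(\rmd y)\tvnorm{\updelta_yK-\updelta_yK'}$ with $K=Q_\gamma$, $K'=R_\gamma$ for any probability measure $\mu$, the first bound \eqref{eq:1:lem:diff_tv_MALA_ULA} gives
\begin{equation*}
\tvnorm{\updelta_xQ_\gamma^n-\updelta_xR_\gamma^n}
\le C_{3,\bgamma}\gamma^{3/2}\sum_{j=0}^{n-1}\bigl(1+Q_\gamma^j(\norm{\cdot}^2)(x)\bigr).
\end{equation*}

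To close the argument, \Cref{propo:lyap_mala} yields the global (after bounding the indicator by $1$) drift $Q_\gamma(\norm{\cdot}^2)(x)\le(1-m\gamma/2)\norm[2]{x}+\tildeb\gamma$, whose iteration gives $Q_\gamma^j(\norm{\cdot}^2)(x)\le(1-m\gamma/2)^j\norm[2]{x}+2\tildeb/m$, and the geometric summation
\begin{equation*}
\sum_{j=0}^{n-1}\bigl(1+Q_\gamma^j(\norm{\cdot}^2)(x)\bigr)
\le n+\frac{2}{m\gamma}\norm[2]{x}+n\cdot\frac{2\tildeb}{m}
\le\frac{C}{\gamma}\bigl(1+\norm[2]{x}\bigr),
\end{equation*}
for a constant $C$ depending only on $m,\tildeb,\bgamma$, using $n\le 1/\gamma+1\le 2/\gamma$. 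Multiplying by $C_{3,\bgamma}\gamma^{3/2}$ produces the desired $C_{4,\bgamma}\gamma^{1/2}(1+\norm[2]{x})$ bound.

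No step is really hard; the only care needed is verifying that the ULA drift \Cref{propo:lyap_mala} is a global drift after bounding the indicator, and that the $\gamma^{3/2}$ rate from the single step combines with the $1/\gamma$ growth of the sum of Lyapunov moments over $\ceil{1/\gamma}$ iterations to give exactly $\gamma^{1/2}$. This is precisely the reason the loss of a half-power of $\gamma$ in passing from the one-step to the $\ceil{1/\gamma}$-step bound is unavoidable with this approach.
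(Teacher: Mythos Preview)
Your proof is correct and follows essentially the same route as the paper: the single-step bound via \eqref{eq:diff-rula-rmala}, $1-\min(1,\rme^{-t})\le|t|$, and \Cref{lem:bound_alpha_mala_1}; then the same telescoping identity (your indexing is the paper's with $j=n-1-k$), TV contraction under the trailing $R_\gamma^k$, integration of the one-step bound against $\updelta_x Q_\gamma^j$, and the ULA moment control from \Cref{propo:lyap_mala} summed geometrically. One cosmetic slip: in the displayed inequality ``$\tvnorm{\mu K-\nu K}\le\int\mu(\rmd y)\tvnorm{\updelta_yK-\updelta_yK'}$'' the left side should read $\tvnorm{\mu K-\mu K'}$; the subsequent bound you derive is the correct one.
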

\begin{proof}
  Let $x \in \rset^d$  and $\gamma \in \ocint{0,\bgamma}$.
We first show that \eqref{eq:1:lem:diff_tv_MALA_ULA} holds and then use this result to prove \eqref{eq:2:lem:diff_tv_MALA_ULA}.
Let $f : \rset^d \to \rset$ be a bounded and measurable function. Then, by \eqref{eq:def-kernel-ULA} and \eqref{eq:def-kernel-MALA}, we have
\begin{align*}
  &\abs{\Qgam f(x) - \Rkerg f(x)} \\
  & \qquad  = \Big| \int_{\rset^d}\{f(x-\gamma \nabla U(x) + \sqrt{2 \gamma} z) - f(x)\} \\
  &\phantom{----------} \times \{1 - \min(1,\rme^{-\alphamala(x,z)}) \} \varphibf(z) \rmd z \Big| \\
  & \qquad  \leq 2 \norm{f}_{\infty} \int_{\rset^d} \abs{1 - \min(1,\rme^{-\alphamala(x,z)}) } \varphibf(z) \rmd z  \\
  &\qquad \leq  2 \norm{f}_{\infty} \int_{\rset^d} \abs{\alphamala(x,z)}  \varphibf(z) \rmd z \eqsp.
\end{align*}
The conclusion of \eqref{eq:1:lem:diff_tv_MALA_ULA} then follows from an application of \Cref{lem:bound_alpha_mala_1}.

We now turn to the proof of \eqref{eq:2:lem:diff_tv_MALA_ULA}. Consider the following decomposition
  \begin{equation*}
    \updelta_x \Qgam^{\ceil{1/\gamma}} - \updelta_x \Rkerg^{\ceil{1/\gamma}} = \sum_{k=0}^{\ceil{1/\gamma}-1} \updelta_x \Qgam^k \{\Qgam - \Rkerg\} \Rkerg^{\ceil{1/\gamma}-k-1} \eqsp.
  \end{equation*}
  Therefore using the triangle inequality, we obtain that
  \begin{equation}
            \label{eq:4}
        \tvnorm{\updelta_x \Qgam^{\ceil{1/\gamma}} - \updelta_x \Rkerg^{\ceil{1/\gamma}}} \leq \sum_{k=0}^{\ceil{1/\gamma}-1} \tvnorm{ \updelta_x \Qgam^k \{\Rkerg - \Qgam\} \Rkerg^{\ceil{1/\gamma}-k-1}} \eqsp.
      \end{equation}
      We now bound each term in the sum. Let $k \in \{0,\ldots,\ceil{1/\gamma}-1\}$ and $f : \rset^d \to \rset$ be a bounded and measurable function. By   \eqref{eq:1:lem:diff_tv_MALA_ULA}, we obtain that
      \[ \abs{ \updelta_x \{\Rkerg - \Qgam\} \Rkerg^{\ceil{1/\gamma}-k-1} f} \leq C_{3,\bgamma} \norm{f}_{\infty} \gamma^{3/2} \{1+\norm[2]{x}\} \]
      and therefore using \Cref{propo:lyap_mala}, we get
      \begin{equation*}
\abs{       \updelta_x \Qgam^k   \{\Rkerg - \Qgam\} \Rkerg^{\ceil{1/\gamma}-k-1} f} \leq C_{3,\bgamma} \norm{f}_{\infty} \gamma^{3/2} \{1+(1-m\gamma/2)^k \norm[2]{x} + 2\tildeb/m \} \eqsp.
     \end{equation*}
     Plugging this result in         \eqref{eq:4}, we obtain
     \begin{align*}
       \tvnorm{\updelta_x \Qgam^{\ceil{1/\gamma}} - \updelta_x \Rkerg^{\ceil{1/\gamma}}} &\leq C_{3,\bgamma} \gamma^{3/2} \sum_{k=0}^{\ceil{1/\gamma}-1}   \{1+(1-m\gamma/2)^k \norm[2]{x} + 2\tildeb/m \} \\
       &\leq C_{3,\bgamma} \gamma^{1/2} \{1+2(\norm[2]{x}+\tildeb)/m\} \eqsp,
     \end{align*}
     which concludes the proof.
\end{proof}

\begin{proposition}
  \label{propo:small_set_mala}
  Assume \Cref{assumption:U-Sinfty} and \Cref{ass:condition_MALA}. Then for any $\Rrm \geq 0$ there exist $\bgamma > 0$ and $\varepsilon >0$, such that for any $x,y \in \rset^d$, $\norm{x}\vee \norm{y} \leq \Rrm$, and $\gamma \in \ocint{0,\bgamma}$ we have
  \begin{equation}
\label{eq:small_set_mala_propo}
    \tvnorm{\updelta_x \Rkerg^{\ceil{1/\gamma}} - \updelta_y \Rkerg^{\ceil{1/\gamma}}}   \leq 2 (1-\varepsilon) \eqsp.
  \end{equation}
\end{proposition}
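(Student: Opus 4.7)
The strategy is to compare the MALA kernel with the ULA kernel using the triangle inequality for the total variation distance, and to invoke the corresponding small set result for ULA (\Cref{propo:small_set_ula}) together with the quantitative closeness of MALA and ULA established in \Cref{lem:diff_tv_MALA_ULA}. More precisely, for any $x,y \in \rset^d$ and $\gamma \in \ocint{0,\bgamma}$, we would write
\begin{equation*}
  \tvnorm{\updelta_x \Rkerg^{\ceil{1/\gamma}} - \updelta_y \Rkerg^{\ceil{1/\gamma}}}
  \leq \tvnorm{\updelta_x \Rkerg^{\ceil{1/\gamma}} - \updelta_x \Qgam^{\ceil{1/\gamma}}}
  + \tvnorm{\updelta_x \Qgam^{\ceil{1/\gamma}} - \updelta_y \Qgam^{\ceil{1/\gamma}}}
  + \tvnorm{\updelta_y \Qgam^{\ceil{1/\gamma}} - \updelta_y \Rkerg^{\ceil{1/\gamma}}} \eqsp.
\end{equation*}

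First I would fix some $\bgamma_0 \in \ocint{0, m/(4L^2) \wedge (1/L)}$ so that both \Cref{propo:small_set_ula} and \Cref{lem:diff_tv_MALA_ULA} apply for any $\gamma \in \ocint{0,\bgamma_0}$. By \Cref{propo:small_set_ula}, the middle term is bounded by $2(1-\varepsilon_0)$ for any $x,y$ with $\norm{x} \vee \norm{y} \leq \raymala$, where
\begin{equation*}
  \varepsilon_0 = 2 \Phibf\!\left(-(1+1/L)^{1/2}(3L)^{1/2} \raymala\right) \eqsp,
\end{equation*}
uniformly over $\gamma \in \ocint{0,\bgamma_0}$. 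By \eqref{eq:2:lem:diff_tv_MALA_ULA} in \Cref{lem:diff_tv_MALA_ULA}, the two boundary terms are each bounded by $C_{4,\bgamma_0}\gamma^{1/2}(1+\raymala^2)$.

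It then suffices to choose $\bgamma \in \ocint{0,\bgamma_0}$ small enough that $2 C_{4,\bgamma_0} \bgamma^{1/2}(1+\raymala^2) \leq \varepsilon_0$. Combining the three bounds then yields
\begin{equation*}
  \tvnorm{\updelta_x \Rkerg^{\ceil{1/\gamma}} - \updelta_y \Rkerg^{\ceil{1/\gamma}}}
  \leq 2(1-\varepsilon_0) + \varepsilon_0 = 2(1-\varepsilon_0/2)
\end{equation*}
for every $\gamma \in \ocint{0,\bgamma}$ and $\norm{x}\vee\norm{y} \leq \raymala$, which gives \eqref{eq:small_set_mala_propo} with $\varepsilon = \varepsilon_0/2 >0$.

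The only nontrivial step is the use of \Cref{lem:diff_tv_MALA_ULA}; but this has already been established. The conceptually important point is that the bound \eqref{eq:2:lem:diff_tv_MALA_ULA} carries a factor $\gamma^{1/2}$, which vanishes as $\gamma \downarrow 0^+$ despite the kernels being iterated $\ceil{1/\gamma}$ times, allowing the MALA iterate to be made arbitrarily close in total variation to the ULA iterate started at the same point. No further work beyond assembling these two ingredients is required.
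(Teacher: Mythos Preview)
Your proposal is correct and follows essentially the same route as the paper's proof: the same triangle-inequality decomposition into two MALA--ULA comparison terms and one ULA--ULA term, the same invocation of \Cref{propo:small_set_ula} and \eqref{eq:2:lem:diff_tv_MALA_ULA}, and the same choice of $\bgamma$ to force the perturbation terms below $\varepsilon_0$, yielding $\varepsilon = \varepsilon_0/2$. The only cosmetic difference is that the paper does not introduce an intermediate $\bgamma_0$ but directly takes $\bgamma = m/(4L^2) \wedge \parentheseDeux{\varepsilon_1^{2}\parenthese{2C(1+\Rrm^2)}^{-2}}$; note also that since $m \le L$ one has $m/(4L^2) \le 1/L$, so your extra constraint $\bgamma_0 \le 1/L$ is automatically satisfied.
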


\begin{proof}
First note that for any $x,y \in \rset^d$, $\gamma >0$, by the triangle inequality, we obtain
  \begin{multline}
    \label{eq:decomposiiton_small_set_mala}
       \tvnorm{\updelta_x \Rkerg^{\ceil{1/\gamma}} - \updelta_y \Rkerg^{\ceil{1/\gamma}}} \leq        \tvnorm{\updelta_x \Rkerg^{\ceil{1/\gamma}} - \updelta_x \Qgam^{\ceil{1/\gamma}}}\\ + \tvnorm{\updelta_x \Qgam^{\ceil{1/\gamma}} - \updelta_y \Qgam^{\ceil{1/\gamma}}} + \tvnorm{\updelta_y \Rkerg^{\ceil{1/\gamma}} - \updelta_y \Qgam^{\ceil{1/\gamma}}} \eqsp.
     \end{multline}
     We now give some bounds for each term on the right hand side for any $x,y \in \rset^d$,  $\norm{x}\vee \norm{y} \leq \Rrm$ for a fixed $\Rrm \geq 0$ and $\gamma \leq  1/L$.
     By \Cref{propo:small_set_ula}, there exists $\varepsilon_1>0$ such that for any $x,y \in \rset^d$,  $\norm{x}\vee \norm{y} \leq \Rrm$ and $\gamma \leq  1/L$,
     \begin{equation}\label{eq:bound_small_ULA_proof_small_MALA}
       \tvnorm{\updelta_x \Qgam^{\ceil{1/\gamma}} - \updelta_y \Qgam^{\ceil{1/\gamma}}} \leq
       2(1-\varepsilon_1) \eqsp.
     \end{equation}
  In addition, by \Cref{lem:diff_tv_MALA_ULA}, there exists $C\geq 0$ such that for any $\gamma \in \ocint{0,m/(4L^2)}$, and $z \in \rset^d$, $\norm{z} \leq \Rrm$,
  \begin{equation*}
    \tvnorm{\updelta_z \Qgam^{\ceil{1/\gamma}} - \updelta_z \Rkerg^{\ceil{1/\gamma}}} \leq C \gamma^{1/2}(1+\Rrm^2) \eqsp.
  \end{equation*}
  Combining this result with \eqref{eq:bound_small_ULA_proof_small_MALA} in \eqref{eq:decomposiiton_small_set_mala}, we obtain that for any $x,y\in \rset^d$, $\norm{x}\vee\norm{y} \leq \Rrm$, $\gamma \in \ocint{0,m/(4L^2)}$,
  \begin{equation*}
    \norm{\updelta_x \Rkerg^{\ceil{1/\gamma}} - \updelta_y \Rkerg^{\ceil{1/\gamma}}} \leq 2(1-\varepsilon_1) + 2 C\gamma^{1/2}(1+\Rrm^2) \eqsp.
  \end{equation*}
  Therefore, we obtain that for any $x,y \in \rset^d$, $\norm{x} \vee \norm{y} \leq \Rrm$, $\gamma \in \ocint{0,\bgamma}$, \eqref{eq:small_set_mala_propo} holds with $\varepsilon \leftarrow \varepsilon_1/2$ taking
  \begin{equation*}
    \bgamma = m/(4L^2) \wedge \parentheseDeux{\varepsilon_1^{2}\parenthese{2C(1+\Rrm^2)}^{-2}} \eqsp.
  \end{equation*}
\end{proof}

\begin{lemma}
  \label{lem_tail_chi2}
Let $\bgamma >0$ and $\gamma \in \ocint{0,\bgamma}$. Then, for any $x \in \rset^d$, $\norm{x} \geq 20\sqrt{2\bgamma d}$,
\begin{equation*}
  \int_{\rset^d \setminus \ball{0}{\norm{x}/(4\sqrt{2\gamma})}} \varphibf(z) \rmd z \leq \exp(-\norm{x}^2/(128\gamma)) \eqsp.
\end{equation*}
\end{lemma}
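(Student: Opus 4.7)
The plan is to recognize the left-hand side as the tail probability $\PP(\|Z\| \geq r)$ where $Z \sim \gauss(0, \Id)$ in $\rset^d$ and $r = \norm{x}/(4\sqrt{2\gamma})$, and then apply a standard chi-squared tail inequality.

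First, I would observe that under the assumption $\norm{x} \geq 20\sqrt{2\bgamma d}$, and since $\gamma \in \ocint{0,\bgamma}$, we have
\[
r^2 \;=\; \frac{\norm{x}^2}{32\gamma} \;\geq\; \frac{400 \cdot 2\bgamma d}{32\gamma} \;=\; \frac{25 \bgamma d}{\gamma} \;\geq\; 25 d \eqsp.
\]
This quantitative lower bound $r^2 \geq 25 d$ is what allows the dimension-dependent prefactors from Chernoff-type bounds to be absorbed into the exponential.

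Next, I would invoke the Laurent--Massart chi-squared deviation inequality: for any $t \geq 0$, if $Z \sim \gauss(0,\Id)$ then $\PP(\norm{Z}^2 \geq d + 2\sqrt{dt} + 2t) \leq \rme^{-t}$. Choosing $t = r^2/4$ and using $r \geq 5\sqrt{d}$, one checks
\[
d + 2\sqrt{dt} + 2t \;\leq\; \frac{r^2}{25} + r\sqrt{d} + \frac{r^2}{2} \;\leq\; \frac{r^2}{25} + \frac{r^2}{5} + \frac{r^2}{2} \;<\; r^2 \eqsp,
\]
so the event $\{\norm{Z}^2 \geq r^2\}$ is contained in $\{\norm{Z}^2 \geq d + 2\sqrt{dt} + 2t\}$. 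Consequently
\[
\int_{\rset^d \setminus \ball{0}{r}} \varphibf(z) \rmd z \;=\; \PP(\norm{Z} \geq r) \;\leq\; \rme^{-r^2/4} \;=\; \exp\parenthese{-\norm{x}^2/(128\gamma)} \eqsp.
\]

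There is no real obstacle here — once one unpacks the integral as a Gaussian tail and verifies the simple inequality $r^2 \geq 25d$, the result follows immediately. If one prefers to avoid citing Laurent--Massart, the same conclusion can be obtained directly via Chernoff: for $\lambda \in \coint{0,1/2}$, $\PE[\rme^{\lambda \norm{Z}^2}] = (1-2\lambda)^{-d/2}$, so $\PP(\norm{Z}^2 \geq r^2) \leq (1-2\lambda)^{-d/2} \rme^{-\lambda r^2}$, and the optimal choice $\lambda = (1 - d/r^2)/2$ together with $r^2 \geq 25 d$ again yields the bound $\rme^{-r^2/4}$.
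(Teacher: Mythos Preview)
Your proof is correct and follows essentially the same approach as the paper: both invoke the Laurent--Massart chi-squared tail bound and use the hypothesis $\norm{x} \geq 20\sqrt{2\bgamma d}$ to obtain $r \geq 5\sqrt{d}$, then conclude $\PP(\norm{Z} \geq r) \leq \rme^{-r^2/4}$. The only cosmetic difference is that the paper reparametrizes the Laurent--Massart inequality as $\PP(\norm{Z}^2 \geq t) \leq \exp(-(d + t/2 - \sqrt{2td}))$ and then checks that the exponent is at least $t/4$ when $\sqrt{t} \geq 5\sqrt{d}$, whereas you directly plug $t = r^2/4$ into the original form and verify $d + 2\sqrt{dt} + 2t \leq r^2$; these are equivalent manipulations.
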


\begin{proof}
Let $x>0$.
By \cite[Lemma 1]{laurent:massart:2000},
\begin{equation*}
  \PP(\norm[2]{Z} \geq 2\{\sqrt{d} + \sqrt{x}\}^2) \leq
  \PP(\norm[2]{Z} \geq d + 2 \sqrt{dx} + 2x) \leq
  \rme^{-x} \eqsp,
\end{equation*}
where $Z$ is a $d$-dimensional standard Gaussian vector.
Setting $t=2\{\sqrt{d} + \sqrt{x}\}^2$, we obtain
\begin{equation*}
  \PP(\norm[2]{Z} \geq t) \leq \exp\parenthese{-\defEns{d + t/2 - \sqrt{2td}}} \eqsp,
\end{equation*}
and for $\sqrt{t} \geq 5\sqrt{d}$, we get $\PP(\norm{Z} \geq \sqrt{t}) \leq \rme^{-t/4}$ which gives the result.
\end{proof}

\begin{proposition}
  \label{propo:lyap_mala_total}
  Assume \Cref{assumption:U-Sinfty} and \Cref{ass:condition_MALA}. There exist $\bgamma>0$, $\tildem>0$, and $\raymala_5,\barb \geq 0$ such that for any $\gamma \in \ocint{0,\bgamma}$ and $x\in\rset^d$,
  \begin{equation*}
     R_{\gamma}V_{\bareta}(x) \leq (1-\tildem \gamma)V_{\bareta}(x)+ \barb \gamma \1_{\ball{0}{\raymala_5}}(x) \eqsp,
  \end{equation*}
  where $R_{\gamma}$ is the Markov kernel of MALA defined by \eqref{eq:def-kernel-MALA} and $\bareta$ is given by \eqref{eq:coeffs_super_lyap_mala}.
\end{proposition}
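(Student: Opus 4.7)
The strategy is to treat MALA as a perturbation of ULA. Using the identity stated just after \Cref{lem:bound_alpha_mala_1},
\begin{equation*}
\Rkerg V_{\bareta}(x) - \Qgam V_{\bareta}(x) = \int_{\rset^d}\{V_{\bareta}(x) - V_{\bareta}(\tilde x_\gamma)\}\{1 - \min(1, \rme^{-\alphamala(x,z)})\} \varphibf(z) \rmd z,
\end{equation*}
where $\tilde x_\gamma = \tilde x_\gamma(x,z) \eqdef x - \gamma \nabla U(x) + \sqrt{2\gamma}z$, together with $1 - \min(1, \rme^{-t}) \leq t_+$. Combined with \Cref{propo:super_lyap_ula}, which already supplies $\Qgam V_{\bareta}(x) \leq \exp(-\bareta m \gamma \norm[2]{x}/4) V_{\bareta}(x)$ outside $\ball{0}{\raymala_3}$, the problem reduces to proving that the perturbation integral is bounded by $C \gamma V_{\bareta}(x)$ outside a (possibly larger) ball $\ball{0}{\raymala_5}$, and to absorbing everything inside that ball into the term $\barb \gamma \1_{\ball{0}{\raymala_5}}(x)$.

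I would then split the $z$-integration along $E_1(x,\gamma) = \{\norm{z} \leq \norm{x}/(4\sqrt{2\gamma})\}$ and its complement. On $E_1(x,\gamma)$, fixing $\norm{x} \geq \max(2\raymala_1, \raymala_2, 20\sqrt{2\bgamma d})$, \Cref{lem:bound_alpha_mala_2} yields $\alphamala(x,z)_+ \leq C_{2,\bgamma}\gamma \norm[2]{z}(1+\norm[2]{z})$. Expanding
\begin{equation*}
\norm[2]{\tilde x_\gamma} - \norm[2]{x} = -2\gamma\ps{x}{\nabla U(x)} + \gamma^2\norm[2]{\nabla U(x)} + 2\sqrt{2\gamma}\ps{x - \gamma\nabla U(x)}{z} + 2\gamma\norm[2]{z},
\end{equation*}
and writing $V_{\bareta}(\tilde x_\gamma) = V_{\bareta}(x) \exp(\bareta(\norm[2]{\tilde x_\gamma} - \norm[2]{x}))$, the restriction $\bareta\gamma \leq 1/(8\bgamma)$ inherited from \Cref{propo:super_lyap_ula} ensures $2\bareta\gamma < 1/2$, so the density $\exp(\bareta(\norm[2]{\tilde x_\gamma} - \norm[2]{x})) \varphibf(z)$ is sub-Gaussian in $z$ and the ensuing Gaussian integral against the polynomial weight $\norm[2]{z}(1+\norm[2]{z})$ is uniformly bounded. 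This bounds the near-field contribution by $C\gamma V_{\bareta}(x)$.

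On $E_1(x,\gamma)^c$, I would use the crude estimate $\norm[2]{\tilde x_\gamma} \leq 2(1+\gamma L)^2 \norm[2]{x} + 4\gamma \norm[2]{z}$ to obtain $V_{\bareta}(\tilde x_\gamma) \leq \exp(2\bareta(1+\gamma L)^2 \norm[2]{x} + 4\bareta\gamma \norm[2]{z})$. Since $\bareta\gamma \leq 1/16$, the Gaussian weight $\rme^{-\norm[2]{z}/2}$ still dominates $\rme^{4\bareta\gamma \norm[2]{z}}$, and combined with the tail bound of \Cref{lem_tail_chi2}, the far-field contribution decays like $\exp(-c\norm[2]{x}/\gamma)$ and is thus $o(\gamma V_{\bareta}(x))$ as $\gamma\downarrow 0^+$. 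Putting everything together yields $\Rkerg V_{\bareta}(x) \leq (\rme^{-\bareta m \gamma \norm[2]{x}/4} + C'\gamma) V_{\bareta}(x)$ outside some $\ball{0}{\raymala_5}$. Choosing $\raymala_5$ large enough so that $\rme^{-\bareta m \gamma \raymala_5^2 /4} \leq 1 - 2C'\gamma$ for every $\gamma \in \ocint{0,\bgamma}$, then setting $\tildem = C'$ and taking $\barb$ to be a uniform bound on $\Rkerg V_{\bareta}$ over $\ball{0}{\raymala_5}$, concludes the proof. The main obstacle is the careful bookkeeping on $E_1(x,\gamma)$: one must verify that the value of $\bareta$ fixed in \Cref{propo:super_lyap_ula} satisfies simultaneously the integrability constraint $2\bareta\gamma < 1/2$ produced by the $2\gamma\norm[2]{z}$ term in $\norm[2]{\tilde x_\gamma} - \norm[2]{x}$ and the drift constraint $\bareta \leq m/16$ that keeps the Gaussian-decay factor $\rme^{-\bareta m \gamma \norm[2]{x}/4}$ large enough to absorb the $C'\gamma$ perturbation outside $\ball{0}{\raymala_5}$.
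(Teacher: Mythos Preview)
Your overall strategy coincides with the paper's: both start from the ULA drift in \Cref{propo:super_lyap_ula}, write the MALA--ULA difference via \eqref{eq:diff-rula-rmala}, and split the $z$-integral into the near-field $\{\norm{z}\le\norm{x}/(4\sqrt{2\gamma})\}$ handled by \Cref{lem:bound_alpha_mala_2} and the far-field handled by \Cref{lem_tail_chi2}. The paper however uses one simplification that you overlook: since $V_{\bareta}\ge 0$, one has trivially $V_{\bareta}(x)-V_{\bareta}(\tilde x_\gamma)\le V_{\bareta}(x)$, and hence
\[
\Rkerg V_{\bareta}(x)\le \Qgam V_{\bareta}(x)+V_{\bareta}(x)\int_{\rset^d}\{1-\min(1,\rme^{-\alphamala(x,z)})\}\varphibf(z)\,\rmd z\eqsp.
\]
Thus only the \emph{rejection probability} must be controlled, and all of your bookkeeping on $V_{\bareta}(\tilde x_\gamma)$ (the expansion of $\norm[2]{\tilde x_\gamma}-\norm[2]{x}$, the shifted-Gaussian moment bounds, the integrability constraint $2\bareta\gamma<1/2$) becomes unnecessary. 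Your route does go through---the polynomial-in-$\gamma\norm[2]{x}$ factors produced by the shifted Gaussian are absorbed by the $\exp(-\bareta m\gamma\norm[2]{x}/4)$ decay---but the paper's argument is considerably shorter.

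There is also a small gap at the end. Taking $\barb$ to be ``a uniform bound on $\Rkerg V_{\bareta}$ over $\ball{0}{\raymala_5}$'' does not give the required inequality $\Rkerg V_{\bareta}(x)\le(1-\tildem\gamma)V_{\bareta}(x)+\barb\gamma$, since the right-hand side is $1+O(\gamma)$ while your $\barb$ would be some fixed constant larger than $1$. What is needed is $\Rkerg V_{\bareta}(x)-V_{\bareta}(x)=O(\gamma)$ uniformly on $\ball{0}{\raymala_5}$; the paper obtains this by combining the ULA estimate $\Qgam V_{\bareta}(x)\le V_{\bareta}(x)+b_{\bareta}\gamma$ from \Cref{propo:super_lyap_ula} with \Cref{lem:bound_alpha_mala_1}, which gives $\int\{1-\min(1,\rme^{-\alphamala(x,z)})\}\varphibf(z)\,\rmd z\lesssim\gamma^{3/2}(1+\norm[2]{x})$ and hence $O(\gamma)$ on the ball.
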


\begin{proof}
  Let $\bgamma_1 = m/(4L^2)$. By \eqref{eq:diff-rula-rmala} and \Cref{propo:super_lyap_ula}, for any $\gamma \in \ocint{0,\bgamma_1}$ and $x  \in \rset^d$,
  \begin{align*}
    \Rkerg V_{\bareta}(x)
&\leq \Qgam V_{\bareta}(x) + V_{\bareta}(x)\int_{\rset^d} \{1-\min(1,\rme^{-\alphamala(x,z)}\} \varphibf(z) \rmd z\\
  & \leq \rme^{-\bareta m \gamma \norm[2]{x}/4}  V_{\bareta}(x) + b_{\bareta} \gamma \1_{\ball{0}{\raymala_3}}(x) \\
  &\phantom{----}+ V_{\bareta}(x)\int_{\rset^d} \{1-\min(1,\rme^{-\alphamala(x,z)}\} \varphibf(z) \rmd z \eqsp,
  \end{align*}
where $\raymala_3$ and $b_{\bareta}$ are given in \eqref{eq:coeffs_super_lyap_mala}.
Let
\begin{equation*}
  \bgamma_2 = \min\parenthese{1, \bgamma_1, m^3/(4L^4)} \eqsp, \quad
  \Rrm_1 = \max\parenthese{1, 2 \raymala_1 , \raymala_2, \raymala_3, 20\sqrt{2\bgamma_2 d}} \eqsp.
\end{equation*}
Then,  by \Cref{lem:bound_alpha_mala_2} and \Cref{lem_tail_chi2}, there exist $C_{1}\geq 0$ such that for any $x \in \rset^d$, $\norm{x} \geq \Rrm_1$ and
 $\gamma \in \ocint{0,\bgamma_2}$,
  \begin{align*}
    \Rkerg V_{\bareta}(x) & \leq  \rme^{-\bareta m \gamma \norm[2]{x}/4}  V_{\bareta}(x) + V_{\bareta}(x) \defEns{C_1 \gamma  + \exp(-\norm[2]{x}/(128\gamma))} \\
    & \leq  \rme^{-\bareta m \gamma \norm[2]{x}/4}  V_{\bareta}(x) + V_{\bareta}(x) \defEns{C_1 \gamma  + \exp(-1/(128\gamma))} \eqsp.
  \end{align*}
  Using that there exists $C_2 \geq 0$ such that $\sup_{t \in \ooint{0,1}} \{t^{-1} \exp(-1/(128 t))\} \leq C_2$ we get  for any $x \in \rset^d$, $\norm{x} \geq \Rrm_1$, $\gamma \in \ocint{0,\bgamma_2}$,
  \begin{equation*}
    \Rkerg V_{\bareta}(x) \leq  \rme^{-\bareta m \gamma \norm[2]{x}/4}  V_{\bareta}(x) + V_{\bareta}(x) \gamma \defEns{C_1   + C_2} \eqsp.
  \end{equation*}
  Let
  \begin{equation*}
    \Rrm_2 = \max\parenthese{\Rrm_1, 4(C_1 + C_2)^{1/2} (\bareta m)^{-1/2}} \eqsp, \quad
    \bgamma_3 = \min\parenthese{\bgamma_2, 4\defEns{m \bareta \Rrm_2^2}^{-1}} \eqsp.
  \end{equation*}
  Then, since for any $t \in \ccint{0,1}$, $\rme^{-t} \leq 1-t/2$, we get for any $x \in \rset^d$, $\norm{x} \geq \Rrm_2$, $\gamma \in \ocint{0,\bgamma_3}$,
  \begin{align}
    \nonumber
    \Rkerg V_{\bareta}(x)& \leq  \rme^{-\bareta m \gamma \Rrm_2^2 /4}  V_{\bareta}(x) + V_{\bareta}(x) \gamma \defEns{C_1   + C_2} \\
    \nonumber
    & \leq \parentheseDeux{1-\gamma\defEns{\bareta m \Rrm_{2}^2 /8 -C_1-C_2}} V_{\bareta}(x) \\
    \label{eq:drift_mala_totla_2}
    & \leq \defEns{1-\gamma \bareta m \Rrm_{2}^2 /16} V_{\bareta}(x) \eqsp.
  \end{align}
  In addition, by \Cref{lem:bound_alpha_mala_1}, using that for any $t \in \rset$, $1-\min(1,\rme^{-t}) \leq \abs{t}$, there exists $C_3\geq 0$ such that for any  $x \in \rset^d$, $\norm{x} \leq \Rrm_2$ and $\gamma \in \ocint{0,\bgamma_3}$,
  \begin{align*}
    \Rkerg V_{\bareta}(x) & \leq V_{\bareta}(x) +b_{\bareta} \gamma \1_{\ball{0}{\raymala_3}}(x) + C_3 \gamma^{3/2} \int_{\rset^d} \{1+\norm[2]{x} + \norm[4]{z}\} \varphibf(z) \rmd z \\
    & \leq (1-\gamma \bareta m \Rrm_{2}^2 /16) V_{\bareta}(x) +  \gamma \bareta m \Rrm_2^2 \rme^{\bareta \Rrm_2^2} / 16 + \gamma b_{\bareta} \\
    &\phantom{-------------}+ C_3 \gamma \bgamma_3^{1/2} \defEns{1 + \Rrm^2_2 + C_4} \eqsp,
  \end{align*}
  where $C_4 = \int_{\rset^d} \norm[4]{z} \varphibf(z) \rmd z$.
  Combining this result and \eqref{eq:drift_mala_totla_2} completes the proof.

\end{proof}


\bibliographystyle{alpha}
\bibliography{bibliographie}


\clearpage

\appendix

\section{Random Walk Metropolis (RWM) algorithm}
\label{sec:additional-proofs}
We show \eqref{eq:def-discrete-drift} for the RWM algorithm. 
For that purpose, consider the following additional assumption on $\pU$.
\begin{assumptionS}\label{assumption:U-dom-drift-RWM}
There exist $\chirwm, \widetilde{K}>0$ such that for all $x\in\rset^d$, $\norm{x} \geq \widetilde{K}$,
\begin{equation*}
\norm{\nablaU(x)} \geq \chirwm^{-1}\eqsp, \quad
\norm{\DD^3 \pU(x)} \leq \chirwm  \norm{\DD^2 \pU(x)}\eqsp , \quad
\norm{\DD^2 \pU(x)} \leq \chirwm \norm{\nablaU(x)}
\end{equation*}
and $\lim_{\norm{x}\to\plusinfty} \norm{\DD^2 \pU(x)} / \norm[2]{\nablaU(x)} = 0$.
\end{assumptionS}
\begin{lemma}\label{prop:RWM-drift}
Assume that $U\in\setpoly{3}(\rset^d,\rset)$ and \Cref{assumption:U-dom-drift-RWM}.
There exists $\bgamma>0$ such that for all $\step\in\ocint{0,\bgamma}$, $\Rrwm$ satisfies the drift condition \eqref{eq:def-discrete-drift} with $\lV=\exp(\pU/2)$.
\end{lemma}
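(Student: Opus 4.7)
The plan is to reduce the drift inequality \eqref{eq:def-discrete-drift} to a pointwise estimate of the ``Metropolis defect'' $\Rrwm V(x) - V(x) = V(x) I_\gamma(x)$ where
\begin{equation*}
I_\gamma(x) \defeq \int_{\rset^d} F\bigl(U(x+\sqrt{2\gamma}z)-U(x)\bigr)\,\varphibf(z)\,\rmd z,
\end{equation*}
and $F(t) \defeq (\rme^{-t/2}-\rme^{-t})\1_{t\geq 0} + (\rme^{t/2}-1)\1_{t\leq 0}$. This identity follows from the elementary computation $(V(y)-V(x))\min(1,\rme^{-(U(y)-U(x))}) = V(x)F(U(y)-U(x))$ using $V = \rme^{U/2}$, and in particular implies the global bound $\Rrwm V \leq 2V$ (because $V(y)\min(1,\rme^{-h}) \leq V(x)$ on both sides of $\defEns{h \geq 0}$). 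Using the symmetry $\varphibf(z)=\varphibf(-z)$, I pair $\pm\sqrt{2\gamma}z$ to write $2I_\gamma(x) = \int [F(h_+)+F(h_-)]\varphibf(z)\,\rmd z$ with $h_\pm(x,z) \defeq U(x\pm\sqrt{2\gamma}z)-U(x)$.

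Setting $u \defeq \sqrt{2\gamma}\ps{\nabla U(x)}{z}$ and $v \defeq \gamma \ps{\DD^2 U(x)z}{z}$, a third-order Taylor expansion of $U$ yields $h_\pm = \pm u + v + r_\pm$ with $|r_\pm| \lesssim \gamma^{3/2}\norm{\DD^3 U}\norm[3]{z}$, while the one-sided Taylor expansions $F(h)=h/2-(3/8)h^2+O(h^3)$ for $h\geq 0$ and $F(h)=h/2+(1/8)h^2+O(h^3)$ for $h\leq 0$ combine, in the generic regime $|v|<|u|$ where $h_+$ and $h_-$ have opposite signs, to give
\begin{equation*}
F(h_+)+F(h_-) = v - u^2/4 - |u|v - v^2/4 + O(|h|^3).
\end{equation*}
Integrating against $\varphibf$ with $\int u^2\varphibf\,\rmd z = 2\gamma\norm[2]{\nabla U(x)}$, $\int v\,\varphibf\,\rmd z = \gamma\Delta U(x)$, and $\int |u|v\,\varphibf\,\rmd z = O(\gamma^{3/2}\norm{\nabla U}\,\norm{\DD^2 U})$, I arrive at
\begin{equation*}
I_\gamma(x) \leq \tfrac{\gamma}{2}\Delta U(x) - \tfrac{\gamma}{4}\norm[2]{\nabla U(x)} + \mathrm{err}(\gamma,x).
\end{equation*}

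Under \Cref{assumption:U-dom-drift-RWM}, for $\norm{x}$ large one has $\norm{\nabla U(x)}\geq 1/\chirwm$ and $\norm{\DD^2 U(x)}/\norm[2]{\nabla U(x)}\to 0$, so $|\Delta U| \leq d\norm{\DD^2 U} = o(\norm[2]{\nabla U})$, and with $\norm{\DD^3 U}\leq \chirwm\norm{\DD^2 U}\leq \chirwm^2\norm{\nabla U}$ the error $\mathrm{err}(\gamma,x)$ is also $o(\gamma\norm[2]{\nabla U})$ as $\norm{x}\to\infty$ (up to a factor $\sqrt{\bgamma}$). Hence the negative quadratic term dominates and $I_\gamma(x)\leq -\gamma/(8\chirwm^2)$ on $\defEns{\norm{x}\geq K_0}$ uniformly in $\gamma\in\ocint{0,\bgamma}$ once $\bgamma$ is small. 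Combining with the bound on the compact set $\defEns{\norm{x}\leq K_0}$ (where $V$ is bounded and the same Taylor analysis gives $I_\gamma(x) = O(\gamma)$), I conclude $\Rrwm V(x) \leq (1-c\gamma)V(x) + \gamma b \leq \lambda^\gamma V(x) + \gamma b$ with $\lambda := \rme^{-c}$, which is \eqref{eq:def-discrete-drift}.

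The main obstacle is uniform control of the Taylor remainders in $(\gamma,x)$: the above expansions only apply when $|h_\pm|$ is small, which fails either when $\norm{z}$ lies in the Gaussian tail or when $\sqrt{\gamma}\norm{\nabla U(x)}$ is large. For the former, I will truncate the $z$-integration at a slowly diverging threshold $T(\gamma)$ and use $|F|\leq 1$ together with Gaussian tail bounds on the discarded region, choosing $T$ so that the Taylor remainder $\gamma^{3/2}\norm{\DD^3 U}T^3$ remains negligible. For the latter, I replace the Taylor analysis by the global elementary inequality $2\rme^{-t/2}-\rme^{-t}-1 \leq -c_0\min(t^2,1)$ for $t\geq 0$, which captures the correct quadratic decrease at small $t$ but stays bounded for large $t$; the exceptional region $\defEns{|v|>|u|}$, where $h_+$ and $h_-$ share a sign and the opposite-sign formula breaks, has exponentially small Gaussian measure under \Cref{assumption:U-dom-drift-RWM} (since $\sqrt{\gamma}\,\norm{\DD^2 U}/\norm{\nabla U}\leq \sqrt{\bgamma}\,\chirwm$ is small) and its contribution is absorbed into $b$.
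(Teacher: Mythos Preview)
Your route is genuinely different from the paper's and more economical: writing $\Rrwm V/V - 1 = \int F(h)\,\varphibf$ and symmetrising to $\int[F(h_+)+F(h_-)]\,\varphibf$ bypasses the paper's geometric machinery (orthonormal frame along $\nabla U$, cone regions, an implicit-function parametrisation of the acceptance boundary). The leading one-dimensional term you isolate, $\int g(|u|)\,\varphibf(z)\,\rmd z$ with $u=\sqrt{2\gamma}\ps{\nabla U}{z}$, equals exactly $2(\Grwm(a)-1)$ where $a=\sqrt{\gamma/2}\norm{\nabla U(x)}$ and $\Grwm$ is the paper's key function; the paper's lemma on $\Grwm$ (nonincreasing, $\Grwm(t)\le 1-t^2/2$ near $0$) is equivalent to your inequality $g(t)\le -c_0\min(t^2,1)$.

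There is, however, a genuine gap in the large-$a$ regime, which does occur since under \Cref{assumption:U-dom-drift-RWM} $\norm{\nabla U}$ is only bounded \emph{below} and may grow polynomially. Your piecewise Taylor expansion of $F$ carries an $O(|h|^3)$ remainder, and since $|h|\sim|u|\sim a|z_1|$ on the Gaussian bulk this integrates to $O(a^3)$, overwhelming the bounded main term $\int g(|u|)\,\varphibf\in[-1,0]$. The fix you describe --- replace Taylor by the global bound on $g$ --- handles only the main term; passing from $F(h_+)+F(h_-)$ to $g(|u|)$ via the crude Lipschitz constant $\tfrac12$ produces a correction $\lesssim\int(|v|+|r|)\,\varphibf = O(\gamma\norm{\DD^2 U})\le O(\sqrt{\gamma}\,\chirwm\,a)$, again unbounded in $a$. (A side point: the exceptional slab $\{|v|>|u|\}$ has Gaussian measure $O(\sqrt{\gamma}\,\chirwm)$, not exponentially small, though its contribution is indeed $O(\sqrt{\gamma})$ since $|F|\le 1$.)

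What is missing in your framework is the exponential decay $|F'(s)|\le e^{-|s|/2}$: on the non-exceptional set $\{|v|+|r_\pm|<|u|\}$ it gives $|F(h_\pm)-F(\pm u)|\lesssim e^{-|u|/4}(|v|+|r_\pm|)$, and integrating the extra factor $e^{-a|z_1|/2}$ against $\varphibf$ gains a factor $1/a$, so the correction becomes $O(\gamma\norm{\DD^2 U}/a)=O(\sqrt{\gamma}\,\norm{\DD^2 U}/\norm{\nabla U})\le O(\sqrt{\gamma}\,\chirwm)$, now uniform in $x$. The paper reaches the same $1/a$ gain by a different mechanism: it dominates the second-order remainder $\rrwm$ by $\tfrac14|u|$ outside a narrow cone, absorbs it into the exponent, and invokes the tilted-Gaussian bound $e^{a^2/8}\cdfc(a/2)\lesssim 1/a$.
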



The proof requires several intermediate results.
In the sequel, $\crwm$ is a positive constant which can change from line to line but does not depend on $\step$. We first introduce some notation and state two technical lemmas.
For $M\in\rset^{d \times d}$, denote by $\Vnorm[\text{F}]{M}$ the Frobenius norm of $M$.
For a set $A \subset \rset^d$, define by $A^{\complementaire} = \rset^d \setminus A$.
For all $x \in \rset^{\tilde{d}}$ and $K >0$, we denote by $\bouled{x}{K}{\tilde{d}}$ (respectively $\boulefermeed{x}{K}{\tilde{d}}$), the open (respectively close) ball centered at $x$ of radius $K$. When the dimension $d$ of the state space $\rset^d$ is unambiguous, they are respectively denoted by $\boule{x}{K}$ and $\boulefermee{x}{K}$.
For all $x\in\rset^d$ and $\step>0$, define the acceptance region
\begin{equation}
\label{eq:def-accept-region-rwm}
  \acceptrwm_{x,\step} = \defEns{z\in\rset^d : \alpharwm(x,z) \leq 0} \eqsp.
\end{equation}
For all $x\in\rset^d$ and $\step>0$, define $\Grwm:\rset_+\to\ccint{0,1}$ for all $t\geq 0$ by
\begin{equation}\label{eq:def-Grwm}
  \Grwm(t) = 1/2 + 2\rme^{t^2/2} \cdfc(t) - \rme^{2t^2} \cdfc(2t) \eqsp.
\end{equation}

\begin{lemma}\label{lemma:Grwm}
There exists $t_0>0$ such that for all $t\in\ccint{0,t_0}$, $\Grwm(t) \leq 1 - (t^2 /2)$ and the function $\Grwm$ is non-increasing.
\end{lemma}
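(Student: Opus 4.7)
The plan is to combine a probabilistic representation of $\Grwm$ (which gives monotonicity for free) with a Taylor expansion at the origin for the quadratic bound. The starting observation is that, for $Z \sim \gauss(0,1)$ and $t \geq 0$, completing the square in the Gaussian integrals yields
\begin{equation*}
\PE[\min(1,\rme^{-tZ})] = \tfrac{1}{2} + \rme^{t^2/2}\cdfc(t) \eqsp, \qquad
\PE[\min(1,\rme^{-tZ})^2] = \tfrac{1}{2} + \rme^{2t^2}\cdfc(2t) \eqsp,
\end{equation*}
so that $\Grwm(t) = 1 - \PE[(1-\min(1,\rme^{-tZ}))^2]$. The non-increasing part then follows immediately: the integrand $(1-\min(1,\rme^{-tZ}))^2 = ((1-\rme^{-tZ})_+)^2$ is pointwise non-decreasing in $t \geq 0$ for every realization of $Z$, so its expectation is non-decreasing and hence $\Grwm$ is non-increasing on $\rset_+$.

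For the quadratic inequality, I would differentiate in closed form. From $\tfrac{\rmd}{\rmd t}[\rme^{t^2/2}\cdfc(t)] = t\rme^{t^2/2}\cdfc(t) - (2\uppi)^{-1/2}$ and the analogue $\tfrac{\rmd}{\rmd t}[\rme^{2t^2}\cdfc(2t)] = 4t\rme^{2t^2}\cdfc(2t) - 2(2\uppi)^{-1/2}$, the two constants cancel and one gets the compact expression
\begin{equation*}
\Grwm'(t) = 2t\bigl[\rme^{t^2/2}\cdfc(t) - 2\rme^{2t^2}\cdfc(2t)\bigr] \eqsp.
\end{equation*}
A second differentiation, together with the identity $\Grwm(0) = 1$, yields $\Grwm'(0) = 0$ and $\Grwm''(0) = -1$. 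Since $\Grwm''$ is continuous, a Taylor expansion of the form $\Grwm(t) = 1 + \tfrac{1}{2}\Grwm''(0)\,t^2 + R(t)$ with integral remainder $R(t)$ reduces the problem to controlling the sign of $R(t)$ on a small interval $[0,t_0]$, which can be handled via bounds on $\Grwm'''$ deduced from the closed-form derivative above.

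The main technical step will be controlling this third-order remainder: the derivative formula gives $\Grwm'''(t)$ in terms of $f(t) = \rme^{t^2/2}\cdfc(t)$ and $g(t) = \rme^{2t^2}\cdfc(2t)$, which can in turn be bounded using standard Mills-ratio estimates for $\cdfc$. The same representation $f(t) - 2g(t)$ also provides an alternative route to monotonicity: showing $f(t) \leq 2g(t)$ for all $t \geq 0$ (easy at $t=0$ where $f = g = 1/2$, and globally via the Gaussian identity $f(t) - 2g(t) = \PE[\min(1,\rme^{-tZ})] - 2\PE[\min(1,\rme^{-tZ})^2]$ combined with $\min(1,\rme^{-tZ})^2 \geq \min(1,\rme^{-tZ}) - 1/2$ on $\{Z \geq 0\}$). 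I expect the localization constant $t_0$ to be the only delicate piece, and it will emerge explicitly from the Mills-ratio bound used in the remainder estimate.
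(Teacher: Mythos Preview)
Your monotonicity argument via the representation $\Grwm(t) = 1 - \PE\big[\big((1-\rme^{-tZ})_+\big)^2\big]$ is correct and far cleaner than the paper's. The paper computes $\Grwm'$ in closed form and then devotes most of the proof to showing $\cdfc(t) \leq 2\rme^{3t^2/2}\cdfc(2t)$ for all $t\geq 0$, splitting into the range $t \geq 0.4$ (handled by an auxiliary function $h_t(v) = 2\log(1+t/v)-\log 2 - t^2 + vt$ and a separate minimisation in $v$) and the range $t \in [0,0.4]$ (handled by two-sided numerical Mills-ratio bounds on $\cdfc$). Your one-line pointwise observation replaces all of this. The side identity you quote at the very end, $f(t) - 2g(t) = \PE[\min(1,\rme^{-tZ})] - 2\PE[\min(1,\rme^{-tZ})^2]$, is off by an additive constant $-1/2$, but you do not need that route anyway.

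The quadratic bound, however, has a genuine obstruction that your proposed remainder control cannot overcome. From your own derivative formula one obtains $\Grwm'''(0) = 12/\sqrt{2\uppi} > 0$; equivalently, expanding $(1-\rme^{-tZ})^2 = t^2Z^2 - t^3Z^3 + O(t^4)$ inside the expectation and using $\PE[Z^2\1_{\{Z>0\}}] = 1/2$, $\PE[Z^3\1_{\{Z>0\}}] = 2/\sqrt{2\uppi}$ gives
\[
\Grwm(t) \;=\; 1 - \frac{t^2}{2} + \frac{2}{\sqrt{2\uppi}}\,t^3 + O(t^4)\eqsp.
\]
Hence $\Grwm(t) > 1 - t^2/2$ for all sufficiently small $t > 0$, so the remainder $R(t)$ you plan to bound has the \emph{wrong} sign near the origin: no Mills-ratio estimate on $\Grwm'''$ will produce $R(t)\leq 0$. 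The inequality as stated in the lemma is in fact false. The paper's one-sentence justification (``$\Grwm'(0)=0$, $\Grwm''(0)=-1$ so there exists $t_0>0$ \ldots'') makes the same slip. What \emph{is} true---and is all that the subsequent drift argument actually uses---is $\Grwm(t) \leq 1 - ct^2$ on some $[0,t_0]$ for any fixed $c \in (0,1/2)$; this follows at once from the expansion above (or simply from continuity of $\Grwm''$ at $0$), and your Taylor-with-remainder plan proves it without difficulty.
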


\begin{proof}
We have for all $t\geq 0$,
\begin{equation}\label{eq:Grwm-derivative}
  \Grwm'(t) = 2t\rme^{t^2/2} \defEns{\cdfc(t) - 2\rme^{(3t^2)/2} \cdfc(2t)}
\end{equation}
and $G'(0)=0$, $G''(0)=-1$ so there exists $t_0>0$ such that for all $t\in\ccint{0,t_0}$, $\Grwm(t) \leq 1 - (t^2 / 2)$, which is the first statement of the lemma. Regarding the second statement, by an integration by parts, we have for all $s>0$
\begin{equation*}
  \cdfc(s) = \frac{\rme^{-s^2/2}}{\sqrt{2\uppi}s} - \frac{1}{\sqrt{2\uppi}} \int_s^{\plusinfty} \frac{\rme^{-u^2/2}}{u^2} \rmd u
\end{equation*}
and using a change of variables $u=v+t$, we get for all $t>0$
\begin{equation*}
  \cdfc(t) - 2\rme^{(3t^2)/2} \cdfc(2t) = \int_t^{\plusinfty} \defEns{\frac{2\rme^{t(t-v)}}{(v+t)^2}-\frac{1}{v^2}} \frac{\rme^{-v^2/2}}{\sqrt{2\uppi}} \rmd v \eqsp.
\end{equation*}
We now show that $\cdfc(t) - 2\rme^{(3t^2)/2} \cdfc(2t) \leq 0$ for all $t\geq 0$ which will finish the proof using \eqref{eq:Grwm-derivative}. We distinguish the case $t\geq 0.4$ and $t\in\ccint{0,0.4}$. For $t\geq 0.4$, define $\th_t:\coint{t,\plusinfty}\to\rset$ given for all $v\geq t$ by
\begin{equation*}
  \th_t(v) = 2\ln(1+t/v) - \ln(2) - t^2 + vt \eqsp.
\end{equation*}
We show in the sequel that $\th_t(v) \geq 0$ for all $v \geq t \geq 0.4$, which implies $\cdfc(t) - 2\rme^{(3t^2)/2} \cdfc(2t) \leq 0$ for all $t\geq 0.4$. We have for all $v\geq t$
\begin{equation*}
   \th_{t}'(v) = t\defEns{-2/\{v(t+v)\} +1}
\end{equation*}
and $\th_t$ is decreasing on $\ccint{t, \vmin \vee t}$ and increasing on $\coint{\vmin \vee t, \plusinfty}$ where $\vmin=(-t+\sqrt{t^2+8})/2$. Note that $\vmin \geq t$ is equivalent to $t \leq 1$ and for all $t\geq 1$, $\th_t(t) = \ln(2)>0$. Define $\ell:\ocint{0,1}\to\rset$ given for all $t\in\ocint{0,1}$ by
\begin{align*}
  \ell(t) = \th_t(\vmin) &= 2\ln\parenthese{\frac{\sqrt{t^2+8}+t}{\sqrt{t^2+8}-t}} -\ln(2) + (t/2)\parenthese{-3t+\sqrt{t^2+8}} \\
  &= 5\ln(2) - 4\ln\parenthese{-t+\sqrt{t^2+8}} + (t/2)\parenthese{-3t+\sqrt{t^2+8}} \eqsp.
\end{align*}
We have for all $t\in\ocint{0,1}$
\begin{equation*}
  \ell'(t) = -3t+\sqrt{t^2+8} \geq 0 \eqsp,
\end{equation*}
$\ell$ is non-decreasing and $\ell(0.4)>0$, which implies that for all $t\in\ccint{0.4,1}$ and $v\geq t$, $\th_t(v) \geq 0$. Therefore, $\Grwm'(t) \leq 0$ for all $t\geq 0.4$.

For $t\in\ccint{0,0.4}$, we use the following lower and upper bounds by \cite[Theorems 1 and 2]{5963622} for all $s\geq 0$
\begin{equation*}
  \frac{\sqrt{\rme}}{3\sqrt{\uppi}} \rme^{-(3/4)s^2} \leq \cdfc(s) \leq (1/2) \rme^{-s^2/2}
\end{equation*}
and we get for all $t\in\ccint{0,0.4}$
\begin{equation*}
  2\rme^{(3t^2)/2} \cdfc(2t) - \cdfc(t) \geq \rme^{-t^2/2} \defEns{\frac{2\sqrt{\rme}}{3\sqrt{\uppi}}\rme^{-t^2} - \frac{1}{2}} \eqsp.
\end{equation*}
The right hand side is decreasing on $\ccint{0,0.4}$ and positive because \[ (2\sqrt{\rme}\rme^{-(0.4)^2})/(3\sqrt{\uppi}) - (1/2) \geq 0.02 \eqsp, \]
which implies that $\Grwm'(t) \leq 0$ for all $t\in\ccint{0,0.4}$.
\end{proof}

\begin{lemma}\label{lemma:drift-RWM-borned2U}
Assume that $U\in\setpoly{3}(\rset^d,\rset)$ and \Cref{assumption:U-dom-drift-RWM}. Let $x\in\rset^d$, \linebreak $\norm{x} \geq \widetilde{M}$ and $\Krwm>0$. For all $\step>0$ and $z\in\boulefermee{0}{\Krwm}$, we have
\begin{multline*}
  \norm{\DD^2 \pU(x+\sqrt{2\step}z)} \leq \norm{\DD^2 \pU(x)}\defEns{1+\crwm(\Krwm)} \\
  \text{ where } \crwm(\Krwm) = (\crwm \chirwm \Krwm)^{1/2} \step^{1/4} \rme^{\crwm \chirwm \sqrt{\step}\Krwm /2} \eqsp.
\end{multline*}
\end{lemma}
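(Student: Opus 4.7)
The plan is to apply a Grönwall argument to the scalar function $\phi(t) = \norm{\DD^2 U(x_t)}$ where $x_t = x + t\sqrt{2\step}z$ for $t \in \ccint{0,1}$, then convert the resulting exponential bound into the stated form by an elementary inequality.

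First, I would write, using the fundamental theorem of calculus applied to the matrix-valued map $t \mapsto \DD^2 U(x_t)$,
\[ \DD^2 U(x_t) - \DD^2 U(x) = \sqrt{2\step} \int_0^t \DD^3 U(x_s)[z, \varble, \varble] \, \rmd s \eqsp, \]
and then take operator norms, using $\norm{z} \leq \Krwm$, to obtain $\phi(t) \leq \phi(0) + \sqrt{2\step}\Krwm \int_0^t \norm{\DD^3 U(x_s)} \, \rmd s$. Next, by \Cref{assumption:U-dom-drift-RWM}, whenever $\norm{x_s} \geq \widetilde{K}$ we have $\norm{\DD^3 U(x_s)} \leq \chirwm\, \norm{\DD^2 U(x_s)} = \chirwm\, \phi(s)$. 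The triangle inequality gives $\norm{x_s} \geq \norm{x} - \sqrt{2\step}\Krwm$, so provided $\widetilde{M}$ (the lower bound on $\norm{x}$ in the hypothesis, understood to be at least $\widetilde{K} + \sqrt{2\bgamma}\Krwm$ on the relevant range of $\step$) is large enough, the path stays in the regime where \Cref{assumption:U-dom-drift-RWM} applies. This yields the integral inequality
\[ \phi(t) \leq \phi(0) + \chirwm \sqrt{2\step}\, \Krwm \int_0^t \phi(s) \, \rmd s \eqsp, \]
and Grönwall's lemma at $t=1$ gives $\phi(1) \leq \phi(0) \exp(\chirwm \sqrt{2\step}\, \Krwm)$.

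It then remains to reshape $\exp(a) - 1$ with $a = \chirwm \sqrt{2\step}\, \Krwm$ into the stated form. The key elementary observation is that
\[ \rme^a - 1 \leq \sqrt{a}\, \rme^{a/2} \qquad \text{for all } a \geq 0 \eqsp, \]
which, after dividing by $\rme^{a/2}$, reduces to $1 - \rme^{-a} \leq \sqrt{a}$; this in turn follows from $1-\rme^{-a} \leq \min(a,1) \leq \sqrt{a}$. Applying this with $\sqrt{a} = 2^{1/4}(\chirwm \Krwm)^{1/2} \step^{1/4}$ and $\rme^{a/2} = \exp(\chirwm \Krwm \sqrt{\step/2})$ and absorbing the numerical constants into the generic constant $\crwm$ yields $\phi(1) - \phi(0) \leq \phi(0)\, \crwm(\Krwm)$, as claimed.

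The only subtlety is the path-confinement condition, i.e.\ ensuring that $x_s$ stays in the region where \Cref{assumption:U-dom-drift-RWM} gives the key bound $\norm{\DD^3 U} \leq \chirwm \norm{\DD^2 U}$; this is what forces the threshold $\widetilde{M}$ (implicitly) to be slightly larger than $\widetilde{K}$. All remaining steps are routine.
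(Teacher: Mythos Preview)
Your overall strategy is sound and is essentially the same Gr\"onwall argument as in the paper, though the paper works with the Frobenius norm of the \emph{difference} $f(t)=\DD^2 U(x_t)-\DD^2 U(x)$ and applies Gr\"onwall to $\Vnorm[\mathrm{F}]{f(t)}^2$; after taking a square root this yields $\norm{f(1)}\leq \norm{\DD^2 U(x)}\,(\rme^{s}-1)^{1/2}$, and the paper then uses the (correct) inequality $(\rme^{s}-1)^{1/2}\leq \sqrt{s}\,\rme^{s/2}$, which is just $\rme^{s}-1\leq s\,\rme^{s}$. Your direct route, bounding $\phi(t)=\norm{\DD^2 U(x_t)}$ itself, is if anything slightly more elementary.

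However, the final elementary step in your write-up is wrong as stated. You claim $\rme^{a}-1\leq \sqrt{a}\,\rme^{a/2}$ for all $a\geq 0$, and that dividing by $\rme^{a/2}$ reduces this to $1-\rme^{-a}\leq \sqrt{a}$. But $(\rme^{a}-1)/\rme^{a/2}=\rme^{a/2}-\rme^{-a/2}=2\sinh(a/2)$, not $1-\rme^{-a}$; and indeed the inequality $\rme^{a}-1\leq \sqrt{a}\,\rme^{a/2}$ fails already at $a=1$ (left side $\approx 1.718$, right side $\approx 1.649$). The easy fix is to use the weaker bound
\[
\rme^{a}-1 \leq \sqrt{a}\,\rme^{a}\eqsp,
\]
which \emph{does} reduce (after dividing by $\rme^{a}$) to $1-\rme^{-a}\leq \min(a,1)\leq \sqrt{a}$. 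Since the constant $\crwm$ in the statement is generic, the extra factor in the exponent ($\rme^{a}$ instead of $\rme^{a/2}$) is harmlessly absorbed. With this correction your proof goes through. The paper sidesteps the issue entirely because the squared Gr\"onwall produces $(\rme^{s}-1)^{1/2}$ rather than $\rme^{a}-1$, and for the square root the inequality with $\rme^{s/2}$ \emph{is} valid.
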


\begin{proof}
Let $z\in\boulefermee{0}{\Krwm}$. Define $\tf:\ccint{0,1} \to \rset^{d\times d}$ by $\tf(t)=\DD^2 \pU(x+t\sqrt{2\step}z)-\DD^2 \pU(x)$ for $t\in\ccint{0,1}$. We have
\begin{equation*}
\frac{\rmd}{\rmd t}\VnormEq[\text{F}]{\tf(t)}^2 = \ps{\tf(t)}{\DD^3 \pU(x+t\sqrt{2\step}z) \cdot \sqrt{2\step} z}_{\text{F}}
\end{equation*}
where for $i,j\in\defEns{1,\ldots,d}$
\begin{equation*}
  \parenthese{\DD^3 \pU(x+t\sqrt{2\step}z) \cdot \sqrt{2\step} z}_{ij} = \sum_{k=1}^{d} \partial_{ijk} \pU(x+t\sqrt{2\step}z) \sqrt{2\step} z_k \eqsp.
\end{equation*}
Using the equivalence of norms in finite dimension and \Cref{assumption:U-dom-drift-RWM}, we get
\begin{align*}
\absolute{\frac{\rmd}{\rmd t}\VnormEq[\text{F}]{\tf(t)}^2} &\leq \crwm \VnormEq[\text{F}]{\tf(t)}\norm{\DD^3 \pU(x+t\sqrt{2\step}z)} \sqrt{2\step} \norm{z} \\
&\leq \crwm \chirwm \parenthese{\VnormEq[\text{F}]{\tf(t)}^2 + \norm[2]{\DD^2 \pU(x)}}\sqrt{\step} \norm{z}
\end{align*}	
which gives by Grönwall's inequality,
\begin{equation*}
\norm[2]{\tf(1)} \leq \norm[2]{\DD^2 \pU(x)} \parenthese{\rme^{\crwm \chirwm \sqrt{\step}\norm{z}} -1} \eqsp.
\end{equation*}
Using $(\rme^{s} -1)^{1/2} \leq \sqrt{s} \rme^{s/2}$ for all $s\geq 0$, we get the result.
\end{proof}

We now proceed to the proof of \Cref{prop:RWM-drift}. Note that we have for all $x\in\rset^d$ and $\step>0$
\begin{multline}\label{eq:RWM-drift-1}
\frac{\Rrwm \lV(x)}{\lV(x)} = \int_{\acceptrwm_{x,\step}} \sqrt{\frac{\invpi(x)}{\invpi(x+\sqrt{2\step}z)}} \frac{\rme^{-\norm[2]{z}/2}}{(2\uppi)^{d/2}} \rmd z \\
+ \int_{(\acceptrwm_{x,\step})^{\complementaire}} \defEns{1+\sqrt{\frac{\invpi(x+\sqrt{2\step}z)}{\invpi(x)}}-\frac{\invpi(x+\sqrt{2\step}z)}{\invpi(x)}} \frac{\rme^{-\norm[2]{z}/2}}{(2\uppi)^{d/2}} \rmd z
\end{multline}
where $\acceptrwm_{x,\step}$ is defined in \eqref{eq:def-accept-region-rwm}.

\paragraph*{Intuition behind the proof}
Before giving the proof of the lemma, we sketch here the analysis of a simple case in one dimension where $\pU(x) = \aU \absolute{x}$ (with a proper regularization near $0$), $\aU>0$ and let $x>0$ be large enough. By \eqref{eq:RWM-drift-1}, we get
\begin{align*}
  \frac{\Rrwm \lV(x)}{\lV(x)} &\approx \int_0^{\plusinfty} \rme^{-\aU\sqrt{\step/2} z} \frac{\rme^{-z^2/2}}{\sqrt{2\uppi}} \rmd z
  \\
  &\phantom{---}+ \int_0^{\plusinfty} \defEns{1+\rme^{-\aU\sqrt{\step/2} z}-\rme^{-\aU\sqrt{2\step} z /2}} \frac{\rme^{-z^2/2}}{\sqrt{2\uppi}} \rmd z \\
  &= (1/2) + 2\rme^{\aU^2 \step /4} \cdfc(\sqrt{\step/2}\aU) - \rme^{\aU^2 \step} \cdfc(\sqrt{2\step}\aU) \\
  &= \Grwm(a\sqrt{\step/2}) \approx 1 - (\step\aU^2)/4 + O(\step^{3/2} \aU^3)
\end{align*}
and the expected contraction in $1-\crwm \step$. The proof below is devoted to make this intuition rigorous and the main steps are a localization argument, a comparison to the one dimensional case and an upper bound on the remainder terms.

\begin{figure}
  \centering
  \begin{tikzpicture}[scale=3]
  \draw [->] (-1.25,0) -- (1.25,0);
  \draw (1.25,0) node[right] {$z_1$};
  \draw [->] (0,-1.25) -- (0,1.25);
  \draw (0,1.25) node[above] {$(z_2,\ldots,z_d)$};
  \draw (0,0) circle (1);
  \draw (60:1) -- (240:1);
  \draw (120:1) -- (300:1);
  \draw (70:1.1) node[above] {$\cone{0}{\thetag}$};
  \draw (150:1.1) node[left] {$\boulefermee{0}{\rayrwm}$};
  \draw (60:0.2) arc (60:90:0.2);
  \draw (70:0.25) node[above] {$\thetag$};
  \draw [dashed] (0,0.5) -| (0.866,0) node[below] {$\brwm(z_{-1})$};
  \draw [dashed] (0.288,0.5) -- (0.288,0) node[below] {$\corwm(z_{-1})$};
  \draw (100:1) .. controls (95:0.75) and (110:0.25) .. (0,0);
  \draw (-70:1) .. controls (-90:0.5) and (-60:0.25) .. (0,0);
  \draw (-70:1) node[below] {$\phirwm(z_{-1})$};
  \end{tikzpicture}
  \caption{\label{fig-cone-brwm-phirwm-corwm} Figure illustrating the definitions of $\cone{0}{\thetag}$, $\brwm(z_{-1})$, $\corwm(z_{-1})$ and $\phirwm(z_{-1})$.}
\end{figure}

In the sequel, let $x\in\rset^d$, $\norm{x} \geq \widetilde{M}$ where $\widetilde{M}$ is given by \Cref{assumption:U-dom-drift-RWM}.

\paragraph*{Step 1: restriction to $\boulefermee{0}{\rayrwm}$}
Define for all $\step>0$
\begin{equation}\label{eq:def-rayrwm}
  \rayrwm = \{8\log((1/\step) \vee 1) + 2d\log(2)\}^{1/2} \eqsp.
\end{equation}
Let $\nZ$ be a standard $d$-dimensional Gaussian vector. By Markov's inequality and \eqref{eq:def-rayrwm}, we have
\begin{equation}\label{eq:RWM-drift-2}
\PP\parenthese{\norm{\nZ} \geq \rayrwm} \leq \rme^{-\rayrwm^2/4} \expe{\rme^{\norm[2]{\nZ}/4}} \leq \exp\parenthese{-\frac{\rayrwm^2}{4} + \frac{d}{2}\log(2)} \leq \step^2 \eqsp.
\end{equation}
Using $\invpi(x)/\invpi(x+\sqrt{2\step}z) \leq 1$ for $z\in\acceptrwm_{x,\step}$, \[ 1+\sqrt{\invpi(x+\sqrt{2\step}z)/\invpi(x)}-\invpi(x+\sqrt{2\step}z)/\invpi(x) \leq 5/4 \] for $z\in(\acceptrwm_{x,\step})^{\complementaire}$, \eqref{eq:RWM-drift-1} and \eqref{eq:RWM-drift-2}, we get
\begin{multline}\label{eq:RWM-drift-1-2}
\frac{\Rrwm \lV(x)}{\lV(x)} \leq (5/4)\step^2 + \int_{\acceptrwm_{x,\step}} \1_{\boulefermee{0}{\rayrwm}}(z) \sqrt{\frac{\invpi(x)}{\invpi(x+\sqrt{2\step}z)}} \frac{\rme^{-\norm[2]{z}/2}}{(2\uppi)^{d/2}} \rmd z \\
+ \int_{(\acceptrwm_{x,\step})^{\complementaire}} \1_{\boulefermee{0}{\rayrwm}}(z) \defEns{1+\sqrt{\frac{\invpi(x+\sqrt{2\step}z)}{\invpi(x)}}-\frac{\invpi(x+\sqrt{2\step}z)}{\invpi(x)}} \frac{\rme^{-\norm[2]{z}/2}}{(2\uppi)^{d/2}} \rmd z \eqsp.
\end{multline}

\paragraph*{Step 2: splitting $\boulefermee{0}{\rayrwm}$ into $\boulefermee{0}{\rayrwm} \cap \acceptrwm_{x,\step}$ and $\boulefermee{0}{\rayrwm} \cap (\acceptrwm_{x,\step})^{\complementaire}$}
In this paragraph, we introduce several geometric quantities illustrated with \Cref{fig-cone-brwm-phirwm-corwm}. Define $\bgamma>0$ by
\begin{equation}\label{eq:def-gambar}
  \max\defEns{(\crwm\chirwm\rayrwm[\bgamma])^{1/2}\bgamma^{1/4} \exp(\crwm\chirwm\bgamma^{1/2}\rayrwm[\bgamma]/2), \eqsp (3/2)\sqrt{2\bgamma}\rayrwm[\bgamma]\chirwm} = 1/2 \eqsp,
\end{equation}
where $\crwm$ is the positive constant given in \Cref{lemma:drift-RWM-borned2U}. Denote by
\begin{equation}\label{eq:def-crwm-1}
  \crwm_1 = (\crwm\chirwm\rayrwm[\bgamma])^{1/2}\bgamma^{1/4} \exp(\crwm\chirwm\bgamma^{1/2}\rayrwm[\bgamma]/2) \in \ccint{0,1/2} \eqsp.
\end{equation}
Let $e_1(x) = \nablaU(x)/\norm{\nablaU(x)}$ and consider the decomposition $z=(z_1, \ldots, z_d)$ of $z$ in an orthonormal basis $(e_1(x), e_2(x),\ldots,e_d(x))$ of $\rset^d$. For all $z\in\rset^d$, denote by $z_{-1}=(z_2,\ldots,z_d)\in\rset^{d-1}$.
For all $\step\in\ocint{0,\bgamma}$, define $\thetag\in\ccint{0,\uppi/4}$ by
\begin{equation}\label{eq:def-tan-thetag}
  \tan \thetag  = 2\sqrt{2\step} \rayrwm \frac{\norm{\DD^2 \pU(x)}}{\norm{\nablaU(x)}} \parenthese{1+\crwm_1} \in\ccint{0,1} \eqsp.
\end{equation}
Denote by
\[ \cone{0}{\thetag} = \defEns{z\in\rset^d : \absolute{z_1} \leq (\tan\thetag) \norm{z_{-1}}} \eqsp. \]
Define $\brwm,\corwm:\boulefermeed{0}{\rayrwm}{d-1} \to \rset_+$ for all $z_{-1}\in\boulefermeed{0}{\rayrwm}{d-1}$ by
\begin{equation}\label{eq:def-brwm-corwm}
  \brwm(z_{-1}) = (\rayrwm^2 - \norm[2]{z_{-1}})^{1/2} \quad \text{and} \quad
  \corwm(z_{-1}) = (\tan\thetag)\norm{z_{-1}} \eqsp.
\end{equation}
By \Cref{lemma:drift-RWM-borned2U} with $\Krwm=\rayrwm$, we have for all $z\in\boulefermee{0}{\rayrwm}$
\begin{equation}\label{eq:RWM-drift2-1}
  \norm{\DD^2 \pU(x+\sqrt{2\step}z)} \leq \norm{\DD^2 \pU(x)}\parenthese{1+\crwm_1} \eqsp.
\end{equation}
where $\crwm_1$ is given in \eqref{eq:def-crwm-1}. By Taylor's theorem, we have for all $z\in\boulefermee{0}{\rayrwm}$
\begin{equation}\label{eq:RWM-drift-taylor-U}
\pU(x+\sqrt{2\step}z) - \pU(x) = \sqrt{2\step} \norm{\nablaU(x)} z_1 + 2 \rrwm(z)
\end{equation}
where $\rrwm:\boulefermee{0}{\rayrwm}\to\rset$ is defined for all $z\in\boulefermee{0}{\rayrwm}$ by
\begin{equation}\label{eq:def-rrwm}
  \rrwm(z) = \step \int_0^1 (1-t) \DD^2 \pU(x+t\sqrt{2\step}z)[z^{\otimes 2}] \rmd t \eqsp.
\end{equation}
By \eqref{eq:def-tan-thetag}, \eqref{eq:RWM-drift2-1} and \eqref{eq:def-rrwm}, we have for all $z\in\boulefermee{0}{\rayrwm} \cap \cone{0}{\thetag}^{\complementaire}$
\begin{align}
  \nonumber
  4 \rrwm(z) &\leq 2\step \rayrwm \norm{\DD^2\pU(x)} \parenthese{1+\crwm_1} \parenthese{\absolute{z_1} + \norm{z_{-1}}} \\
  \label{eq:driftRWM-rest-term-dom}
  &\leq \sqrt{2\step} \norm{\nablaU(x)} (1/2) \tan \thetag \defEns{1+(\tan \thetag)^{-1}} \absolute{z_1} \leq \sqrt{2\step} \norm{\nablaU(x)} \absolute{z_1} \eqsp.
\end{align}
By \eqref{eq:RWM-drift-taylor-U} and \eqref{eq:driftRWM-rest-term-dom}, we obtain for all $z\in\boulefermee{0}{\rayrwm} \cap \cone{0}{\thetag}^{\complementaire}$, $z\neq 0$,
\begin{equation}\label{eq:sign-areas-phirwm}
  \defEns{\pU(x+\sqrt{2\step}z) - \pU(x)}z_1 > 0 \eqsp.
\end{equation}
Moreover, by \Cref{assumption:U-dom-drift-RWM} and \eqref{eq:RWM-drift2-1}, we have for all $z\in\boulefermee{0}{\rayrwm}$
\begin{align*}
\ps{e_1(x)}{\nabla \pU(x+\sqrt{2\step}z)} - \norm{\nabla \pU(x)} &= \sqrt{2\step} \int_0^1 \DD^2 \pU(x+t\sqrt{2\step}z) [z, e_1(x)] \rmd t \eqsp,\\
\absolute{\ps{e_1(x)}{\nablaU(x+\sqrt{2\step}z)} - \norm{\nablaU(x)}} &\leq \sqrt{2\step} (1+\crwm_1) \chirwm \rayrwm \norm{\nablaU(x)}
\end{align*}
and $\ps{e_1(x)}{\nablaU(x+\sqrt{2\step}z)} >0$.
By a version of the implicit function theorem given in \Cref{prop:implicit-function-thm}, there exists $\phirwm : \boulefermeed{0}{\rayrwm}{d-1} \to \rset$ continuous such that for all $\step\in\ocint{0,\bgamma}$,
\begin{multline}\label{eq:def-phirwm}
\defEns{z\in\boulefermee{0}{\rayrwm} : \pU(x+\sqrt{2\step}z) = \pU(x)} \\
 = \defEns{\parenthese{\phirwm(z_{-1}), z_{-1}} : z_{-1} \in\boulefermeed{0}{\rayrwm}{d-1}} \eqsp.
\end{multline}
Combining \eqref{eq:sign-areas-phirwm} and \eqref{eq:def-phirwm}, we obtain for all $\step\in\ocint{0,\bgamma}$,
\begin{align}
  \label{eq:phirwm-prop-1}
  \acceptrwm_{x,\step} \cap \boulefermee{0}{\rayrwm} &= \defEns{z\in\boulefermee{0}{\rayrwm} : z_1 \leq \phirwm(z_{-1})} \eqsp,\\
  \label{eq:phirwm-prop-2}
  (\acceptrwm_{x,\step})^{\complementaire} \cap \boulefermee{0}{\rayrwm} &= \defEns{z\in\boulefermee{0}{\rayrwm} : z_1 \geq \phirwm(z_{-1})} \eqsp,
\end{align}
and for all $z_{-1}\in\boulefermeed{0}{\rayrwm}{d-1}$, $\absolute{\phirwm(z_{-1})} \leq \corwm(z_{-1})$.
These properties and definitions are summarized in \Cref{fig-cone-brwm-phirwm-corwm}.

\paragraph*{Step 3: intermediate upper bound on $\Rrwm \lV(x)/\lV(x)$}
Using \eqref{eq:RWM-drift-1-2} and the definitions of $\brwm$ and $\phirwm$, see \eqref{eq:def-brwm-corwm}, \eqref{eq:def-phirwm}, \eqref{eq:phirwm-prop-1} and \eqref{eq:phirwm-prop-2}, we have
\begin{equation}\label{eq:upper-bound-RrwmV-1}
\frac{\Rrwm \lV(x)}{\lV(x)} \leq (5/4) \step^2 + \int_{z_{-1}\in\boulefermeed{0}{\rayrwm}{d-1}} \tg(z_{-1}) \frac{\rme^{-\norm[2]{z_{-1}}/2}}{(2\uppi)^{(d-1)/2}} \rmd z_{-1}
\end{equation}
where $\tg:\boulefermeed{0}{\rayrwm}{d-1}\to\rset_+$ is defined for all $z_{-1}\in\boulefermeed{0}{\rayrwm}{d-1}$ by
\begin{multline*}
\tg(z_{-1}) = \int_{-b(z_{-1})}^{(\phirwm(z_{-1}) \vee -b(z_{-1})) \wedge b(z_{-1})} \sqrt{\frac{\invpi(x)}{\invpi(x+\sqrt{2\step}z)}} \frac{\rme^{-z_{1}^2/2}}{(2\uppi)^{1/2}} \rmd z_1 \\
+ \int_{(\phirwm(z_{-1}) \vee -b(z_{-1})) \wedge b(z_{-1})}^{b(z_{-1})} \defEns{1+\sqrt{\frac{\invpi(x+\sqrt{2\step}z)}{\invpi(x)}}-\frac{\invpi(x+\sqrt{2\step}z)}{\invpi(x)}} \frac{\rme^{-z_1^2/2}}{(2\uppi)^{1/2}} \rmd z_1 \eqsp.
\end{multline*}
For all $z_{-1}\in\boulefermeed{0}{\rayrwm}{d-1}$, we decompose $\tg(z_{-1})$ in $\tg(z_{-1}) = A_{1}(z_{-1}) + A_{2}(z_{-1})$ where $A_1(z_{-1})$ and $A_2(z_{-1})$ are defined by
\begin{align}
\nonumber
A_{1}(z_{-1}) & = \int_{-b(z_{-1})}^{(\phirwm(z_{-1}) \vee -b(z_{-1})) \wedge 0} \sqrt{\frac{\invpi(x)}{\invpi(x+\sqrt{2\step}z)}} \frac{\rme^{-z_{1}^2/2}}{(2\uppi)^{1/2}} \rmd z_1 \\
\label{eq:def-RWMdrift-A1}
& + \int_{(\phirwm(z_{-1}) \vee -b(z_{-1})) \wedge 0}^{0} \defEns{1+\sqrt{\frac{\invpi(x+\sqrt{2\step}z)}{\invpi(x)}}-\frac{\invpi(x+\sqrt{2\step}z)}{\invpi(x)}} \frac{\rme^{-z_1^2/2}}{(2\uppi)^{1/2}} \rmd z_1 \eqsp, \\
\nonumber
A_{2}(z_{-1}) & = \int_{0}^{(\phirwm(z_{-1}) \vee 0) \wedge b(z_{-1})} \sqrt{\frac{\invpi(x)}{\invpi(x+\sqrt{2\step}z)}} \frac{\rme^{-z_{1}^2/2}}{(2\uppi)^{1/2}} \rmd z_1 \\
\label{eq:def-RWMdrift-A2}
& + \int_{(\phirwm(z_{-1}) \vee 0) \wedge b(z_{-1})}^{b(z_{-1})} \defEns{1+\sqrt{\frac{\invpi(x+\sqrt{2\step}z)}{\invpi(x)}}-\frac{\invpi(x+\sqrt{2\step}z)}{\invpi(x)}} \frac{\rme^{-z_1^2/2}}{(2\uppi)^{1/2}} \rmd z_1 \eqsp.
\end{align}
Combining it with \eqref{eq:upper-bound-RrwmV-1}, we obtain
\begin{equation}\label{eq:upper-bound-RrwmV-1-2}
\frac{\Rrwm \lV(x)}{\lV(x)} \leq (5/4) \step^2 + \int_{z_{-1}\in\boulefermeed{0}{\rayrwm}{d-1}} \defEns{A_{1}(z_{-1}) + A_{2}(z_{-1})} \frac{\rme^{-\norm[2]{z_{-1}}/2}}{(2\uppi)^{(d-1)/2}} \rmd z_{-1} \eqsp.
\end{equation}

By \eqref{eq:RWM-drift-taylor-U} and \eqref{eq:def-RWMdrift-A1}, we have for all $z_{-1}\in\boulefermeed{0}{\rayrwm}{d-1}$
\begin{equation}\label{eq:A1z1}
  A_1(z_{-1}) = A_{11}(z_{-1}) + A_{12}(z_{-1}) + A_{13}(z_{-1}) + A_{14}(z_{-1})
\end{equation}
where
\begin{align*}
  A_{11}(z_{-1}) & =  \int_{-b(z_{-1})}^{0} \rme^{\sqrt{\step/2}  \norm{\nablaU(x)} z_1} \frac{\rme^{-z_{1}^2/2}}{(2\uppi)^{1/2}} \rmd z_1 \eqsp, \\
  A_{12}(z_{-1}) &= \int_{-\brwm(z_{-1})}^{-\brwm(z_{-1}) \vee -\corwm(z_{-1})} \rme^{\sqrt{\step/2} \norm{\nablaU(x)} z_1 + \rrwm(z)} \defEns{1 - \rme^{-\rrwm(z)}} \frac{\rme^{-z_1^2/2}}{(2\uppi)^{1/2}} \rmd z_1 \eqsp,\\
  A_{13}(z_{-1}) &= \int_{-\brwm(z_{-1}) \vee -\corwm(z_{-1})}^{(\phirwm(z_{-1}) \vee -\brwm(z_{-1}))\wedge 0} \rme^{\sqrt{\step/2} \norm{\nablaU(x)} z_1 + \rrwm(z)} \defEns{1 - \rme^{-\rrwm(z)}} \frac{\rme^{-z_1^2/2}}{(2\uppi)^{1/2}} \rmd z_1 \eqsp,\\
  A_{14}(z_{-1}) &= \int_{(\phirwm(z_{-1}) \vee -\brwm(z_{-1})) \wedge 0}^0 \Bigg\{1+\sqrt{\frac{\invpi(x+\sqrt{2\step}z)}{\invpi(x)}}-\frac{\invpi(x+\sqrt{2\step}z)}{\invpi(x)}\\
  &\phantom{------------------}-\rme^{\sqrt{\step/2} \norm{\nablaU(x)} z_1}\Bigg\} \frac{\rme^{-z_1^2/2}}{(2\uppi)^{1/2}} \rmd z_1 \eqsp.
\end{align*}
By \eqref{eq:RWM-drift-taylor-U} and \eqref{eq:def-RWMdrift-A2}, we have for all $z_{-1}\in\boulefermeed{0}{\rayrwm}{d-1}$
\begin{align}
  \nonumber
  A_2(z_{-1}) &= A_{21}(z_{-1}) + A_{22}(z_{-1}) + A_{23}(z_{-1}) + A_{24}(z_{-1}) + A_{25}(z_{-1}) \\
  \nonumber
  &+ \int_{0}^{(\phirwm(z_{-1}) \vee 0) \wedge b(z_{-1})} \Bigg\{\sqrt{\frac{\invpi(x)}{\invpi(x+\sqrt{2\step}z)}}-1-\rme^{-\sqrt{\step/2}  \norm{\nablaU(x)} z_1} \\
  \label{eq:A2z1-temp}
  &\phantom{------------------}+\rme^{-\sqrt{2\step} \norm{\nablaU(x)} z_1} \Bigg\} \frac{\rme^{-z_{1}^2/2}}{(2\uppi)^{1/2}} \rmd z_1
\end{align}
where
\begin{align*}
A_{21}(z_{-1}) & = \int_0^{b(z_{-1})} \defEns{1+\rme^{-\sqrt{\step/2}  \norm{\nablaU(x)} z_1}-\rme^{-\sqrt{2\step} \norm{\nablaU(x)} z_1}} \frac{\rme^{-z_{1}^2/2}}{(2\uppi)^{1/2}} \rmd z_1 \eqsp, \\
A_{22}(z_{-1}) &= \int_{(\phirwm(z_{-1}) \vee 0) \wedge \brwm(z_{-1})}^{\corwm(z_{-1}) \wedge \brwm(z_{-1})} \rme^{-\sqrt{\step/2} \norm{\nablaU(x)} z_1 - \rrwm(z)} \defEns{1 - \rme^{\rrwm(z)}} \frac{\rme^{-z_1^2/2}}{(2\uppi)^{1/2}} \rmd z_1 \eqsp,\\
A_{23}(z_{-1}) &= \int_{(\phirwm(z_{-1}) \vee 0) \wedge \brwm(z_{-1})}^{\corwm(z_{-1}) \wedge \brwm(z_{-1})} \rme^{-\sqrt{2\step}\norm{\nablaU(x)} z_1} \defEns{1 - \rme^{-2\rrwm(z)}} \frac{\rme^{-z_1^2/2}}{(2\uppi)^{1/2}} \rmd z_1 \eqsp, \\
A_{24}(z_{-1}) &= \int_{\corwm(z_{-1}) \wedge \brwm(z_{-1})}^{\brwm(z_{-1})} \rme^{-\sqrt{\step/2} \norm{\nablaU(x)} z_1 - \rrwm(z)} \defEns{1 - \rme^{\rrwm(z)}} \frac{\rme^{-z_1^2/2}}{(2\uppi)^{1/2}} \rmd z_1 \eqsp,\\
A_{25}(z_{-1}) &= \int_{\corwm(z_{-1}) \wedge \brwm(z_{-1})}^{\brwm(z_{-1})} \rme^{-\sqrt{2\step}\norm{\nablaU(x)} z_1} \defEns{1 - \rme^{-2\rrwm(z)}} \frac{\rme^{-z_1^2/2}}{(2\uppi)^{1/2}} \rmd z_1 \eqsp.
\end{align*}
By \eqref{eq:phirwm-prop-1}, $\{\invpi(x)/\invpi(x+\sqrt{2\step}z)\}^{1/2} \leq 1$ for all $z_1\in\ccint{0,\phirwm(z_{-1}) \vee 0}$. Hence, the last term in the right hand side of \eqref{eq:A2z1-temp} is nonpositive and we get
\begin{equation}\label{eq:A2z1}
  A_2(z_{-1}) \leq A_{21}(z_{-1}) + A_{22}(z_{-1}) + A_{23}(z_{-1}) + A_{24}(z_{-1}) + A_{25}(z_{-1}) \eqsp.
\end{equation}
Combining \eqref{eq:A1z1} and \eqref{eq:A2z1}, we obtain for all $z_{-1}\in\boulefermeed{0}{\rayrwm}{d-1}$
\begin{multline}\label{eq:RWM-drift-3}
A_1(z_{-1}) +A_2(z_{-1}) \leq A_{11}(z_{-1}) + A_{21}(z_{-1}) + A_{12}(z_{-1}) + A_{13}(z_{-1}) + A_{14}(z_{-1}) \\
+ A_{22}(z_{-1}) + A_{23}(z_{-1}) + A_{24}(z_{-1}) + A_{25}(z_{-1}) \eqsp.
\end{multline}

\paragraph*{Step 4: upper bound on $A_1(z_{-1}) +A_2(z_{-1})$}
We upper bound each term in the right hand side of \eqref{eq:RWM-drift-3} and we first consider the terms $A_{11} + A_{21}$. Define $\arwm:\ocint{0,\bgamma} \times \rset^d \to \rset_+$ for all $\steptilde\in\ocint{0,\bgamma}$ and $\xtilde\in\rset^d$, $\norm{\xtilde}\geq \widetilde{M}$ by
\begin{equation}\label{eq:def-arwm}
  \arwm(\steptilde,\xtilde) = \sqrt{\steptilde/2} \norm{\nablaU(\xtilde)} \eqsp.
\end{equation}
We have for all $z_{-1}\in\boulefermeed{0}{\rayrwm}{d-1}$,
\begin{equation}\label{eq:up-bound-Jz1}
  A_{11}(z_{-1}) + A_{21}(z_{-1}) \leq \Grwm(\arwm(\step,x))
\end{equation}
where $\Grwm$ is defined in \eqref{eq:def-Grwm}.

We now consider the remainder terms $A_{12}(z_{-1}), A_{13}(z_{-1}), A_{14}(z_{-1}), A_{22}(z_{-1})$, $A_{23}(z_{-1}),A_{24}(z_{-1})$ and $A_{25}(z_{-1})$ in \eqref{eq:RWM-drift-3}.
Let $z_{-1}\in\boulefermeed{0}{\rayrwm}{d-1}$. By definition of $\corwm(z_{-1})$, see \eqref{eq:def-brwm-corwm}, we have for all $z_{1}\in\ccint{-\brwm(z_{-1}), -\corwm(z_{-1}) \vee -\brwm(z_{-1})}$, $z\notin\cone{0}{\thetag}$, and by \eqref{eq:driftRWM-rest-term-dom}
\begin{equation*}
  \sqrt{\step/2} \norm{\nablaU(x)} z_1 + \rrwm(z) \leq (1/2)\sqrt{\step/2} \norm{\nablaU(x)} z_1 \eqsp.
\end{equation*}
Combining it with $1-\rme^{s} \leq \absolute{s}$ for all $s\in\rset$, \eqref{eq:RWM-drift2-1} and \eqref{eq:def-rrwm}, we get
\begin{equation*}
  A_{12}(z_{-1}) \leq \crwm \int_{-\brwm(z_{-1})}^{-\corwm(z_{-1}) \vee -\brwm(z_{-1})} \rme^{(1/2)\sqrt{\step/2} \norm{\nablaU(x)} z_1} \step \norm{\DD^2 \pU(x)} \norm[2]{z} \frac{\rme^{-z_1^2/2}}{(2\uppi)^{1/2}} \rmd z_1 \eqsp.
\end{equation*}
Considering the upper bound $\norm[2]{z} \leq \rayrwm^2$ or the decomposition $\norm[2]{z} = z_1^2 + \norm[2]{z_{-1}}$, we obtain
\begin{equation*}
  A_{12}(z_{-1}) \leq \crwm \step \norm{\DD^2 \pU(x)} \min\defEns{\rayrwm^2 \rme^{\arwm(\step,x)^2/8} \cdfc(a(\step,x)/2), (\norm[2]{z_{-1}}+1)}
\end{equation*}
where $\arwm(\step,x)$ is defined in \eqref{eq:def-arwm}, and using for all $t>0$, $\rme^{t^2/8} \cdfc(t/2) \leq \sqrt{2}/(\sqrt{\uppi}t)$, we get
\begin{equation}\label{eq:up-bound-I1z1}
  A_{12}(z_{-1}) \leq \crwm \min\parenthese{\sqrt{\step} \rayrwm^2 \frac{\norm{\DD^2 \pU(x)}}{\norm{\nablaU(x)}}, (\norm[2]{z_{-1}}+1) \frac{\norm{\DD^2 \pU(x)}}{\norm{\nablaU(x)}^2}\arwm(\step,x)^2} \eqsp.
\end{equation}
Similarly, we have the same upper bound \eqref{eq:up-bound-I1z1} for $A_{24}(z_{-1})$ and $A_{25}(z_{-1})$.

Using for all $s\in\rset$, $1-\rme^{s} \leq \min(1,\absolute{s})$, $\invpi(x)/\invpi(x+\sqrt{2\step}z) \leq 1$ for $z\in\acceptrwm_{x,\step}$, \eqref{eq:def-tan-thetag}, \eqref{eq:def-brwm-corwm}, \eqref{eq:RWM-drift2-1}, \eqref{eq:RWM-drift-taylor-U}, \eqref{eq:def-rrwm} and \eqref{eq:phirwm-prop-1}, we have for all $z_{-1}\in\boulefermeed{0}{\rayrwm}{d-1}$,
\begin{align}
  \nonumber
  A_{13}(z_{-1}) & \leq \int_{-\brwm(z_{-1}) \vee -\corwm(z_{-1})}^{(\phirwm(z_{-1}) \vee -\brwm(z_{-1}))\wedge 0} \min(1,\absolute{\rrwm(z)}) \frac{\rme^{-z_1^2/2}}{(2\uppi)^{1/2}} \rmd z_1 \\
  \nonumber
  & \leq \corwm(z_{-1}) \min(1,\crwm \norm{\DD^2 \pU(x)} \step \rayrwm^2) \\
  \nonumber
  & \leq \crwm \sqrt{\step} \rayrwm^2 \frac{\norm{\DD^2 \pU(x)}}{\norm{\nablaU(x)}} \min(1,\crwm \norm{\DD^2 \pU(x)} \step \rayrwm^2) \\
  \label{eq:up-bound-I2z1}
  & \leq \crwm \min\parenthese{\sqrt{\step} \rayrwm^2 \frac{\norm{\DD^2 \pU(x)}}{\norm{\nablaU(x)}}, \sqrt{\step}\rayrwm^4 \frac{\norm[2]{\DD^2 \pU(x)}}{\norm{\nablaU(x)}^3} \arwm(\step,x)^2} \eqsp.
\end{align}
where $\arwm(\step,x)$ is defined in \eqref{eq:def-arwm}. Similarly, we have the same upper bound \eqref{eq:up-bound-I2z1} for $A_{22}(z_{-1})$ and $A_{23}(z_{-1})$.

Concerning $A_{14}(z_{-1})$, note first that by definition of $\phirwm(z_{-1})$, see \eqref{eq:def-phirwm}, \eqref{eq:phirwm-prop-1}, \eqref{eq:phirwm-prop-2}, and \eqref{eq:RWM-drift-taylor-U}, \eqref{eq:def-rrwm} we have for all $z_1\in\ccint{(\phirwm(z_{-1}) \vee -\brwm(z_{-1}))\wedge 0, 0}$
\begin{equation}\label{eq:rrwm-geq-nablaU}
  2\rrwm(z) \geq \absolute{\sqrt{2\step} \norm{\nablaU(x)} z_1} \eqsp.
\end{equation}
Using $1-\rme^{s} \leq \absolute{s}$ for all $s\in\rset$, $\sqrt{\invpi(x+\sqrt{2\step}z)/\invpi(x)} \leq 1$ for all $z\in(\acceptrwm_{x,\step})^{\complementaire}$, \eqref{eq:RWM-drift2-1}, \eqref{eq:def-rrwm} and \eqref{eq:rrwm-geq-nablaU}, we obtain
\begin{align*}
&\defEns{1-\rme^{\sqrt{\step/2} \norm{\nablaU(x)} z_1}} +\sqrt{\frac{\invpi(x+\sqrt{2\step}z)}{\invpi(x)}} \defEns{1-\rme^{-\sqrt{\step/2}\norm{\nablaU(x)} z_1 -\rrwm(z)}} \\
&\phantom{-------} \leq \min\parenthese{1, \sqrt{\step/2} \norm{\nablaU(x)} \absolute{z_1}} \\
&\phantom{------------}+ \min\parenthese{1, \absolute{\sqrt{\step/2} \norm{\nablaU(x)} z_1 + \rrwm(z)}} \\
&\phantom{-------} \leq \crwm \min\parenthese{1, \step \norm{\DD^2 \pU(x)} \rayrwm^2} \eqsp.
\end{align*}
By \eqref{eq:def-tan-thetag}, \eqref{eq:def-brwm-corwm} and using $\absolute{\phirwm(z_{-1})} \leq \corwm(z_{-1})$, we obtain
\begin{equation}\label{eq:up-bound-I3z1}
A_{14}(z_{-1}) \leq \crwm \min\parenthese{\sqrt{\step} \rayrwm^2 \frac{\norm{\DD^2 \pU(x)}}{\norm{\nablaU(x)}}, \sqrt{\step} \rayrwm^4 \frac{\norm[2]{\DD^2 \pU(x)}}{\norm{\nablaU(x)}^3} \arwm(\step,x)^2}
\end{equation}
where $\arwm(\step,x)$ is defined in \eqref{eq:def-arwm}.

\paragraph*{Step 5: conclusion}
Let $\epsilon=(1/4)\min(1, t_0^2)$ where $t_0$ is defined in \Cref{lemma:Grwm}.
Let $\steptilde>0$ be defined by $\crwm \sqrt{\steptilde}\rayrwm[\steptilde]^2 \chirwm \max\parenthese{1, \rayrwm[\steptilde]^2 \chirwm^2} = \epsilon$ where $\crwm$ is the maximum of the positive constants given in \eqref{eq:up-bound-I1z1}, \eqref{eq:up-bound-I2z1} and \eqref{eq:up-bound-I3z1}.
Define then $\bgamma_1 = \bgamma \wedge \steptilde \wedge t_0^2 \wedge \min(1,\chirwm^{2}/2)/10$ where $\bgamma$ is given in \eqref{eq:def-gambar}.
By \Cref{assumption:U-dom-drift-RWM}, there exists $\overline{M}\geq \widetilde{M}$ such that for all $x\in\rset^d$, $\norm{x}\geq \overline{M}$, $\crwm d \norm{\DD^2 \pU(x)} / \norm[2]{\nablaU(x)} \leq \epsilon$, where $\crwm$ is given in \eqref{eq:up-bound-I1z1}.

By \eqref{eq:up-bound-I1z1}, \eqref{eq:up-bound-I2z1} and \eqref{eq:up-bound-I3z1}, we have for all $x\in\rset^d$, $\norm{x}\geq \overline{M}$ and $\step\in\ocint{0,\bgamma_1}$
\begin{align}
\nonumber
&\int_{z_{-1}\in\boulefermeed{x}{\rayrwm}{d-1}} \{A_{12}(z_{-1}) + A_{13}(z_{-1}) + A_{14}(z_{-1}) + A_{22}(z_{-1}) \\
\nonumber
&\phantom{------}+ A_{23}(z_{-1}) + A_{24}(z_{-1}) + A_{25}(z_{-1})\} \frac{\rme^{-\norm[2]{z_{-1}}/2}}{(2\uppi)^{(d-1)/2}} \rmd z_{-1} \\
\label{eq:drift-remainder-terms}
&\phantom{-----------------}\leq \min(\epsilon, \epsilon \arwm(\step,x)^2)
\end{align}
where $\arwm(\step,x)$ is defined in \eqref{eq:def-arwm}. We consider now two cases:
\begin{itemize}
  \item
  if $\arwm(\step,x)>t_0$, by \eqref{eq:RWM-drift-3}, \eqref{eq:up-bound-Jz1}, \eqref{eq:drift-remainder-terms} and \Cref{lemma:Grwm}, for all $x\in\rset^d$, $\norm{x} \geq \overline{M}$, $\step\in\ocint{0,\bgamma_1}$
  \begin{multline*}
  \int_{z_{-1}\in\boulefermeed{0}{\rayrwm}{d-1}} \defEns{A_1(z_{-1}) + A_2(z_{-1})} \frac{\rme^{-\norm[2]{z_{-1}}/2}}{(2\uppi)^{(d-1)/2}} \rmd z_{-1} \\
  \leq 1 -(t_0^2/2) + \epsilon \leq 1 - (t_0^2/4) \leq 1 - (1/4) \step \eqsp.
  \end{multline*}
  \item
  if $\arwm(\step,x)\in\ocint{0,t_0}$, by \eqref{eq:RWM-drift-3}, \eqref{eq:up-bound-Jz1}, \eqref{eq:drift-remainder-terms}, \Cref{lemma:Grwm} and \Cref{assumption:U-dom-drift-RWM}, for all $x\in\rset^d$, $\norm{x} \geq \overline{M}$, $\step\in\ocint{0,\bgamma_1}$,
  \begin{multline*}
  \int_{z_{-1}\in\boulefermeed{0}{\rayrwm}{d-1}} \defEns{A_1(z_{-1}) + A_2(z_{-1})} \frac{\rme^{-\norm[2]{z_{-1}}/2}}{(2\uppi)^{(d-1)/2}} \rmd z_{-1} \\
  \leq 1 -(1/2-\epsilon)\arwm(\step,x)^2 \leq 1 - \frac{\step\norm[2]{\nablaU(x)}}{8} \leq 1 - \frac{\chirwm^{-2} \step}{8} \eqsp.
  \end{multline*}
\end{itemize}
Combining it with \eqref{eq:upper-bound-RrwmV-1-2}, we obtain for all $x\in\rset^d$, $\norm{x}\geq \overline{M}$, $\step\in\ocint{0,\bgamma_1}$,
\[ \Rrwm \lV(x) / \lV(x) \leq 1 - \min(1,\chirwm^{-2}/2)\step/8  \eqsp. \]
Besides, denote by
\begin{equation*}
  A = \sup_{y,\norm{y}\leq \overline{M}} \defEns{\frac{\generator \lV(y)}{\lV(y)} + \bgamma_1^{1/2} \frac{\genrrwm \lV(y)}{\lV(y)}} \eqsp.
\end{equation*}
By \Cref{prop:RWM-dev-ergo}, we have for all $x\in\rset^d$, $\norm{x} \leq \overline{M}$, $\step\in\ocint{0,\bgamma_1}$, $\Rrwm \lV(x)/\lV(x) \leq 1 + \step A$. We get then for all $x\in\rset^d$, $\step\in\ocint{0,\bgamma_1}$,
\begin{multline*}
  \Rrwm \lV(x) \leq \parenthese{1 - \frac{\min(1,\chirwm^{-2}/2)\step}{8}} \lV(x) \\
  + \step\parenthese{A+\frac{\min(1,\chirwm^{-2}/2)}{8}} \lV(x) \1\defEns{\norm{x} \leq \overline{M}}
\end{multline*}
which concludes the proof.

\subsection*{A version of the implicit function theorem}

The following proposition is taken from \cite[Theorem 7.21]{apostol1969calculus} and \cite[Theorem 6]{Border2013NotesOT}.

\begin{proposition}\label{prop:implicit-function-thm}
Let $\compact$ be a compact metric space and $f:\rset \times \compact \to \rset$ be a continuous function. Assume that there exist $M\geq m >0$ such that for all $z\in\compact$, $x,y\in\rset$, $x\neq y$,
\begin{equation}\label{eq:assum-implicit-func-thm}
  m \leq \frac{f(x,z) - f(y,z)}{x - y} \leq M \eqsp.
\end{equation}
Then, there exists a unique continuous function $\xi:\compact\to\rset$ satisfying for all $z\in\compact$, $f(\xi(z), z) = 0$.
\end{proposition}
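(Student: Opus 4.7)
The plan is to first establish pointwise existence and uniqueness of $\xi(z)$ by elementary means (monotonicity $+$ intermediate value theorem), then bound $\xi$ uniformly on $\compact$ so that a compactness argument becomes available, and finally deduce continuity via uniform continuity of $f$ on a compact rectangle.

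First I would fix $z \in \compact$ and note that the lower bound in \eqref{eq:assum-implicit-func-thm} implies that $x \mapsto f(x,z)$ is strictly increasing; moreover, integrating the bound gives
\begin{equation*}
  f(0,z) + m x \leq f(x,z) \leq f(0,z) + M x \quad \text{for } x \geq 0 \eqsp,
\end{equation*}
and the symmetric inequality for $x \leq 0$. In particular $f(x,z) \to \pm \infty$ as $x \to \pm \infty$, so by continuity and the intermediate value theorem there exists a unique $\xi(z) \in \rset$ with $f(\xi(z), z) = 0$. This defines the map $\xi : \compact \to \rset$ unambiguously.

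Next I would bound $\xi$ on $\compact$. Applying the relation $0 = f(\xi(z), z)$ to the above two-sided bound (taken at $x = \xi(z)$ with the appropriate sign) yields $m \absolute{\xi(z)} \leq \absolute{f(0,z)}$. Since $f(0,\cdot)$ is continuous on the compact metric space $\compact$, the quantity $B \defeq m^{-1} \sup_{z \in \compact} \absolute{f(0,z)}$ is finite and $\xi(\compact) \subset [-B, B]$.

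Finally, for continuity, given $z, z' \in \compact$, using $f(\xi(z),z) = f(\xi(z'),z') = 0$ and the lower bound in \eqref{eq:assum-implicit-func-thm} applied with $(x,y) = (\xi(z), \xi(z'))$ and parameter $z$,
\begin{equation*}
  m \absolute{\xi(z) - \xi(z')} \leq \absolute{f(\xi(z), z) - f(\xi(z'), z)} = \absolute{f(\xi(z'), z') - f(\xi(z'), z)} \eqsp.
\end{equation*}
Because $\xi(z') \in [-B, B]$ and $f$ is uniformly continuous on the compact set $[-B,B] \times \compact$, for every $\varepsilon > 0$ there exists $\delta > 0$ such that $d_\compact(z,z') < \delta$ forces $\absolute{f(\xi(z'), z') - f(\xi(z'), z)} \leq m \varepsilon$. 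This yields $\absolute{\xi(z) - \xi(z')} \leq \varepsilon$, establishing (uniform) continuity of $\xi$. There is no real obstacle here — the only mildly delicate point is that $\rset$ is not compact, which is why the a priori bound $\xi(\compact) \subset [-B,B]$ via the lower bound of the difference quotient is needed before invoking uniform continuity of $f$.
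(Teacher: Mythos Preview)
Your proof is correct. The paper takes a different route: it applies the Banach fixed point theorem on the complete metric space $\Csetfunction(\compact)$ of continuous functions with the uniform norm, by showing that the map $\psi(g)(z) = g(z) - M^{-1} f(g(z),z)$ is a contraction with factor $1 - m/M$; the unique fixed point is the desired $\xi$, and continuity is obtained for free because the argument never leaves $\Csetfunction(\compact)$. Your approach is more elementary --- IVT for pointwise existence, then an a priori bound and uniform continuity of $f$ on $[-B,B]\times\compact$ for continuity --- and in fact uses only the lower bound $m$ in \eqref{eq:assum-implicit-func-thm}, never $M$; so you actually prove a slightly stronger statement. The paper's argument is shorter and avoids the two-step structure, but needs both bounds to build the contraction; yours is more hands-on but shows that the upper bound $M$ is superfluous for the conclusion.
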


\begin{proof}
  Denote by $\Csetfunction(\compact)$ the set of real continuous functions on $\compact$. By standard arguments, $\Csetfunction(\compact)$ is complete under the uniform norm defined for all $g_1,g_2\in\Csetfunction(\compact)$ by $\Vnorm[\infty]{g_1-g_2} = \sup_{z\in\compact} \norm{g_1(z) - g_2(z)}$. Define $\psi:\Csetfunction(\compact)\to\Csetfunction(\compact)$ for all $g\in\Csetfunction(\compact)$ and $z\in\compact$ by
  \begin{equation*}
    \psi(g)(z) = g(z) - (1/M)f(g(z),z) \eqsp.
  \end{equation*}
  By \eqref{eq:assum-implicit-func-thm}, we have for all $g,h\in\Csetfunction(\compact)$ and $z\in\compact$,
  \begin{equation*}
    \absolute{\psi(g)(z) - \psi(h)(z)} \leq \defEns{1-(m/M)}\absolute{g(z) - h(z)}
  \end{equation*}
  and $\Vnorm[\infty]{\psi(g) - \psi(h)} \leq \{1-(m/M)\}\Vnorm[\infty]{g - h}$. $\psi$ is a contraction on $\Csetfunction(\compact)$ and has a unique fixed point $\xi$ in $\Csetfunction(\compact)$ which satisfies $f(\xi(z),z) = 0$ for all $z\in\compact$.
\end{proof}


\section{Additional results for the numerical experiments}
\label{sec:additional-results-numeric}

\subsection{One dimensional example: from theory to practice}
\label{subsec:1-2d-numerics-practice}

We consider the setup of \Cref{subsec:numerical-comparison-toy-examples}. In order to be able to numerically integrate, we truncate the integrals to a finite interval $\ccint{-\bound, \bound}$ for $\bound>0$, \ie~we approximate $\invpi(f)$, $\sPoic'$, $\invpi(\base_i' \base_j')$, $\invpi(\tzf \base_i)$ for $1\leq i,j \leq \pb$ by
\begin{align*}
  &\invpi(f) \approx \int_{-\bound}^{\bound} f(t) \invpi(t) \rmd t \eqsp, \\
  &\sPoic'(x) \approx -(1/\invpi(x))\int_{-\bound}^{x} \invpi(t) \defEns{ f(t) - \int_{-\bound}^{\bound} f(u) \invpi(u) \rmd u} \rmd t \eqsp, \\
  &\invpi(\base_i' \base_j') \approx \int_{-\bound}^{\bound} \invpi(t) \base_i'(t) \base_j'(t) \rmd t \eqsp, \\
  &\invpi(\tzf \base_i) \approx \int_{-\bound}^{\bound} \defEns{ f(t) - \int_{-\bound}^{\bound} f(u) \invpi(u) \rmd u} \base_i(t) \rmd t \eqsp.
\end{align*}
We consider several values for $\bound \in\defEns{3, 4, 5, 6}$ and we expect that when $\int_{-\bound}^{\bound} \invpi(t) \rmd t$ is close to $1$, the truncation is a good approximation of the true quantity.
We are particularly interested in the value of the asymptotic variance of the Langevin diffusion $\varinf(f) = 2\invpi(\sPoic\tzf)$ and the optimal parameters $\paramstar$, $\paramzv$ defined in \eqref{eq:min-asymp-var-diffusion} and \eqref{eq:paramzv}. Approximations of these quantities are reported in \Cref{table:1d-truncation-var-param-zv-cv} for different truncation boundaries $\bound\in\defEns{3,4,5,6}$; concerning $\paramstar$ and $\paramzv$ which are $\pb$-dimensional vectors, we only list their first coordinate, $[\paramstar]_1$ and $[\paramzv]_1$. We observe that truncating the integrals to $\bound = 5$ is sufficient to obtain valid and stable results. It is coherent with the fact that most of the mass of $\invpi$ is contained in this interval, see \Cref{figure:pi_1d}.

\begin{table}
  \centering
  \begin{tabular}{|c|c|c|c|c|}
    \hline
    $\bound$ & $3$ & $4$ & $5$ & $6$ \\
    \hline
    approx. of $\varinf(t)$ & $89.28$  & $92.41$ & $92.45$ & $92.45$ \\
    \hline
    approx. of $[\paramstar]_1$ & $-30.19$ & $-34.37$ & $-34.42$ & $-34.42$ \\
    \hline
    approx. of $[\paramzv]_1$ & $-27.70$ & $-28.57$ & $-28.56$ & $-28.56$ \\
    \hline
  \end{tabular}
  \caption{Approximations of $\varinf(t)$, $[\paramstar]_1$ and $[\paramzv]_1$, function of the truncation boundary $\bound$.}\label{table:1d-truncation-var-param-zv-cv}
\end{table}

\begin{figure}
\begin{center}
\includegraphics[scale=0.4]{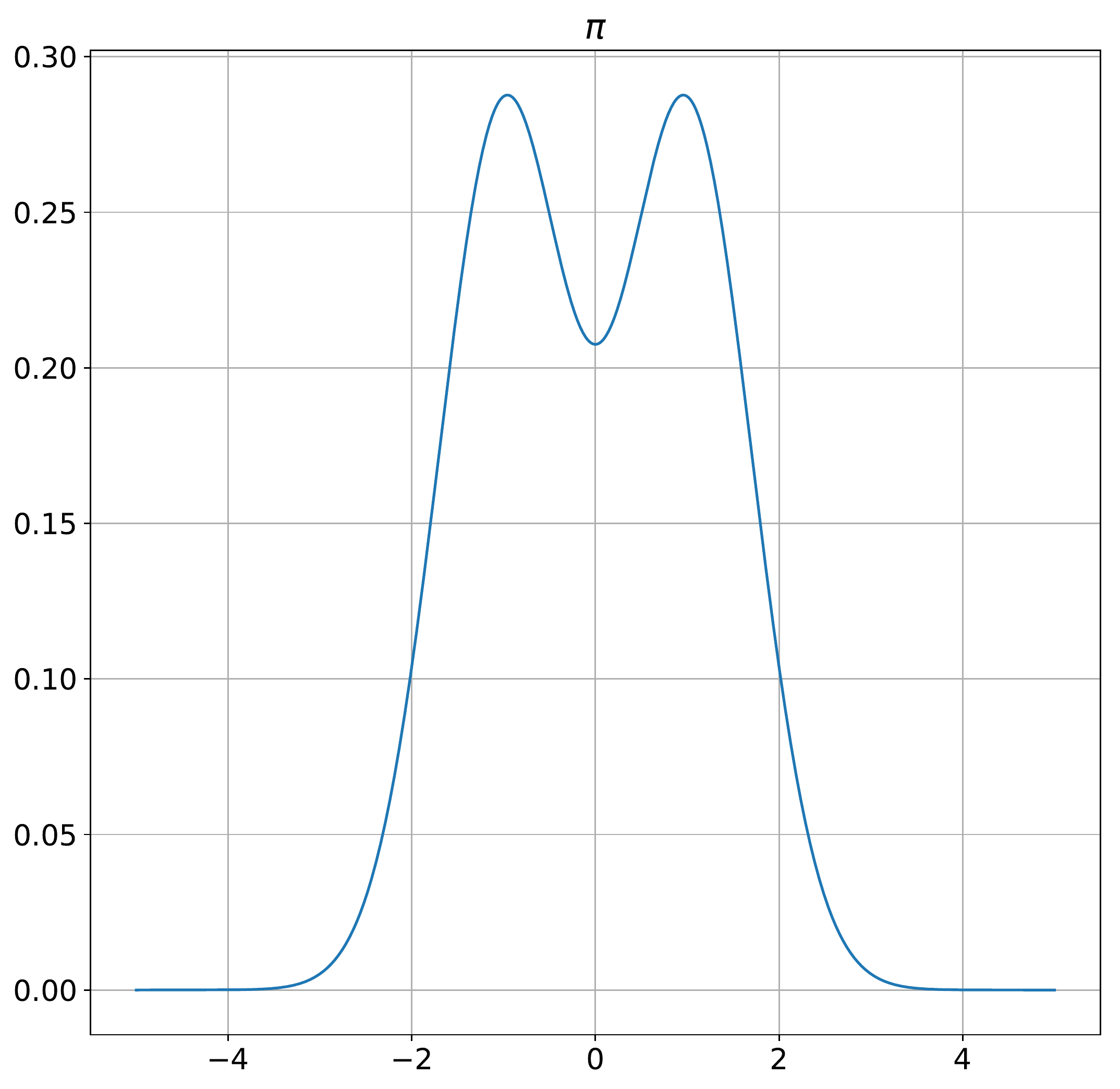}
\end{center}
\caption{\label{figure:pi_1d} Plot of $\invpi$.}
\end{figure}

It is worth to point out that, although the quantities of interest to construct a control variate, \ie~$\varinf(f), \paramstar, \paramzv$, can be accurately estimated by truncating the integrals, others, like $\sPoic'$, highly depend on the truncation boundary $\bound$. We plot in \Cref{figure:dpois_1d} several approximations of $\sPoic'$, by truncating the integrals to $\bound\in\defEns{3,4,5,6}$. Note that by an integration by parts, $\lim_{x\to\pm\infty} \sPoic'(x) / x^2 = C$, with $C>0$. These plots highlight that truncating the integrals has a significant impact on the approximation of $\sPoic'$.

\begin{figure}
\begin{center}
\includegraphics[scale=0.4]{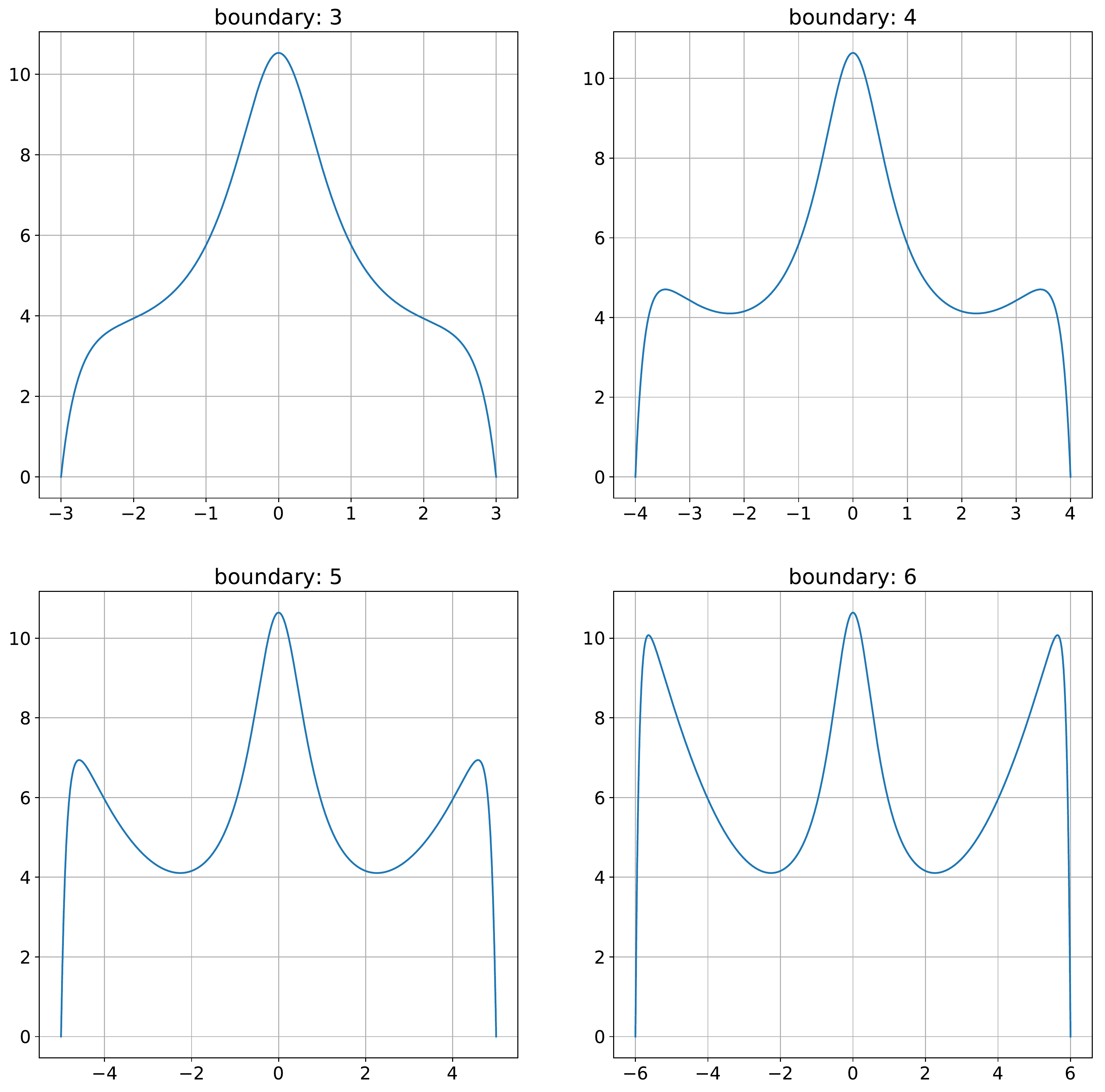}
\end{center}
\caption{\label{figure:dpois_1d} Plots of $\sPoic'$ for $\bound\in\defEns{3,4,5,6}$.}
\end{figure}

In \Cref{figure:approx_dpois_Lpois_1d_1,figure:approx_dpois_Lpois_1d_2}, we plot $\ControlFunc_\param '$ and $\generator \ControlFunc_\param$ for $\param\in\defEns{\paramstar, \paramzv}$ where $\ControlFunc_\param = \ps{\param}{\base}$, $\base = \defEns{\base_i}_1^{\pb}$ are defined in \eqref{eq:def-basis-gaussian-kernels} and $\pb\in\defEns{4,\ldots,10}$. It illustrates that $\sPoic'$ and $\tzf$ are better approximated for even $\pb$; for $\pb\geq 8$, $\ControlFunc_{\paramstar} '$, $\ControlFunc_{\paramzv} '$ and $\generator \ControlFunc_{\paramstar}$, $\generator \ControlFunc_{\paramzv}$ are very close and the two methods obtain similar variance reductions.

\begin{figure}
\begin{center}
\includegraphics[scale=0.4]{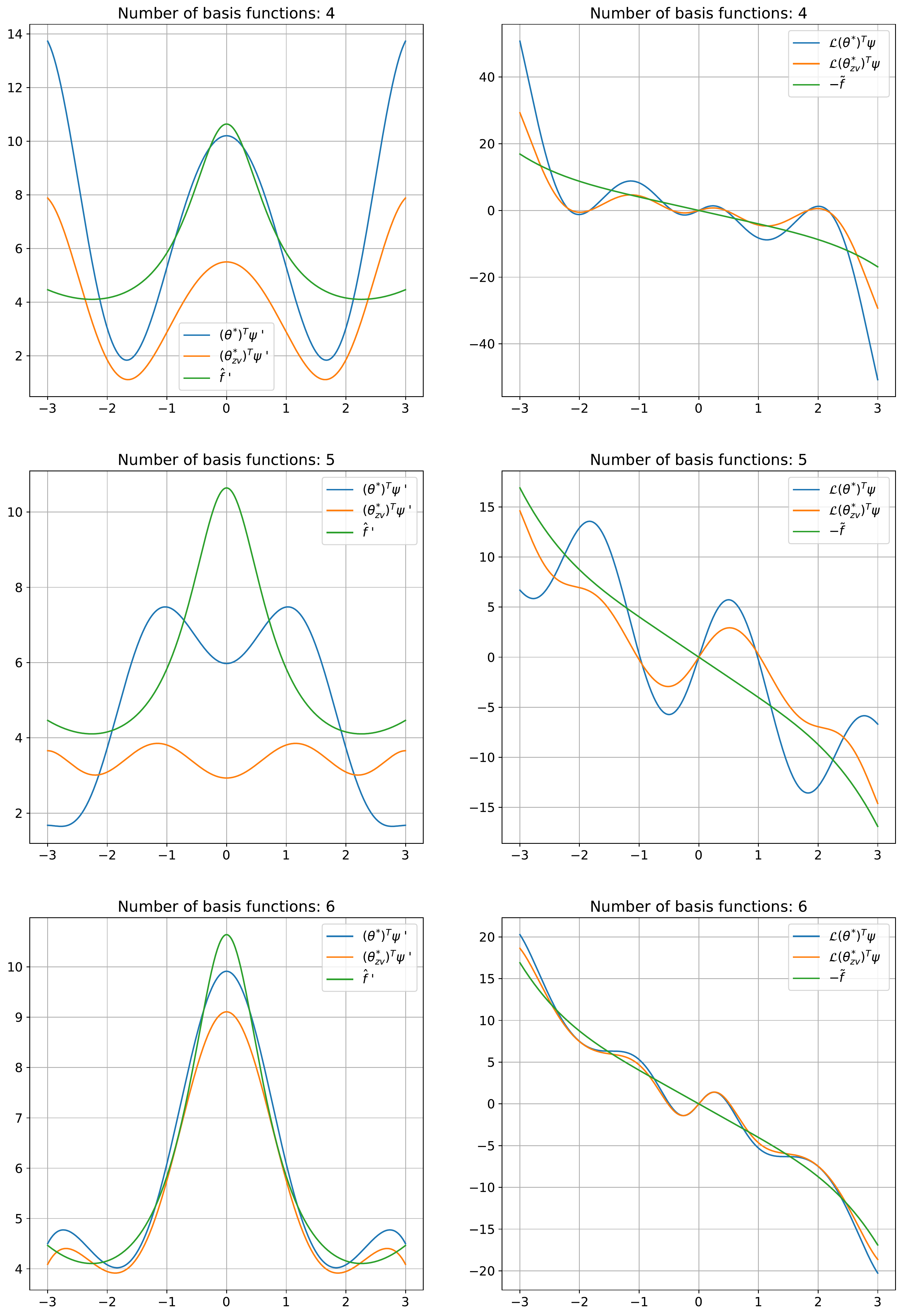}
\end{center}
\caption{\label{figure:approx_dpois_Lpois_1d_1} Plots of $\ControlFunc_\param '$ and $\generator \ControlFunc_\param$ for $\param\in\defEns{\paramstar, \paramzv}$ and $\pb\in\defEns{4,5,6}$.
}
\end{figure}

\begin{figure}
\begin{center}
\includegraphics[scale=0.3]{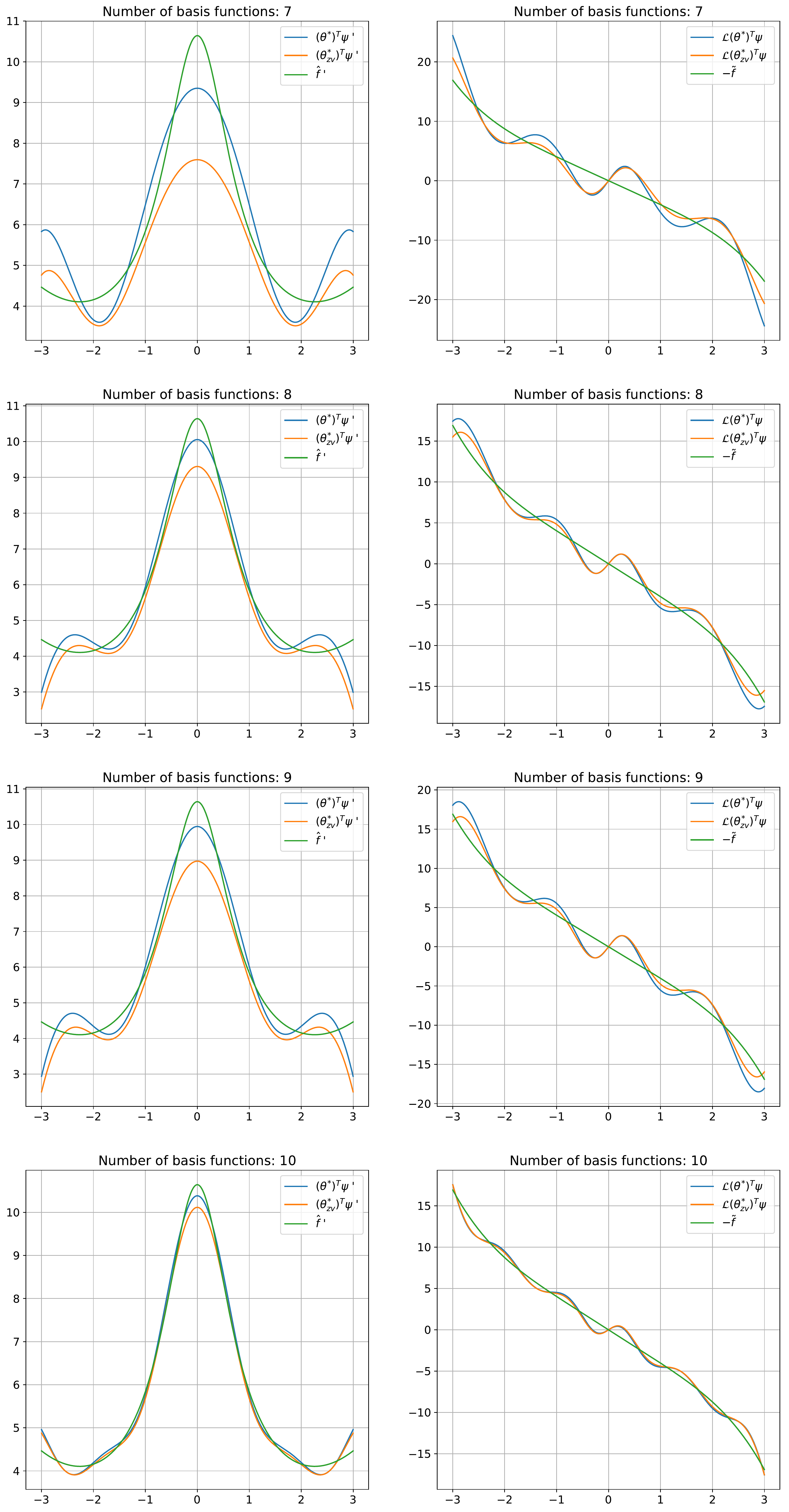}
\end{center}
\caption{\label{figure:approx_dpois_Lpois_1d_2} Plots of $\ControlFunc_\param '$ and $\generator \ControlFunc_\param$ for $\param\in\defEns{\paramstar, \paramzv}$ and $\pb\in\defEns{7,\ldots,10}$.
}
\end{figure}


\subsection{Proof of \Cref{lemma:log-probit-assumptions}}
\label{subsec:proof-log-probit-assumptions}

We have for all $\bb\in\rset^d$
\begin{align*}
  \nabla \Ub{l}(\bb) & = - \xb^{\Tr} \yb + \sum_{i=1}^{\nb} \xb_i / (1+\rme^{-\xb_i^{\Tr} \bb}) + \bb / \varbb \eqsp, \\
  \DD^2 \Ub{l}(\bb) &= \sum_{i=1}^{\nb} \frac{\rme^{-\xb_i^{\Tr} \bb}}{\parenthese{1+\rme^{-\xb_i^{\Tr} \bb}}^2} \xb_i \xb_i^{\Tr} + \Id / \varbb \eqsp, \\
  \DD^3 \Ub{l}(\bb) &= \sum_{i=1}^{\nb} \frac{\rme^{-\xb_i^{\Tr} \bb}}{\parenthese{1+\rme^{-\xb_i^{\Tr} \bb}}^2} \defEns{2\frac{\rme^{-\xb_i^{\Tr} \bb}}{1+\rme^{-\xb_i^{\Tr} \bb}}-1} \xb_i^{\otimes 3} \eqsp.
\end{align*}
Using for all $i\in\defEns{1,\ldots,\nb}$ and $\bb\in\rset^d$ that $0<\rme^{-\xb_i^{\Tr} \bb}/(1+\rme^{-\xb_i^{\Tr} \bb})^2 \leq 1/4$, $\Ub{l}$ is strongly convex, gradient Lipschitz and satisfies \Cref{assumption:U-Sinfty}, \eqref{eq:cond-vgeom-ula}, \Cref{ass:condition_MALA} and \Cref{assumption:U-dom-drift-RWM}.

For $\Ub{p}$, define $\th:\rset\to\rset_{-}$ for all $t\in\rset$ by $\th(t) = \ln(\Phi(t))$. We have for all $t\in\rset$,
\begin{align*}
  & \th'(t) = \frac{\Phi'(t)}{\Phi(t)} \quad,\quad \th''(t) = -\frac{\Phi'(t)}{\Phi(t)}\defEns{t + \frac{\Phi'(t)}{\Phi(t)}} \eqsp,\\
  & \th^{(3)}(t) = \frac{\Phi'(t)}{\Phi(t)}\defEns{2\parenthese{\frac{\Phi'(t)}{\Phi(t)}}^2 + 3t\frac{\Phi'(t)}{\Phi(t)} + t^2 -1}
\end{align*}
and for all $\bb\in\rset^d$
\begin{align*}
  \nabla \Ub{p}(\bb) & = \sum_{i=1}^{\nb} \defEns{(1-\yb_i) \th'(-\xb_i^{\Tr}\bb)-\yb_i \th'(\xb_i^{\Tr}\bb)} \xb_i + \bb/\varbb \eqsp, \\
  \DD^2 \Ub{p}(\bb) &= \sum_{i=1}^{\nb} \defEns{-(1-\yb_i) \th''(-\xb_i^{\Tr}\bb)-\yb_i \th''(\xb_i^{\Tr}\bb)} \xb_i \xb_i^{\Tr} + \Id/\varbb \eqsp, \\
  \DD^3 \Ub{p}(\bb) &= \sum_{i=1}^{\nb} \defEns{(1-\yb_i) \th^{(3)}(-\xb_i^{\Tr}\bb)-\yb_i \th^{(3)}(\xb_i^{\Tr}\bb)} \xb_i^{\otimes 3} \eqsp.
\end{align*}
By an integration by parts, we have for all $t<0$
\begin{equation*}
  t + \frac{\Phi'(t)}{\Phi(t)} = -\frac{t}{\Phi(t)} \int_{-\infty}^{t}\frac{\rme^{-s^2/2}}{\sqrt{2\uppi}s^2} \rmd s
\end{equation*}
and $t+\Phi'(t)/\Phi(t) \geq 0$ for all $t\in\rset$. Let $t<0$ and $s=-t>0$. We have $\Phi(t) = \cdfc(s) = \erfc(s/\sqrt{2})/2$ where $\erfc:\rset\to\rset_+$ is the complementary error function defined for all $u\in\rset$ by $\erfc(u) = (2/\sqrt{\uppi})\int_{u}^{\plusinfty} \rme^{-v^2} \rmd v$. By \cite[Section 8.25, formula 8.254]{gradshteyn2014table}, we have the following asymptotic expansion for $s\to\plusinfty$
\begin{equation*}
  \cdfc(s) = \frac{\rme^{-s^2/2}}{\sqrt{2\uppi}s}\parenthese{1-s^{-2} + 3s^{-4} + O(s^{-6})} \eqsp.
\end{equation*}
Using that $\Phi'(t) = (2\uppi)^{-1/2} \rme^{-t^2/2}$ for all $t\in\rset$, we get asymptotically for $t\to-\infty$ and $s=-t\to\plusinfty$,
\begin{equation}\label{eq:asymptotic-expansion-Phi}
  \Phi'(t) / \Phi(t) = s \parenthese{1+s^{-2} - 2s^{-4} + O(s^{-6})}
\end{equation}
and $\lim_{t\to -\infty} \th''(t) = -1$. There exists then $C>0$ such that for all $t\in\rset$, $-C \leq h''(t) \leq 0$. $\Ub{p}$ is then strongly convex, gradient Lipschitz and satisfies \Cref{assumption:U-Sinfty} and \eqref{eq:cond-vgeom-ula}.
By \eqref{eq:asymptotic-expansion-Phi}, we have for $t\to-\infty$ and $s=-t\to\plusinfty$, $\th^{(3)}(t) = O(s^{-1})$. $\Ub{p}$ satisfies then \Cref{ass:condition_MALA} and \Cref{assumption:U-dom-drift-RWM}.

\subsection{Additional results for the Bayesian logistic and probit regressions}
\label{sec:suppl-probit-reg}

We first define the basis of functions $\basea$ and $\baseb$ based on first and second order polynomials respectively. Let $\basea = (\basea_1,\ldots, \basea_d)$ be given for $i\in\defEns{1,\ldots,d}$ and $\bb=(\bb_1,\ldots,\bb_d)\in\rset^d$ by $\basea_i(\bb) = \bb_i$ and $\baseb = (\baseb_1,\ldots,\baseb_{d(d+3)/2})$ be given for $\bb=(\bb_1,\ldots,\bb_d)\in\rset^d$ by
\begin{align*}
  \baseb_k(\bb) & = \bb_k \quad \text{for} \eqsp k\in\defEns{1,\ldots,d}  \eqsp, \quad
  \baseb_{k+d}(\bb) = \bb_k^2 \quad \text{for} \eqsp k\in\defEns{1,\ldots,d}  \eqsp,\\
  \baseb_k(\bb) & = \bb_i \bb_j \quad \text{for} \eqsp k=2d+(j-1)(d-j/2)+(i-j) \eqsp \text{and all} \eqsp 1 \leq j < i \leq d \eqsp.
\end{align*}
$\basea$ and $\baseb$ are in $\setpoly{\infty}(\rset^d,\rset)$ and are linearly independent in $\mrc(\rset^d,\rset)$.

We provide additional plots for the logistic regression, see \Cref{figure:log-1-add} and \Cref{figure:log-2}, and the results for the Bayesian probit regression presented in \Cref{sec:application_cv}, see Table~\ref{table:probit}, \Cref{figure:pro-1} and \Cref{figure:pro-2}. They are similar to the results obtained for the Bayesian logistic regression.

\begin{figure}
\hspace*{-1.5cm}\includegraphics[scale=0.5]{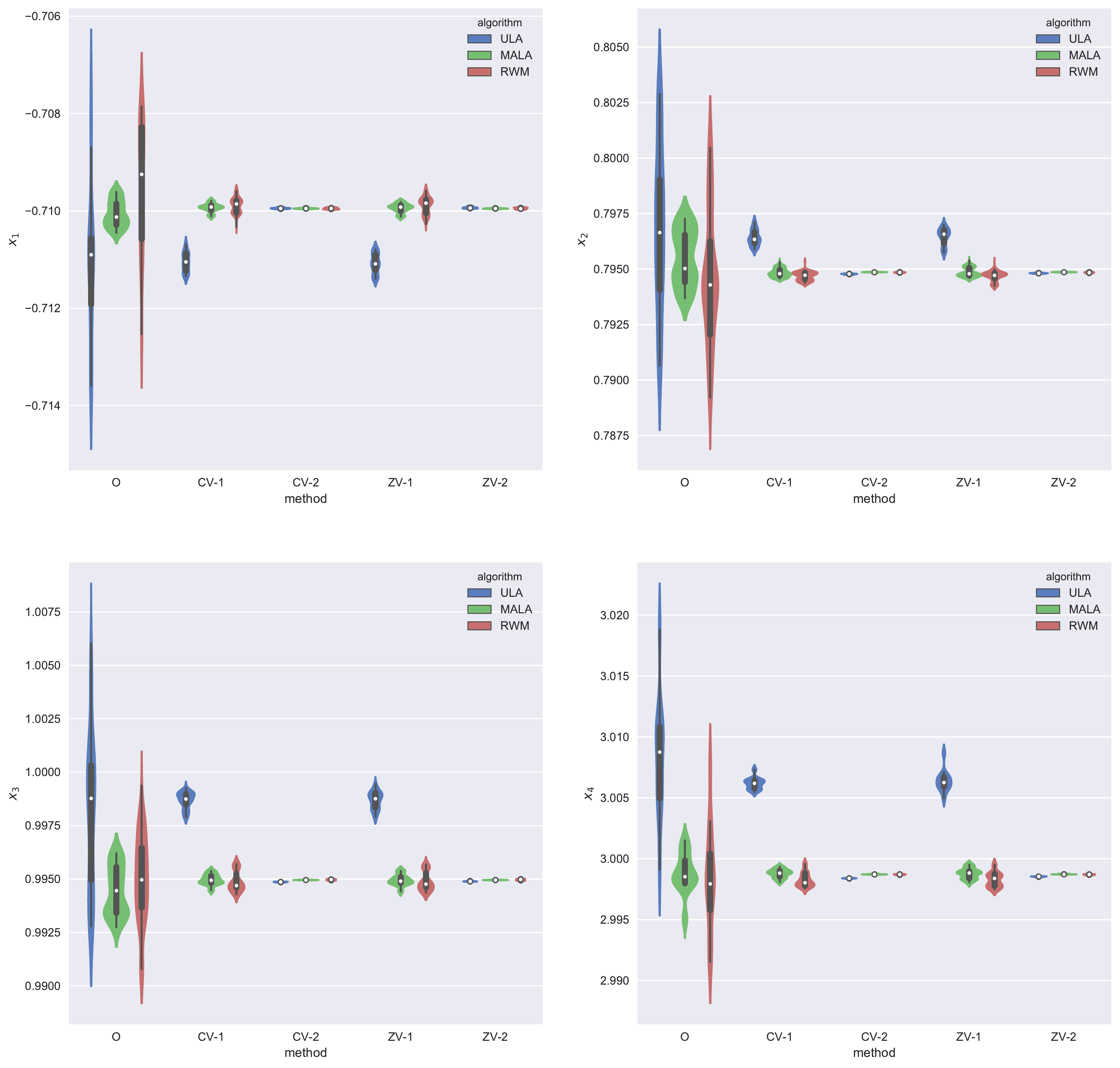}
\caption{\label{figure:log-1-add} Boxplots of $\bb_1,\bb_2,\bb_3,\bb_4$ using the ULA, MALA and RWM algorithms for the logistic regression. The compared estimators are the ordinary empirical average (O), our estimator with a control variate \eqref{eq:def-invpi-cv} using first (CV-1) or second (CV-2) order polynomials for $\base$, and the zero-variance estimator of \cite{papamarkou2014} using a first (ZV-1) or second (ZV-2) order polynomial basis. }
\end{figure}

\begin{figure}
\hspace*{-1.5cm}\includegraphics[scale=0.5]{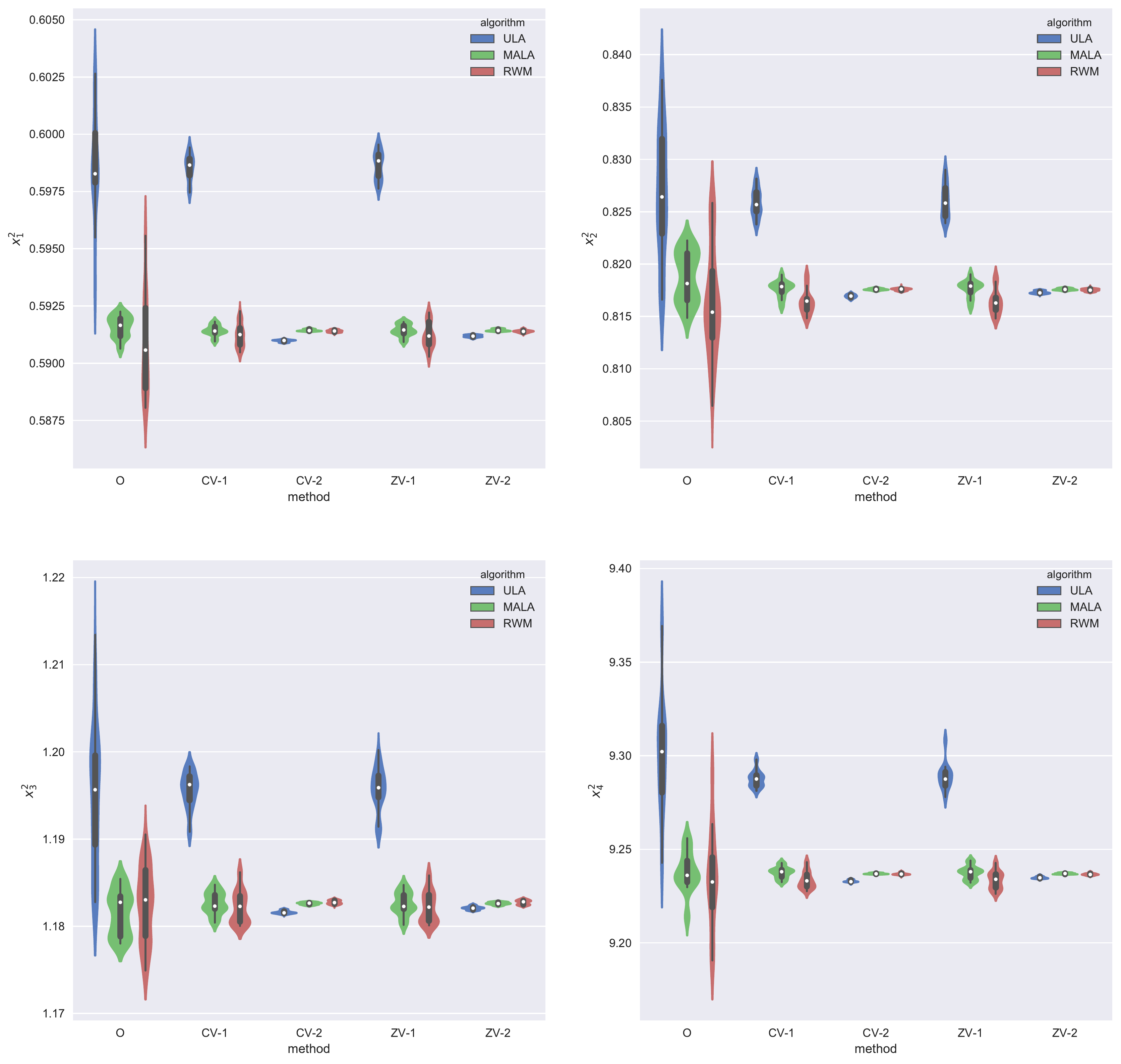}
\caption{\label{figure:log-2} Boxplots of $\bb_1^2,\bb_2^2,\bb_3^2,\bb_4^2$ using the ULA, MALA and RWM algorithms for the logistic regression. The compared estimators are the ordinary empirical average (O), our estimator with a control variate \eqref{eq:def-invpi-cv} using first (CV-1) or second (CV-2) order polynomials for $\base$, and the zero-variance estimator of \cite{papamarkou2014} using a first (ZV-1) or second (ZV-2) order polynomial basis.}
\end{figure}

\begin{figure}
\hspace*{-1.5cm}\includegraphics[scale=0.5]{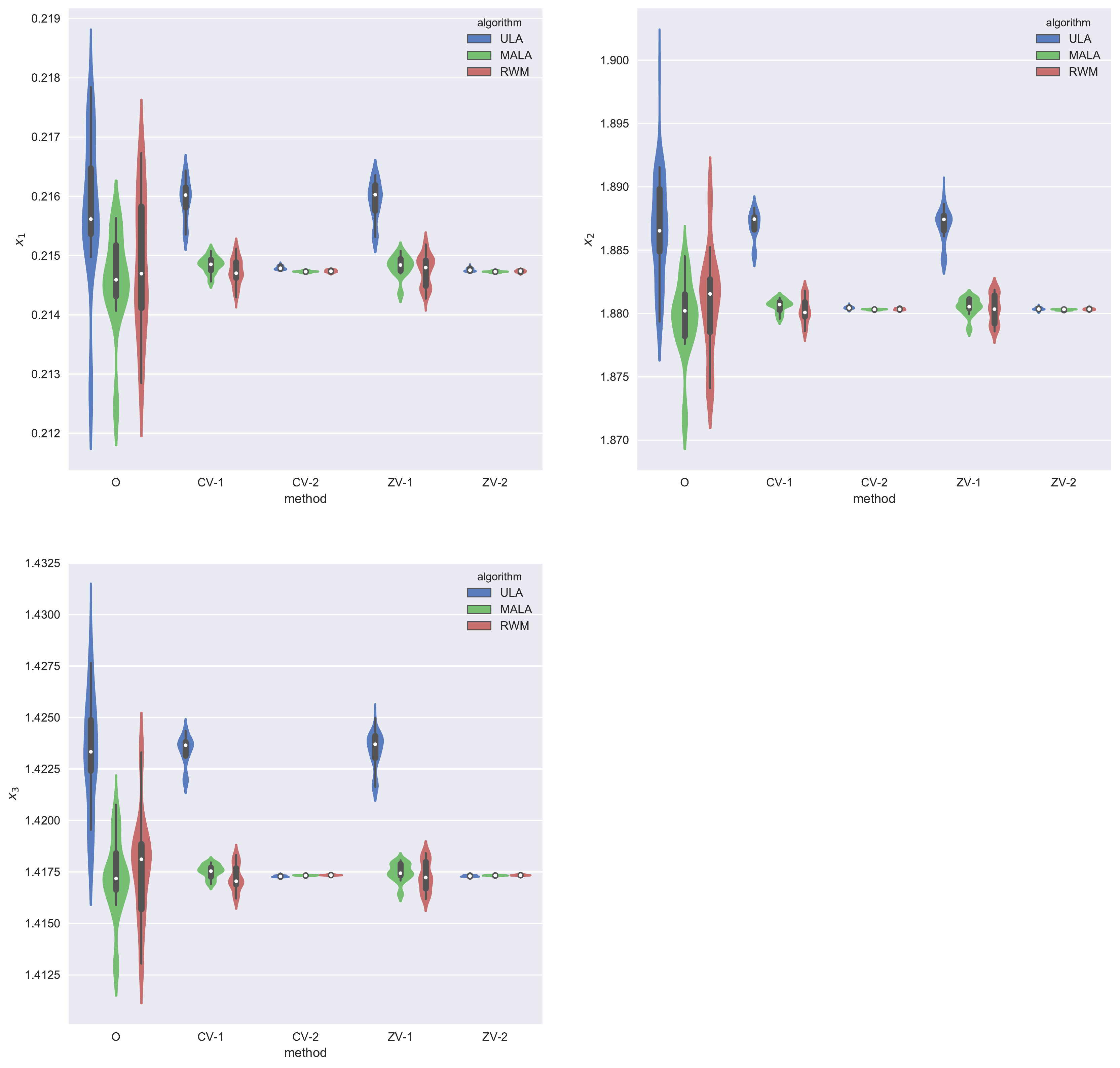}
\caption{\label{figure:pro-1} Boxplots of $\bb_1,\bb_2,\bb_3$ using the ULA, MALA and RWM algorithms for the probit regression. The compared estimators are the ordinary empirical average (O), our estimator with a control variate \eqref{eq:def-invpi-cv} using first (CV-1) or second (CV-2) order polynomials for $\base$, and the zero-variance estimator of \cite{papamarkou2014} using a first (ZV-1) or second (ZV-2) order polynomial basis. }
\end{figure}

\begin{figure}
\hspace*{-1.5cm}\includegraphics[scale=0.5]{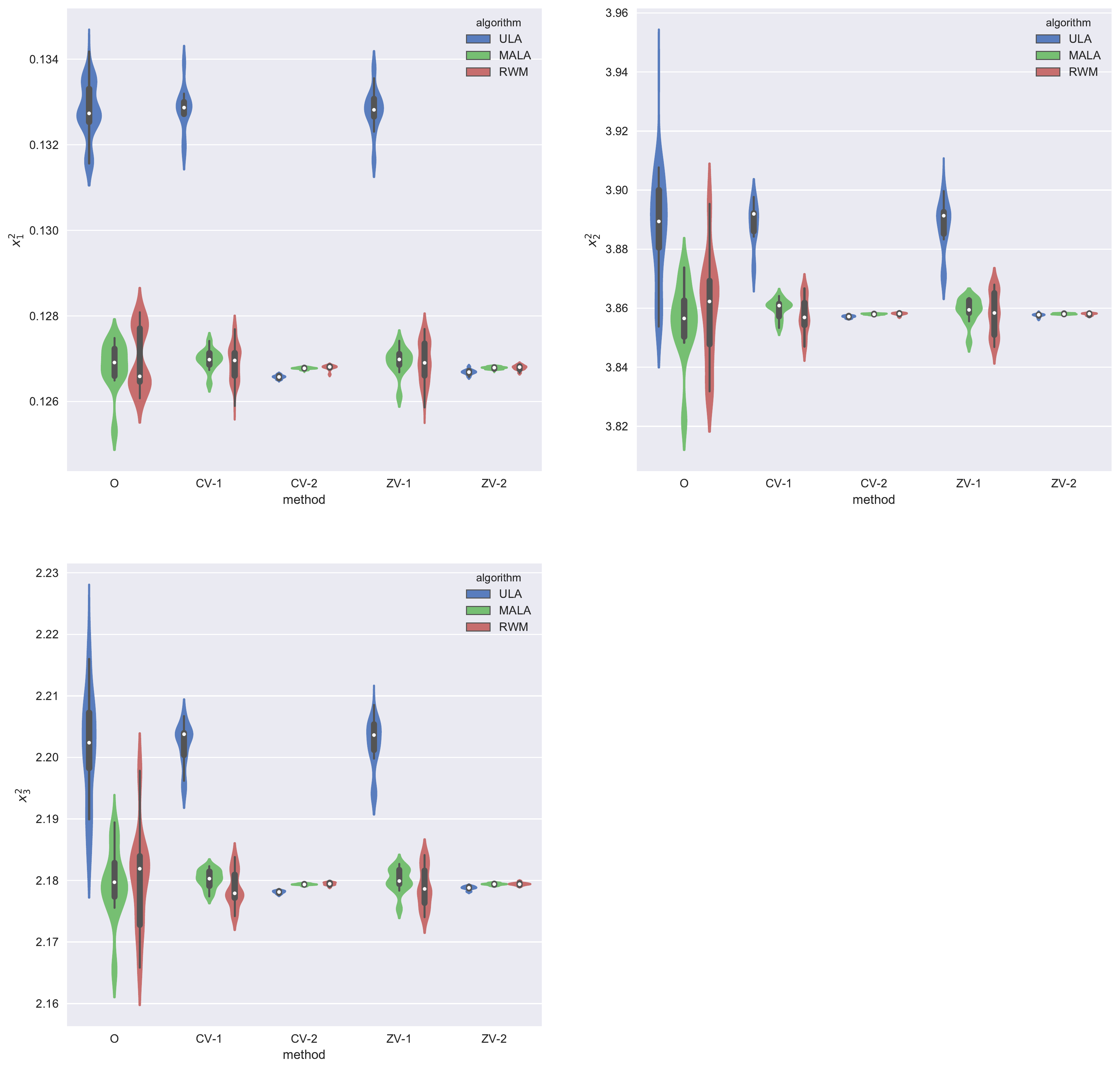}
\caption{\label{figure:pro-2} Boxplots of $\bb_1^2,\bb_2^2,\bb_3^2$ using the ULA, MALA and RWM algorithms for the probit regression. The compared estimators are the ordinary empirical average (O), our estimator with a control variate \eqref{eq:def-invpi-cv} using first (CV-1) or second (CV-2) order polynomials for $\base$, and the zero-variance estimator of \cite{papamarkou2014} using a first (ZV-1) or second (ZV-2) order polynomial basis.}
\end{figure}

\begin{table}
{\small
\begin{tabular}{c|c|c|c c|c c|c c|c c|}
   \multicolumn{11}{c}{} \\
   & & MCMC & \multicolumn{2}{c|}{CV-1-MCMC} & \multicolumn{2}{c|}{CV-2-MCMC}
   & \multicolumn{2}{c|}{ZV-1-MCMC} & \multicolumn{2}{c|}{ZV-2-MCMC} \\
   & & Variance & VRF & Variance & VRF & Variance & VRF & Variance & VRF & Variance \\
   \hline
   $\bb_1$ & ULA &       2.1 &         24 &      0.089 &    2.9e+03 &    0.00073 &         20 &       0.11 &    2.7e+03 &    0.00078 \\
   &MALA &      0.41 &         22 &      0.019 &    2.7e+03 &    0.00015 &         18 &      0.023 &    2.6e+03 &    0.00016 \\
   &RWM &       1.2 &         23 &       0.05 &    2.2e+03 &    0.00054 &         21 &      0.056 &    2.2e+03 &    0.00053 \\
   \hline
   $\bb_2$ & ULA &        27 &         24 &        1.1 &    2.8e+03 &     0.0099 &         18 &        1.5 &    2.4e+03 &      0.011 \\
   &MALA &       6.4 &         24 &       0.27 &    2.9e+03 &     0.0022 &         19 &       0.34 &    2.6e+03 &     0.0025 \\
   &RWM &        13 &         18 &       0.72 &    1.8e+03 &     0.0073 &         16 &       0.81 &    1.8e+03 &     0.0075 \\
   \hline
   $\bb_3$ & ULA &        11 &         24 &       0.47 &    6.7e+03 &     0.0017 &         18 &       0.62 &    6.3e+03 &     0.0018 \\
   &MALA &       2.6 &         23 &       0.11 &      7e+03 &    0.00037 &         18 &       0.14 &    6.8e+03 &    0.00038 \\
   &RWM &       5.5 &         18 &        0.3 &    4.3e+03 &     0.0013 &         16 &       0.34 &    4.3e+03 &     0.0013 \\
   \hline
   $\bb_1^2$ & ULA &      0.75 &        3.5 &       0.22 &    1.6e+02 &     0.0048 &        2.8 &       0.26 &    1.3e+02 &     0.0057 \\
   &MALA &      0.15 &        3.5 &      0.043 &    1.5e+02 &      0.001 &        2.8 &      0.053 &    1.3e+02 &     0.0011 \\
   &RWM &      0.43 &        2.6 &       0.16 &    1.2e+02 &     0.0035 &        2.4 &       0.18 &    1.2e+02 &     0.0037 \\
   \hline
   $\bb_2^2$ &ULA &   4.7e+02 &        9.3 &         51 &    1.4e+03 &       0.33 &        7.5 &         63 &    1.2e+03 &        0.4 \\
   &MALA &   1.1e+02 &        9.1 &         12 &    1.5e+03 &      0.073 &        7.6 &         14 &    1.3e+03 &      0.085 \\
   &RWM &   2.2e+02 &        7.7 &         29 &      1e+03 &       0.22 &        6.9 &         33 &    9.8e+02 &       0.23 \\
   \hline
   $\bb_3^2$ & ULA &   1.1e+02 &        9.8 &         11 &    9.7e+02 &       0.11 &        7.9 &         14 &    7.9e+02 &       0.14 \\
   &MALA &        24 &        9.7 &        2.5 &    9.8e+02 &      0.025 &        8.1 &          3 &    8.5e+02 &      0.029 \\
   &RWM &        52 &        7.9 &        6.7 &    6.1e+02 &      0.086 &        7.1 &        7.4 &    5.9e+02 &      0.088 \\
 \hline
\end{tabular}
}
\caption{\label{table:probit}Estimates of the asymptotic variances for ULA, MALA and RWM and each parameter $\bb_i$, $\bb_i^2$ for $i\in\{1,\ldots,d\}$, and of the variance reduction factor (VRF) on the example of the probit regression.}
\end{table}

\end{document}